\newcommand{\version}{long}
\newclass{\ioPSPACE}{i.o.\text{-}PSPACE}
\newlang{\Halt}{Halt}
\ifundef{\soda}{\usepackage{etoolbox,hyperref,ifthen,tabulary}

\newcommand{\ifft}{\ifundef{\shortiff}{if and only if }{iff }}
\newcommand{\spc}{{ }}

\newcommand{\indraft}[1]{\ifthenelse{\equal{\version}{draft}}{#1}{}}
\newcommand{\infinal}[1]{\ifthenelse{\equal{\version}{final}}{#1}{Only shown in the final version}}
\newcommand{\inshort}[1]{\ifthenelse{\equal{\version}{short}}{#1}{}}
\newcommand{\inlong}[1]{\ifthenelse{\equal{\version}{long}}{#1}{}}
\newcommand{\inshortorlong}[2]{\inshort{#1}\inlong{#2}}
\newcommand{\inlongaln}{\inlong{\\ {} &}}

\newcommand{\insodaornot}[2]{\ifdef{\soda}{#1}{#2}}
\newcommand{\insoda}[1]{\insodaornot{#1}{}}
\newcommand{\notinsoda}[1]{\insodaornot{}{#1}}

\newcommand{\appref}[1]{Appendix~\ref{app:#1}}
\newcommand{\secref}[1]{Section~\ref{sec:#1}}
\newcommand{\ssecref}[1]{Subsection~\ref{ssec:#1}}

\newcommand{\renameenv}[2]{
  \expandafter\let\csname #1#2\expandafter\endcsname
  \csname #1\endcsname
  \expandafter\let\csname end#1#2\expandafter\endcsname
  \csname end#1\endcsname
  \expandafter\let\csname #2\endcsname\relax
  \expandafter\let\csname end#2\endcsname\relax}

\ifundef{\defaultlists}{
  \usepackage[inline,shortlabels]{enumitem}
  \setenumerate[1]{(a),itemsep=0pt,topsep=3pt,parsep=0pt,partopsep=0pt}
  \setenumerate[2]{(i),noitemsep,topsep=3pt,parsep=0pt,partopsep=0pt}
  \setenumerate[3]{(A),noitemsep,topsep=3pt,parsep=0pt,partopsep=0pt}
  \setenumerate[4]{(I),noitemsep,topsep=3pt,parsep=0pt,partopsep=0pt}
  \setitemize{noitemsep,topsep=3pt,parsep=0pt,partopsep=0pt}
  \setdescription{noitemsep,topsep=3pt,parsep=0pt,partopsep=0pt}
  \setlist{noitemsep,topsep=3pt,parsep=0pt,partopsep=0pt}}{}

\newcolumntype{x}[1]{>{\centering\arraybackslash}m{#1}}
}{} 
\let\eqref\relax
\DeclareFontFamily{U}{mathx}{\hyphenchar\font45}
\DeclareFontShape{U}{mathx}{m}{n}{
      <5> <6> <7> <8> <9> <10>
      <10.95> <12> <14.4> <17.28> <20.74> <24.88>
      mathx10
      }{}
\DeclareSymbolFont{mathx}{U}{mathx}{m}{n}
\DeclareMathSymbol{\bigtimes}{1}{mathx}{"91}
\algnewcommand{\Input}{\item[\textbf{Input:}]}
\algnewcommand{\Output}{\item[\textbf{Output:}]}
\newcommand\restr[2]{{
  \left.\kern-\nulldelimiterspace
  #1
  \vphantom{\big|}
  \right|_{#2}
  }}
\newcommand{\indphi}[1]{\restr{\phi}{#1}}
\newcommand{\abs}[1]{\left\vert#1\right\vert}
\newcommand{\can}{\mathrm{Can}}
\newcommand{\invar}{\mathrm{Inv}}
\newcommand{\cay}{\mathrm{Cay}}
\newcommand{\coleq}{\coloneqq}
\newcommand{\aut}{\mathrm{Aut}}
\newcommand{\iso}{\mathrm{Iso}}
\newcommand{\soc}{\mathrm{soc}}
\newcommand{\ch}{\;\mathrm{char}\;}
\newcommand{\norm}[1]{\left\Vert#1\right\Vert}
\newcommand{\genb}[2]{\left\langle#1 \;\middle|\; #2\right\rangle}
\newcommand{\setb}[2]{\left\{#1 \;\middle|\; #2\right\}}
\newcommand{\nth}[1]{\ensuremath{{#1}^{\mathrm{th}}}}
\renewcommand{\algref}[1]{Algorithm~\ref{alg:#1}}
\newcommand{\figref}[1]{Figure~\ref{fig:#1}}
\newcommand{\linref}[1]{line~\ref{line:#1}}
\newcommand{\itmref}[1]{\ref{itm:#1}}
\newcommand{\thmref}[1]{Theorem~\ref{thm:#1}}
\newcommand{\lemref}[1]{Lemma~\ref{lem:#1}}
\newcommand{\defref}[1]{Definition~\ref{defn:#1}}
\newcommand{\propref}[1]{Proposition~\ref{prop:#1}}
\newcommand{\corref}[1]{Corollary~\ref{cor:#1}}
\newcommand{\eqref}[1]{(\ref{eq:#1})}
\newcommand{\eps}{\epsilon}
\newcommand{\bmg}{\mathbf{g}}
\newcommand{\bmh}{\mathbf{h}}
\newcommand{\bmu}{\mathbf{u}}
\newcommand{\bmv}{\mathbf{v}}
\newcommand{\bmx}{\mathbf{x}}
\newcommand{\bmy}{\mathbf{y}}
\newcommand{\bmA}{\mathbf{A}}
\newcommand{\bmB}{\mathbf{B}}
\newcommand{\bbR}{\mathbb{R}}
\newcommand{\bbZ}{\mathbb{Z}}
\newcommand{\cA}{\mathcal{A}}
\newcommand{\cB}{\mathcal{B}}
\newcommand{\cG}{\mathcal{G}}
\newcommand{\cH}{\mathcal{H}}
\newcommand{\cI}{\mathcal{I}}
\newcommand{\cP}{\mathcal{P}}
\newcommand{\cQ}{\mathcal{Q}}
\newcommand{\cS}{\mathcal{S}}
\newcommand{\la}{\leftarrow}
\newcommand{\ra}{\rightarrow}
\newcommand{\tril}{\triangleleft}
\newcommand{\trile}{\trianglelefteq}
\def\thm@space@setup{\thm@preskip=3pt \thm@postskip=3pt}
\renewenvironment{proof}[1][\proofname]{\par
  \pushQED{\qed}
  \normalfont
  \topsep3pt \partopsep0pt 
  \trivlist
  \item[\hskip\labelsep
        \itshape
    #1\@addpunct{.}]\ignorespaces
  }{
    \popQED\endtrivlist\@endpefalse
    \addvspace{0pt plus 0pt} 
  }
\ifundef{\dontnumberwithin}{\declaretheorem[numberwithin=section]{dummy}}{\declaretheorem{dummy}} 
\declaretheorem[sibling=dummy]{theorem}
\declaretheorem[sibling=dummy]{lemma}
\declaretheorem[sibling=dummy]{definition}
\declaretheorem[sibling=dummy]{proposition}
\declaretheorem[sibling=dummy]{corollary}
\ifundef{\defaultthmcontinues}{\renewcommand{\thmcontinues}[1]{}}{}
\newcommand{\hphi}{\hat \phi}
\newcommand{\comps}{\inshortorlong{CS}{COMPOSITION-SERIES}}
\newcommand{\mns}{\inshortorlong{MNS}{MINIMAL-NORMAL-SUBGROUPS}}
\newcommand{\sims}{\inshortorlong{SMNS}{SIMPLE-SUBGROUPS}}
\begin{document}

\title{Breaking the $n^{\log n}$ Barrier for Solvable-Group Isomorphism}
\author{David J. Rosenbaum \\ {\small University of Washington} \\ {\small Department of Computer Science \& Engineering} \\ {\small Email: djr@cs.washington.edu}}
\date{\notinsoda{December 11, 2013}}

\maketitle
\notinsoda{\thispagestyle{empty}}

\begin{abstract}
  We consider the \emph{group isomorphism problem}: given two finite groups $G$ and $H$ specified by their multiplication tables, decide if $G \cong H$.  The $n^{\log n}$ barrier for group isomorphism has withstood all attacks --- even for the special cases of $p$-groups and solvable groups --- ever since the $n^{\log n + O(1)}$ generator-enumeration algorithm.  Following a framework due to Wagner, we present the first significant improvement over $n^{\log n}$ by reducing group isomorphism to composition-series isomorphism which is then reduced to low-degree graph isomorphism.  We show that group isomorphism is $n^{(1 / 2) \log_p n + O(1)}$ Turing reducible to composition-series isomorphism where $p$ is the smallest prime dividing the order of the group.  Combining our reduction with an $n^{O(p)}$ algorithm for $p$-group composition-series isomorphism, we obtain an $n^{(1 / 2) \log n + O(1)}$ algorithm for $p$-group isomorphism.  We then generalize our techniques from $p$-groups using Sylow bases to derive an $n^{(1 / 2) \log n + O(\log n / \log \log n)}$ algorithm for solvable-group isomorphism.  Finally, we relate group isomorphism to the collision problem which allows us replace the $1 / 2$ in the exponents with $1 / 4$ using randomized algorithms and $1 / 6$ using quantum algorithms.
\end{abstract}

\inlong{
  \newpage
  \setcounter{page}{1}}

\section{Introduction}
\label{sec:intro}
In the \emph{group isomorphism problem}, we are given two finite groups $G$ and $H$ of order $n$ in terms of their multiplication tables and must decide if $G \cong H$.  It is not known if group isomorphism is in \P\spc or if it is \NP-complete; the strongest complexity theory result is that solvable-group isomorphism is in $\NP \cap \coNP$ assuming that $\EXP \not\subseteq \ioPSPACE$~\cite{arvind2003a}.  Group isomorphism is Karp reducible to graph isomorphism~\cite{miller1979a} but is not known to be \GI-complete and may be easier due to the group structure~\cite{chattopadhyay2010a}.  A consequence of this reduction is that if group isomorphism is \NP-complete then the polynomial hierarchy collapses to the second level~\cite{boppana1987a}.  Moreover, the natural algebraic analogue of graph isomorphism is testing isomorphism of semigroups~\cite{booth1978a} which are much less structured than groups.

The \emph{generator-enumeration algorithm} for group isomorphism has been known for several decades (cf.~\cite{felsch1970a,miller1978a}); the analogous result that group isomorphism can be decided in $O(\log^2 n)$ space was independently obtained by Lipton, Snyder and Zalcstein~\cite{lipton1977a}.  The algorithm works by finding a minimal ordered generating set $\bmg$ for $G$ by brute force; any isomorphism from $G$ to $H$ can be defined by its action on an ordered generating set so we simply consider all $n^{\abs{S}}$ maps from $\bmg$ into $H$ and check if any of these extend to an isomorphism.  This algorithm runs in $n^{\abs{\bmg} + O(1)}$ time which is upper bounded by $n^{\log_p n + O(1)}$ since there is always a generating set of size $\log_p n$ where $p$ is the smallest prime dividing the order of the group.  These are still the best results for general groups.

Faster algorithms have been obtained for various special cases.  Lipton, Snyder and Zalcstein~\cite{lipton1977a}; Savage~\cite{savage1980a}; Vikas~\cite{vikas1996a}; and Kavitha~\cite{kavitha2007a} all showed polynomial-time algorithms for Abelian groups.  Le Gall~\cite{legall2008a} gave a polynomial-time algorithm for groups consisting of a semidirect product of an Abelian group with a cyclic group of coprime order; this was extended to a class of groups with a normal Hall subgroup by Qiao, Sarma and Tang~\cite{qiao2011a}.  Babai and Qiao~\cite{babai2012a} showed that testing isomorphism of groups with Abelian Sylow towers is in polynomial time.  Babai, Codenotti, Grochow and Qiao~\cite{babai2011a} showed an $n^{O(\log \log n)}$ time algorithm for the class of groups with no normal Abelian subgroups; the runtime was later improved to polynomial by Babai, Codenotti and Qiao~\cite{codenotti2011a,babai2012b}.

Our main result is an algorithm that is faster than the generator-enumeration algorithm for the class of solvable groups.  The solvable groups contain the nilpotent groups and hence the nilpotent groups of class $2$ which are ``almost Abelian'' in a precise sense but are considered to be a formidable obstacle to efficient algorithms for general group isomorphism~\cite{babai2011a,codenotti2011a,babai2012a}.  Before our work, it was a longstanding open problem~\cite{lipton2011a} to obtain an $n^{(1 - \eps) \log n + O(1)}$ algorithm where $\eps > 0$ even for $p$-groups.

\begin{theorem}
  \label{thm:sol-group-iso}
  Solvable-group isomorphism is in $n^{(1 / 2) \log_p n + O(\log n / \log \log n)}$ deterministic time where $p$ is the smallest prime dividing the order of the group.
\end{theorem}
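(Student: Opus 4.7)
The plan is to lift the $n^{(1/2)\log n + O(1)}$ algorithm for $p$-group isomorphism to the solvable setting by working with a Sylow basis. Recall that every solvable group $G$ admits pairwise permutable Sylow subgroups $P_{p_1}, \ldots, P_{p_k}$, one for each prime $p_i$ dividing $|G|$, such that every element factors uniquely as $g_1 \cdots g_k$ with $g_i \in P_{p_i}$; in particular, concatenating minimal ordered generating sets of the $P_{p_i}$ yields a minimal ordered generating set of $G$. Since Sylow subgroups at each prime are conjugate, after fixing a Sylow basis in $G$ one only needs to consider polynomially many Sylow bases of $H$, so it suffices to search for an isomorphism that respects the Sylow basis prime-by-prime.

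The next step is to apply the $p$-group reduction to composition-series isomorphism from the earlier part of the paper to each $P_{p_i}$ independently: enumerate only half of a minimal ordered generating set of each $P_{p_i}$, at cost $n^{(1/2)\log_{p_i} n_{p_i}}$ per prime, where $n_{p_i} = |P_{p_i}|$. Because $\log_{p_i} n_{p_i} = \log n_{p_i}/\log p_i \le \log n_{p_i}/\log p$ with $p = p_1$ the smallest prime, the product across primes is bounded by $n^{(1/2) \sum_i \log_{p_i} n_{p_i}} \le n^{(1/2)\log_p n}$, matching the leading exponent in the theorem. Each half-enumerated candidate is then completed and verified via a composition-series isomorphism routine; I would refine the Sylow basis into a composition series whose factors are cyclic of prime order, invoke the $n^{O(p_i)}$ $p$-group composition-series subroutine inside each Sylow layer, and stitch the layers together using permutability of the Sylow basis.

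The main obstacle is keeping the composition-series subroutine within an $n^{O(\log n/\log\log n)}$ budget. A naive analysis gives each Sylow layer a cost of $n^{O(p_i)}$, and summing the $p_i$ can far exceed $O(\log n/\log\log n)$. The key observation is that the number of distinct prime divisors of $|G|$ is at most $O(\log n/\log\log n)$; when the smallest prime $p$ is itself large, the $(1/2)\log_p n$ enumeration savings already provide slack to absorb per-layer costs, while when $p$ is small only a bounded number of primes can contribute meaningfully to the overhead. Formalizing this trade-off, and organizing the Sylow-basis stitching so that only $O(\log n/\log\log n)$ additive slack is needed in the exponent, is the technical heart of the argument, and I would expect the same $p$-group composition-series machinery to specialize into each Sylow layer with no further blow-up beyond this additive term.
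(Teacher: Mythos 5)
Your high-level frame — use a Sylow basis, reduce prime-by-prime to composition-series machinery, use the $O(\log n/\log\log n)$ bound on the number of distinct prime divisors — agrees with the paper. But there are two gaps that go to the heart of the argument.

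First, you attribute the $1/2$ in the exponent to ``enumerating only half of a minimal ordered generating set.'' That is not the mechanism, and the paper explicitly warns that it cannot be: guessing generators for the composition factors gives at best $n^{\log_p n + O(1)}$, because the number of candidate generators for the next subgroup does \emph{not} shrink as you go. (It is also unclear what verification would mean with only half a generating set fixed in a deterministic algorithm.) The deterministic $1/2$ comes from guessing the intermediate subgroups of the composition series directly: one builds a chain of direct products of simple minimal normal subgroups inside the socle, and the number of distinct choices for the $j$-th direct product drops by a factor of $p$ at each step, so the total is $\prod_{j}(n/p^j) \leq n^{(1/2)\log_p n + O(1)}$ (\lemref{comp-prod-bound}, \lemref{num-comp-choices}, \lemref{num-hcomp-choices}). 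Nothing in your proposal recovers this telescoping count.

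Second, you wave past the reduction to low-degree graph isomorphism, which is where all the technical difficulty in the solvable case lives. Your diagnosis of the obstacle — that summing $n^{O(p_i)}$ over primes blows the budget — is not quite right either: the problem is that a single large composition factor forces a high-degree node in the cone graph, and degree, not a sum over layers, is what the bounded-degree algorithm charges for. The paper's resolution (\secref{hgraph-red}) is to reindex so the primes decrease, fix a threshold $\alpha = \log n/\log\log n$, guess ordered generators \emph{only} for the Sylow layers with $p_i \geq \alpha$, and replace the top of the tree by a fixed binary tree $B(\bmu)$ colored by the generator-induced ordering. This is precisely Wagner's generator-fixing idea, which the paper argues is \emph{flawed in general} because the construction depends on coset representatives; it is only salvaged here because, after the reindexing, the large composition factors all sit above the small ones, so the required coset representatives are pinned down by the already-fixed generators. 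That ordering constraint is the key observation, and your ``stitch the layers together using permutability'' does not explain how it avoids the coset-representative dependence that sank Wagner's original argument. The budget $n^{O(\log n/\log\log n)}$ then comes from: (i) at most $O(\log n/\log\log n)$ primes exceed $\alpha$, so guessing their generators costs $n^{O(\log n/\log\log n)}$; and (ii) the resulting graph has degree $O(\alpha)$, so canonization/isomorphism costs $n^{O(\alpha)}$. Neither of these is the trade-off you describe.
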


In the case of randomized and quantum algorithms, we can replace the $1 / 2$ by $1 / 4$ and $1 / 6$ respectively.  We note that, although not all groups are solvable, solvable groups are quite general.  Every nilpotent group (and hence every $p$-group) is solvable.  More generally, groups of odd order~\cite{feit1963a} or of order $p^a q^b$ where $p$ and $q$ are prime (cf. ~\cite{robinson1996a}) are solvable.  Empirically, the proportion of $2$-groups among the finite groups of order at most $N$ tends to $1$ as $N$ approaches infinity~\cite{besche2002a}; therefore, the density of the solvable groups in the finite groups also tends to $1$.

We start by reducing\inlong{ group isomorphism} to composition-series isomorphism (which we define shortly).

\begin{restatable}{theorem}{groupredcomp}
  \label{thm:group-red-comp}
  Group isomorphism is $n^{(1 / 2) \log_p n + O(1)}$ time deterministic Turing reducible to composition-series isomorphism where $p$ is the smallest prime dividing the order of the group.
\end{restatable}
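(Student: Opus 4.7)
The plan is a meet-in-the-middle refinement of the classical generator-enumeration algorithm, using a composition-series isomorphism oracle to handle the ``top half'' of the guessed isomorphism so that only the ``bottom half'' needs to be enumerated by brute force.

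First I would compute, in polynomial time, a composition series $1 = G_{0} \triangleleft G_{1} \triangleleft \cdots \triangleleft G_{k} = G$ together with elements $g_{i} \in G_{i} \setminus G_{i-1}$ satisfying $G_{i} = \langle G_{i-1}, g_{i}\rangle$. Because every composition factor is a simple group of order at least $p$, the length satisfies $p^{k} \leq n$, i.e.\ $k \leq \log_{p} n$. Set $m \coleq \lceil k/2 \rceil$ and enumerate all tuples $(h_{1}, \ldots, h_{m}) \in H^{m}$; this contributes at most $n^{m} \leq n^{(1/2)\log_{p} n + O(1)}$ to the runtime. For each tuple let $H_{i} \coleq \langle h_{1}, \ldots, h_{i}\rangle$ and verify in polynomial time that $H_{0} \triangleleft H_{1} \triangleleft \cdots \triangleleft H_{m}$ is a valid composition-series prefix with $H_{i}/H_{i-1} \cong G_{i}/G_{i-1}$, discarding tuples that fail this check.

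For each surviving tuple, I would issue a single call to the composition-series isomorphism oracle that asks, in effect, whether the partial map $g_{i} \mapsto h_{i}$ (for $i \leq m$) extends to an isomorphism $G \to H$ sending the composition series $(G_{i})_{i=0}^{k}$ to some composition series of $H$ whose first $m+1$ terms are $H_{0}, \ldots, H_{m}$. Output YES iff some query returns YES. Correctness is immediate: any isomorphism $\phi : G \to H$ produces the tuple $(\phi(g_{1}), \ldots, \phi(g_{m}))$ which is enumerated, induces a valid prefix $H_{i} = \phi(G_{i})$, and witnesses the affirmative oracle answer; conversely, a YES from the oracle yields a genuine isomorphism $G \cong H$.

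The running-time analysis is then just $n^{m} \cdot n^{O(1)} \leq n^{(1/2)\log_{p} n + O(1)}$ to cover enumeration, verification of each prefix, and one oracle call per surviving tuple.

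The main technical obstacle is step (ii): packaging the ``partial extension'' question as a single composition-series isomorphism query, rather than as a family of queries ranging over all completions of the fixed prefix to a full composition series of $H$ (which could be exponential in $H$). I expect this to be handled by the paper's formulation of composition-series isomorphism, in which the input naturally consists of groups together with \emph{ordered} generating tuples compatible with a composition series, so that fixing the first $m$ entries of $H$'s tuple and leaving the remaining $k-m$ entries to be chosen by the oracle is exactly the native input format. If that is not available, the fallback is to interleave the enumeration with an additional search over completions of the $H$-prefix, organized so that the total cost remains within the $n^{(1/2)\log_{p} n + O(1)}$ budget; this requires bounding the number of relevant completions by $n^{O(1)}$, which is where I would expect to spend the most effort.
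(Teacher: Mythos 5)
Your meet-in-the-middle refinement of generator enumeration is a genuinely different strategy from what the paper does, and unfortunately the gap you flag in your last paragraph is not a technical nuisance but the entire difficulty — your approach does not survive it.

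The paper's composition-series isomorphism problem takes as input two \emph{complete} composition series, each given as the list of subsets corresponding to the intermediate subgroups (see the introduction and \defref{iso-resp}); it has no notion of a partial prefix with ``the remaining entries to be chosen by the oracle.'' So after you guess $(h_1,\ldots,h_m)$ and form the prefix $H_0\tril\cdots\tril H_m$, you still have to supply $H_{m+1},\ldots,H_k$ before you can make a query. Your proposed fallback — enumerate completions and bound them by $n^{O(1)}$ — fails: completions of the prefix correspond to composition series of $H/H_m$, and for $H/H_m \cong \bbZ_p^{k/2}$ the number of maximal flags is roughly $p^{\Theta((k/2)^2)} = n^{\Theta(\log_p n)}$, i.e.\ quasi-polynomial and comparable to the budget you have already spent on the $n^m$ prefix enumeration. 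Multiplying the two pushes you back to (or beyond) the generator-enumeration bound, so the ``half'' you hoped to offload to the oracle is not actually cheaper; there is nothing asymmetric in the structure of a composition series that would let an oracle solve its half for free.

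The paper obtains the $1/2$ from a different source entirely. It restricts attention to a canonically defined family of composition series built by iterated socles (\algref{group-comp}): within each socle one only considers towers of direct products of simple minimal normal subgroups. Crucially, \lemref{mns-bound} and \corref{smns-dist} show the number of admissible next factors shrinks by a factor of $p$ at each step, so the count is $\prod_{i=0}^{t}(n/p^i) \le n^{(1/2)\log_p n + O(1)}$ (\lemref{comp-prod-bound}, with the recursion across socle layers handled by \lemref{num-comp-choices}); \lemref{all-comp-choice} guarantees this family is closed under the action of any isomorphism, so fixing one $S$ for $G$ and enumerating all $S'$ for $H$ in the family is correct. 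The $1/2$ falls out of $\sum_{i=1}^{t} i = t(t+1)/2$, not from halving a generating tuple. If you want to pursue the meet-in-the-middle intuition, you would essentially have to reinvent the paper's socle restriction to control the number of prefix completions — at which point you would be doing the paper's proof, just with an extra and unnecessary split of the generating tuple.
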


A \emph{composition series} $S$ is a tower of subgroups $G_0 = 1 \tril \cdots \tril G_m = G$ such that no additional subgroups can be inserted.  The factor groups $G_{i + 1} / G_i$ are called the \emph{composition factors} of $S$.  We say that two composition series $G_0 = 1 \tril \cdots \tril G_m = G$ and $H_0 = 1 \tril \cdots \tril H_m = H$ are isomorphic if there is an isomorphism $\phi : G \ra H$ such that $\phi[G_i] = H_i$ for all $i$.  In the \emph{composition-series isomorphism problem}, we are given composition series $S$ and $S'$ where the subgroups are specified by their corresponding subsets and must decide whether $S \cong S'$.

Wagner~\cite{wagner2011b} showed a Karp reduction from composition-series isomorphism to low-degree graph isomorphism~\cite{luks1982a,babai1983a,babai1983b}.  Using this reduction, he gave an $n^{O(p)}$ algorithm for $p$-group composition series isomorphism using an $n^{O(d)}$ algorithm\inlong{\footnote{There is actually an $n^{O(d / \log d)}$ algorithm~\cite{babai1983b} for isomorphism of graphs of degree at most $d$ which yields an $n^{O(p / \log p)}$ algorithm.}} for isomorphism of graphs of degree at most $d$.  Wagner's reduction also applies to general groups, but produces graphs of large degree when there are large composition factors.  Wagner reduced the degrees by using a modification of the generator-enumeration algorithm to guess the action of the isomorphism on large composition factors; the graphs are then modified to enforce this mapping.  This would have yielded an $n^{O(\log n / \log \log n)}$ algorithm for isomorphism of composition series for arbitrary groups.  Unfortunately, the resulting graphs depend on which coset representatives are chosen so the algorithm is flawed.  Wagner attempted to fix this error\inlong{ in two subsequent revisions}~\cite{wagner2012a,wagner2012b}\inshortorlong{ but ultimately retracted his claim}{; unfortunately, these contain a more subtle manifestation of the same flaw.  Wagner retracted his attempted fix} in the latest revision of his paper~\cite{wagner2012c} following further discussions with the author\inshort{. } (\inshortorlong{S}{s}ee \inshortorlong{the full version of our paper~\cite{rosenbaum2012b} for a detailed discussion}{\appref{flaw}}.)  However, we will show that Wagner's generator-fixing trick can still be salvaged in a special case which we utilize in our algorithm for solvable groups.

Another result of Wagner~\cite{wagner2011b} is a reduction from group isomorphism to composition-series isomorphism.  This reduction is based on guessing generators for the composition series and is no more efficient than the generator-enumeration algorithm.  Wagner combined this with the reduction from $p$-group composition-series isomorphism to graph isomorphism to obtain an algorithm for $p$-group isomorphism that is no faster than the generator-enumeration algorithm.

By instead using our more efficient $n^{(1 / 2) \log_p n + O(1)}$ time reduction from group isomorphism to composition-series isomorphism, we obtain an algorithm for $p$-groups that is faster than the generator-enumeration algorithm when $p$ is small.  When $p$ is large, the generator-enumeration algorithm runs in $n^{\log_p n + O(1)}$ time since every $p$-group has a generating set of size $\log_p n$.  Choosing whichever algorithm is faster based on $p$ and $n$ yields the following result.

\begin{theorem}
  \label{thm:p-group-iso}
  $p$-group isomorphism is in $n^{(1 / 2) \log n + O(1)}$ deterministic time\inlong{\footnote{We remark that we have $\log n$ here rather than $\log_p n$ as in the algorithm for solvable groups; this is the price we pay for reducing the lower-order term to $O(1)$ in the case of $p$-groups.}}.
\end{theorem}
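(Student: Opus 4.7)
The plan is to run whichever of two algorithms is faster based on the prime $p$. Call \emph{Algorithm~A} the composition of the Turing reduction of \thmref{group-red-comp} with Wagner's $n^{O(p)}$-time algorithm for $p$-group composition-series isomorphism~\cite{wagner2011b}; since a $p$-group has smallest prime divisor $p$, combining the $n^{(1/2)\log_p n + O(1)}$ queries with the $n^{O(p)}$ cost per query gives total running time $n^{(1/2)\log_p n + O(p)}$. Call \emph{Algorithm~B} the standard generator-enumeration algorithm, which runs in time $n^{\log_p n + O(1)}$ because every $p$-group has a generating set of size $\log_p n$. The final algorithm determines $p$ in polynomial time (the order $n = p^k$ is a prime power) and then runs whichever of~A and~B has the smaller exponent for the given $p$.

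To verify the bound, I would split into two regimes at a small constant threshold, say $p = 5$. When $p \in \{2, 3\}$, Algorithm~A runs in time $n^{(1/2)\log_p n + O(1)}$, which is at most $n^{(1/2)\log n + O(1)}$ because $\log_p n \leq \log n$ for all $p \geq 2$ (and the $O(p)$ term collapses to $O(1)$ for bounded $p$). When $p \geq 5$, Algorithm~B runs in time $n^{\log_p n + O(1)} \leq n^{\log n / \log 5 + O(1)}$, which is strictly smaller than $n^{(1/2)\log n + O(1)}$ because $\log 5 > 2$. Taking the minimum across the two cases gives the claimed uniform bound.

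The main conceptual point is that Algorithm~A and Algorithm~B have complementary weaknesses: Algorithm~A suffers from the $O(p)$ overhead inherited from Luks-style isomorphism of bounded-degree graphs and is therefore only useful when $p$ is small, whereas the $\log_p n$ exponent of Algorithm~B is only small when $p$ is large. Because the crossover happens at a constant threshold ($p \geq 4$ suffices for Algorithm~B, which for primes means $p \geq 5$), the $O(p)$ additive term used by Algorithm~A can always be absorbed into an $O(1)$, and no third algorithm is needed. There is no real technical obstacle here beyond this elementary case analysis; the theorem is essentially a corollary of \thmref{group-red-comp} and Wagner's composition-series algorithm.
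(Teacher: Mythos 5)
Your proof is correct, and it follows the same two-pronged strategy as the paper: combine \emph{Algorithm~A} (the Turing reduction of \thmref{group-red-comp} followed by the low-degree graph-isomorphism subroutine, giving $n^{(1/2)\log_p n + O(p)}$) with \emph{Algorithm~B} (generator enumeration, giving $n^{\log_p n + O(1)}$), and choose between them based on $p$. Where you diverge from the paper is the choice of crossover. You pick a fixed constant threshold ($p=5$), exploiting that $\log 5 > 2$ forces $\log_p n < (1/2)\log n$ while for $p\in\{2,3\}$ the $O(p)$ overhead collapses to $O(1)$. The paper instead lets the threshold grow with $n$, roughly $p \approx c_1 \ln n / (c_2 \ln^2 p)$, and justifies the switch via the calculus-flavored Lemmas~\ref{lem:small-p-bound} and~\ref{lem:large-p-bound}. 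Your route is more elementary and self-contained — no derivatives, no optimization — and it suffices for the stated theorem. The paper's parametrized threshold buys robustness that the statement itself does not require: it absorbs $\polylog(d)$ factors in the graph-isomorphism exponent uniformly (as noted in the remark after \thmref{p-group-iso-drq}), and it handles the leading constants $c_1 \in \{1/2, 1/4, 1/6\}$ of the deterministic/randomized/quantum variants without changing the threshold, whereas a constant threshold in your style would have to grow with $1/c_1$ (e.g., $p \geq 67$ for the quantum case). But for the deterministic $n^{(1/2)\log n + O(1)}$ bound, your argument is airtight and cleaner.
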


The first technical contribution of this work is our idea of guessing the intermediate subgroups of the composition series directly rather than guessing their generators.  Our improved $n^{(1 / 2) \log_p n + O(1)}$ time reduction from group isomorphism to composition-series isomorphism is a direct consequence of this innovation.  The $n^{(1 / 2) \log_p n + O(1)}$ bound is obtained by building up a composition series for the \emph{socle} of $G$ (denoted $\soc(G)$) whose subgroups are direct products.  The number of choices for the first subgroup in the series is bounded by the order of $G$.  For each subsequent subgroup, the number of choices is reduced by a factor of $p$ where $p$ is the smallest prime that divides the order of $G$.  Taking the product of the number of choices at each step, we obtain a product of the first $\log_p \abs{\soc(G)}$ powers of $p$.  Summing the exponents, the $1 / 2$ then falls out of the formula $\sum_{i = 1}^k i = k (k + 1) / 2$.  We obtain the rest of the composition series by recursively computing a composition series for $G / \soc(G)$ and lifting back to subgroups of $G$.

It is important to note that when the intermediate subgroups in the composition series are determined by guessing generators as in Wagner's algorithm, the number of possible choices for each subsequent subgroup does not reduce by a factor of $p$ at each step.  This yields an upper bound of $n^{\log_p n + O(1)}$; to obtain the factor of $1 / 2$ in the exponent of our result, it is crucial that we consider all possible intermediate subgroups at each step instead of guessing generators.

Our second technical innovation allows us to generalize \thmref{p-group-iso} to the class of solvable groups.  By extending the techniques in the proof of \thmref{group-red-comp} using the theory of Sylow bases, we derive a Turing reduction from solvable-group isomorphism to Hall composition-series isomorphism (a variant of composition-series isomorphism which we define later).

\begin{theorem}
  \label{thm:sol-red-hcomp}
  Solvable-group isomorphism is $n^{(1 / 2) \log_p n + O(1)}$ time deterministic Turing reducible to Hall composition-series isomorphism where $p$ is the smallest prime dividing the order of the group.
\end{theorem}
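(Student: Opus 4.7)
The plan is to follow the framework of \thmref{group-red-comp} but exploit the extra structure that Sylow bases provide in solvable groups, producing a Hall composition series one prime-layer at a time rather than a composition series all at once. Recall that in a solvable group $G$, a \emph{Sylow basis} is a system $\{P_1, \ldots, P_k\}$ of pairwise permuting Sylow subgroups, one for each prime $p_i$ dividing $|G|$; by Hall's theorem these exist, can be computed in polynomial time, and any two are conjugate in $G$. Setting $H_i = P_1 \cdots P_i$ yields an ascending chain $1 = H_0 \le H_1 \le \cdots \le H_k = G$ of Hall subgroups with $H_{i-1} \tril H_i$, and a Hall composition series refines this chain into composition factors of prime order respecting the Sylow basis.

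First I would fix a Sylow basis $\Sigma_G$ of $G$ and enumerate the Sylow bases of $H$, of which there are at most $|H| = n$ since all are conjugate. For each such pair I would build a Hall composition series of $G$ prime-layer by prime-layer: at stage $i$, assume a series through $H_{i-1}$ has been constructed and extend it by guessing a composition series of $H_i / H_{i-1} \cong P_i$. Since $P_i$ is a $p_i$-group, I would replicate the socle-based guessing of \thmref{group-red-comp} inside $P_i$: at the $j$-th step inside the socle of a remaining quotient, the number of candidate subgroups drops by a factor of $p_i$, so summing $\sum_{j=1}^{a_i} j$ where $|P_i| = p_i^{a_i}$ yields at most $p_i^{(1/2) a_i^2 + O(a_i)}$ choices within $P_i$. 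Multiplying across all primes, the overall count is
\[
\prod_{i=1}^k p_i^{(1/2) a_i^2 + O(a_i)} \;\le\; n^{(1/2) \log_p n + O(1)},
\]
using $p \le p_i$, $a_i \le \log_p n$, and $\sum_i a_i \log p_i = \log n$. For each guess I would then lift the local $p_i$-series back into $G$ via the Hall product decomposition $H_i = H_{i-1} P_i$ and invoke the Hall composition-series isomorphism oracle on the resulting pair.

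I expect the main obstacle to be showing that these local guesses inside the $P_i$ really do combine into a valid \emph{Hall} composition series of $G$. Concretely, one must verify that the lifted subgroups inside $H_i$ are normal in the appropriate initial segment of the global series, not merely inside $P_i$; this relies on the pairwise permutability of the Sylow basis together with the normality $H_{i-1} \tril H_i$, which together let us transport the socle structure of $P_i$ into the chain inside $G$. A secondary subtlety is that a genuine isomorphism $G \to H$ may carry $\Sigma_G$ to any of several Sylow bases of $H$, so we must enumerate all of them; since there are at most $n$ such bases and each contributes only a multiplicative overhead, this cost is absorbed into the $n^{O(1)}$ slack in the exponent.
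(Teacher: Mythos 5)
Your overall strategy --- fix Sylow bases, work locally inside each Sylow $p_i$-subgroup, and multiply the per-prime socle counts --- matches the paper's approach, and your counting calculation $\prod_i p_i^{(1/2)a_i^2 + O(a_i)} \le n^{(1/2)\log_p n + O(1)}$ is essentially \lemref{num-hcomp-choices}. But you diverge from the paper on two points, one of which causes you to worry about an obstacle that the paper's framework deliberately avoids.

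The central issue is that the paper's notion of a Hall composition series (Definition~\ref{defn:hcomp-iso-prob}) is \emph{not} a single global subnormal series of $G$ refining the chain $1 = H_0 \le H_1 \le \cdots \le H_\ell = G$. It is a tuple $(G, \cP, \cS)$ in which $\cS = \{S_1, \ldots, S_\ell\}$ is simply a collection of composition series, with each $S_i$ living entirely inside $P_i$. Two such objects are isomorphic when some $\phi : G \ra H$ carries each $P_i$ to $Q_i$ and restricts to an isomorphism of each pair $S_i, S_i'$. Under this definition, the ``main obstacle'' you flag --- that the lifted subgroups $H_{i-1}P_{ij}$ must be normal in the appropriate initial segment of a global chain --- never enters the reduction at all: no lifting is performed here. (The nesting of the $S_i$ into a single tree structure happens only later, in the Karp reduction to graph isomorphism of Section~\ref{sec:hgraph-red}, where permutability of the Sylow basis is indeed used.) Your normality claim is in fact true and could be verified from $H_{i-1}\trile H_i$ plus the Sylow-basis permutability, but establishing it is unnecessary work given the problem you are reducing to.

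A secondary divergence: you propose to enumerate all (at most $n$) Sylow bases of $H$ against a fixed one of $G$. The paper instead computes a single Sylow basis for each of $G$ and $H$ (\lemref{comp-syl-basis-poly}) and applies \corref{syl-iso}: since any two Sylow bases of a solvable group are conjugate (\thmref{sol-conj}), if $G \cong H$ then the two fixed Sylow bases are automatically isomorphic, so no enumeration is needed. Your version costs only an extra polynomial factor, which the $n^{O(1)}$ slack absorbs, so the bound survives either way; the paper's route is just cleaner. With these two adjustments --- dropping the lifting step and optionally dropping the Sylow-basis enumeration --- your argument coincides with the paper's proof of \thmref{sol-red-hcomp-drq}, part (a).
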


By generalizing the reduction from composition-series isomorphism to low-degree graph isomorphism, we show that Hall composition-series isomorphism is polynomial-time Karp reducible to low-degree graph isomorphism.  \thmref{sol-group-iso} then follows from \thmref{sol-red-hcomp} and the $n^{O(d / \log d)}$ algorithm~\cite{babai1983b} for testing isomorphism of graphs of degree at most $d$.

We also show a reformulation of group isomorphism as a collision problem which further reduces the constants in the exponents by a factor of $1 / 2$ using randomized algorithms or a factor of $1 / 3$ using quantum algorithms.  To apply this trick to reduce the $1 / 2$ in the exponents of Theorems~\ref{thm:sol-group-iso} -- \ref{thm:sol-red-hcomp} to $1 / 4$ and $1 / 6$ for randomized and quantum algorithms, it is necessary to use the $n^{O(d)}$ algorithm~\cite{babai1983a,babai1983b} for computing canonical forms of graphs of degree at most $d$ instead of using the $n^{O(d / \log d)}$ graph-isomorphism testing algorithm.  We start by fixing an isomorphism $\phi : G \ra H$.  This isomorphism can be used to define a bijection from the set $\cS$ of all composition series for $G$ and the set $\cS'$ of all composition series for $H$ that we consider.  Thus, each composition series $S \in \cS$ for $G$ is paired with some composition series $S' \in \cS'$ for $H$.  We guess $n^{(1 / 4) \log_p n + O(1)}$ composition series uniformly at random from both $\cS$ and $\cS'$; a collision detection argument implies that if $G \cong H$ then with high probability we guessed some pair of composition series $S \in \cS$ and $S' \in \cS'$ which are also isomorphic.  We then determine if an isomorphic pair exists by constructing a graph from each composition series guessed, computing the canonical forms and sorting the resulting graphs.  From this, we obtain $1 / 4$ for randomized algorithms.  The $1 / 6$ for quantum algorithms follows from using a quantum collision detection~\cite{bassard1997b} instead of selecting $S$ and $S'$ uniformly at random.

By modifying the generator-enumeration algorithm to perform group canonization, we can also apply our randomized and quantum speedups to the generator-enumeration algorithm using the same tricks (see \inshortorlong{the full version~\cite{rosenbaum2012b}}{\appref{gen-rq-can}} for the details).  This yields $n^{(1 / 2) \log_p + O(1)}$ and $n^{(1 / 3) \log_p n + O(1)}$ randomized and quantum algorithms for testing isomorphism of arbitrary groups.

The deterministic variants of our algorithms also apply to the \emph{group canonization problem}.  Here, the goal is to compute a multiplication table of a group with the underlying set $[n]$ which is isomorphic to the original group and uniquely identifies its isomorphism class.  We accomplish this by computing the canonical form of the graph for each composition series using the $n^{O(d)}$ algorithm for canonization of graphs of degree at most $d$.  We then select the canonical form that comes first lexicographically.  Since a multiplication table can be extracted from this graph in deterministic polynomial time, we obtain an algorithm for group canonization.  The details are presented in \inshortorlong{the full version~\cite{rosenbaum2012b}}{\appref{group-can}}.

In \secref{group-back}, we provide a review of standard group theory facts that are used in the rest of our paper.  In \secref{graph-red}, we present a formalized variant of Wagner's reduction and prove its correctness.  In \secref{comp-series}, we derive results on lifting composition series from factor groups and bounds on the number of composition series that must be considered.  In \secref{comp-red}, we prove the correctness of our Turing reductions from group isomorphism to composition-series isomorphism.  In \secref{p-algorithms}, we derive our algorithms for $p$-group isomorphism.  In \secref{sylow-bases}, we review the theory of Sylow bases which we utilize in our construction.\inlong{  Before presenting the full algorithm for solvable groups, we provide a sketch in \secref{sol-sketch}.}  In \secref{hcomp-red}, we show our reduction from solvable-group isomorphism to Hall composition-series isomorphism.  In \secref{hgraph-red}, we reduce Hall composition-series isomorphism to graph isomorphism.  In \secref{sol-algorithms}, we leverage our machinery to obtain our algorithms for solvable-group isomorphism.  \inshort{Due to length constraints, we omit some proofs and sketch others.  We direct the reader to the full version~\cite{rosenbaum2012b} for complete proofs of all results.}

\section{Group theory background}
\label{sec:group-back}
In this section, we review some standard results from group theory; we omit the proofs.  \inshortorlong{These}{Some of these are given for convenience in \appref{group-res}; the rest} may be found in group theory and algebra texts~\cite{rotman1995a,robinson1996a,lang2002a,holt2005a,rose2009a,artin2010a,wilson2010a,roman2011a}.  Readers familiar with group theory may wish to skip this section.

Let $G$ and $H$ be groups.  Throughout this paper, we assume all groups are finite and let $n = \abs{G}$.  We let $\iso(G, H)$ denote the set of all isomorphisms from $G$ to $H$.  The \emph{conjugation} of $x$ by $g$ is written as $x^g = g x g^{-1}$; the bijections $\iota_g : G \ra G : x \mapsto x^g$ are the inner automorphisms of $G$.  A \emph{normal} subgroup $N$ of $G$ (denoted $N \trile G$) is a subgroup that is closed under conjugation by elements of $G$.  The \emph{normal closure} of subset $S$ is the smallest normal subgroup of $G$ containing $S$ and is equal to $\langle S^G \rangle$.  The \emph{coset} of an element $g \in G$ for a subgroup $H \leq G$ is the set $g H = \setb{g h}{h \in H}$.  The \emph{canonical map} $\varphi : G \ra G / N : g \mapsto g N$ sends each element of $G$ to its coset.  We say that $H$ is a \emph{characteristic} subgroup of $G$ (denoted $H \ch G$) if every automorphism $\phi \in \aut(G)$ satisfies $\phi[H] = H$.  The \emph{center} of $G$ (denoted $Z(G)$) is the subgroup of the elements that commute with all of $G$.  For a prime $p$, a \emph{$p$-group} is a group of order a power of $p$.\inlong{  This is equivalent to the condition that every element of the group has order a power of $p$.}  A \emph{Sylow $p$-subgroup} of $G$ is a $p$-subgroup of order $p^e$ where $p^e$ is the largest power of $p$ dividing $n$.  A \emph{Sylow} subgroup of $G$ is a Sylow $p$-subgroup for some prime $p$.

A group is called \emph{simple} if it has no nontrivial proper normal subgroups.  A group is \emph{characteristically simple} if it has no proper nontrivial characteristic subgroups.  A \emph{minimal normal subgroup} $N$ of $G$ is a nontrivial normal subgroup of $G$ that does not properly contain any nontrivial normal subgroup of $G$.  The \emph{socle} of a group $G$ is denoted $\soc(G)$ and is the (characteristic) subgroup generated by the minimal normal subgroups of $G$.  A \emph{subnormal series} $S$ of a group $G$ is a tower of subgroups $G_0 = 1 \tril \cdots \tril G_m = G$.  The groups $G_{i + 1} / G_i$ are called the \emph{factors} of $S$.  A \emph{composition series} of a group $G$ is a maximal subnormal series for $G$.  The factors of a composition series are called \emph{composition factors}; by the Jordan-H{\"{o}}lder Theorem (cf.~\cite{robinson1996a,rose2009a}), the multiset of composition factors is uniquely determined by $G$ so they are also called the composition factors of $G$.  A group is \emph{solvable} if it has a subnormal series in which each factor is Abelian.  A \emph{normal series} is a subnormal series in which each $G_i \trile G$.  A \emph{central series} is a normal series where each $G_{i + 1} / G_i \leq Z(G / G_i)$.  A group is \emph{nilpotent} if it has a central series.  Thus, every nilpotent group is solvable.  A group is \emph{$k$-nilpotent} if it has a central series of length $k$.

We shall make use of the following results.

\begin{restatable}[cf.~\cite{rotman1995a,holt2005a,wilson2010a}]{proposition}{mnscharsim}
  Every minimal normal subgroup is characteristically simple.
\end{restatable}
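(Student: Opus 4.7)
The plan is to show that any characteristic subgroup $K$ of a minimal normal subgroup $N \trile G$ must be either trivial or all of $N$, by arguing that such a $K$ is in fact normal in the ambient group $G$ and then invoking minimality.

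First I would observe that since $N \trile G$, each inner automorphism $\iota_g : G \ra G$ (conjugation by $g \in G$) restricts to an automorphism $\restr{\iota_g}{N} \in \aut(N)$, because $g N g^{-1} = N$. This is the one substantive point in the argument; everything else is just chasing definitions.

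Next, let $K$ be a characteristic subgroup of $N$. For any $g \in G$, the restriction $\restr{\iota_g}{N}$ is an automorphism of $N$, and since $K \ch N$, we have $\restr{\iota_g}{N}[K] = K$, i.e. $g K g^{-1} = K$. Since this holds for all $g \in G$, we get $K \trile G$. Now $K \leq N$ and $K \trile G$, so by the minimality of $N$ as a normal subgroup of $G$, we conclude that either $K = 1$ or $K = N$. Since $K$ was an arbitrary characteristic subgroup of $N$, this shows $N$ has no proper nontrivial characteristic subgroups, i.e. $N$ is characteristically simple.

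There is no real obstacle here; the only thing to be a bit careful about is being explicit that the condition "characteristic" is with respect to \emph{all} automorphisms of $N$ (not just those induced from $G$), which is a strictly stronger property than what we need and hence makes the implication go through easily. I would keep the proof to just a few lines.
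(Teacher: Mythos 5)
Your argument is correct and is exactly the paper's approach: the paper isolates the key step as a separate proposition (if $H \ch N \trile G$ then $H \trile G$, proved by restricting inner automorphisms of $G$ to $N$) and then notes the result follows by minimality. You have simply inlined that lemma and spelled out the final appeal to minimality.
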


\begin{restatable}[cf.~\cite{rotman1995a,robinson1996a,holt2005a,wilson2010a}]{proposition}{charsimprod}
  A group is characteristically simple \ifft it is the direct product of isomorphic simple groups.
\end{restatable}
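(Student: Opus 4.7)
The plan is to prove the two directions separately, handling the easier backward direction first and then using a minimal-normal-subgroup argument for the forward direction.

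For the backward direction, assume $G = S_1 \times \cdots \times S_k$ where each $S_i$ is simple and all $S_i$ are mutually isomorphic. I would split into two cases depending on whether $S = S_1$ is abelian or non-abelian. In the abelian case, $S \cong \bbZ / p$ for some prime $p$, so $G$ is an $\bbF_p$-vector space of dimension $k$; its characteristic subgroups are the subspaces invariant under $\aut(G) = \gl_k(\bbF_p)$, and the only such subspaces are $0$ and $G$. In the non-abelian case, the standard fact that the normal subgroups of a direct product of non-abelian simple groups are exactly the subproducts $\prod_{i \in I} S_i$ reduces the problem to showing that only $I = \emptyset$ and $I = \{1,\ldots,k\}$ give characteristic subgroups; since all $S_i$ are isomorphic, every permutation of the coordinates extends to an automorphism of $G$, so any characteristic subproduct is stable under the full symmetric group action on indices and hence must be trivial or all of $G$.

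For the forward direction, let $G$ be characteristically simple (so in particular nontrivial) and let $N$ be a minimal normal subgroup of $G$, which exists by finiteness. The subgroup $K = \langle \phi(N) : \phi \in \aut(G) \rangle$ is nontrivial and characteristic, so $K = G$. Now I would select, among all subsets $\{\phi_1(N), \ldots, \phi_m(N)\}$ of automorphic images of $N$ whose internal product is direct, one of maximum size; call the product $D = \phi_1(N) \times \cdots \times \phi_m(N)$. I claim $D = G$. Otherwise some $\phi(N)$ is not contained in $D$; then $\phi(N) \cap D$ is a normal subgroup of $G$ properly contained in the minimal normal subgroup $\phi(N)$, hence trivial, so $\{\phi_1(N), \ldots, \phi_m(N), \phi(N)\}$ still has a direct product — contradicting maximality.

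It remains to show each factor $\phi_i(N)$ is simple. If $K \trile \phi_i(N)$ is a nontrivial proper subgroup, then since the factors of the internal direct product centralize each other, $K$ is normalized by every $\phi_j(N)$ for $j \neq i$ as well as by $\phi_i(N)$ itself, so $K \trile D = G$; this contradicts the minimality of the normal subgroup $\phi_i(N)$. Since all $\phi_i(N) \cong N$, the factors are mutually isomorphic simple groups, completing the proof.

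The main obstacle I anticipate is the maximality/extension step in the forward direction: one must verify carefully that for a minimal normal subgroup $M = \phi(N)$ and an internal direct product $D$ of others, $M \not\subseteq D$ implies $M \cap D = 1$, so that $M$ can genuinely be appended as a new direct factor. The key lemma is that any two distinct minimal normal subgroups intersect trivially (their intersection is normal in $G$ and sits inside each, forcing it to be $1$ by minimality), which handles this cleanly once $M$ is shown to be a minimal normal subgroup — itself immediate because $\phi$ is an automorphism of $G$.
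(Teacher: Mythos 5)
Your proof is correct, and it takes a genuinely different route from the paper in two respects. First, you prove both directions: the paper's proof (in the appendix) only establishes the forward implication, and your backward argument --- splitting on whether the common simple factor is abelian (elementary abelian $p$-group as an $\mathbb{F}_p$-vector space with $\mathrm{GL}$-action) or non-abelian (normal subgroups of such a product are subproducts, which are permuted by coordinate automorphisms) --- is a complete argument the paper simply omits. Second, in the forward direction both proofs build a maximal internal direct product of automorphic images of a minimal normal subgroup $N$ and show it equals $G$ via characteristic simplicity, but they finish differently: you show each factor $\phi_i(N)$ is simple in one stroke, by observing that a nontrivial proper normal subgroup of one factor is centralized by all the other factors and hence normal in $G$, contradicting minimality of $\phi_i(N)$; the paper instead observes each factor is characteristically simple (invoking the preceding proposition) and recurses by strong induction on group order, deferring simplicity to smaller instances. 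Your version is more self-contained and also makes it immediate that the simple factors are pairwise isomorphic (all are $\cong N$), whereas the paper's version reuses the earlier proposition at the cost of a slightly more roundabout argument.
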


\begin{restatable}[cf.~\cite{robinson1996a,holt2005a}]{proposition}{socprod}
  The socle of a group is a direct product of minimal normal subgroups.
\end{restatable}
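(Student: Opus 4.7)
The plan is to realize $\soc(G)$ as an internal direct product by greedily accumulating minimal normal subgroups. Let $\{N_i\}_{i \in I}$ enumerate all minimal normal subgroups of $G$, so by definition $\soc(G) = \langle N_i : i \in I \rangle$. I will pick a subset $J \subseteq I$ that is maximal (in the finite-group setting such a $J$ exists trivially) subject to the property that $\langle N_j : j \in J \rangle$ equals the internal direct product of the $N_j$'s, meaning $N_j \cap \langle N_k : k \in J, k \neq j \rangle = 1$ for every $j \in J$. Then I will show $\langle N_j : j \in J \rangle$ already contains every $N_i$, so it equals $\soc(G)$.

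The two small facts I would establish first are the following. First, for any two distinct minimal normal subgroups $N, M$ of $G$, the intersection $N \cap M$ is normal in $G$ and contained in both $N$ and $M$, so by minimality it is either equal to $N$ (forcing $N \subseteq M$, hence $N = M$ by symmetry) or trivial; since $N \neq M$ we get $N \cap M = 1$. Second, whenever $N, M \trile G$ satisfy $N \cap M = 1$, any commutator $[n,m] = n m n^{-1} m^{-1}$ lies in $N \cap M = 1$ (using normality of $M$ to place $n m n^{-1} \in M$, so $[n,m] \in M$, and normality of $N$ to place $m n^{-1} m^{-1} \in N$, so $[n,m] \in N$). Hence elements of $N$ and $M$ commute, and more generally $\langle N, M \rangle \cong N \times M$ internally. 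Iterating this, if $N_j \cap \langle N_k : k \in J, k \neq j \rangle = 1$ for every $j \in J$, then $\langle N_j : j \in J \rangle$ is the internal direct product $\prod_{j \in J} N_j$.

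Now I invoke maximality of $J$: for any $i \in I \setminus J$, the set $J \cup \{i\}$ must fail the direct-product condition. The only way this can happen is that $N_i \cap \langle N_j : j \in J \rangle \neq 1$ (the other required trivial intersections, for $j \in J$, are inherited from $J$ via the trivial-intersection argument applied to $N_j$ and $\langle N_k : k \in J \cup \{i\}, k \neq j\rangle = \langle N_k : k \in J, k \neq j\rangle \cdot N_i$, whose triviality with $N_j$ would follow if $N_i \cap N_j = 1$ and the corresponding intersection within $J$ is trivial). But $N_i \cap \langle N_j : j \in J \rangle$ is a normal subgroup of $G$ contained in the minimal normal subgroup $N_i$, so by minimality it must equal $N_i$, i.e.\ $N_i \subseteq \langle N_j : j \in J \rangle$. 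Therefore every $N_i$ lies in $\langle N_j : j \in J \rangle$, which gives $\soc(G) = \langle N_i : i \in I \rangle = \langle N_j : j \in J \rangle = \prod_{j \in J} N_j$, as required.

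The only delicate step is justifying that maximality of $J$ really forces $N_i \cap \langle N_j : j \in J \rangle \neq 1$ rather than failure of some other intersection condition when we try to enlarge $J$ by $i$; I would handle this by symmetry, using the earlier fact that any two distinct minimal normal subgroups intersect trivially together with the observation that $\langle N_j \rangle_{j \in J}$ is already a direct product, so adjoining $N_i$ can only break directness through the single new intersection $N_i \cap \langle N_j : j \in J \rangle$. Everything else is a standard commutator calculation and a finite maximality argument, so no Zorn-style subtlety arises.
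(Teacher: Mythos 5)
Your proof is correct and follows essentially the same greedy-accumulation strategy as the paper's (rather terse) proof: pick a subfamily of minimal normal subgroups whose product is direct, and use minimality to argue that every other minimal normal subgroup is absorbed. One caveat: the parenthetical in your middle paragraph, which suggests $N_j \cap (K_j N_i) = 1$ would follow from $N_j \cap K_j = 1$ together with $N_j \cap N_i = 1$, is not a valid inference on its own (three distinct lines in $\bbZ_p^2$, viewed as a group, already give a counterexample to that pattern). Your closing paragraph supplies the correct reasoning and should replace the parenthetical in a cleaned-up write-up: because $\prod_{j \in J} N_j$ is already an internal direct product of normal subgroups of $G$, and $N_i$ meets it trivially, the product $\bigl(\prod_{j \in J} N_j\bigr) \times N_i$ is again a direct product, and then the full collection of symmetric intersection conditions for $J \cup \{i\}$ holds automatically, since in any internal direct product $A \times B \times C$ one has $A \cap BC = 1$. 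Equivalently, one can sidestep the issue entirely by phrasing the maximal-$J$ condition using the one-sided inductive criterion for internal direct products (each new factor intersects the product of the preceding ones trivially), which is what the paper's one-line sketch implicitly invokes.
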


Together, the last three propositions imply that the socle can be written as a direct product of some subset of its simple minimal normal subgroups.  We shall make use of this fact in our reduction from group isomorphism to composition-series isomorphism.  Note that the direct product decomposition of the socle is \emph{not} unique.  The next proposition allows us to compute this decomposition.

\begin{restatable}[cf.~\cite{seress2003a}]{proposition}{soccomp}
  The socle and minimal normal subgroups can be computed in polynomial time.
\end{restatable}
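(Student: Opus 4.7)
The plan is to exploit the observation that every minimal normal subgroup $N$ of $G$ is the normal closure of any of its non-identity elements: for $g \in N \setminus \{1\}$, the normal closure $\langle g^G \rangle$ is a nontrivial normal subgroup of $G$ contained in $N$, so by minimality it equals $N$. This reduces the search space to at most $n-1$ candidate subgroups.

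First I would compute, for each non-identity $g \in G$, the normal closure $N_g \coleq \langle g^G \rangle$. In polynomial time this is done by starting with the conjugacy class $g^G = \setb{x g x^{-1}}{x \in G}$ (computable by $n$ table lookups) and then closing under the group operation using a standard orbit-style algorithm: repeatedly multiply existing elements and add any new products to a worklist until no new elements appear. Each of these closures has size at most $n$, so this step takes $\mathrm{poly}(n)$ time per element and $\mathrm{poly}(n)$ time overall.

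Next, I would filter for minimality. A candidate $N_g$ is a minimal normal subgroup if and only if for every $h \in N_g \setminus \{1\}$ we have $N_h = N_g$; equivalently, no $N_h$ with $h \in N_g \setminus \{1\}$ is a proper subset of $N_g$. This test is a pairwise inclusion check among the already-computed $N_g$'s, and hence runs in polynomial time. The distinct minimal normal subgroups of $G$ are exactly the $N_g$'s that survive this filter. Finally, the socle is computed by taking the subgroup generated by the union of all surviving $N_g$'s, again by an orbit-style closure under the group operation; since $\soc(G) \leq G$ has order at most $n$, this step is also polynomial.

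The only genuine obstacle is verifying that each of the primitives — conjugacy-class enumeration, subgroup closure, and inclusion testing — runs in polynomial time when the group is given by its multiplication table; all of these are standard and follow from the fact that every subgroup and every normal closure has order at most $n$, so the closure loops terminate in at most $n$ rounds, each doing $\mathrm{poly}(n)$ table lookups. Correctness of the overall procedure is immediate from the characterization of minimal normal subgroups as normal closures of their non-identity elements, together with the definition of the socle as the subgroup generated by the minimal normal subgroups.
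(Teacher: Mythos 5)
Your proposal is correct and follows essentially the same route as the paper's own proof: both compute the normal closure $\langle g^G\rangle$ for each nontrivial $g$, certify minimality by checking that $\langle h^G\rangle = \langle g^G\rangle$ for all nontrivial $h \in \langle g^G\rangle$, and then generate the socle from the surviving subgroups. The only (cosmetic) difference is that you spell out the low-level closure/inclusion primitives, which the paper leaves implicit.
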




\newcommand{\firstsylowenv}{
  \begin{restatable}[First Sylow Theorem, cf.~\cite{rotman1995a,robinson1996a,lang2002a,rose2009a,artin2010a,roman2011a}]{theorem}{firstsylow}
    Let $G$ be a group and let $p$ be a prime that divides its order.  There exists a Sylow $p$-subgroup of $G$.
  \end{restatable}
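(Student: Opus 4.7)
The plan is to give Wielandt's counting proof, which is the cleanest and most self-contained approach. Write $\abs{G} = p^e m$ with $\gcd(p, m) = 1$, and let $\Omega$ be the collection of all subsets of $G$ of cardinality $p^e$. Let $G$ act on $\Omega$ by left translation: $g \cdot S = g S$. The target is to exhibit an orbit whose stabilizer has order exactly $p^e$.

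The first key step is the number-theoretic observation that $\abs{\Omega} = \binom{p^e m}{p^e}$ is not divisible by $p$. This follows from Kummer's theorem (the $p$-adic valuation of a binomial coefficient equals the number of carries when adding the lower two arguments in base $p$), or equivalently from a direct application of Lucas' theorem to the base-$p$ expansions of $p^e m$ and $p^e$. I would include a one-line verification that no carries occur.

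Since $\Omega$ is partitioned into $G$-orbits and $p \nmid \abs{\Omega}$, at least one orbit $\cO$ has $p \nmid \abs{\cO}$. Pick $S \in \cO$ and set $P = \setb{g \in G}{g S = S}$, its stabilizer. By the orbit-stabilizer theorem, $\abs{\cO} = [G : P]$, so $p \nmid [G : P]$ and therefore $p^e \mid \abs{P}$. On the other hand, fixing any $s \in S$, the map $P \ra S : h \mapsto h s$ is well-defined (because $P s \subseteq P S = S$) and injective, so $\abs{P} \leq \abs{S} = p^e$. Combining the two bounds forces $\abs{P} = p^e$, so $P$ is the required Sylow $p$-subgroup.

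The main obstacle, if any, is the combinatorial lemma $p \nmid \binom{p^e m}{p^e}$; everything else is a routine application of orbit-stabilizer. Since the paper cites this result as standard and only uses its statement, this short Wielandt-style argument suffices and avoids the more delicate induction via the class equation (which additionally requires Cauchy's theorem as a base case).
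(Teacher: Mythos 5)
Your Wielandt-style argument is correct, and the details you flag as delicate (verifying $p \nmid \binom{p^e m}{p^e}$ via Kummer/Lucas, the injectivity of $h \mapsto hs$ using $P s \subseteq PS = S$, and orbit-stabilizer) all check out. Note, though, that the paper does not actually prove this theorem: it is stated as a citation to standard references (Rotman, Robinson, Lang, etc.) and restated in the appendix without proof, with only the strengthened form (Sylow $p$-subgroups are exactly the maximal $p$-subgroups) mentioned afterward. So there is no paper proof to compare against; your self-contained Wielandt argument is a perfectly valid way to discharge the claim, and it indeed sidesteps the usual class-equation induction and the separate appeal to Cauchy's theorem.
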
}

\newcommand{\secsylowenv}{
  \begin{restatable}[Second Sylow Theorem, cf.~\cite{rotman1995a,robinson1996a,lang2002a,rose2009a,artin2010a,roman2011a}]{theorem}{secsylow}
    All Sylow $p$-subgroups of $G$ are conjugate.
  \end{restatable}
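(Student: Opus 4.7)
The plan is to prove conjugacy via a counting argument on a coset action, which is the standard approach for this theorem. Let $P$ be a fixed Sylow $p$-subgroup of $G$ (whose existence is guaranteed by the First Sylow Theorem) and let $Q$ be any other Sylow $p$-subgroup. I would like to produce some $g \in G$ with $g Q g^{-1} = P$ (equivalently $g^{-1} P g = Q$).

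First I would set up the left multiplication action of $Q$ on the coset space $G / P = \{gP : g \in G\}$. The key numerical observation is that $\abs{G / P} = [G : P]$ is coprime to $p$, because $P$ has order $p^e$ where $p^e$ is the largest power of $p$ dividing $\abs{G}$. On the other hand, every orbit of a $p$-group action has size a power of $p$, so the sum of the orbit sizes (which equals $[G : P]$) can only be coprime to $p$ if at least one orbit has size one. That is, there exists $g \in G$ whose coset $g P$ is fixed by every element of $Q$.

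Next I would translate this fixed point back into a subgroup inclusion. The condition that $q \cdot g P = g P$ for all $q \in Q$ is equivalent to $g^{-1} q g \in P$ for all $q \in Q$, i.e., $g^{-1} Q g \leq P$. Since $\abs{g^{-1} Q g} = \abs{Q} = \abs{P}$ (both are Sylow $p$-subgroups and hence have the same order), the inclusion is an equality, giving $g^{-1} Q g = P$, so $P$ and $Q$ are conjugate.

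I do not expect a serious obstacle here; the only subtle point is the parity argument ensuring a fixed point exists, which rests on the coprimality $\gcd([G : P], p) = 1$ that follows directly from the definition of a Sylow $p$-subgroup. A minor bookkeeping matter is confirming that the chosen $g$ actually conjugates $Q$ onto $P$ (rather than merely into $P$); this is immediate once one invokes that both subgroups have the same order. Since everything else is a standard application of the orbit-stabilizer theorem, the argument should go through cleanly and independently of any structural facts specific to the paper.
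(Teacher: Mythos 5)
Your argument is correct and is the standard textbook proof of the Second Sylow Theorem: let $Q$ act by left multiplication on $G/P$, note $[G:P]$ is prime to $p$, extract a fixed coset $gP$ from the orbit-size count, and upgrade $g^{-1}Qg \leq P$ to equality by comparing orders. The paper itself does not supply a proof of this theorem — it is stated as a background fact with citations to standard references (Rotman, Robinson, Lang, etc.) — so there is nothing to compare against, but your write-up is a faithful reconstruction of the classical orbit-counting argument those sources give, and no step is missing.
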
}

\newcommand{\compsylowenv}{
  \begin{restatable}[Kantor~\cite{kantor1985a}]{proposition}{compsylow}
    \label{prop:comp-sylow}
    A Sylow $p$-subgroup of a group $G$ can be computed in polynomial time.
  \end{restatable}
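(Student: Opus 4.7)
The plan is to establish Kantor's theorem in its original permutation-group setting: given $G \leq \sym(n)$ by a generating set, find a Sylow $p$-subgroup of $G$ in time polynomial in $n$ (note $\log|G| = O(n \log n)$, so polynomial in $n$ subsumes polynomial in the bit length of the input). The overall strategy is structural induction on $G$, reducing first to primitive permutation groups and then to Sylow subgroups of the nonabelian simple composition factors of $\soc(G)$.

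I would begin with the solvable case, which already contains the main lifting idea. A composition series $1 = G_0 \tril \cdots \tril G_m = G$ with cyclic prime-order factors can be computed in polynomial time for solvable permutation groups (Luks; Kantor--Luks). Maintaining a Sylow $p$-subgroup $P_i$ of $G_i$, I would extend to $G_{i+1}$ as follows: if $[G_{i+1}:G_i] = p$, set $P_{i+1} = \langle P_i, g \rangle$ for any coset representative $g \in G_{i+1} \setminus G_i$; otherwise the index is coprime to $p$ and by the Frattini argument $G_{i+1} = G_i \cdot N_{G_{i+1}}(P_i)$, so a suitable $G_i$-conjugate of $P_i$ is already Sylow in $G_{i+1}$. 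The nontrivial subroutine is computing $N_{G_{i+1}}(P_i)$; for solvable permutation groups this is polynomial-time by Luks's machinery.

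For general $G$, I would reduce to the primitive case by the standard orbit/block hierarchy. If $G$ is intransitive, handle each orbit separately and combine via the Frattini-type lifting. If $G$ is transitive but imprimitive with minimal block system $\cB$, let $K$ be the kernel of the action on $\cB$; recurse on $K$ (which acts with strictly smaller orbits) and on the primitive quotient $G/K$, then lift by the same argument used in the solvable case. This leaves the case of primitive $G$.

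For primitive $G$, I would invoke the O'Nan--Scott theorem (which rests on the classification of finite simple groups) to split into cases according to the structure of $\soc(G)$: affine, almost-simple, diagonal, product, and twisted-wreath type. In each case one reduces the Sylow problem to constructing Sylow $p$-subgroups of the nonabelian simple composition factors $T$ of $\soc(G)$ and then reassembling them via the explicit socle decomposition. The main obstacle --- and the reason Kantor's proof is substantive --- is precisely this last step: one needs constructive use of CFSG to name each $T$ up to isomorphism from its permutation action, and then, for each family (alternating, classical and exceptional Lie type, sporadic), to exhibit an explicit polynomial-time construction of a Sylow $p$-subgroup (for instance, Borel subgroups suffice when $p$ is the defining characteristic of a Lie-type group, while other primes require normalizers of suitable tori, and the sporadic groups are dispatched in constant time). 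The final algorithm is obtained by carefully bookkeeping these simple-group constructions back through the primitive, imprimitive, and intransitive reductions.
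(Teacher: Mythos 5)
You are proving a far stronger theorem than the one the paper actually uses. In this paper groups are given by their full multiplication tables, so $n=\abs{G}$ and ``polynomial time'' means polynomial in $\abs{G}$; the paper's own proof (in the group-theory appendix) is correspondingly elementary: compute the largest power $p^e$ dividing $n$, then greedily grow a $p$-subgroup $K$ by adjoining elements of $p$-power order, checking over all of $G$ by brute force that the enlarged subgroup is still a $p$-group, and stop when $\abs{K}=p^e$. Its correctness needs only the fact that the Sylow $p$-subgroups are exactly the maximal $p$-subgroups of $G$, so a non-Sylow $p$-subgroup always admits a strictly larger $p$-overgroup. Your proposal instead outlines Kantor's genuine theorem for permutation groups given by generators; via the regular representation this does imply the statement needed here, but at the price of CFSG, the O'Nan--Scott analysis, Luks-type normalizer computations, and family-by-family Sylow constructions in the simple groups --- none of which is needed in the table model, and much of which your write-up necessarily defers to citation rather than argument. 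What your route buys is a result that remains polynomial when $G$ is presented succinctly; what the paper's route buys is a half-page self-contained proof adequate for its setting.

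Separately, one step of your sketch is wrong as stated: in the solvable lifting, when $[G_{i+1}:G_i]=p$ you cannot take $P_{i+1}=\langle P_i,g\rangle$ for an arbitrary coset representative $g$, nor even an arbitrary representative of $p$-power order. For example, with $G_{i+1}=S_3\times\bbZ_2$, $G_i=S_3\times\{0\}$, $p=2$, $P_i=\langle((1\,2),0)\rangle$ and $g=((1\,3),1)$, the subgroup $\langle P_i,g\rangle$ contains $((1\,2)(1\,3),1)$, an element of order $6$, so it is not a $2$-group. The repair is exactly the Frattini argument you invoke in the coprime case: since $G_{i+1}=G_i\,N_{G_{i+1}}(P_i)$, extend inside $N_{G_{i+1}}(P_i)$ (e.g.\ take a Sylow $p$-subgroup of the normalizer containing $P_i$). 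This is fixable, but it shows that even the ``easy'' layers of the permutation-group route lean on normalizer computations that the table-model proof avoids entirely.
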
}

\inlong{
  \firstsylowenv
  \secsylowenv}

\compsylowenv

\section{A Karp reduction from\insodaornot{\\}{ }composition-series isomorphism to\insodaornot{\\}{ }low-degree graph isomorphism}
\label{sec:graph-red}
In this section, we present a variant of Wagner's Karp reduction from composition-series isomorphism to graph isomorphism and supply the correctness proof claimed in the introduction.  The main ideas are essentially the same; however, we fill in several gaps in Wagner's arguments and define the construction precisely so that rigorous proofs are possible.  Finally, we adapt the reduction to perform composition-series canonization instead of isomorphism testing.  First, we define isomorphisms for composition series.

\begin{definition}
  \label{defn:iso-resp}
  Let $S$ be the subnormal series $G_0 = 1 \tril \cdots \tril G_m = G$ and let $S'$ be the subnormal series $H_0 = 1 \tril \cdots \tril H_m = H$.  An isomorphism $\phi : G \ra H$ respects $S$ and $S'$ if $\phi(G_i) = H_i$ for all $i$.  We say that an isomorphism $\phi : G \ra H$ which respects $S$ and $S'$ is an isomorphism from $S$ to $S'$.
\end{definition}

We shall make use of the following result.

\begin{theorem}[Babai, Kantor and Luks~\cite{babai1983b}]
  \label{thm:const-deg-iso}
  Isomorphism of colored graphs of degree at most $d$ is in $n^{O(d / \log d)}$ deterministic time.
\end{theorem}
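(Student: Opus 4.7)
The plan is to follow Luks' reduction from bounded-degree graph isomorphism to string isomorphism under a restricted class of permutation groups, and then sharpen the recursion analysis using bounds on primitive permutation groups coming from the Classification of Finite Simple Groups (CFSG).

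First I would reduce colored graph isomorphism on two graphs $G_1, G_2$ of degree at most $d$ to computing the coset of isomorphisms in $\aut(G_1 \sqcup G_2)$ that exchange the two sides. Standard techniques reduce this in turn to the following \emph{string isomorphism} problem: given a string $x : V \ra C$ and a group $\Gamma \leq \sym(V)$ specified by generators, compute $\{\gamma \in \Gamma : x \circ \gamma = x\}$. Crucially, the degree-$d$ bound forces the groups $\Gamma$ produced by the reduction to lie in Luks' class $\Gamma_d$ of groups whose nonabelian composition factors act faithfully on at most $d$ points, since stabilizing an edge of a connected component reduces the local action to a subgroup of $\sym(d-1)$ and the induced tower of stabilizers inherits this restriction.

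Next I would run Luks' recursive algorithm on a pair $(\Gamma, x)$: if $\Gamma$ is intransitive on $V$ then process each orbit independently (no branching); if $\Gamma$ is transitive, take a minimal $\Gamma$-invariant partition $V = \Delta_1 \cup \cdots \cup \Delta_k$, write $\Gamma$ as an extension of its primitive action on the blocks by the kernel $K$ fixing the block system setwise, enumerate a transversal of $K$ in $\Gamma$, and recurse on each coset (within $K$, the problem splits across blocks). The total runtime is governed by the branching factor at primitive steps, that is, by the order of a primitive permutation group of degree $m \leq d$ arising as a composition factor of $\Gamma$.

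Here is where the Babai--Kantor--Luks improvement enters: CFSG-based results (Cameron, Liebeck--Praeger, etc.) imply that every primitive permutation group of degree $m$ that does not contain $A_m$ has order at most $m^{O(\log m)}$. One then must handle the ``giant'' case in which the composition factor is $A_m$ or $\sym(m)$ separately by a direct combinatorial orbit-splitting argument on the block system, exploiting the fact that a giant action of large degree cannot be hidden inside the local neighborhood structure of a degree-$d$ graph. Substituting the $m^{O(\log m)}$ bound into Luks' recursion in place of the naive $m!$ yields a search tree of size $n^{O(d / \log d)}$ and, after polynomial-time work at each node, the claimed deterministic runtime. The main obstacle is precisely this giant-alternating case: one needs to argue that the structure of $\Gamma_d$ and the degree-$d$ hypothesis together prevent an $A_m$-composition factor with $m$ close to $d$ from inflating the recursion, and this is where the detailed CFSG analysis is indispensable.
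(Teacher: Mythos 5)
This theorem is quoted from Babai, Kantor and Luks~\cite{babai1983b}; the paper under review cites it as a black box and contains no proof of it, so there is no ``paper proof'' to compare against. Your proposal is therefore best evaluated as a stand-alone sketch of the BKL argument.

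The overall architecture you describe --- Luks' reduction from bounded-degree graph isomorphism to string isomorphism under a group in the class $\Gamma_d$, recursion through minimal block systems, and replacing the trivial $k!$ bound on the primitive quotient with a CFSG-based bound, with a separate treatment of the giants $A_k, S_k$ --- is indeed the high-level shape of the BKL proof. However, the crucial quantitative step does not go through as you wrote it, and the source of the trouble is a conflation of two different degrees. When the recursion hits a transitive group $\Gamma$ and passes to the induced primitive action on a minimal block system, the number of blocks $k$ is \emph{not} bounded by $d$: for example $\mathbb{Z}_p \leq \mathrm{AGL}(1,p)$ lies in $\Gamma_2$ yet acts primitively on $p$ points with $p$ unbounded. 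What is bounded by $d$ is the minimal faithful degree of the nonabelian composition factors of $\Gamma$, a very different quantity. Your sentence ``by the order of a primitive permutation group of degree $m \leq d$'' treats these as the same object, and the rest of the analysis inherits the error: if one only knows that each primitive quotient $P_i$ of degree $k_i$ has order at most $k_i^{O(\log k_i)}$ (Cameron/Mar\'oti), then unrolling Luks' recurrence along a root-to-leaf chain with $\prod_i k_i \leq n$ gives $\prod_i |P_i| \leq n^{O(\log n)}$, not $n^{O(d/\log d)}$; this is worse than the claimed bound precisely in the interesting regime of small $d$. To obtain $n^{O(d/\log d)}$ one needs the sharper statement that primitive permutation groups of degree $k$ \emph{lying in $\Gamma_d$} (and not containing $A_k$) have order at most $k^{O(d/\log d)}$ --- a bound in which the exponent depends on $d$, not on $\log k$ --- and this is where the detailed O'Nan--Scott/CFSG analysis of $\Gamma_d$-primitive groups, including the product-action case your Cameron-type statement glosses over, actually does its work. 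Your treatment of the alternating giants is also left entirely at the level of a slogan. So while the plan points in the right direction, the central quantitative lemma that makes the recursion close at $n^{O(d/\log d)}$ is missing, and the claim that the primitive degree is at most $d$ is simply false.
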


A closely related problem is \emph{graph canonization}.  Here, we are given a graph and must compute its canonical form.  This is a graph which is isomorphic to the original graph and has the property that two graphs are isomorphic \ifft their canonical forms are equal.

In a \emph{colored graph}, each vertex has an associated color.  Let $X_1$ and $X_2$ be colored graphs.  An isomorphism $\phi : X_1 \ra X_2$ must also respect the colors so that a node $x \in X_1$ has color $k$ \ifft $\phi(x) \in X_2$ has color $k$.  We will use the following algorithm for colored graph canonization.

\begin{theorem}[Babai, Kantor and Luks~\cite{babai1983a,babai1983b}]
  \label{thm:const-deg-can}
  Canonization of colored graphs of degree at most $d$ is in $n^{O(d)}$ deterministic time.
\end{theorem}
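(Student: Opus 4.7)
The plan is to adapt Luks' polynomial-time isomorphism algorithm for bounded-degree graphs into a canonization procedure. First, I would reduce colored graph canonization to uncolored graph canonization: attach to each vertex a small gadget encoding its color (for example, a path whose length is determined by the color index, with a distinguishing endpoint). This blows up the vertex count by only a constant factor and increases the maximum degree by at most~$1$, so the degree bound is essentially preserved. Disconnected graphs are handled by canonizing each connected component separately and concatenating the resulting forms in sorted order.

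For a connected graph $X$ of maximum degree~$d$, I would try every vertex $v$ as a root, compute a canonical form $C(X,v)$ of the rooted graph, and output the lexicographically smallest $C(X,v)$ over all~$v$. To canonize a rooted graph $(X,v)$, run BFS from~$v$. The stabilizer of $v$ in $\aut(X)$ acts level by level, and at each level it can permute only the children of a given parent, which number at most $d-1$. Inductively, this stabilizer embeds in an iterated wreath product of symmetric groups of degree at most~$d$, placing it in Luks' class $\Gamma_d$ of groups with composition factors of order at most~$d$. Luks' recursive framework for colored string canonization under $\Gamma_d$-actions --- branching on orbits, then on blocks of imprimitivity, and finally solving primitive cases of bounded size by brute force --- then runs in $n^{O(d)}$ time.

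The main obstacle is upgrading from isomorphism testing to canonization. At each recursive step the algorithm must commit to a \emph{canonical} sequencing of orbits and blocks rather than merely verifying that two sequencings match, and this ordering must itself be invariant under the underlying $\Gamma_d$-action. The standard remedy is to sort orbits (and, within an orbit, the blocks of a block system) by the canonical forms returned by the recursive call; the recursion then has type ``canonization'' rather than ``isomorphism'', with both branches of a comparison canonized independently before being merged. Controlling the blow-up caused by enumerating all primitive actions at the leaves, and showing that the bounded-composition-factor structure of $\Gamma_d$ keeps the branching factor at each level polynomial in $n$ with exponent $O(d)$, is the technical heart of the Babai--Kantor--Luks argument and is what ultimately yields the $n^{O(d)}$ bound.
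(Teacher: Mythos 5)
The paper gives no proof of this theorem --- it is stated as a black-box citation to Babai, Kantor and Luks, so there is nothing in the paper for your sketch to be compared against. On its own terms your sketch correctly captures the architecture of the Babai--Luks canonical-labeling argument: reduce colored to uncolored, reduce to connected rooted graphs by minimizing over roots, observe that the root stabilizer lies in Luks' class $\Gamma_d$, and then canonize via a recursion over orbits and block systems that commits to an ordering by comparing recursively computed canonical forms rather than merely verifying a match. Two points are worth sharpening. First, the color-gadget reduction is polynomial-factor rather than constant-factor, since the number of color classes can scale with $n$ --- and indeed the graphs this paper feeds into \thmref{const-deg-can} use up to $n$ distinct integer colors to rigidify the binary trees in the solvable-group construction. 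This costs nothing for the theorem statement since a polynomial blowup is absorbed into the $n^{O(d)}$ bound, but the claim as written is inaccurate. Second, the assertion that the root stabilizer ``at each level \ldots can permute only the children of a given parent'' is only literally correct when $X$ is a tree; in a general bounded-degree graph a vertex in BFS layer $i+1$ may have several neighbours in layer $i$, so there is no well-defined single parent. The correct inductive statement is that the restriction homomorphism from $\aut_v$ of the radius-$(i{+}1)$ BFS ball to $\aut_v$ of the radius-$i$ ball has kernel in $\Gamma_{d-1}$, and $\Gamma_d$ membership of the full stabilizer then follows from closure of $\Gamma_d$ under subgroups, quotients, and extensions. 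The iterated-wreath-product picture you invoke is recoverable afterwards by Krasner--Kaloujnine, but it is not the direct content of the layer-by-layer argument, and presenting it as if it dropped out of the BFS tree hides the place where nontrivial work is done.
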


We now construct a tree by starting with the whole group $G$ and decomposing it into its cosets $G / G_{m - 1}$; we then further decompose each coset in $G / G_{m - 1}$ into the cosets $G / G_{m - 2}$ that it contains.  This process is repeated until we reach the trivial group $G_0 = 1$.  We make this precise with the following definition.

\begin{definition}
  \label{defn:TS}
  Let $G$ be a group and consider the composition series $S$ given by the subgroups $G_0 = 1 \tril \cdots \tril G_m = G$.  Then $T(S)$ is defined to be the tree whose nodes are $\bigcup_i G / G_i$.  The root node is $G = G_m$ which we identify with $\{G\}$.  The leaf nodes are $\{x\} \in G / 1$ which we identify with $x \in G$.  For each node $x G_{i + 1} \in G / G_{i + 1}$, there is an edge to each $y G_i$ such that $y G_i \subseteq x G_{i + 1}$.
\end{definition}

We now use this tree to define a graph that encodes the multiplication table of $G$.  The idea is to attach a multiplication gadget to the nodes $x, y, z \in G$ for each entry $x y = z$ in the multiplication table.  Naively, each node $x \in G$ would have degree $\Omega(n)$.  We avoid this problem by creating a new tree which contains a copy of the tree $T(S)$ for each element of $G$; the root nodes of these trees are then identified with the leaves of another copy of $T(S)$.  We then show how to construct multiplication gadgets so that each of the $n^2$ leaf nodes is involved in only a constant number of multiplication gadgets.  This causes the resulting graph to have degree $p + O(1)$.  The details of this construction are described in the following definition.

\begin{definition}
  \label{defn:X-S}
  Let $G$ be a group and consider the composition series $S$ given by the subgroups $G_0 = 1 \tril \cdots \tril G_m = G$.  To construct $X(S)$, we start with a copy $T_S$ of $T(S)$.  For each $x \in G$, we create an additional copy $T_S^{(x)}$ of $T(S)$.  We then combine these graphs by identifying the root of each $T_S^{(x)}$ with the leaf $x$ of $T_S$.  The node for $y G_i$ in $T_S^{(x)}$ is denoted $(y G_i)^{(x)}$; in particular, the leaf for $y \in G$ in $T_S^{(x)}$ is denoted $y^{(x)}$.  For each $x, y \in G$, we connect three new leaves $y_\la^{(x)}$, $y_\ra^{(x)}$ and $y_=^{(x)}$ to the leaf $y^{(x)}$ in $T_S^{(x)}$ and add them to $X(S)$.  For all $x, y, z \in G$ such that $x y = z$, we draw an edge from $y_\la^{(x)}$ to $x_\ra^{(y)}$ and from $x_\ra^{(y)}$ to $y_=^{(z)}$ and color $y_\la^{(x)}$ ``left'', $x_\ra^{(y)}$ ``right'' and $y_=^{(z)}$ ``equal''.  We color the root node of $T_S$ ``root''; the remaining nodes are colored ``internal''.
\end{definition}

The graph $X(S)$ is a \emph{cone graph}; that is, a rooted tree with edges between nodes at the same level.  We call the edges that form the tree in a cone graph \emph{tree edges} and the edges between nodes at the same level \emph{cross edges}.  We now show a bijection from $\iso(S, S')$ to $\iso(X(S), X(S'))$.  Our main contribution to the construction described in this section is a rigorous proof of this result.

\begin{theorem}
  \label{thm:p-graph-red-bij}
  Let $S$ and $S'$ be composition series for the groups $G$ and $H$.  Then there is a bijection between $\iso(S, S')$ and $\iso(X(S), X(S'))$.
\end{theorem}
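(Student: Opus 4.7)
The plan is to construct explicit maps $\phi \mapsto \tilde\phi$ and $\tilde\phi \mapsto \phi$ and verify they are mutually inverse. For the forward map, given $\phi \in \iso(S, S')$, I would set $\tilde\phi(yG_i) = \phi(y) H_i$ on $T_S$, $\tilde\phi((yG_i)^{(x)}) = (\phi(y) H_i)^{(\phi(x))}$ on each inner tree $T_S^{(x)}$, and $\tilde\phi(y_\ast^{(x)}) = \phi(y)_\ast^{(\phi(x))}$ on the gadget leaves for each $\ast \in \{\la, \ra, =\}$. Colors are preserved by construction. Tree edges are preserved because $\phi$ respects $S$, so $yG_i \subseteq xG_{i+1}$ if and only if $\phi(y) H_i \subseteq \phi(x) H_{i+1}$. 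Cross edges are preserved because $xy = z$ implies $\phi(x)\phi(y) = \phi(z)$, so every valid triple in $X(S)$ is sent to a valid triple in $X(S')$.

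For the backward direction, I start from a graph isomorphism $\tilde\phi : X(S) \to X(S')$. Since the root is the unique node of color ``root'', it maps to the root of $X(S')$. The cross edges lie entirely at depth $2m+1$ and connect vertices at the same level, so graph distance to the root coincides with tree depth and $\tilde\phi$ is level-preserving; combined with the tree structure, it sends tree edges to tree edges and maps each inner tree $T_S^{(x)}$ onto some $T_{S'}^{(x')}$ with $x' = \tilde\phi(x)$. Let $\phi : G \to H$ be the restriction of $\tilde\phi$ to the leaves of $T_S$, and for each $x \in G$ let $\sigma_x : G \to H$ be the bijection on leaves of $T_S^{(x)}$ induced by $\tilde\phi$.

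The core of the argument is to show that $\sigma_x = \phi$ for every $x$ and that $\phi$ is a homomorphism. Each cross edge connecting $y_\la^{(x)}$ and $x_\ra^{(y)}$ in $X(S)$ must map to a cross edge of $X(S')$ connecting $y'_\la^{(x')}$ and $x'_\ra^{(y')}$ for some $x', y' \in H$; equating superscripts and subscripts on both sides yields $\sigma_x(y) = \phi(y)$ and $\sigma_y(x) = \phi(x)$ simultaneously, and since this holds for every pair $(x,y)$ we obtain $\sigma_x = \phi$ uniformly. The paired cross edge connecting $x_\ra^{(y)}$ and $y_=^{(xy)}$ is then sent by $\tilde\phi$ to one connecting $\phi(x)_\ra^{(\phi(y))}$ and $\phi(y)_=^{(\phi(xy))}$, while validity in $X(S')$ demands that the ``equal''-endpoint be $\phi(y)_=^{(\phi(x)\phi(y))}$; matching forces $\phi(xy) = \phi(x)\phi(y)$, so $\phi$ is a homomorphism. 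That $\phi$ respects the series follows because the node $G_i$ at depth $m-i$ maps under $\tilde\phi$ to some coset $a H_i$ whose level-$m$ descendants are exactly $\phi[G_i]$; since $\phi(1) = 1 \in \phi[G_i]$, necessarily $a \in H_i$, and hence $\phi[G_i] = H_i$.

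The two assignments are mutually inverse: the backward map extracts $\phi$ as the restriction of $\tilde\phi$ to the leaves of $T_S$, and the forward construction is the unique extension compatible with the cross-edge identities $\sigma_x = \phi$ and $\phi(xy) = \phi(x)\phi(y)$ derived above. I expect the main obstacle to be the cross-edge bookkeeping in the backward direction: a priori the inner-tree bijections $\sigma_x$ can differ for different $x$, and nothing in the tree part of the graph links them, so it is precisely the placement of the ``left'', ``right'', and ``equal'' gadget leaves that simultaneously forces $\sigma_x = \phi$ uniformly and yields the homomorphism equation.
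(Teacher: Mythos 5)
Your proof is correct and follows essentially the same strategy as the paper's: the explicit forward map $\phi \mapsto \hphi$, level preservation from the uniquely colored root, extraction of $\phi$ as the restriction of $\theta$ to the leaves of $T_S$, and use of the multiplication gadgets to force the inner-tree bijections to agree with $\phi$ and to yield the homomorphism equation and $\phi[G_i] = H_i$. Your derivation of $\sigma_x = \phi$ by directly matching the subscripts and superscripts across the ``left''/``right'' cross edges is a slightly more direct route to the identity $\theta(y^{(x)}) = \phi(y)^{(\phi(x))}$ than the paper's unique-shortest-path argument, but it rests on the same underlying observation.
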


We provide a sketch and defer the complete proof to \inshortorlong{the full version~\cite{rosenbaum2012b}}{\appref{p-group-iso-proofs}}.  Consider an isomorphism $\phi$ from $S$ to $S'$.  We define $\hat \phi$ to be the map from $X(S)$ to $X(S')$ that maps the root $G$ of $X(S)$ to the root $H$ of $X(S')$, each node $x \in G$ to $\phi(x)$ and each $y^{(x)}$ to $(\phi(y))^{(\phi(x))}$; similarly, we define $\hat \phi(y_\lambda^{(x)}) = (\phi(y))_\lambda^{(\phi(x))}$ for each $x, y \in G$ and $\lambda \in \{\la, \ra, =\}$.  We define $\hat \phi$ on the intermediate coset nodes by $\hat \phi(x G_i) = \phi[x G_i]$ for each $x G_i \in G / G_i$ and $\hat \phi((y G_i)^{(x)}) = (\phi[y G_i])^{(\phi(x))}$ for each $x \in G$ and $y G_i \in G / G_i$.  The first half of the proof involves showing that $\hat \phi$ is an isomorphism from $S$ to $S'$.  It is straightforward to show that $\hat \phi$ is a well-defined bijection which respects the colors and tree edges.  The fact that $\hat \phi$ respects the cross edges (which encode the multiplication gadgets) follows easily from the assumption that $\phi$ is an isomorphism.  At this point we know that $f : \iso(S, S') \ra \iso(X(S), X(S')) : \phi \mapsto \hat \phi$ is a well-defined function.

The second (more difficult) half of the proof is to show that $f$ is a bijection.  It is simple to show that $f$ is injective since if $f(\phi_1) = \hat \phi_1 = \hat \phi_2 = f(\phi_2)$ for $\phi_1, \phi_2 \in \iso(S, S')$ then we have $\phi_1 = \restr{\hat \phi_1}{G} = \restr{\hat \phi_2}{G} = \phi_2$.  Showing that it is surjective is more involved.  Let $\theta : X(S) \ra X(S')$ be an isomorphism.  Because of our coloring of the nodes, we know that $\theta$ maps the root $G$ of $X(S)$ to the root $H$ of $X(S')$ which implies that $\theta$ maps the \nth{k} level of $X(S)$ to the \nth{k} level of $X(S')$.  Then $\phi = \restr{\theta}{G} : G \ra H$ is a bijection.  We now argue that $\theta(y^{(x)}) = (\phi(y))^{(\phi(x))}$ for all $x, y \in G$.  This follows from the observation that the unique shortest path from $x$ to $y$ in $X(S)$ that passes through a node colored ``left'' and then one colored ``right'' goes through the node $y^{(x)}$ and later $x^{(y)}$.  To show that $\phi$ is an isomorphism from $G$ to $H$, we note that for all $x, y \in G$ with $z = xy$, $\theta$ maps the multiplication gadget $(y^{(x)}, x^{(y)}, y^{(z)})$ to $((\phi(y))^{(\phi(x))}, (\phi(x))^{(\phi(y))}, (\phi(y))^{(\phi(z))})$ which implies that $\phi(xy) = \phi(x) \phi(y)$.  The fact that $\phi[G_i] = H_i$ for each $i$ follows by considering the action of $\theta$ on the descendants of the node $G_i$ which are contained in $G$.

The correctness of our reduction follows.

\begin{corollary}
  \label{cor:p-red-cor}
  Let $S$ and $S'$ be composition series.  Then $S \cong S'$ \ifft $X(S) \cong X(S')$.
\end{corollary}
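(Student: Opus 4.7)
The plan is to derive this corollary essentially for free from \thmref{p-graph-red-bij}, since a bijection between two sets implies one is nonempty if and only if the other is. Specifically, $S \cong S'$ means by definition that $\iso(S,S') \neq \emptyset$, and analogously $X(S) \cong X(S')$ means that $\iso(X(S), X(S')) \neq \emptyset$. The bijection $f : \iso(S,S') \to \iso(X(S), X(S'))$ provided by \thmref{p-graph-red-bij} is a bijection between these two sets, so one is empty precisely when the other is. This gives both directions simultaneously.

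For the forward direction ($\Rightarrow$), given $\phi \in \iso(S,S')$ we apply $f$ to obtain $\hat\phi \in \iso(X(S), X(S'))$, witnessing $X(S) \cong X(S')$. For the reverse direction ($\Leftarrow$), given any $\theta \in \iso(X(S), X(S'))$, surjectivity of $f$ produces some $\phi \in \iso(S,S')$ with $f(\phi) = \theta$, witnessing $S \cong S'$. No further argument is needed, and in particular I do not need to re-examine the multiplication-gadget or coset-tree structure of $X(S)$: that work has already been absorbed into the statement of \thmref{p-graph-red-bij}.

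Since the proof is a one-line appeal to the preceding theorem, there is no genuine obstacle here; the only thing to be careful about is distinguishing the two notions of isomorphism (of composition series vs.\ of colored cone graphs) and making sure the empty/nonempty equivalence is invoked for the correct sets. The real content lies in \thmref{p-graph-red-bij} itself, whose proof handles the delicate surjectivity argument via the shortest-path characterization of the nodes $y^{(x)}$ and the multiplication-gadget cross edges.
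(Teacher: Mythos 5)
Your proof is correct and matches the paper's (implicit) approach: the paper simply states the corollary as an immediate consequence of Theorem~\ref{thm:p-graph-red-bij}, and the reasoning is exactly what you give, namely that a bijection between $\iso(S,S')$ and $\iso(X(S),X(S'))$ forces the two sets to be simultaneously empty or nonempty. Your care to distinguish the two notions of isomorphism and to note that the real work is absorbed into the theorem is appropriate and nothing further is needed.
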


In order to obtain an efficient algorithm for $p$-group composition-series isomorphism, we must show that the degree of the graph is not too large.

\begin{lemma}
  \label{lem:p-graph}
  Let $G$ be a group with a composition series $S$ such that $\alpha$ is an upper bound for the order of any factor.  Then the graph $X(S)$ has degree at most $\max\{\alpha + 1, 4\}$ and size $O(n^2)$.
\end{lemma}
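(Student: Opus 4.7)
The plan is to bound the degree by a case analysis on the node types of $X(S)$, and to bound the size by noting that the trees $T(S)$ themselves have linear size when the factors have order at least $2$.

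\textbf{Degree.} I would split the vertices of $X(S)$ into five families and verify the bound $\max\{\alpha + 1, 4\}$ for each. First, for an internal coset node $x G_i$ in $T_S$ with $1 \le i \le m-1$, the children are the cosets $y G_{i-1} \subseteq x G_i$, of which there are exactly $\abs{G_i / G_{i-1}} \le \alpha$; together with its unique parent this gives degree $\le \alpha + 1$. The root $G$ of $T_S$ has only $\abs{G / G_{m-1}} \le \alpha$ children. A leaf $x$ of $T_S$ is identified with the root of $T_S^{(x)}$, so it has one parent in $T_S$ plus $\abs{G / G_{m-1}} \le \alpha$ children in $T_S^{(x)}$, again degree $\le \alpha + 1$. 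The same count applies to the internal coset nodes of each $T_S^{(x)}$. Next, a leaf $y^{(x)}$ of $T_S^{(x)}$ has one parent (tree edge) and exactly the three new leaves $y_\la^{(x)}, y_\ra^{(x)}, y_=^{(x)}$ attached to it, giving degree $4$. Finally, for the gadget leaves I would use that for each ordered pair $(x, y)$ the equation $xy = z$ determines $z$ uniquely, so the gadget $(y_\la^{(x)}, x_\ra^{(y)}, y_=^{(z)})$ is the unique one involving any of its three vertices. Hence $y_\la^{(x)}$ has one tree edge to $y^{(x)}$ plus one cross edge (to $x_\ra^{(y)}$), total degree $2$; the node $x_\ra^{(y)}$ has one tree edge plus the two cross edges of its unique gadget, total degree $3$; and for $y_=^{(z)}$, writing $x = z y^{-1}$, there is exactly one gadget attaching to it via one cross edge, total degree $2$. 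Taking the maximum over all cases yields $\max\{\alpha + 1, 4\}$.

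\textbf{Size.} I would first show that $\abs{T(S)} = O(n)$. Since every composition factor has order at least $2$, we have $\abs{G_i} \ge 2^i$ and hence $\abs{G / G_i} \le n / 2^i$, so
\begin{equation*}
  \abs{T(S)} = \sum_{i=0}^{m} \abs{G / G_i} \le \sum_{i=0}^{\infty} \frac{n}{2^i} \le 2n.
\end{equation*}
The graph $X(S)$ consists of one copy $T_S$, the $n$ copies $T_S^{(x)}$ (with their roots identified with the leaves of $T_S$, so this is only an additive change), and the $3 n^2$ gadget leaves $y_\la^{(x)}, y_\ra^{(x)}, y_=^{(x)}$ together with their incident tree and cross edges. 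Adding these contributions gives $\abs{X(S)} = O(n) + n \cdot O(n) + O(n^2) = O(n^2)$.

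\textbf{Anticipated difficulty.} The only delicate point is the bookkeeping for the gadget leaves: one must check that each $y_\la^{(x)}$, $x_\ra^{(y)}$, and $y_=^{(z)}$ participates in only a constant number of cross edges, which reduces to the observation that the gadget for an ordered pair $(x, y)$ is uniquely determined by any two of the indices $x, y, z = xy$. Everything else is a direct count from \defref{TS} and \defref{X-S}.
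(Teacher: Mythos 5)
Your proof is correct and follows the same approach as the paper's, just with more detail filled in. In particular you explicitly verify the $O(n)$ bound on $\abs{T(S)}$ via a geometric series and carefully count the degrees of the gadget leaves $y_\la^{(x)}, x_\ra^{(y)}, y_=^{(z)}$ (which the paper leaves implicit in "by construction, the degree of any internal node is at most $\alpha + 1$"); both of these are legitimate refinements and the overall argument matches the paper.
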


\begin{proof}
  The tree $T(S)$ has size $O(n)$ and $X(S)$ contains $n + 1$ copies of this tree.  Connecting three additional nodes $y_{\ell}^{(x)}$ to each leaf $y^{(x)}$ in each $T_S^{(x)}$ only increases the number of nodes by a constant factor.  Thus, $X(S)$ has $O(n^2)$ nodes.  Let $x, y \in G$.  The degree of the leaf node $x^{(y)}$ is $4$ since it is connected to its three children $x_\lambda^{(x)}$ where $\lambda \in \{\la, \ra, =\}$ and its parent.  By construction, the degree of any internal node is at most $\alpha + 1$.
\end{proof}

We are now in a position to obtain an algorithm for composition-series isomorphism.

\begin{theorem}
  \label{thm:alpha-comp-iso}
  Let $G$ and $H$ be groups with composition series $S$ and $S'$ such that $\alpha$ is an upper bound for the order of any factor.  Then we can test if $S \cong S'$ in $n^{O(\alpha / \log \alpha)}$ deterministic time.
\end{theorem}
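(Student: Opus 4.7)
The plan is to assemble the pieces that have already been established in this section. Given composition series $S$ and $S'$ for $G$ and $H$, I would first construct the cone graphs $X(S)$ and $X(S')$ explicitly as specified in \defref{X-S}. By \lemref{p-graph}, each has $O(n^2)$ vertices and degree at most $d = \max\{\alpha + 1, 4\}$, so this construction phase takes deterministic polynomial time and is dominated by the isomorphism test to come.

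Next, I would invoke \corref{p-red-cor}, which tells us that $S \cong S'$ \ifft $X(S) \cong X(S')$ as colored graphs. Thus it suffices to test isomorphism of the two bounded-degree colored graphs just constructed. I would feed this pair to the Babai--Kantor--Luks algorithm of \thmref{const-deg-iso}, which runs in time $N^{O(d / \log d)}$ on colored graphs with $N$ vertices and maximum degree $d$.

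Substituting $N = O(n^2)$ and $d = \max\{\alpha + 1, 4\}$ gives a runtime of $n^{O(d / \log d)}$. For $\alpha \geq 3$ we have $d = \Theta(\alpha)$, and for $\alpha \in \{1, 2\}$ the ceiling at $4$ only absorbs a constant factor into the big-$O$ exponent, so in every case $d / \log d = O(\alpha / \log \alpha)$, yielding the claimed bound of $n^{O(\alpha / \log \alpha)}$.

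Because \thmref{p-graph-red-bij}, \corref{p-red-cor}, \lemref{p-graph}, and \thmref{const-deg-iso} are all in hand, there is no real obstacle remaining. The only step requiring any care is verifying that substituting $N = O(n^2)$ into an $N^{O(d/\log d)}$ bound yields $n^{O(\alpha/\log \alpha)}$ rather than something worse, which is immediate since the extra factor of $2$ is absorbed into the constant inside the exponent.
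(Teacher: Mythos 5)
Your proposal is correct and follows essentially the same route as the paper's proof: construct $X(S)$ and $X(S')$, invoke \corref{p-red-cor} for correctness, use \lemref{p-graph} for the $O(n^2)$ size and $O(\alpha)$ degree bounds, and apply \thmref{const-deg-iso}. The only difference is that you spell out the substitution $N = O(n^2)$ explicitly, which the paper leaves implicit.
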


\begin{proof}
  We can compute the graphs $X(S)$ and $X(S')$ in polynomial time.  By \corref{p-red-cor}, $S \cong S'$ \ifft $X(S) \cong X(S')$ so this reduction is correct.  By \lemref{p-graph}, the number of nodes in $X(S)$ is $O(n^2)$ and the degree is at most $\max\{\alpha + 1, 4\} = O(\alpha)$.  Then we test if $X(S) \cong X(S')$ in $n^{O(\alpha / \log \alpha)}$ time using \thmref{const-deg-iso}.
\end{proof}

For the randomized and quantum variants of our algorithm, it is necessary to adapt \thmref{alpha-comp-iso} to composition-series canonization.  First, we define complete polynomial-size invariants.

\begin{definition}
  \label{defn:comp-inv}
  Let $S$ be a composition series for a group $G$.  A complete polynomial-size invariant $\invar(S)$ for $S$ has the following properties:

  \begin{enumerate}
  \item If $S$ and $S'$ are composition series for\inlong{ the groups} $G$ and $H$ then $S \cong S'$ \ifft $\invar(S) = \invar(S')$.
  \item $\abs{\invar(S)} = \poly(n)$.
  \end{enumerate}
\end{definition}

\begin{definition}
  \label{defn:comp-can}
  Let $S$ be a composition series for a group $G$.  We define the canonical form $\can(S)$ to be a tuple $(M, \psi(G_0), \ldots, \psi(G_m))$ with following properties:

  \begin{enumerate}
  \item $M$ is an $n \times n$ matrix with entries in $[n]$.
  \item $M$ is the multiplication table for a group which is isomorphic to $G$ under $\psi : G \ra [n]$.
  \item $\can(S)$ is a complete polynomial-size invariant.
  \end{enumerate}
\end{definition}

\inlong{In the \emph{composition-series canonization problem}, we are given a composition series $S$ for a group $G$ and must compute $\can(S)$.  }We now give an algorithm\inshortorlong{ for composition-series canonization}{ for this problem}.

\begin{restatable}{theorem}{compcan}
  \label{thm:comp-can}
  Let $S$ be a composition series for a group $G$ such that $\alpha$ is an upper bound for the order of any factor.  Then we can compute a canonical form $\can(S)$ for $S$ in $n^{O(\alpha)}$ deterministic time.
\end{restatable}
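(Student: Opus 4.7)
The plan is to mimic the proof of \thmref{alpha-comp-iso}, but replace the use of the isomorphism-testing algorithm (\thmref{const-deg-iso}) by the canonization algorithm (\thmref{const-deg-can}), and then extract a canonical multiplication table and canonical subgroup images from the canonical form of the graph $X(S)$.

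First I would build the colored cone graph $X(S)$ from \defref{X-S} in polynomial time, observing via \lemref{p-graph} that it has $O(n^2)$ vertices and degree $\max\{\alpha+1,4\} = O(\alpha)$. Applying \thmref{const-deg-can} yields, in $(n^2)^{O(\alpha)} = n^{O(\alpha)}$ time, a canonical form $\overline{X(S)}$, i.e., a labeled colored graph on some vertex set $[N]$ with $N = O(n^2)$ which depends only on the isomorphism class of $X(S)$.

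Next I would extract the canonical data $(M, \psi(G_0), \ldots, \psi(G_m))$ from $\overline{X(S)}$. Using the colors I can locate the unique ``root'' node, and by traversing the tree part downward I can identify the $m+1$ levels of $T_S$ inside $\overline{X(S)}$; in particular I can identify the set $L \subseteq [N]$ of canonical labels corresponding to the leaves of $T_S$ (equivalently, the roots of the attached trees $T_S^{(x)}$, which are the images of the elements of $G$). Since $L$ is a canonically determined subset of $[N]$ of size $n$, sorting it gives a canonical bijection $\psi : G \to [n]$. For the multiplication table, for each pair $a, b \in [n]$ I locate the image of $y^{(x)}$ (where $\psi(x)=a$, $\psi(y)=b$) using the tree structure, then find its unique ``left''-colored leaf $y_\la^{(x)}$, follow the cross edge to the ``right'' node $x_\ra^{(y)}$, then to the ``equal'' node $y_=^{(z)}$, and finally walk up the tree of $T_S^{(z)}$ to recover $\psi(z)$, which I store as $M[a][b]$. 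Finally, for each $i$ the set $\psi(G_i)$ is read off as the $\psi$-images of the descendants of the canonically identified node $G_i$ in the copy $T_S$.

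To verify the three required properties from \defref{comp-can}: (1) $M$ is an $n \times n$ matrix over $[n]$ and $\psi$ is an isomorphism from $G$ to the group defined by $M$ by construction; (2) the total size is polynomial in $n$; and (3) completeness follows because if $S \cong S'$ then by \thmref{p-graph-red-bij} we have $X(S) \cong X(S')$, hence $\overline{X(S)} = \overline{X(S')}$, and every item extracted above is a deterministic function of the canonical form, so $\can(S) = \can(S')$. For the converse, if the canonical tuples agree then the identity map on $[n]$ transports $\psi'$ into $\psi$ and carries each $\psi(G_i)$ to $\psi'(H_i)$, yielding an isomorphism $S \cong S'$.

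I expect the main obstacle to be nailing down step (3) of the canonical extraction, i.e., ensuring that the procedure which identifies leaves of $T_S$, reindexes them into $[n]$, and reads off the subgroups $\psi(G_i)$ depends only on $\overline{X(S)}$ and not on any arbitrary choices. This is handled by leaning on the color scheme of \defref{X-S} (which distinguishes the root, the multiplication-gadget leaves, and the internal coset nodes) together with the tree structure, so that every relevant vertex class is determined purely by graph-theoretic properties visible in the canonical form.
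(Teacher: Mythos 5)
Your plan matches the paper's proof almost exactly: build $X(S)$, canonize it via \thmref{const-deg-can} in $n^{O(\alpha)}$ time, extract a multiplication table $M$ and the subgroup images from $\can(X(S))$, and observe that the extraction is a deterministic function of $\can(X(S))$ alone. The size and degree bounds, the appeal to \corref{p-red-cor} for completeness, and the converse direction are all the same.

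The two places where you wave your hands turn out to be precisely where the paper's work lies, and your plan as written doesn't yet say enough to close them. First, "locate the image of $y^{(x)}$ \ldots using the tree structure" is not actually possible from tree structure and colors alone: the $n$ leaves of $T_S^{(x)}$ are all internal-colored and mutually indistinguishable inside that subtree, so you must use the cross edges. The paper pins down $\theta(y^{(x)})$ as the node on the unique shortest path from $\theta(x)$ to $\theta(y)$ that passes through a ``left''-colored node and later a ``right''-colored node; an equally valid alternative is to iterate over every leaf $\ell$ of $T_S^{(x)}$, follow $\ell$'s gadget to its ``right'' partner, walk that node up its tree to determine which element of $G$ that copy is attached to, and thereby learn which $y$ each $\ell$ corresponds to. Either way, this needs to be said explicitly. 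Second, "the canonically identified node $G_i$" begs the question: the nodes at level $m-i$ of $T_S$ are the $|G/G_i|$ cosets, and nothing in the color scheme singles out $G_i$ among them. The paper first builds $M$, uses $M$ to find the identity element's leaf $\theta(1)$, and then observes that the unique root-to-$\theta(1)$ path in $T_S$ passes through $\theta(G_i)$ at level $m - i$. You already have $M$ in hand at that point, so this is the fix you want, but right now it is missing. Neither omission is a dead end --- the strategy is correct and the ingredients are all present --- but a complete proof must spell out both steps, since they are exactly where the ``arbitrary choices'' you worry about in your last paragraph could have leaked in.
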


We give a sketch and defer the complete proof to \inshortorlong{the full version~\cite{rosenbaum2012b}}{\appref{p-group-iso-proofs}}.  Although there are many details that must be checked, the idea is simple.  It is easy to see that $\can(X(S))$ is a complete polynomial-size invariant for $S$.  Moreover, because of the multiplication gadgets in the graph $X(S)$, we can deterministically compute a multiplication table $M(G)$ for a group with the underlying set $[n]$ that is isomorphic to $G$.  Let $\theta$ be an isomorphism from $X(S)$ to $\can(X(S))$.  We can find the node $\theta(1)$ in $\can(X(S))$ using this multiplication table.  Each node $\theta(G_i)$ is on the path from the root $\theta(G)$ to $\theta(1)$ so we can find these nodes as well.  The descendants of each $\theta(G_i)$ that are contained in $\theta[G]$ can be used to compute the subgroup of the group described by $M(G)$ which corresponds to $G_i$.  Because this is a deterministic computation which depends only on $\can(X(S))$, $n$ and the composition length of $G$, we obtain a canonical form $\can(S)$.

\section{Composition series: lifting\insodaornot{\\}{ }isomorphisms and bounds}
\label{sec:comp-series}
In this section, we prove some useful lemmas that are used to derive our main result.\inlong{  First, we review some standard results which are proved for convenience in~\appref{comp-series-proofs}.}\inshortorlong{  Let $G$ and $H$ be groups and consider an isomorphism $\phi : G \ra H$ where $N \trile G$ and $N' \trile H$.  Then the induced isomorphism $\restr{\phi}{G / N} : G / N \ra H / N' : g N \mapsto \phi[g N]$ is well-defined.}{

  \begin{definition}
    \label{defn:induced-iso}
    Let $G$ and $H$ be groups with normal subgroups $N$ and $N'$.  Let $\phi : G \ra H$ be an isomorphism such that $\phi[N] = N'$.  Then we say that $\phi$ induces the isomorphism $\restr{\phi}{G / N} : G / N \ra H / N' : g N \mapsto \phi[g N]$.
  \end{definition}

\begin{restatable}{proposition}{indisowell}
  The induced isomorphism is well-defined and is indeed an isomorphism.
\end{restatable}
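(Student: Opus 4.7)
The plan is to verify the three claims implicit in the statement: (i) the assignment $gN \mapsto \phi[gN]$ is well-defined as a map $G/N \to H/N'$, (ii) this map is a homomorphism, and (iii) it is a bijection.

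First I would check well-definedness in two parts. The image $\phi[gN]$ must actually be an element of $H/N'$, i.e., a coset of $N'$. Since $\phi$ is a homomorphism on the set $gN = \{gn : n \in N\}$, we have $\phi[gN] = \{\phi(g)\phi(n) : n \in N\} = \phi(g)\phi[N] = \phi(g)N'$, using the hypothesis $\phi[N] = N'$. Thus the image is indeed the coset $\phi(g)N'$. Second, the value must not depend on the representative: if $gN = g'N$, then as subsets of $G$ they are equal, so their images under $\phi$ coincide. (Alternatively, $g'= gn$ for some $n \in N$ gives $\phi(g')N' = \phi(g)\phi(n)N' = \phi(g)N'$ since $\phi(n) \in N'$.)

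Next I would verify the homomorphism property using the coset description from the previous step. For $gN, g'N \in G/N$,
\[
  \restr{\phi}{G/N}\bigl((gN)(g'N)\bigr) = \restr{\phi}{G/N}(gg'N) = \phi(gg')N' = \phi(g)\phi(g')N' = \bigl(\phi(g)N'\bigr)\bigl(\phi(g')N'\bigr),
\]
which equals $\restr{\phi}{G/N}(gN)\,\restr{\phi}{G/N}(g'N)$.

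Finally I would establish bijectivity. For surjectivity, any coset $hN' \in H/N'$ can be written as $\phi(g)N'$ for $g = \phi^{-1}(h)$, and then $\restr{\phi}{G/N}(gN) = \phi(g)N' = hN'$. For injectivity, suppose $\restr{\phi}{G/N}(gN) = \restr{\phi}{G/N}(g'N)$, i.e., $\phi(g)N' = \phi(g')N'$. Then $\phi(g)^{-1}\phi(g') = \phi(g^{-1}g') \in N' = \phi[N]$, so by injectivity of $\phi$ we have $g^{-1}g' \in N$, hence $gN = g'N$. No step is a genuine obstacle; the only point requiring care is ensuring $\phi[N] = N'$ is used in both the well-definedness of the image and in the injectivity argument, but this hypothesis is given.
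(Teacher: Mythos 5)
Your proposal is correct and follows essentially the same approach as the paper: verify well-definedness by showing independence of coset representative, then check the homomorphism and bijection properties. The paper dispatches injectivity, surjectivity, and the homomorphism property with ``clearly''/``easy to show,'' whereas you spell them out, but the underlying argument is identical.
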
}\inshortorlong{  }{

}The following proposition is useful in proving the correctness of our reduction.

\begin{restatable}[cf.~\cite{roman2011a}]{proposition}{respseries}
  \label{prop:resp-series}
  Let $G$ and $H$ be groups with normal subgroups $N$ and $N'$.  \inshortorlong{Let $T$ and $T'$ be subnormal series for $G / N$ and $H / N'$ and let $S$ and $S'$ be the series obtained from $T$ and $T'$ by taking the preimage of each subgroup under the canonical maps $\varphi : G \ra G / N$ and $\varphi : H \ra H / N'$.  Then an isomorphism $\phi : G \ra H$ respects $S$ and $S'$ \ifft $\indphi{G / N}$ respects $T$ and $T'$.}{Let $T$ be a subnormal series $K_0 = \{N\} \tril \cdots \tril K_m = G / N$ for $G / N$ and let $T'$ be a subnormal series $L_0 = \{N'\} \tril \cdots \tril L_m = H / N'$ for $H / N'$.  Denote the canonical maps by $\varphi : G \ra G / N$ and $\varphi' : H \ra H / N'$.  Let $S$ be the subnormal series $1 \tril G_0 \tril \cdots \tril G_m = G$ and let $S'$ be the subnormal series $1 \tril H_0 \tril \cdots \tril H_m = H$ where $G_i = \varphi^{-1}[K_i]$ and $H_i = \varphi'^{-1}[L_i]$.  Consider an isomorphism $\phi : G \ra H$ where $\phi[N] = N'$.  Then $\phi$ respects $S$ and $S'$ \ifft the induced isomorphism $\indphi{G / N} : G / N \ra H / N'$ respects $T$ and $T'$.}
\end{restatable}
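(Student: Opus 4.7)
The plan is to prove both directions by chasing definitions through the canonical maps, using the fact that $\varphi$ (resp.\ $\varphi'$) is surjective with kernel $N$ (resp.\ $N'$), so $G_i = \varphi^{-1}[K_i]$ is equivalent to $\varphi[G_i] = K_i$ (and similarly for $H_i, L_i$), and also to the pointwise condition $g \in G_i \iff gN \in K_i$. The key computational identity is that for any $g \in G$, the induced map satisfies $\indphi{G/N}(gN) = \phi[gN] = \phi(g) N'$, where the second equality uses $\phi[N] = N'$ together with the fact that $\phi$ is a homomorphism.

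For the forward direction, I would assume $\phi[G_i] = H_i$ for every $i$ and show $\indphi{G/N}[K_i] = L_i$. Take any $gN \in K_i$; then $g \in G_i$, so $\phi(g) \in H_i$, whence $\indphi{G/N}(gN) = \phi(g)N' \in \varphi'[H_i] = L_i$. This gives $\indphi{G/N}[K_i] \subseteq L_i$; the reverse inclusion follows by applying the same argument to $\inv{\phi}$ (which satisfies $\inv{\phi}[N'] = N$, since $\phi[N] = N'$).

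For the backward direction, assume $\indphi{G/N}[K_i] = L_i$ for every $i$ and show $\phi[G_i] = H_i$. If $g \in G_i$ then $gN \in K_i$, so $\phi(g) N' = \indphi{G/N}(gN) \in L_i$, hence $\phi(g) \in \varphi'^{-1}[L_i] = H_i$. Thus $\phi[G_i] \subseteq H_i$, and the reverse inclusion again follows from applying the same reasoning to $\inv{\phi}$ and $\indphi{G/N}^{-1}$, which is well-defined as the induced isomorphism of $\inv{\phi}$.

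There is no genuine obstacle here; the only mild subtlety is to verify that the inverse reductions used in each direction are legitimate, i.e., that $\inv{\phi}$ still maps $N'$ onto $N$ and induces $\indphi{G/N}^{-1}$, so that the symmetric argument applies. This is immediate from $\phi[N] = N'$ and the definition of the induced isomorphism.
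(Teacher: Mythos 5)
Your proof is correct and follows essentially the same definition-chasing approach as the paper: in both directions you translate membership through the canonical maps using $\indphi{G/N}(gN) = \phi(g)N'$. The only cosmetic difference is how set equality is closed once one inclusion is established — you apply the symmetric argument to $\phi^{-1}$, while the paper uses a coset-union computation in one direction and a cardinality argument in the other; both are equally valid.
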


We now prove a series of lemmas that bound the number of composition series we consider.

\begin{lemma}
  \label{lem:mns-bound}
  If $G$ is a group such that $p$ is the smallest prime that divides its order $n$, then $G$ has at most $(n - 1) / (p - 1)$ minimal normal subgroups.
\end{lemma}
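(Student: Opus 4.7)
The plan is to use two standard facts about minimal normal subgroups and then do an elementary counting argument on nonidentity elements.

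First, I would establish that any two distinct minimal normal subgroups $N_1, N_2$ of $G$ satisfy $N_1 \cap N_2 = 1$. This is because $N_1 \cap N_2$ is itself a normal subgroup of $G$ contained in both $N_1$ and $N_2$; by minimality of $N_1$, either $N_1 \cap N_2 = N_1$ (which would force $N_1 \subseteq N_2$ and then $N_1 = N_2$ by minimality of $N_2$) or $N_1 \cap N_2 = 1$. Since $N_1 \neq N_2$, the second case must hold.

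Second, I would observe that every minimal normal subgroup $N$ has $|N| \geq p$. Indeed, $N$ is nontrivial, so by Lagrange's theorem its order is divisible by some prime dividing $n$, and that prime is at least $p$ by assumption.

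The final step is a counting argument. Let $N_1, \ldots, N_k$ be the distinct minimal normal subgroups of $G$. By the trivial-intersection property, the nonidentity elements contributed by distinct $N_i$'s are disjoint, so
\[
\bigl|\{1\} \cup N_1 \cup \cdots \cup N_k\bigr| = 1 + \sum_{i=1}^{k} (|N_i| - 1) \geq 1 + k(p-1).
\]
Since this union is a subset of $G$, we get $1 + k(p-1) \leq n$, hence $k \leq (n-1)/(p-1)$, as required.

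I do not expect any real obstacle here; both supporting facts are immediate from the definitions, and the counting is a one-line inequality. The only thing to be mildly careful about is ruling out $N_1 \subsetneq N_2$ in the intersection argument, which is handled symmetrically by minimality of both subgroups.
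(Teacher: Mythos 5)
Your proof is correct and follows essentially the same approach as the paper: trivial pairwise intersection of minimal normal subgroups, Lagrange's theorem to guarantee each contributes at least $p-1$ nonidentity elements, and then counting. You supply slightly more detail on the intersection step, but the argument is the same.
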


\begin{proof}
  The intersection of two normal subgroups of $G$ is itself a normal subgroup of $G$ so it follows that all minimal normal subgroups intersect trivially.  By Lagrange's theorem, the order of every subgroup divides the order of the group.  Then each minimal normal subgroup of $G$ contains at least $p - 1$ non-identity elements that are not contained in any other minimal normal subgroup of $G$.  \inshortorlong{I}{Then i}f there are $\ell$ minimal normal subgroups, $\ell (p - 1) + 1 \leq n$ so $\ell \leq (n - 1) / (p - 1)$.
\end{proof}

\begin{lemma}
  \label{lem:smns-factor}
  Let $G$ be a group and consider $N \trile G$.  If $S$ is a simple minimal normal subgroup of $G$ such that $N \cap S = 1$, then $N \times S / N$ is a simple minimal normal subgroup of $G / N$.
\end{lemma}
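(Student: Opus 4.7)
The plan is to interpret $N \times S / N$ as $(NS)/N$, where $NS$ is the internal direct product of $N$ and $S$, and then verify (a) that this is in fact a direct product, (b) that it is normal in $G/N$, and (c) that the simplicity hypothesis on $S$ forces it to be minimal normal.

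First I would check that $NS$ is an internal direct product. Since $N$ and $S$ are both normal in $G$, for any $n \in N$ and $s \in S$ the commutator $[n,s] = (nsn^{-1})s^{-1} = n(sn^{-1}s^{-1})$ lies in $N \cap S = 1$, so $N$ and $S$ commute elementwise; combined with $N \cap S = 1$, this makes $NS \cong N \times S$ an internal direct product. In particular $NS$ is a normal subgroup of $G$ (normality is automatic from $N, S \trile G$), so $(NS)/N$ is a well-defined normal subgroup of $G/N$.

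Next I would apply the Second Isomorphism Theorem to get $(NS)/N \cong S/(N \cap S) = S/1 \cong S$. Since $S$ is simple by hypothesis, $(NS)/N$ is simple and nontrivial. To conclude it is minimal normal in $G/N$, suppose $K/N \trile G/N$ with $\{N\} \leq K/N \leq (NS)/N$. Then $K/N$ is in particular a normal subgroup of $(NS)/N$ (normality in a larger group implies normality in any intermediate subgroup), and simplicity of $(NS)/N$ forces $K/N = \{N\}$ or $K/N = (NS)/N$. Hence $(NS)/N$ contains no proper nontrivial normal subgroup of $G/N$, so it is minimal normal.

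I do not anticipate a substantive obstacle: the only nuance is noticing that the hypothesis $N \cap S = 1$ together with joint normality of $N$ and $S$ is exactly what is needed both to make $NS$ a direct product and to make the Second Isomorphism Theorem collapse $(NS)/N$ down to $S$ itself, preserving simplicity. Everything else is a direct appeal to the correspondence theorem and the definition of a minimal normal subgroup.
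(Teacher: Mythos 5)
Your proof is correct and follows essentially the same route as the paper: both identify $N\times S/N$ with $(NS)/N$, use the second isomorphism theorem to see it is isomorphic to $S$ (hence simple), and then note that any normal subgroup of $G/N$ lying inside it is also normal in it, so simplicity forces minimality. The only difference is that you spell out the commutator argument showing $NS$ is genuinely a direct product, a point the paper leaves implicit in the notation $N\times S$.
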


\begin{proof}
  Let $S$ be a simple minimal normal subgroup of $G$ that intersects $N$ trivially.  Then $N \times S \trile G$ so $N \times S / N \trile G / N$.  By the second isomorphism theorem, $N \times S / N \cong S$ so $N \times S / N$ is simple.  Suppose that $H / N \leq N \times S / N$ where $H / N \trile G / N$.  Since $N \times S / N$ is simple, $H / N \in \{N / N, N \times S / N\}$ so $H \in \{N, N \times S\}$.  Thus, $N \times S / N$ is a simple minimal normal subgroup\inlong{ of $G / N$}.
\end{proof}

\begin{corollary}
  \label{cor:smns-dist}
  Let $G$ be a group and consider $N \trile G$.  If $S_1$ and $S_2$ are distinct simple minimal normal subgroups of $G$ such that $N \cap S_i = 1$ and $N \times S_1 \not= N \times S_2$, then $N \times S_1 / N$ and $N \times S_2 / N$ are distinct simple minimal normal subgroups of $G / N$.
\end{corollary}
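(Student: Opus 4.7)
The plan is to derive the corollary as a short two-step consequence of Lemma~\ref{lem:smns-factor} together with the correspondence (lattice isomorphism) theorem. The statement splits cleanly into (a) each $N \times S_i / N$ is a simple minimal normal subgroup of $G / N$, and (b) the two such subgroups are distinct; part (a) is already handled, so almost all of the work is part (b).

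For part (a), I would simply apply Lemma~\ref{lem:smns-factor} separately to $S_1$ and to $S_2$. The hypotheses $N \cap S_i = 1$ for $i \in \{1, 2\}$ are exactly what the lemma requires, so it yields directly that $N \times S_1 / N$ and $N \times S_2 / N$ are each simple minimal normal subgroups of $G / N$. I would state this in one sentence without re-deriving anything.

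For part (b), the key observation is that both $N \times S_1$ and $N \times S_2$ are subgroups of $G$ containing $N$, and by hypothesis they are distinct. By the correspondence theorem, the map $H \mapsto H / N$ is an (order-preserving) bijection between subgroups of $G$ containing $N$ and subgroups of $G / N$. Applying this bijection to $N \times S_1 \neq N \times S_2$ gives $N \times S_1 / N \neq N \times S_2 / N$ as subgroups of $G / N$, which completes the proof.

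There is no real obstacle here: the hypothesis $N \times S_1 \neq N \times S_2$ was included in the statement precisely so that distinctness follows immediately from the correspondence theorem, and the minimal-normal-subgroup and simplicity conclusions are carried over verbatim from Lemma~\ref{lem:smns-factor}. The corollary should therefore occupy only a few lines.
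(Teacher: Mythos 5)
Your proof is correct and matches the intended argument: the paper states this as an immediate corollary of Lemma~\ref{lem:smns-factor} with no written proof, and the two steps you give (apply the lemma to each $S_i$ for simplicity and minimality, then use the correspondence theorem to convert $N \times S_1 \neq N \times S_2$ into distinctness of the quotients) are precisely what is implicit.
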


The next lemma is the source of the $1 / 2$ in the exponent of the runtime of our main result.

\begin{lemma}
  \label{lem:comp-prod-bound}
  Let $G$ be a group and let $p$ be the smallest prime that divides $G$.  There are at most $n p^{(1 / 2) \log_p^2 n} = n^{(1 / 2) \log_p n + 1}$ composition series\inlong{ for $G$ of the form} $1 \tril S_1 \tril \cdots \tril \bigtimes_{i = 1}^k S_i = G$ where each $S_i$ is a simple minimal normal subgroup of $G$.
\end{lemma}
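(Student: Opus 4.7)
First I would observe that the count is trivially zero unless $G$ itself equals the direct product of some simple minimal normal subgroups; by the proposition that a characteristically simple group is a direct product of isomorphic simple groups, this happens precisely when $G = \bigtimes_{i=1}^k S_i$ with all $S_i$ isomorphic simple groups. Because $p$ is the smallest prime dividing $n$, each $|S_i| \ge p$, so $k \le \log_p n$. The key observation is that the series is determined not by the ordered tuple $(S_1, \ldots, S_k)$ but by the chain of intermediate subgroups $G_i \coleq S_1 \times \cdots \times S_i$, since many different tuples can produce the same chain. I would therefore count extensions of the chain one step at a time.

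Next I would bound the number of choices for $G_{i+1}$ given $G_i$. Any extension arises from a simple minimal normal subgroup $S_{i+1}$ of $G$ with $S_{i+1} \cap G_i = 1$, and by \corref{smns-dist} distinct subgroups $G_{i+1} = G_i \times S_{i+1}$ correspond to distinct simple minimal normal subgroups of $G/G_i$. Since every prime dividing $|G/G_i|$ divides $n$ and is therefore at least $p$, \lemref{mns-bound} bounds this count by $(|G/G_i| - 1)/(p-1) \le |G/G_i| \le n/p^i$, where the last step uses $|G_i| \ge p^i$. Multiplying over $i = 0, \ldots, k-1$ gives at most
\[\prod_{i=0}^{k-1} \frac{n}{p^i} \;=\; \frac{n^k}{p^{k(k-1)/2}}\]
composition series of the stated form.

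Finally I would maximize this bound over $k \in [0, \log_p n]$. Taking logarithms base $p$ gives the concave quadratic $k \log_p n - k(k-1)/2$, whose unconstrained vertex lies at $k = \log_p n + 1/2$, so the constrained maximum is attained at $k = \log_p n$ and equals $(1/2)\log_p^2 n + (1/2)\log_p n$, which is in turn at most $\log_p n + (1/2)\log_p^2 n$. Exponentiating yields the claimed bound $n \cdot p^{(1/2)\log_p^2 n}$. The essential subtlety — and the source of the $1/2$ factor in the exponent — is counting chains of subgroups rather than tuples: the naive tuple count does not exhibit the geometric decay by a factor of $p$ per step, and it is \corref{smns-dist} that recasts each step's count as one over minimal normal subgroups of the shrinking quotient $G/G_i$, unlocking the savings.
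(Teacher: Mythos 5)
Your proof is correct and takes essentially the same route as the paper's: it bounds the number of ways to extend the chain $G_i = S_1 \times \cdots \times S_i$ at each step by $n/p^i$, using \lemref{mns-bound} applied to $G/G_i$ together with \corref{smns-dist}, and then bounds the resulting product by $n\,p^{(1/2)\log_p^2 n}$. One small misstatement in your preamble: the claim (via the characteristic-simplicity proposition) that the count is nonzero precisely when all the $S_i$ are isomorphic is false --- for $G = \bbZ_6$ the series $1 \tril \bbZ_2 \tril \bbZ_6$ has the required form with $S_1 = \bbZ_2$ and $S_2 = \bbZ_3$ non-isomorphic, and $\bbZ_6$ is not characteristically simple --- but this aside is never used in the counting and is harmless.
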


\begin{proof}
  First, we note that if $G$ cannot be written as a direct product of its simple minimal normal subgroups then no such composition series exist so the bound holds trivially.  Otherwise, there are at most $n - 1$ choices for $S_1$ by \lemref{mns-bound}.  Once we choose $S_1$, it follows from \lemref{mns-bound} and \corref{smns-dist} that there are at most $n / \abs{S_1} - 1 \leq n / p$ distinct direct products $S_1 S_2$ such that $S_2$ is a simple minimal normal subgroup of $G$ that intersects $S_1$ trivially.  More generally, if $\bigtimes_{i = 1}^j S_i$ is a direct product where each $S_i$ is a simple minimal normal subgroup of $G$ then there are at most $n / \prod_{i = 1}^j \abs{S_i} - 1 \leq n / p^j$ distinct direct products $\left(\bigtimes_{i = 1}^j S_i\right) \times S_{j + 1}$ such that $S_{j + 1}$ is a simple minimal normal subgroup of $G$ that intersects $\bigtimes_{i = 1}^j S_i$ trivially.  Let $t = \lfloor \log_p n \rfloor$.  Then the number of composition series consisting of direct products of simple minimal normal subgroups is at most \inshortorlong{$\prod_{i = 0}^t (n / p^i) \leq n p^{t (\log_p n - t / 2 - 1 / 2)} \leq n p^{(1 / 2) \log_p^2 n}$.\qedhere}{

  \begin{align*}
    \prod_{i = 0}^t (n / p^i) & \leq n \frac{n^t}{\prod_{i = 1}^t p^i} \\
                           {} &    = n p^{t (\log_p n - t / 2 - 1 / 2)} \\
                           {} & \leq n p^{\log_p n (\log_p n - (\log_p n - 1) / 2 - 1 / 2)} \\
                           {} &    = n p^{(1 / 2) \log_p^2 n} \qedhere
  \end{align*}}

\end{proof}

Note that we did not use the $p - 1$ in the denominator of \lemref{mns-bound} so the above analysis can be tightened; however, this only reduces the number of possibilities by a polynomial factor.

\section{A Turing reduction from group\insodaornot{\\}{ }isomorphism to composition-series\insodaornot{\\}{ }isomorphism}
\label{sec:comp-red}
\newcommand{\groupcompalg}{
  \begin{algorithm}[H]
    \begin{algorithmic}[1]
      \Input A group $G$ specified by its multiplication table
      \Output A composition series $S$ for $G$ with the intermediate subgroups that are socles or preimages of socles labelled ``socle''
      \Function{\comps}{$G$}
        \State $S \coleq (1, G)$ \inlong{\Comment{The composition series for $G$ will be stored here}}
        \State $\{N_1, \ldots, N_k\} \coleq \Call{\mns}{G}$
        \State $\soc(G) \coleq \genb{N_i}{1 \leq i \leq k}$
        \If{$G$ is not simple}
          \If{$\soc(G) \not= G$}
            \State $(K_0 \tril \cdots \tril K_m) \coleq \Call{\comps}{G / \soc(G)}$ \label{line:recur}
            \For{$i = 1, \ldots, m - 1$}
              \State Insert $G_i \coleq \varphi^{-1}[K_i]$ into $S$
              \inshortorlong{\State C}{ and c}opy the label of $K_i$ to $G_i$
            \EndFor
          \EndIf
          \State $T \coleq \bigcup_{i = 1}^k \Call{\sims}{N_i}$ \label{line:comp-T}
          \State $K \coleq 1$
          \While{$K \not= \soc(G)$} \label{line:prod-loop}
            \State Choose $L \in T$ \label{line:choose-L}
            \State $K \coleq K \times L$
            \State Insert $K$ into $S$
            \State $T' = \emptyset$ \label{line:set-T'}
            \For{$L \in T$} \label{line:loop-reps}
              \If{$K \cap L = 1 \text{ and } K \times L \not= K \times L'$ \inshort{\Statex \hspace{78.5pt} \nonumber}for all $L' \in T'$}
                \State Insert $L$ into $T'$
              \EndIf
            \EndFor \label{line:loop-reps-end}
            \State $T \coleq T'$ \label{line:set-T}
          \EndWhile
          \State Label $\soc(G)$ as ``socle'' in $S$
        \EndIf
        \State \Return $S$
      \EndFunction
    \end{algorithmic}
    \caption{An algorithm for computing a composition series; $\varphi : G \ra G / \soc(G)$ denotes the canonical map}
    \label{alg:group-comp}
  \end{algorithm}}

\newcommand{\simsubalg}{
  \begin{algorithm}[H]
    \begin{algorithmic}[1]
      \Input A characteristically simple group $G$ specified by its multiplication table
      \Output The set of all simple minimal normal subgroups of $G$
      \Function{\sims}{$G$}
        \State $\{N_1, \ldots, N_k\} \coleq \Call{\mns}{G}$
        \State \Return $\setb{N_i}{N_i \text{ is simple}}$
      \EndFunction
    \end{algorithmic}
    \caption{An algorithm for computing all simple minimal normal subgroups}
    \label{alg:sim-sub}
  \end{algorithm}}

We now present our method for constructing a composition series and prove \thmref{group-red-comp}.  Guessing intermediate subgroups directly instead of their generators is the crucial difference between our method and previous work by Wagner~\cite{wagner2011b} that enables us to obtain the $n^{(1 / 2) \log_p n + O(1)}$ time Turing reduction from group isomorphism to composition-series isomorphism ($p$ is the smallest prime dividing the order of the group).  If we used generators for the intermediate subgroups as in Wagner's paper~\cite{wagner2011b}, we would obtain an $n^{\log_p n + O(1)}$ time reduction.  We start with a coarse-grained recursive group-decomposition process due to Luks~\cite{luks1986a}.

Consider the groups $G$ and $H$.  We start by taking the socles of $G$ and $H$ to obtain the subnormal series $1 \trile \soc(G) \trile G$ and $1 \trile \soc(H) \trile H$.  We continue this process by recursively obtaining a decomposition of $G / \soc(G)$ and $H / \soc(H)$ and take the preimages of these series under the canonical maps to obtain decompositions for $G$ and $H$.  Any isomorphism from $G$ to $H$ must respect $\soc(G)$ and $\soc(H)$ and the inverse images obtained in the recursive decomposition process.  We then decompose the socles into direct products of simple subgroups.  This yields composition series for $\soc(G)$ and $\soc(H)$.  The only choices we have are the towers of direct products for the socles.

Our algorithm for finding a composition series of $G$ works by first computing $\soc(G)$; we then construct a composition series for $\soc(G)$ where the intermediate subgroups are direct products of simple minimal normal subgroups of $\soc(G)$.  To obtain the second half of the composition series for $G$, we recursively construct a composition series for $G / \soc(G)$; by taking the preimage of this series under the canonical map $\varphi : G \ra G / \soc(G)$, we obtain a tower of subgroups of $G$ that contain $\soc(G)$ such that every subgroup is normal in the next subgroup and the factor groups are simple.  This process is described more precisely in \algref{group-comp}; $\Call{\mns}{}$ is an algorithm that takes a group and returns its minimal normal subgroups.

\notinsoda{\groupcompalg \simsubalg}

The composition series that is obtained from this process depends on which composition series consisting of direct products of simple minimal normal subgroups is chosen for each socle.  We refer to this as the \emph{choice of socle decomposition}.  This corresponds to the choice of simple minimal normal subgroups of the socles on \linref{choose-L} in \algref{group-comp}.  Each such choice results in a different composition series.
\insoda{
  \groupcompalg \simsubalg
}
It is straightforward to prove the correctness of \algref{group-comp}; \inshortorlong{see the full version~\cite{rosenbaum2012b}}{we defer the proof to \appref{p-group-iso-proofs}}.

\begin{restatable}{lemma}{compcor}
  \label{lem:comp-cor}
  \algref{group-comp} returns a composition series for $G$; moreover, every subgroup in the series which is a socle or a preimage of a socle is labelled ``socle.''
\end{restatable}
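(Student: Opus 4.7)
The plan is to prove \lemref{comp-cor} by strong induction on $|G|$. The base case is when $G$ is simple (or trivial): the algorithm immediately returns $S = (1, G)$, which is a composition series, and no intermediate subgroups arise to be labelled. For the inductive step, assume the lemma holds for all groups of order strictly less than $|G|$, and suppose $G$ is not simple. The series produced splits naturally into two halves that I will analyze separately.

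For the upper half (executed only when $\soc(G) \ne G$), the recursive call at \linref{recur} on the smaller group $G / \soc(G)$ returns, by induction, a composition series $K_0 = 1 \tril \cdots \tril K_m = G / \soc(G)$ whose socle/preimage-of-socle subgroups are correctly labelled. Taking preimages $G_i = \varphi^{-1}[K_i]$ under the canonical map yields, via the correspondence theorem, a subnormal chain $\soc(G) = G_0 \tril \cdots \tril G_m = G$ with simple factors $G_{i+1}/G_i \cong K_{i+1}/K_i$. The for-loop inserts these $G_i$ into $S$ while copying labels, so preimages of socles in the recursion become preimages of socles in $G$, correctly labelled.

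For the lower half, the while-loop at \linref{prod-loop} builds a chain $1 \tril L_1 \tril L_1 \times L_2 \tril \cdots \tril \soc(G)$ out of elements of $T = \bigcup_i \sims(N_i)$. Combining \propref{mnscharsim}, \propref{charsimprod}, and \propref{socprod}, every $N_i$ is a direct product of isomorphic simple groups drawn from $\sims(N_i)$, and $\soc(G)$ is a direct product of the $N_i$'s; in particular $\langle T \rangle = \soc(G)$. I would maintain the loop invariant that $K$ is at each iteration a direct product $L_1 \times \cdots \times L_j$ of distinct elements of the original $T$ with $K \leq \soc(G)$. Under this invariant, $K \tril K \times L$ in a direct product with simple quotient $L$, so the chain built so far is subnormal with simple factors, and at termination $\soc(G)$ is labelled as a socle.

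The principal obstacle is verifying that the filtered set $T'$ at lines~\ref{line:loop-reps}--\ref{line:loop-reps-end} is non-empty whenever $K \ne \soc(G)$, so the loop can make progress. Contrapositively, if every $L \in T$ fails the filter then $L \leq K$ for all $L \in T$, whence $\soc(G) = \langle T \rangle \leq K$, forcing $K = \soc(G)$; so it suffices to show that any $L \in T$ with $L \not\leq K$ passes the $K \cap L = 1$ test. This I would check by splitting on whether $L$ is Abelian: in the Abelian case $L$ has prime order, so $K \cap L$ is a proper subgroup of $L$ and hence trivial; in the non-Abelian case $L$ is one of the canonical simple direct factors of its $N_i$, and the direct-product structure of $K$ forces $K \cap L = 1$ whenever $L$ is not already inside $K$. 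The equivalence-class filter under ``$K \times L = K \times L'$'' retains at least one representative from each nonempty class, so $T' \ne \emptyset$. Concatenating the two halves yields a composition series $1 \tril \cdots \tril \soc(G) \tril \cdots \tril G$ with ``socle'' labels exactly on $\soc(G)$ and on each preimage of a subgroup labelled in the recursive call, which is precisely the claim.
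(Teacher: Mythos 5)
Your proposal follows the same high-level decomposition as the paper — base case when $G$ is simple, then the while loop for the socle part and the recursive call plus preimages for the upper part — but you handle the crucial step differently: showing that the while loop can make progress whenever $K \ne \soc(G)$. The paper argues that $\soc(G)$ is characteristically simple, so the running subgroup $K$ is not characteristic in $\soc(G)$, so some automorphism of $\soc(G)$ moves a simple factor of $K$ outside $K$, and that moved factor gives a valid choice of $L$ with $K \cap L = 1$. Your contrapositive route — if every $L \in T$ has $K \cap L \ne 1$ then every $L \leq K$, whence $\soc(G) = \langle T \rangle \leq K$ — avoids the automorphism machinery entirely and incidentally sidesteps the characteristic-simplicity claim, which does not hold in general (e.g.\ $\soc(\bbZ_6) = \bbZ_6$, and $\bbZ_2 \leq \bbZ_6$ is characteristic). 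So your approach is genuinely different and, in this respect, arguably more robust.

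There is, however, a real gap in your justification of the key implication $L \not\leq K \Rightarrow K \cap L = 1$. The Abelian sub-case is fine by Lagrange, but in the non-Abelian sub-case, appealing to "the direct-product structure of $K$" is not an argument: $K$ may be a product of simple factors drawn from several different $N_i$, including non-canonical Abelian ones, and nothing you have said explains why its intersection with a given non-Abelian canonical factor $L$ must be trivial or all of $L$. The missing observation, which also subsumes your Abelian case, is that every $L \in T$ is normal in $\soc(G)$: $L$ is a minimal normal subgroup of some $N_i$, and every other minimal normal subgroup $N_j$ of $G$ centralizes $N_i$ (because $[N_i, N_j] \le N_i \cap N_j = 1$), so $L$ is normalized by each $N_j$ and hence by $\soc(G) = \langle N_j \;:\; j \rangle$. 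It follows that $K = L_1 \cdots L_j$ is a product of subgroups normal in $\soc(G)$ and is therefore itself normal in $\soc(G)$; consequently $K \cap L \trile L$, and simplicity of $L$ gives $K \cap L \in \{1, L\}$, which is exactly what your contrapositive needs. The same normality also justifies writing $K \times L$ as a genuine internal direct product at each iteration, which your loop invariant quietly assumes.
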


\begin{lemma}
  \label{lem:comp-eff}
  Algorithms~\ref{alg:group-comp} and \ref{alg:sim-sub} run in deterministic time polynomial in $n$.
\end{lemma}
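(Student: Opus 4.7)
The plan is to handle \algref{sim-sub} first and then \algref{group-comp}. For \algref{sim-sub}, \propref{soccomp} (stating that the socle and minimal normal subgroups can be computed in polynomial time) immediately implies that the call to \mns runs in polynomial time. The only additional work is to filter for simplicity; but $N_i$ is simple iff it has no proper nontrivial normal subgroups, which can be verified in polynomial time (for example, by another invocation of \mns applied to $N_i$ and checking that the only minimal normal subgroup returned is $N_i$ itself). Hence \sims runs in polynomial time.

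For \algref{group-comp}, I would first bound the cost of a single invocation, setting aside the recursive call on \linref{recur}. Computing the minimal normal subgroups and then $\soc(G)$ is polynomial by \propref{soccomp}; forming each preimage $\varphi^{-1}[K_i]$ is polynomial since it is just the union of the cosets lying in $K_i$; and each call to \sims is polynomial by the previous paragraph. The while loop starting at \linref{prod-loop} iterates at most $\log_p \abs{\soc(G)} = O(\log n)$ times, since each iteration enlarges $K$ by a simple factor of order at least $p$. By \lemref{mns-bound} the set $T$ has size at most $(n-1)/(p-1) = O(n)$, so the inner loop on lines~\ref{line:set-T'}--\ref{line:loop-reps-end} performs polynomially many intersection and direct-product equality tests, each of polynomial cost. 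Thus a single invocation does only polynomially many group operations outside the recursive call.

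Finally, I would bound the recursion depth. Each recursive call on \linref{recur} is made on $G/\soc(G)$, and since $\soc(G)$ is nontrivial whenever $G$ is nontrivial, $\abs{G/\soc(G)} \leq n/2$; therefore the depth of recursion is at most $\log_2 n$. Multiplying polynomially bounded per-level work by the logarithmic recursion depth yields a total running time polynomial in $n$. The main obstacle, such as it is, is verifying that the ``polynomial time'' promised by \propref{soccomp} truly suffices for every structural subroutine invoked in \algref{group-comp} (in particular that it produces the individual minimal normal subgroups, not merely their join); once this is granted, the proof reduces to routine bookkeeping of polynomial factors over an $O(\log n)$-deep recursion.
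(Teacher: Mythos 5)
Your proof is correct and takes essentially the same approach as the paper, which simply observes that the minimal normal subgroups can be enumerated in polynomial time (the paper spells out the brute-force normal-closure method, which is one way to realize the polynomial-time promise of the proposition you cite), that simplicity can be tested in polynomial time, and concludes directly that \algref{group-comp} is polynomial. You go further and make the recursion-depth ($O(\log n)$ since $\soc(G)$ is nontrivial) and loop-iteration bounds explicit, which the paper leaves implicit; this is more rigorous but not a different argument. Your parenthetical worry at the end is a non-issue: the paper's proposition asserting that the socle and minimal normal subgroups are computable in polynomial time already promises the individual minimal normal subgroups, not merely their join.
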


\begin{proof}
  \algref{sim-sub} runs in polynomial time since we can compute all minimal normal subgroups by taking the normal closure of all nontrivial elements and eliminating those that are contained in some other normal closure.  We can test if a group is simple by checking that the normal closure of every nontrivial element is the group.  Then \algref{group-comp} runs in polynomial time.
\end{proof}

\begin{lemma}
  \label{lem:all-comp-choice}
  Let $G$ and $H$ be groups and let $S$ be a composition series obtained by running \algref{group-comp} on $G$.  Fix an isomorphism $\phi : G \ra H$; then for the choice of socle decomposition ($*$), running \algref{group-comp} on $H$ returns a composition series $S'$ such that $\phi$ is an isomorphism from $S$ to $S'$.  The choice ($*$) of socle decomposition for $H$ corresponds to the images of the simple minimal normal subgroups in the socle decomposition chosen for $G$.
\end{lemma}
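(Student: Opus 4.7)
The plan is to induct on the composition length of $G$ (equivalently, on $|G|$). The base case is when $G$ is simple; then $\soc(G) = G$, the $\comps$ routine returns only the trivial series $1 \tril G$ with a single chosen simple minimal normal subgroup $G$ itself, and any isomorphism $\phi$ trivially respects the corresponding series $1 \tril H$ produced in $H$ by choosing the image $\phi[G] = H$.

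For the inductive step, the key observation is that the socle is a characteristic subgroup of $G$, so $\phi[\soc(G)] = \soc(H)$; hence the canonical projections yield an induced isomorphism $\bar\phi : G/\soc(G) \to H/\soc(H)$. I would first deal with the recursive call on \linref{recur} by applying the inductive hypothesis to $\bar\phi$: this gives a choice of socle decompositions for the recursion on $H/\soc(H)$ such that $\comps(G/\soc(G))$ and $\comps(H/\soc(H))$ return series $T$ and $T'$ respected by $\bar\phi$. By \propref{resp-series}, the preimages of $T$ and $T'$ under the canonical maps then form the parts of $S$ and $S'$ above the socle, and $\phi$ respects them.

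For the socle portion, let $L_1, L_2, \ldots, L_k$ be the simple minimal normal subgroups of $G$ chosen in successive iterations of the loop at \linref{prod-loop}, so the socle part of $S$ is $1 \tril L_1 \tril L_1 \times L_2 \tril \cdots \tril L_1 \times \cdots \times L_k = \soc(G)$. I would specify ($*$) to be the choice $\phi[L_1], \ldots, \phi[L_k]$ at the analogous steps of the loop run on $H$, and verify inductively over the loop iterations that (a) each $\phi[L_j]$ is indeed a simple minimal normal subgroup of $H$ (since $\phi$ restricted to $\soc(G)$ carries simple minimal normal subgroups to simple minimal normal subgroups), and (b) the conditions checked at \linref{loop-reps}, namely $K \cap L = 1$ and $K \times L \not= K \times L'$, are preserved under $\phi$; hence $\phi[L_{j+1}]$ lies in the updated set $T'$ exactly when $L_{j+1}$ does. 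This ensures the algorithm on $H$ can indeed follow the choice ($*$). At that iteration the partial product in $H$ equals $\phi[L_1] \times \cdots \times \phi[L_j] = \phi[L_1 \times \cdots \times L_j]$, so $\phi$ maps each intermediate subgroup of the socle part of $S$ to the corresponding subgroup of $S'$.

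The main obstacle is not deep but requires careful bookkeeping: one must track the evolution of the candidate set $T$ across iterations in both executions of \algref{group-comp} and verify that the selection predicates at \linref{loop-reps}--\linref{loop-reps-end} are equivariant under $\phi$. Combining the socle portion with the preimages of the recursive portion via \propref{resp-series} yields that $\phi$ is an isomorphism from $S$ to $S'$ in the sense of \defref{iso-resp}.
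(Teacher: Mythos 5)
Your proof is correct and takes essentially the same route as the paper: induct, use that $\soc(G)$ is characteristic so $\phi[\soc(G)] = \soc(H)$, handle the socle portion by choosing the images $\phi[L_i]$ as the socle decomposition for $H$, and lift the recursively obtained series from $G/\soc(G)$ and $H/\soc(H)$ back up via \propref{resp-series}.

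One small imprecision worth flagging: you assert that ``$\phi[L_{j+1}]$ lies in the updated set $T'$ exactly when $L_{j+1}$ does.'' That is not literally guaranteed, because lines~\ref{line:loop-reps}--\ref{line:loop-reps-end} retain only one representative per product-equivalence class, and which representative survives depends on the iteration order over $T$, which need not be $\phi$-equivariant. What is true --- and what suffices --- is that after filtering, the set of attainable products $\{\phi[K] \times L'' : L'' \in T'_H\}$ equals $\{\phi[K \times L''] : L'' \in T'_G\}$, so there is always some $L' \in T'_H$ with $\phi[K] \times L' = \phi[K_{j+1}]$. Choosing that $L'$ produces the same chain of subgroups as $\phi$ applied to the chain built for $G$; this is exactly the observation the paper records (in the proof of \lemref{comp-cor}) that the filtering affects efficiency but not which direct products can be formed. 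With that adjustment your argument is complete and matches the paper's.
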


\begin{proof}
  Let $\phi : G \ra H$ be an isomorphism.  Let $S'$ be the image of $S$ under $\phi$.  We claim that $S'$ can be obtained by running \algref{group-comp} for some choice of socle decomposition.  The proof is by induction on the number of subgroups labelled ``socle'' in \algref{group-comp}.  For the base case, $G$ is simple.  This is trivial since the only composition series for $G$ and $H$ are $1 \tril G$ and $1 \tril H$.

  For the inductive case, let $T$ and $T'$ be the simple minimal normal subgroups of $\soc(G)$ and $\soc(H)$ which are computed on \linref{comp-T}.  Then $\phi$ acts as a bijection from $T$ to $T'$.  Let $\soc(G) = \bigtimes_{i = 1}^j L_i$ be the direct product that was computed for $\soc(G)$.  Then we have $\soc(H) = \bigtimes_{i = 1}^j \phi[L_i]$ and each $\phi[L_i]$ is a simple minimal normal subgroup of $\soc(H)$ in $T'$.  Thus, $\phi$ respects the portions of $S$ and $S'$ up to the socles for some choice of socle decompositions.  If $\soc(G) = G$ then we are done.  Otherwise, by induction, the induced isomorphism $\indphi{G / \soc(G)} : G / \soc(G) \ra H / \soc(H)$ respects the composition series for $G / \soc(G)$ and $H / \soc(H)$ computed recursively on \linref{recur} for some choice of socle decompositions.  It follows from \propref{resp-series} that $\phi$ respects $S$ and $S'$ for some choice of socle decompositions.
\end{proof}

We now consider all possible composition series $S'$ for $H$ that can be constructed according to \algref{group-comp}.  First, we need a simple lemma which \inshortorlong{can be proved easily}{we prove in \appref{p-group-iso-proofs}}.

\begin{restatable}{lemma}{maxtwonorm}
  \label{lem:max-2norm}
  The maximum value of $\norm{x}_2^2$ \inshortorlong{where}{subject to the constraints that} $x \in \bbR^d$, each $x_i \geq 1$\inshortorlong{,}{ and} $\sum_{i = 1}^d x_i = t$ \inshortorlong{and}{where} $t \geq d$ is $(d - 1) + (t - (d - 1))^2$.
\end{restatable}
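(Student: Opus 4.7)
The plan is to exploit that $\norm{x}_2^2$ is a convex function and that the feasible region
\[
  \cF \coleq \setb{x \in \bbR^d}{x_i \geq 1 \text{ for all } i \text{ and } \sum\nolimits_{i=1}^d x_i = t}
\]
is a nonempty (thanks to $t \geq d$) compact convex polytope. Since the maximum of a convex function over a compact convex set is attained at an extreme point, it suffices to enumerate the vertices of $\cF$ and evaluate $\norm{x}_2^2$ at each.

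Next I would identify the vertices. A point of $\cF$ is a vertex \ifft $d$ of the affine defining constraints are active and linearly independent. Since $\sum_i x_i = t$ is always active, a vertex must come from additionally tightening $x_i = 1$ for $d-1$ of the indices; the one remaining coordinate is then forced to equal $t - (d-1)$, which lies in $[1, \infty)$ because $t \geq d$. Hence every vertex of $\cF$ has $d-1$ coordinates equal to $1$ and one coordinate equal to $t - (d-1)$.

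Evaluating the objective at such a vertex gives $\norm{x}_2^2 = (d-1) \cdot 1^2 + (t - (d-1))^2 = (d-1) + (t-(d-1))^2$. By the symmetry of the coordinates, every vertex gives the same value, so this is the maximum.

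I do not expect a real obstacle; the statement is a routine convex-optimization exercise. A fully elementary alternative, avoiding any appeal to extreme-point theory, is a two-coordinate smoothing argument: if some feasible $x$ has two coordinates with $x_i \geq x_j > 1$, replacing $(x_i, x_j)$ by $(x_i + \eps, x_j - \eps)$ for small $\eps \in (0, x_j - 1]$ preserves feasibility and changes $\norm{x}_2^2$ by $2\eps(x_i - x_j) + 2\eps^2 > 0$. Iterating drives the maximizer to a point with at most one coordinate exceeding $1$, which gives the same bound.
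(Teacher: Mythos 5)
Your primary argument is correct and takes a genuinely different route from the paper. The paper proves an auxiliary lemma (if $0 \leq \eps \leq x, y$ then $x^2 + y^2 \leq (x + y - \eps)^2 + \eps^2$) and then applies it iteratively: it bounds $x_1^2 + x_d^2 \leq 1 + (x_1 + x_d - 1)^2$, then absorbs $x_2$, $x_3$, \ldots, sending each to $1$ and pushing the excess mass into the last coordinate, which lands directly on $(d-1) + (t-(d-1))^2$. That is precisely the ``two-coordinate smoothing'' argument you offer as your fallback alternative, so your alternative is essentially the paper's proof. Your main argument, by contrast, invokes the general principle that a convex function on a compact convex polytope attains its maximum at a vertex, and then enumerates the vertices of the simplex-like region $\{x : x_i \geq 1, \sum_i x_i = t\}$, correctly observing that each vertex has $d-1$ coordinates pinned at $1$. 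What the extreme-point route buys you is brevity and conceptual clarity (no auxiliary inequality to verify, and the structure of the optimizer is immediate from the geometry); what the paper's smoothing argument buys is self-containedness, requiring only an elementary algebraic inequality rather than Bauer's maximum principle or the vertex characterization of polytopes. Both are valid.
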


\begin{lemma}
  \label{lem:num-comp-choices}
  Let $G$ be a group and let $p$ be the smallest prime that divides its order.  Then the number of choices of socle decompositions in \algref{group-comp} is at most $n^{5 / 2} p^{(1 / 2) (\log_p n - \ell)^2 + 1} \leq n^{(1 / 2) \log_p n + O(1)}$ where $\ell$ is the number of subgroups labelled ``socle.''
\end{lemma}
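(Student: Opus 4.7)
The plan is to multiply the number of socle-decomposition choices across all recursion levels of \algref{group-comp}, then bound the resulting sum-of-squares exponent using \lemref{max-2norm}.

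First I would set up notation for the tower of socles encountered by the recursion. Let $\soc^0 = 1$ and, for $i \geq 1$, let $\soc^i$ denote the preimage in $G$ of $\soc(G/\soc^{i-1})$, so that $\soc^1 = \soc(G)$. Let $\ell$ be the number of ``socle''-labelled subgroups produced and set $m_i = |\soc^i/\soc^{i-1}|$ for $i = 1, \ldots, \ell$. Then each $m_i \geq p$ because minimal normal subgroups are nontrivial and $p$ is the smallest prime dividing $n$, and $\prod_{i=1}^{\ell} m_i = |\soc^\ell| \leq n$ since the $\soc^i$ form a strictly ascending chain of subgroups of $G$. At each recursion level $i$, \lemref{comp-prod-bound} applied to the characteristically simple group $\soc^i/\soc^{i-1}$ bounds the number of composition series of that socle whose intermediate subgroups are direct products of simple minimal normal subgroups by $m_i \cdot p^{(1/2)\log_p^2 m_i}$. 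Because the choices at different levels are independent, the total number of socle-decomposition choices is at most
\[
\prod_{i=1}^\ell m_i\, p^{(1/2)\log_p^2 m_i} \;\leq\; n \cdot p^{(1/2) \sum_{i=1}^\ell \log_p^2 m_i}.
\]

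The heart of the argument is then bounding $\sum_i \log_p^2 m_i$. Setting $x_i = \log_p m_i$ gives $x_i \geq 1$ and $\sum_i x_i \leq \log_p n$; since $\sum_i x_i^2$ is monotone in each $x_i$ on the feasible region, we may assume equality without loss of generality. Applying \lemref{max-2norm} with $d = \ell$ and $t = \log_p n$ (the hypothesis $t \geq d$ is immediate from $x_i \geq 1$) yields
\[
\sum_{i=1}^\ell \log_p^2 m_i \;\leq\; (\ell - 1) + (\log_p n - \ell + 1)^2.
\]
Expanding $(\log_p n - \ell + 1)^2 = (\log_p n - \ell)^2 + 2(\log_p n - \ell) + 1$ and absorbing the cross term using $p^{\log_p n - \ell} \leq n$ and the leading $p^{(\ell - 1)/2} \leq n^{1/2}$ (both via $p^\ell \leq n$) into polynomial factors of $n$ gives the first claimed inequality $n^{5/2} p^{(1/2)(\log_p n - \ell)^2 + 1}$. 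The second inequality is immediate by dropping $\ell \geq 0$: $(\log_p n - \ell)^2 \leq \log_p^2 n$, so $p^{(1/2)(\log_p n - \ell)^2} \leq n^{(1/2)\log_p n}$, yielding the $n^{(1/2)\log_p n + O(1)}$ bound.

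The main obstacle is really just the clean application of \lemref{max-2norm}; the rest is bookkeeping and polynomial absorption. One subtlety worth checking is that the constraint $\sum x_i \leq \log_p n$ rather than equality does not weaken the bound — this is handled by the monotonicity observation above. The edge case $\ell = 0$ (i.e., $G$ simple) is trivial, as no socle-decomposition choices are made and the right-hand side is at least $1$.
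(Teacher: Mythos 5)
Your proof is correct and follows essentially the same route as the paper: define the tower of socle preimages, bound the per-level choices via \lemref{comp-prod-bound}, substitute $x_i = \log_p m_i$, and apply \lemref{max-2norm}, then absorb low-order terms into polynomial factors. One small slip: the quotient $\soc^i/\soc^{i-1}$ (a socle) is generally \emph{not} characteristically simple, but this is harmless since \lemref{comp-prod-bound} applies to arbitrary groups, returning a trivial bound when no such direct-product composition series exist.
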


\begin{proof}
  Define $F_0 = G$, $F_{i + 1} = F_i / \soc(F_i)$ and let $\varphi_i : F_i \ra F_{i + 1}$ be the canonical map.  We note that $\soc(F_i)$ is a nontrivial subgroup of $F_i$ for $i < \ell$.  Note that $\soc(F_{\ell - 1})$ is the last subgroup labelled ``socle.''

  The subgroups labelled ``socle'' in a composition series $S$ computed by \algref{group-comp} are the preimages $\varphi_0^{-1} \cdots \varphi_{i - 1}^{-1}[\soc(F_i)]$ for $0 \leq i < \ell$.  Let $n_{i + 1} = \abs{\soc(F_i)}$; since $\soc(F_{\ell - 1}) \leq F_{\ell - 1}$, we know that $\prod_{i = 1}^\ell n_i \leq n$ and $n_i \geq p$.  The number of choices of socle decompositions for the \nth{i} level recursive call is bounded by the number of composition series for $\soc(F_i)$ in which the subgroups are direct products of simple minimal normal subgroups of $\soc(F_i)$.  By \lemref{comp-prod-bound}, this is at most $n_i p^{(1 / 2) \log_p^2 n_i}$.  Then the total number of choices is at most $n \prod_{i = 1}^\ell p^{(1 / 2) \log_p^2 n_i} = n p^{(1 / 2) \sum_{i = 1}^\ell \log_p^2 n_i}$.  To obtain a bound, we maximize this quantity subject to the constraints that $\prod_{i = 1}^\ell n_i \leq n$ and $n_i \geq p$.  We know that $\prod_{i = 1}^\ell n_i = n$ at the maximum.  Then setting $x_i = \log_p n_i$, this is equivalent to maximizing $\sum_{i = 1}^\ell x_i^2 = \norm{x}_2^2$ such that $\sum_{i = 1}^\ell x_i = \log_p n$ and $x_i \geq 1$.  By \lemref{max-2norm}, the maximum value is $\norm{x}_2^2 = (\ell - 1) + (\log_p n - (\ell - 1))^2$.  Then the total number of choices is at most \inshortorlong{$n p^{(1 / 2) ((\ell - 1) + (\log_p n - (\ell - 1))^2)} \leq n^{3 / 2} p^{(1 / 2) (\log_p n - (\ell - 1))^2} \leq n^{5 / 2} p^{(1 / 2) ((\log_p n - \ell)^2 + 1)}$}{

  \begin{align*}
    n p^{(1 / 2) ((\ell - 1) + (\log_p n - (\ell - 1))^2)} & \leq n^{3 / 2} p^{(1 / 2) (\log_p n - (\ell - 1))^2} \\
                                                        {} & =    n^{3 / 2} p^{(1 / 2) (\log_p^2 n - 2 (\log_p n)(\ell - 1) + \ell^2 - 2 \ell + 1)} \\
                                                        {} &    = n^{3 / 2} p^{(1 / 2) (\log_p^2 n - 2 (\log_p n)\ell + 2 \log_p n + \ell^2 - 2 \ell + 1)} \\
                                                        {} & \leq n^{5 / 2} p^{(1 / 2) ((\log_p n - \ell)^2 + 1)} \qedhere
  \end{align*}}

\end{proof}

\groupredcomp*

\begin{proof}
  Suppose we have an algorithm for composition-series isomorphism.  We start by fixing a composition series $S$ for $G$ by calling \algref{group-comp} on $G$ using arbitrary choices for the socle decompositions.  Then we compute the composition series $S'$ for $H$ for all possible choices of socle decompositions.  By \lemref{all-comp-choice}, $S$ is isomorphic to one of these composition series $S'$ \ifft $G \cong H$ from which correctness is immediate.  From \lemref{num-comp-choices}, computing all of the composition series $S'$ takes $n^{(1 / 2) \log_p n + O(1)}$ time where $p$ is the smallest prime dividing $n$.
\end{proof}

We now show that this reduction can be performed more efficiently for randomized and quantum algorithms.  Note that our results for randomized and quantum algorithms differ slightly from the deterministic case since it is now necessary to reduce to the slightly more powerful composition-series canonization problem.  As we shall see, using the algorithm for graph canonization instead of the algorithm for testing isomorphism does not have a detrimental effect on the final runtime bounds.  We remark that both of the following results also hold when composition-series canonization is replaced by the slightly weaker problem of computing a complete polynomial-size invariant.

\begin{restatable}{theorem}{groupredrandcomp}
  \label{thm:group-red-rand-comp}
  Group isomorphism is $n^{(1 / 4) \log_p n + O(1)}$ time randomized Turing reducible to composition-series canonization where $p$ is the smallest prime dividing $n$.
\end{restatable}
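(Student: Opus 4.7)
The plan is to replace the deterministic enumeration of composition series used in \thmref{group-red-comp} with birthday-style sampling. Let $\cS$ denote the set of composition series produced by \algref{group-comp} on $G$ as the socle-decomposition choices vary, and define $\cS'$ analogously for $H$; by \lemref{num-comp-choices}, both $\abs{\cS}$ and $\abs{\cS'}$ are bounded by $n^{(1/2) \log_p n + O(1)}$. Set $N \coleq \abs{\cS}$. Applying \lemref{all-comp-choice} to an isomorphism $\phi : G \to H$ and to $\phi^{-1}$ (whenever $G \cong H$) yields a bijection $\phi_\star : \cS \to \cS'$ such that $S \cong \phi_\star(S)$ for every $S \in \cS$; in particular $\abs{\cS'} = N$. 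Conversely, if $G \not\cong H$, no $S \in \cS$ can be isomorphic to any $S' \in \cS'$, since a composition-series isomorphism would induce a group isomorphism.

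With this bijection in hand, the randomized reduction is straightforward. Sample $k = C \, n^{(1/4) \log_p n + O(1)}$ composition series independently and uniformly at random from each of $\cS$ and $\cS'$, invoke the composition-series canonization oracle on all $2k$ series, sort the resulting canonical forms lexicographically, and accept \ifft some form arising from $G$ matches some form arising from $H$. Since $\can(\cdot)$ is a complete polynomial-size invariant, no match is possible when $G \not\cong H$; when $G \cong H$, the expected number of sampled ordered pairs $(S_i, S'_j)$ with $\phi_\star(S_i) = S'_j$ equals $k^2 / N$, which is $\Omega(1)$ provided $C$ is chosen large enough relative to the constant in \lemref{num-comp-choices}. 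A standard birthday-collision analysis --- either via the second-moment method applied to the indicator variables $\mathbb{1}[\phi_\star(S_i) = S'_j]$ (which are almost independent), or by $O(1)$ independent repetitions of the whole procedure --- upgrades this to constant success probability. The runtime is $k$ oracle queries plus polynomial per-sample overhead and $\tilde{O}(k)$ time to sort, giving $n^{(1/4) \log_p n + O(1)}$ total time outside the oracle.

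The main technical obstacle is drawing a uniform sample from $\cS$ in polynomial time per sample; enumerating $\cS$ naively merely recovers the deterministic bound of \thmref{group-red-comp}. Fortunately, the only nondeterminism in \algref{group-comp} lies on \linref{choose-L}, where a simple minimal normal subgroup is chosen to extend the current partial direct product within some $\soc(F_i)$. Because the recursive structure is fixed by the groups themselves, $\cS$ factors across the recursion levels as an independent product of per-socle choice sets, so it suffices to sample uniformly from each level's set of direct-product composition series for $\soc(F_i)$. The candidate set $T$ at each level is produced in polynomial time by \sims\ (\lemref{comp-eff}), the extension conditions $K \cap L = 1$ and $K \times L \neq K \times L'$ are easily checked, and the number of valid completions from any partial state has the combinatorial form bounded in \lemref{comp-prod-bound}. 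A standard count-and-recurse sampler --- computing the count of valid completions from each partial state and branching with proportional probability --- therefore draws uniformly from $\cS$ in polynomial time per sample. Plugging this sampler into the scheme above completes the proof.
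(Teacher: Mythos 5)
Your core argument is the paper's own: fix the bijection $\phi_\star : \cS \ra \cS'$ supplied by \lemref{all-comp-choice}, sample $n^{(1/4)\log_p n + O(1)}$ composition series on each side, canonize each, sort the canonical forms, and merge to detect a match; a birthday-collision bound over $\phi_\star$ then gives constant one-sided success probability. So the overall route is the same one the paper takes.

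Where you diverge is in treating uniform sampling from $\cS$ as the crux and then proposing a count-and-recurse sampler, and this is the one step that is not justified as written. You assert that ``computing the count of valid completions from each partial state'' is polynomial time, but \lemref{comp-prod-bound} only provides an upper bound on the number of completions, not an exact count or a counting procedure, and it is not immediate that the count from a partial product $K$ factors through a polynomially small state space (this can in fact be argued --- the count depends only on a signature recording, per minimal normal subgroup $N_i$, the rank of $K \cap N_i$ --- but none of that appears in your proposal). More to the point, uniform sampling is not needed. If at each execution of \linref{choose-L} you simply pick $L$ uniformly from the pruned candidate set $T$, the induced distributions $p$ on $\cS$ and $q$ on $\cS'$ are \emph{matched} under $\phi_\star$, because the pruned $|T|$ at each step equals the number of distinct direct products $K \times L$ and this count is invariant under $\phi$. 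For matched distributions Cauchy--Schwarz gives $\sum_{S} p(S)\,q(\phi_\star(S)) = \sum_{S} p(S)^2 \geq 1/|\cS|$, so the per-pair collision probability is \emph{no worse} than in the uniform case and $n^{(1/4)\log_p n + O(1)}$ samples still suffice. Dropping the count-and-recurse machinery in favor of this observation would close the gap and leave you with essentially the paper's proof.
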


\begin{proof}
  Let $G$ and $H$ be groups and consider the sets $\cA$ and $\cB$ of all composition series that can be computed for $G$ and $H$ using \algref{group-comp} for some choice of socle decompositions.  Fix an isomorphism $\phi : G \ra H$.  For each $S \in \cA$, the image $S'$ of $S$ under $\phi$ is in $\cB$ by \lemref{all-comp-choice}.  Conversely, the preimage $S$ under $\phi$ of any $S' \in \cB$ is in $\cA$ by \lemref{all-comp-choice}.  Therefore, $\phi$ induces a bijection from $\cA$ to $\cB$.

  Consider the randomized algorithm that randomly selects subsets $A \subseteq \cA$ and $B \subseteq \cB$ which are each of size $n^{(1 / 4) \log_p n + O(1)}$.  If $G \cong H$, then with high probability there exist $S \in A$ and $S' \in B$ such that $S \cong S'$.  To check if there is such a pair, we compute the canonical forms $\can(S)$ and $\can(S')$ for each $S \in A$ and $S' \in B$; we then sort the canonical forms which result from $A$ and $B$ and store the results in the vectors $\bmA$ and $\bmB$.  This takes time at most $n^{(1 / 4) \log_p n + O(1)}$.  By merging $\bmA$ and $\bmB$, we can check if such a pair exists in $n^{(1 / 4) \log_p n + O(1)}$ time.  If we find $S \in A$ and $S' \in B$ where $S \cong S'$, we output $G \cong H$; otherwise, we output $G \not\cong H$.  Then we can decide if $G \cong H$ with one-sided error in $n^{(1 / 4) \log_p n + O(1)}$ time.
\end{proof}

\begin{restatable}{theorem}{groupredqcomp}
  \label{thm:group-red-q-comp}
  Group isomorphism is $n^{(1 / 6) \log_p n + O(1)}$ time quantum Turing reducible to composition-series canonization where $p$ is the smallest prime dividing $n$.
\end{restatable}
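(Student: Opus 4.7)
The plan is to adapt the proof of \thmref{group-red-rand-comp} by replacing the classical birthday-style collision detection with the quantum claw-finding algorithm of~\cite{bassard1997b}, which yields a cube-root rather than square-root speedup over brute-force search. As before, let $\cA$ and $\cB$ be the sets of composition series that \algref{group-comp} can produce when run on $G$ and $H$; by \lemref{num-comp-choices}, each has size at most $N \coleq n^{(1/2)\log_p n + O(1)}$. By \lemref{all-comp-choice}, any isomorphism $\phi : G \ra H$ induces a bijection $\cA \ra \cB$ sending each $S$ to a composition series with $\can(S) = \can(\phi[S])$; conversely, if $G \not\cong H$, then no $S \in \cA$ and $S' \in \cB$ can satisfy $\can(S) = \can(S')$, since such an equality would yield an isomorphism between $G$ and $H$.

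First I would classically sample a uniformly random subset $A \subseteq \cA$ of size $k = \lceil N^{1/3} \rceil = n^{(1/6)\log_p n + O(1)}$ and construct a sorted lookup table of the canonical forms $\{\can(S) : S \in A\}$, at a cost of $n^{(1/6)\log_p n + O(1)}$ canonization oracle calls. Next I would run Grover search over $\cB$ for an $S' \in \cB$ whose canonical form appears in this table. When $G \cong H$, the bijection above ensures that at least $k$ of the $|\cB| \leq N$ elements of $\cB$ are marked, so $O(\sqrt{N/k}) = O(N^{1/3}) = n^{(1/6)\log_p n + O(1)}$ Grover iterations suffice to find a marked element with constant probability; when $G \not\cong H$, no element is marked and Grover reports failure. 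Standard amplification boosts the success probability with only polylogarithmic overhead, yielding a one-sided error decision procedure in $n^{(1/6)\log_p n + O(1)}$ time.

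The main technical point to verify is that Grover can be implemented efficiently over the implicitly-represented set $\cB$. Each composition series in $\cB$ is uniquely determined by the sequence of simple minimal normal subgroups chosen on \linref{choose-L} of \algref{group-comp}, so we can index $\cB$ by polynomial-length strings and, in polynomial time, map an index to the corresponding composition series by re-running \algref{group-comp} with the prescribed choices. Given such an $S'$, its canonical form is obtained from one call to the canonization oracle and checked against the sorted table via binary search, all implemented reversibly. With these primitives in place, the total running time bound is a direct instantiation of the BHT claw-finding analysis, completing the reduction.
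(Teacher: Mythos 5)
Your proof is correct and is, in substance, exactly what the paper does: the paper's own proof of this theorem is a single sentence, namely ``apply the same argument as in the proof of Theorem~\ref{thm:group-red-rand-comp} except that we use quantum claw detection~\cite{bassard1997b} instead of guessing $A \subseteq \cA$ and $B \subseteq \cB$.'' You have simply unpacked what the BHT claw-finding algorithm does internally---sample a classical subset $A$ of size $N^{1/3}$, tabulate canonical forms, then Grover over the other side with $O(\sqrt{N/|A|}) = O(N^{1/3})$ iterations---and verified that the parameters fall out to $n^{(1/6)\log_p n + O(1)}$, that one-sidedness of the error is preserved, and that the bijection from Lemma~\ref{lem:all-comp-choice} guarantees at least $|A|$ marked elements in $\cB$ when $G \cong H$. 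Your closing paragraph correctly identifies the one implementation detail that both your proof and the paper leave somewhat informal: Grover needs a polynomially computable indexing of $\cB$ of size $O(N)$ (or $\poly(n) \cdot N$, which still gives the stated bound). Encoding by the sequence of choices on line~\ref{line:choose-L} works, but note that the sequences are of variable shape (the size of $T$ depends on earlier choices), so what one actually searches over is a slightly padded mixed-radix index space in which invalid or duplicate encodings are treated as unmarked---this only helps the Grover analysis, since unmarked padding can only increase the denominator $M$ by a polynomial factor and duplicates only increase the number of marked items. Overall, this is a faithful and more detailed rendering of the paper's intended argument.
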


\begin{proof}
  Let $G$ and $H$ be groups.  We apply the same argument as in the proof of \thmref{group-red-rand-comp} except that we use quantum claw detection~\cite{bassard1997b} instead of guessing $A \subseteq \cA$ and $B \subseteq \cB$.
\end{proof}

We remark that the methods of Theorems~\ref{thm:group-red-rand-comp} and \ref{thm:group-red-q-comp} can also be applied to the generator-enumeration algorithm; this yields an $n^{(1 / 2) \log_p n + O(1)}$ randomized algorithm and an $n^{(1 / 3) \log_p n + O(1)}$ quantum algorithm for general group isomorphism (where $p$ is the smallest prime dividing the order of the group).  See \inshortorlong{the full version~\cite{rosenbaum2012b}}{\appref{gen-rq-can}} for the details.

\section{Algorithms for $p$-group isomorphism}
\label{sec:p-algorithms}
The intermediate results of Sections~\ref{sec:graph-red} and \ref{sec:comp-red} put us in a position to prove \thmref{p-group-iso}.  First, we note that the generator-enumeration algorithm is efficient for $p$-groups when $p$ is large.

\begin{lemma}
  \label{lem:p-gen-enum}
  Let $G$ and $H$ be $p$-groups.  Then we can test if $G \cong H$ in $n^{\log_p n + O(1)}$ time.
\end{lemma}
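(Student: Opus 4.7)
The plan is to invoke the classical generator-enumeration algorithm discussed in the introduction and combine it with the standard bound on the minimum generating set size of a $p$-group. First I would greedily construct an ordered minimal generating set $\bmg$ of $G$ in polynomial time: start with $\bmg$ empty and, while $\langle \bmg \rangle \ne G$, append any element of $G \setminus \langle \bmg \rangle$. Each time we extend $\bmg$, the subgroup $\langle \bmg \rangle$ strictly enlarges, so by Lagrange's theorem its order is multiplied by a factor that divides $n = p^e$ and is at least $p$. Hence $\abs{\langle \bmg \rangle}$ is multiplied by at least $p$ at every step, yielding $\abs{\bmg} \leq \log_p n$.

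Next, I would enumerate all maps $\psi : \bmg \ra H$; there are at most $n^{\abs{\bmg}} \leq n^{\log_p n}$ of them. Any isomorphism $\phi : G \ra H$ is determined by its restriction to a generating set, so it suffices to check, for each candidate $\psi$, whether $\psi$ extends to an isomorphism. This extension check is done in polynomial time by the standard procedure: precompute, once and for all, a word $w_g$ in $\bmg$ for each $g \in G$ (possible in $\poly(n)$ time by a breadth-first traversal of $G$ using $\bmg$); then the candidate extension is the map $\phi_\psi : g \mapsto w_g(\psi(\bmg))$, and we verify in polynomial time that $\phi_\psi(x) \phi_\psi(y) = \phi_\psi(xy)$ for all $x, y \in G$ and that $\phi_\psi$ is a bijection (equivalently, surjective, since $\abs{G} = \abs{H}$).

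Accept \textbf{iff} some candidate $\psi$ extends successfully. The total time is $n^{\log_p n} \cdot \poly(n) = n^{\log_p n + O(1)}$, as claimed. There is no real obstacle here; the only point worth emphasizing is the bound $\abs{\bmg} \leq \log_p n$, which is special to $p$-groups and is the reason this algorithm already improves on $n^{\log n}$ when $p$ is large.
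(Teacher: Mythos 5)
Your proof is correct and matches the paper's approach: the paper's one-line proof simply observes that every $p$-group has a generating set of size at most $\log_p n$, relying on the generator-enumeration algorithm described in the introduction, which is exactly what you spell out. You supply the routine details (greedy construction of the generating set, Lagrange-style size bound, and the extension/verification step) that the paper takes for granted.
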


\begin{proof}
  Observe that every $p$-group has a generating set of size at most $\log_p n$.
\end{proof}

The proofs of the next two lemmas are easy\inlong{ and we defer them to \appref{p-group-iso-proofs}}.

\begin{restatable}{lemma}{smallpbound}
  \label{lem:small-p-bound}
  If $c_1$ and $c_2$ are positive constants then $c_1 \log_p n + c_2 p \leq c_1 \log n + O(1)$ for $2 \leq p \leq c_1 \ln n / c_2 \ln^2 p$.
\end{restatable}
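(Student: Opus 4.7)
The plan is to rearrange the inequality so that the $c_2 p$ term is isolated. Using $\log_p n = \log n / \log p$, the goal $c_1 \log_p n + c_2 p \leq c_1 \log n + O(1)$ is equivalent to
\[
c_2 p \;\leq\; c_1 (\log n - \log_p n) + O(1) \;=\; c_1 \log n \cdot \frac{\log p - 1}{\log p} + O(1),
\]
so the task reduces to establishing this rearranged inequality under the hypothesis $p \leq c_1 \ln n / (c_2 \ln^2 p)$, which I rewrite as $c_2 p \leq c_1 \ln n / \ln^2 p$.

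I would split on the size of $p$. First, if $p = 2$, then $\log_p n = \log n$ so the right-hand side of the rearranged goal is just $O(1)$, and the inequality reduces to $2 c_2 = O(1)$, which is immediate. More generally, for any $p$ below a fixed constant $p_0$ (to be chosen), both $c_2 p$ and $c_1 (\log n - \log_p n) \cdot \mathbf{1}[\text{this difference is bounded}]$ are $O(1)$ in magnitude, so the inequality follows by absorbing the bounded terms into the $O(1)$ slack.

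For the main regime $p \geq p_0$, I would combine two monotone facts. On the one hand, the hypothesis gives $c_2 p \leq c_1 \ln n / \ln^2 p$, whose right-hand side is at most $c_1 \log n / (\text{const} \cdot \log^2 p)$ after converting log bases. On the other hand, the right-hand side of the rearranged goal is at least $c_1 \log n \cdot (1 - 1/\log p)$, and this factor approaches $1$ as $p$ grows. Choosing $p_0$ large enough that $1/(\text{const} \cdot \log^2 p_0) < 1 - 1/\log p_0$ gives the desired inequality for all $p \geq p_0$.

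The main obstacle is purely bookkeeping: picking the threshold $p_0$ so that the $\ln^2 p$ factor supplied by the hypothesis dominates the gap $(\log p - 1)/\log p$ in the rearranged goal, and then verifying that the finitely many values $p < p_0$ can be absorbed into the $O(1)$ term (which they can, since $c_2 p$ and $c_1(\log n - \log_p n)$ are both either $O(1)$ or of known sign there). There is no conceptual difficulty, and the constant hidden in the $O(1)$ depends only on $c_1$, $c_2$, and $p_0$.
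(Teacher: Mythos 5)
Your proof is correct but proceeds by a genuinely different argument than the paper's. You rearrange to $c_2 p \leq c_1(\log n - \log_p n) + O(1) = c_1 \log n \cdot \frac{\log p - 1}{\log p} + O(1)$, then split on a threshold $p_0$: for $p < p_0$ both $c_2 p$ and the slack $c_1 \log_p n \leq c_1 \log n$ are controlled directly, and for $p \geq p_0$ you compare the upper bound $c_1 \ln n / \ln^2 p$ supplied by the hypothesis against the lower bound $c_1 \log n (1 - 1/\log p)$, choosing $p_0$ so that $1/(\ln 2 \cdot \log^2 p) < 1 - 1/\log p$ for $p \geq p_0$ (this works for $p_0 = 4$, say, and both sides are monotone in the right direction). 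That all checks out, though the phrase about $c_1(\log n - \log_p n)$ being ``$O(1)$ in magnitude'' for $p < p_0$ is loose --- that quantity grows with $n$ for any fixed $p > 2$; what you actually want is that it is nonnegative, so the inequality follows from $c_2 p = O(1)$ alone.

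The paper's proof is shorter and avoids the threshold bookkeeping entirely: it defines $f(p) = c_1 \ln n / \ln p + c_2 p$ and computes $f'(p) = -c_1 \ln n /(p \ln^2 p) + c_2$, observing that the hypothesis $p \leq c_1 \ln n / (c_2 \ln^2 p)$ is \emph{exactly} equivalent to $f'(p) \leq 0$. Since $p \ln^2 p$ is increasing on $p \geq 2$, the constraint defines an interval $[2, p^*]$ on which $f$ is nonincreasing, so $f(p) \leq f(2) = c_1 \log n + 2c_2$. The calculus route makes clear that the hypothesis was reverse-engineered precisely so that the derivative is nonpositive, which your algebraic split rediscovers but without making that structural link explicit. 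Both are correct; the paper's is more self-contained, yours is more elementary (no differentiation) at the cost of the case analysis and explicit constant-chasing.
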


\begin{restatable}{lemma}{largepbound}
  \label{lem:large-p-bound}
  If $c$ is a positive constant then $\log_p n = O(\log n / \log \log n)$ for $p \geq c \ln n / \ln^2 p$.
\end{restatable}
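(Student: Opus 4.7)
The plan is to rewrite the hypothesis in a form amenable to direct estimation and then convert it into a lower bound on $\ln p$. Starting from $p \geq c \ln n / \ln^2 p$, we multiply through to get $p \ln^2 p \geq c \ln n$. Since the right-hand side tends to infinity with $n$, this inequality forces $p$ itself to tend to infinity with $n$; hence for $n$ sufficiently large we may assume $p$ exceeds any fixed constant (small values of $p$ contribute only to the additive constant absorbed by $O(\cdot)$).

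Next I would take natural logarithms on both sides of $p \ln^2 p \geq c \ln n$ to obtain
\[
\ln p + 2 \ln \ln p \;\geq\; \ln c + \ln \ln n.
\]
For $p$ at least a sufficiently large constant, $\ln \ln p \leq \ln p$, so the left-hand side is at most $3 \ln p$, yielding $\ln p \geq \tfrac{1}{3}\bigl(\ln \ln n + \ln c\bigr) = \Omega(\ln \ln n)$. Dividing, $\log_p n = \ln n / \ln p = O(\ln n / \ln \ln n) = O(\log n / \log \log n)$, which is the claimed bound.

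The only delicate point is that the step $\ln \ln p \leq \ln p$ requires $p$ to be at least a fixed constant (roughly $e^e$); this is precisely where we use that the hypothesis $p \ln^2 p \geq c \ln n$ rules out $p$ staying bounded as $n \to \infty$. Because the bound is asymptotic ($O$-notation in $n$), any finite range of small $p$ that satisfies the hypothesis corresponds to a bounded range of $n$, and can be absorbed into the implicit constant. I expect no substantive obstacle beyond this routine edge-case bookkeeping; the whole argument is a short manipulation of the given inequality.
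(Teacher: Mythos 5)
Your proof is correct, and it takes a cleaner route than the paper's. The paper's argument substitutes the hypothesis into $\log_p n = \ln n / \ln p$ and then splits into two cases around $p = e^{\sqrt[3]{\ln n}}$: when $p \leq e^{\sqrt[3]{\ln n}}$ it uses $\ln \ln p \leq (1/3)\ln\ln n$ to control the $-2\ln\ln p$ term in the denominator, and when $p > e^{\sqrt[3]{\ln n}}$ it bounds $\log_p n \leq (\ln n)^{2/3}$ directly. Your version avoids the case split entirely by taking logarithms of $p\ln^2 p \geq c \ln n$ and using $\ln\ln p \leq \ln p$ (valid for every prime $p \geq 2$, since $\ln x \leq x$ applied to $x = \ln p > 0$) to get $3\ln p \geq \ln c + \ln\ln n$, hence $\ln p = \Omega(\ln\ln n)$ uniformly. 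The ``delicate point'' you flag about $p$ being bounded is actually a non-issue: $\ln\ln p \leq \ln p$ already holds at $p = 2$, so the bound $\ln p \geq \tfrac{1}{3}(\ln c + \ln\ln n)$ is valid for every admissible $p$; the $O(\cdot)$ absorbs the $\ln c$ once $n$ is large enough that the right-hand side is positive. Both arguments are short and elementary; yours has the small advantage of being uniform in $p$ and making the lower bound on $\ln p$ explicit, whereas the paper's case threshold $e^{\sqrt[3]{\ln n}}$ is somewhat ad hoc.
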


We now prove an extension of \thmref{p-group-iso}.

\begin{theorem}
  \label{thm:p-group-iso-drq}
  Let $G$ and $H$ be $p$-groups.  Then we can test if $G \cong H$ in 
  
  \begin{enumerate}
  \item $n^{(1 / 2) \log n + O(1)}$\inshort{ deterministic} time\inlong{ for deterministic algorithms}.
  \item $n^{(1 / 4) \log n + O(1)}$\inshort{ randomized} time\inlong{ for randomized algorithms}.
  \item $n^{(1 / 6) \log n + O(1)}$\inshort{ quantum} time\inlong{ for quantum algorithms}.
  \end{enumerate}
\end{theorem}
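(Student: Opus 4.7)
The plan is to split on the size of $p$ and combine two algorithms, using whichever is faster in each regime. For a $p$-group every composition factor has order exactly $p$, so the parameter $\alpha$ in Theorems~\ref{thm:alpha-comp-iso} and~\ref{thm:comp-can} can be taken to be $p$.

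\emph{Small-$p$ regime.} In the deterministic case I would apply \thmref{group-red-comp} to reduce $G\cong H$ to at most $n^{(1/2)\log_p n + O(1)}$ instances of composition-series isomorphism, and resolve each instance with \thmref{alpha-comp-iso} at cost $n^{O(p/\log p)}$. The total running time is $n^{(1/2)\log_p n + O(p/\log p)}$, which by \lemref{small-p-bound} (applied with $c_1 = 1/2$ and absorbing $p/\log p \le p$) is bounded by $n^{(1/2)\log n + O(1)}$ whenever $p$ lies below a threshold of the form $p \le C\ln n/\ln^2 p$. For the randomized (resp.\ quantum) case I would do the same thing using \thmref{group-red-rand-comp} (resp.\ \thmref{group-red-q-comp}), which reduces to composition-series \emph{canonization}; each canonization is handled by \thmref{comp-can} at cost $n^{O(p)}$. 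The totals are $n^{(1/4)\log_p n + O(p)}$ and $n^{(1/6)\log_p n + O(p)}$, and \lemref{small-p-bound} applied with $c_1 \in \{1/4,1/6\}$ collapses these to $n^{(1/4)\log n + O(1)}$ and $n^{(1/6)\log n + O(1)}$ respectively.

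\emph{Large-$p$ regime.} When $p$ exceeds the threshold, I would switch to generator enumeration. Deterministically, \lemref{p-gen-enum} gives an $n^{\log_p n + O(1)}$ algorithm; the randomized and quantum variants of generator enumeration noted at the end of \secref{comp-red} give $n^{(1/2)\log_p n + O(1)}$ and $n^{(1/3)\log_p n + O(1)}$. By \lemref{large-p-bound}, for $p$ in this regime we have $\log_p n = O(\log n/\log\log n) = o(\log n)$, so each of these exponents is comfortably dominated by $(1/2)\log n + O(1)$, $(1/4)\log n + O(1)$, and $(1/6)\log n + O(1)$ respectively for all sufficiently large $n$.

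\emph{Stitching the regimes.} Finally I would verify that \lemref{small-p-bound} and \lemref{large-p-bound} have overlapping ranges of validity, so that a single threshold $p^\ast$ (roughly $p^\ast \asymp \log n/\log^2\log n$) suffices to cleanly split the cases: for $p \le p^\ast$ run the reduction-based algorithm, for $p > p^\ast$ run generator enumeration (in the appropriate deterministic, randomized, or quantum version). I do not expect any genuine obstacle here---the two bounding lemmas are tailored to exactly this balancing---so the ``hard part'' is merely the bookkeeping of constants, which is why the theorem statement only claims $O(1)$ additive slack in the exponent and why each regime is designed so that the lower-order terms are absorbed by that slack.
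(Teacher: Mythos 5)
Your proof is correct and matches the paper's argument: both split on a threshold $p^\ast = \Theta(\ln n/\ln^2 p)$, combine Theorems~\ref{thm:group-red-comp}, \ref{thm:group-red-rand-comp}, \ref{thm:group-red-q-comp} with the composition-series algorithms in the small-$p$ regime, fall back to generator enumeration for large $p$, and close via Lemmas~\ref{lem:small-p-bound} and \ref{lem:large-p-bound}. The only cosmetic deviations are that you use \thmref{alpha-comp-iso} in the deterministic small-$p$ case where the paper cites \thmref{comp-can} (the paper's remark after the theorem notes this substitution is equivalent), and you invoke the randomized/quantum generator-enumeration variants in the large-$p$ regime where the paper just uses plain generator enumeration for all three parts, since $\log_p n = o(\log n)$ there makes either choice immaterial.
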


\begin{proof}
  Suppose we have an algorithm $M$ which can decide $p$-group isomorphism in $n^{c_1 \log_p n + c_2 p}$ time where $0 < c_1 < 1$ and $c_2 > 0$ are constants.  By \lemref{small-p-bound}, we know that $n^{c_1 \log_p n + c_2 p} = n^{c_1 \log n + O(1)}$ for $2 \leq p \leq c_1 \ln n / c_2 \ln^2 p$.  \lemref{large-p-bound} implies that $n^{\log_p n + O(1)} = n^{O(\log n / \log \log n)}$ for $p \geq c_1 \ln n / c_2 \ln^2 p$.  We can define an algorithm which runs $M$ when $2 \leq p \leq c_1 \ln n / c_2 \ln^2 p$ and runs the generator-enumeration algorithm when $p > c_1 \ln n / c_2 \ln^2 p$.  The overall complexity is\inlong{ then} $n^{c_1 \log n + O(1)}$.  Applying Theorems~\ref{thm:group-red-comp} and \ref{thm:comp-can} together with the above argument proves part (a).  Parts (b) and (c) follow from using Theorems~\ref{thm:group-red-rand-comp} and \ref{thm:group-red-q-comp} instead of \thmref{group-red-comp}.
\end{proof}

We remark that in the above proof we could have used the $n^{(1 / 2) \log_p n + p / \log p}$ algorithm which follows from Theorems~\ref{thm:group-red-comp} and \ref{thm:alpha-comp-iso}.  In fact, one can generalize Lemmas~\ref{lem:small-p-bound} and \ref{lem:large-p-bound} so that \thmref{p-group-iso-drq} can also be proved when our algorithms for isomorphism testing and graph of graphs of degree at most $d$ run in $n^{O(d \cdot \polylog(d))}$ time.  Regardless of the exact expression for the $\polylog(d)$ term, we still obtain the same runtime.  Thus, $\polylog(d)$ factors in the exponents of graph isomorphism testing and canonization algorithms are unimportant for our purposes.


The deterministic version of our algorithm can be adapted to perform $p$-group canonization; see \inshortorlong{the full version~\cite{rosenbaum2012b}}{\appref{group-can}} for the details.

\section{Background on Sylow bases}
\label{sec:sylow-bases}
In this section, we cover basic properties of Sylow bases which are used later in the paper.

\begin{definition}
  Let $G$ be a group whose order has the prime factorization $n = \prod_{i = 1}^\ell p_i^{e_i}$.  A Sylow basis is a set $\setb{P_i}{1 \leq i \leq \ell}$ where each $P_i$ is a Sylow $p_i$-subgroup of $G$ and $P_i P_j = P_j P_i$ for all $i$ and $j$.
\end{definition}

 When we speak of a Sylow basis $\setb{P_i}{1 \leq i \leq \ell}$, we always assume that each $P_i$ is a Sylow $p_i$-subgroup of $G$.  We say that the Sylow bases $\setb{P_i}{1 \leq i \leq \ell}$ of $G$ and $\setb{Q_i}{1 \leq i \leq \ell}$ of $H$ are isomorphic if there exists an isomorphism $\phi : G \ra H$ such that $\phi[P_i] = Q_i$ for all $i$.  As we shall see, this is the case \ifft $G$ is isomorphic to $H$.

\begin{proposition}
  If $\setb{P_i}{1 \leq i \leq \ell}$ is a Sylow basis of a solvable group $G$ whose order has the prime factorization $n = \prod_{i = 1}^\ell p_i^{e_i}$, then for any $\cI \subseteq [\ell]$, $\prod_{i \in \cI} P_i$ is a subgroup of $G$ of order $\prod_{i \in \cI} p_i^{e_i}$.
\end{proposition}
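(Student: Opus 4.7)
The plan is to proceed by induction on $|\cI|$, repeatedly using the standard fact that if $A$ and $B$ are subgroups with $AB = BA$ then $AB$ is a subgroup of order $|A||B|/|A \cap B|$, together with the observation that any two Sylow subgroups for distinct primes have coprime orders and hence intersect trivially by Lagrange's theorem.

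For the base case $|\cI| = 1$, the claim is immediate: $P_i$ itself is a Sylow $p_i$-subgroup of order $p_i^{e_i}$. For the inductive step, suppose $\cI = \{i_1, \ldots, i_k\}$ and set $H = \prod_{j=1}^{k-1} P_{i_j}$. By the inductive hypothesis, $H$ is a subgroup of order $\prod_{j=1}^{k-1} p_{i_j}^{e_{i_j}}$. I would first verify that $H$ commutes setwise with $P_{i_k}$: using $P_{i_j} P_{i_k} = P_{i_k} P_{i_j}$ for each $j < k$, I can shuffle $P_{i_k}$ from the right of $P_{i_1} \cdots P_{i_{k-1}} P_{i_k}$ all the way to the left one factor at a time, yielding $H P_{i_k} = P_{i_k} H$. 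Hence $H P_{i_k}$ is a subgroup.

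To compute its order, I use $|H P_{i_k}| = |H| \cdot |P_{i_k}| / |H \cap P_{i_k}|$. By the inductive hypothesis $|H| = \prod_{j=1}^{k-1} p_{i_j}^{e_{i_j}}$, which is coprime to $|P_{i_k}| = p_{i_k}^{e_{i_k}}$ since the primes $p_{i_1}, \ldots, p_{i_k}$ are pairwise distinct. By Lagrange's theorem, the order of $H \cap P_{i_k}$ divides both $|H|$ and $|P_{i_k}|$, so $|H \cap P_{i_k}| = 1$. Therefore $|H P_{i_k}| = \prod_{j=1}^{k} p_{i_j}^{e_{i_j}}$, closing the induction.

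I don't expect any real obstacle here; the only subtlety is being careful that the commutation $P_iP_j = P_jP_i$ is a setwise identity (not an elementwise one), so the inductive argument must rearrange whole factors rather than individual elements. Note that solvability is not used directly in the proof — it enters only as the hypothesis guaranteeing the existence of a Sylow basis in the first place (by Philip Hall's theorem), which is already built into the statement.
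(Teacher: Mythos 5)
Your proof is correct and takes essentially the same route as the paper: induction on $|\cI|$, using permutability of the $P_i$ to conclude the product is a subgroup, and coprimality of the orders to show the intersection with the next factor is trivial. The only (cosmetic) difference is that the paper counts $|PP_j|$ by directly checking the multiplication map is injective, whereas you invoke the standard $|HK| = |H||K|/|H\cap K|$ formula; and you are in fact somewhat more careful than the paper in explicitly verifying $HP_{i_k}=P_{i_k}H$ by shuffling factors, a step the paper merely asserts.
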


\begin{proof}
  First, we remark that since $\setb{P_i}{1 \leq i \leq \ell}$ is a Sylow basis, there is no ambiguity in which order the product is taken.  The result is clearly true when $\abs{\cI} = 1$.  Suppose that the result holds for $\abs{\cI} \leq k$ and let $\abs{\cI} = k + 1$.  Choose some $j \in \cI$ and let $P = \prod_{i \in \cI \setminus \{j\}} P_i$.  By assumption, $P$ is a subgroup of $G$ of order $\prod_{i \in \cI \setminus \{j\}} p_i^{e_i}$.  Let $x_1, y_1 \in P$ and $x_2, y_2 \in P_j$ and suppose that $x_1 x_2 = y_1 y_2$.  Then $x_2 y_2^{-1} = x_1^{-1} y_1$ which implies that $x_i = y_i$ since the orders of $P$ and $P_j$ are coprime.  Thus, $\abs{P P_j} = \prod_{i \in \cI} p_i^{e_i}$.  Because $P P_j = P_j P$, it is easy to show that $P P_j$ is a subgroup of $G$.
\end{proof}

This immediately implies the following.

\begin{proposition}
  \label{prop:syl-basis-uniq}
  If $\setb{P_i}{1 \leq i \leq \ell}$ is a Sylow basis of a solvable group $G$, then every $x \in G$ is uniquely expressible as a product $x_1 \cdots x_\ell$ where each $x_i \in P_i$.
\end{proposition}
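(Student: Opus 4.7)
The plan is to reduce this directly to the preceding proposition by a counting argument. Applying that proposition with $\cI = [\ell]$ shows that $P_1 P_2 \cdots P_\ell$ is a subgroup of $G$ of order $\prod_{i=1}^\ell p_i^{e_i} = n = \abs{G}$, which forces $G = P_1 P_2 \cdots P_\ell$. In particular, every $x \in G$ admits \emph{some} factorization $x = x_1 \cdots x_\ell$ with $x_i \in P_i$. (Because $\setb{P_i}{1 \leq i \leq \ell}$ is a Sylow basis, the $P_i$ pairwise commute as sets, so the order of the product in the factorization is immaterial and the expression is unambiguous.)

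For uniqueness, I would use a pigeonhole argument on the cardinality of the Cartesian product. The set $P_1 \times P_2 \times \cdots \times P_\ell$ has size $\prod_{i=1}^\ell \abs{P_i} = \prod_{i=1}^\ell p_i^{e_i} = n$. The multiplication map $\mu : P_1 \times \cdots \times P_\ell \ra G$ sending $(x_1, \ldots, x_\ell) \mapsto x_1 \cdots x_\ell$ is surjective by the previous paragraph. Since the domain and codomain have the same (finite) cardinality $n$, surjectivity of $\mu$ forces injectivity, and uniqueness of the factorization follows immediately.

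There is no real obstacle here; all of the substantive work has already been done in the preceding proposition, whose proof established that products of Sylow-basis subgroups have the expected order via the coprimality of $\abs{P}$ and $\abs{P_j}$. The present proposition is essentially a restatement of the $\abs{\cI} = \ell$ case together with the observation that equal-cardinality surjections between finite sets are bijections.
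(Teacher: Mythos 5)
Your proof is correct and takes essentially the same route the paper intends: the paper states the proposition as an immediate consequence of the preceding one (that $\prod_{i\in\cI}P_i$ is a subgroup of order $\prod_{i\in\cI}p_i^{e_i}$), and your argument fills in that gap via $P_1\cdots P_\ell = G$ for existence and a cardinality/pigeonhole argument for uniqueness — in fact, the injectivity you deduce by counting is exactly what the inductive step of the preceding proposition's proof establishes directly. One small caution: your parenthetical remark that ``the $P_i$ pairwise commute as sets, so the order of the product in the factorization is immaterial'' conflates the identity of \emph{sets} $P_iP_j = P_jP_i$ with commuting of \emph{elements}; for a fixed $x\in G$ the factors $x_1,\ldots,x_\ell$ in its decomposition will generally change if you permute the order of the $P_i$, so uniqueness is relative to the chosen ordering. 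This aside is not used in your counting argument and does not affect its validity, but as written it could mislead a reader.
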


The following two theorems of Hall are crucial in our utilization of Sylow bases.

\begin{theorem}[Hall~\cite{hall1938a}, cf.~\cite{robinson1996a}]
  \label{thm:sol-syl}
  A group $G$ is solvable \ifft it has a Sylow basis.
\end{theorem}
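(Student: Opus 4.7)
The plan is to prove the two directions separately by reducing each to P.~Hall's structural results on Hall subgroups from the same 1938 paper.

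\emph{Forward direction} (solvable $\Rightarrow$ Sylow basis). I would start by invoking Hall's existence theorem: a solvable group has a Hall $\pi$-subgroup for every set of primes $\pi$ dividing its order. Writing $n = \prod_{i=1}^\ell p_i^{e_i}$, I would pick for each $i$ a Hall $p_i'$-subgroup $H_i$ of $G$, i.e., a subgroup of order $n/p_i^{e_i}$, and set $P_i \coleq \bigcap_{j \neq i} H_j$. Because the indices $[G:H_j]$ for $j \neq i$ are pairwise coprime powers of distinct primes, iterating the standard identity $[G : A\cap B] = [G:A]\,[G:B]$ for coprime-index subgroups gives $[G:P_i] = \prod_{j\neq i} [G:H_j] = n/p_i^{e_i}$, hence $|P_i| = p_i^{e_i}$ and $P_i$ is a Sylow $p_i$-subgroup. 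To verify permutability, observe that $P_i, P_j \subseteq H_k$ for every $k \notin \{i,j\}$, so $P_iP_j \subseteq \bigcap_{k \neq i,j} H_k$; a parallel index count shows both sides have order $p_i^{e_i} p_j^{e_j}$, forcing equality. Since the right-hand side is manifestly symmetric in $i$ and $j$, we conclude $P_iP_j = P_jP_i$.

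\emph{Reverse direction} (Sylow basis $\Rightarrow$ solvable). I would proceed by induction on $|G|$. The base case $|G|=1$ is trivial. For the inductive step, given a Sylow basis $\{P_1,\ldots,P_\ell\}$: if $\ell=1$, then $G = P_1$ is a $p$-group and hence solvable. If $\ell \geq 2$, each subgroup $H_i \coleq \prod_{j \neq i} P_j$ has order $n/p_i^{e_i}$, is a proper subgroup, and inherits the Sylow basis $\{P_j : j \neq i\}$ (the permutability conditions restrict to $H_i$); by the induction hypothesis $H_i$ is solvable. This produces a solvable Hall $p_i'$-subgroup of $G$ for every prime divisor $p_i$ of $n$. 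I would then invoke Hall's converse theorem that the existence of $p$-complements for every prime dividing $|G|$ forces $G$ to be solvable, completing the induction.

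\emph{Main obstacle.} The forward direction is essentially a bookkeeping exercise once Hall's existence theorem is in hand. The real weight rests on the reverse direction, which reduces cleanly to Hall's converse characterization of solvability via $p$-complements --- a substantial classical result whose proof I would cite rather than reproduce, matching the paper's attribution to Hall~\cite{hall1938a}.
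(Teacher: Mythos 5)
Your proposal is correct. The paper does not prove this theorem --- it is stated as a citation to Hall~\cite{hall1938a} and Robinson~\cite{robinson1996a} --- and your sketch is the standard textbook reduction to Hall's theory of Hall subgroups: the forward direction is the passage from Sylow systems to Sylow bases via intersection, which is precisely the content of \propref{syl-sys-basis-bij} that the paper cites in \appref{sol-group-iso-proofs}, and the reverse direction reduces to Hall's converse theorem that the existence of a $p$-complement for every prime divisor of the order forces solvability. One minor inefficiency in the reverse direction: the induction on $|G|$ is unnecessary. Once you observe that each $H_i = \prod_{j \neq i} P_j$ is a subgroup of index $p_i^{e_i}$ --- which the permutability conditions give directly by an order count --- you already have a $p_i$-complement for every $i$, and Hall's converse theorem requires only the \emph{existence} of these complements, not their solvability. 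You can therefore quote the converse theorem straight away; the inductive work establishing that each $H_i$ is solvable is true but never used.
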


Two Sylow bases $\cP = \setb{P_i}{1 \leq i \leq \ell}$ and $\cQ = \setb{Q_i}{1 \leq i \leq \ell}$ of $G$ are conjugate if there exists $g \in G$ such that for all $i$, $P_i^g = Q_i$.  We denote by $\cP^g$ the Sylow basis $\setb{P_i^g}{1 \leq i \leq \ell}$.

\begin{theorem}[Hall~\cite{hall1938a}, cf.~\cite{robinson1996a}]
  \label{thm:sol-conj}
  Any two Sylow bases of a solvable group are conjugate.
\end{theorem}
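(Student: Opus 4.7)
The plan is to prove this by induction on $|G|$. When $G$ is a $p$-group, $\ell = 1$ and there is nothing to show. For the inductive step, let $N$ be a minimal normal subgroup of $G$; since $G$ is solvable, $N$ is an elementary abelian $p$-group for some prime $p = p_k$. Being a normal $p_k$-subgroup, $N$ is contained in every Sylow $p_k$-subgroup of $G$, so in particular $N \leq P_k \cap Q_k$.

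First I pass to the quotient $G/N$. The images $\{P_i N / N\}$ and $\{Q_i N / N\}$ form Sylow bases of $G/N$, since Sylow $p_i$-subgroups and their pairwise permutability descend under the canonical map. By the inductive hypothesis there exists $g \in G$ with $P_i^g N = Q_i N$ for every $i$; replacing $\cP$ by $\cP^g$, I may assume $P_i N = Q_i N$ for all $i$. Combined with $N \leq P_k \cap Q_k$ this forces $P_k = Q_k$.

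Next I align the Hall $p_k'$-parts. The products $H := \prod_{i \neq k} P_i$ and $H' := \prod_{i \neq k} Q_i$ are Hall $p_k'$-subgroups of $G$, hence conjugate by Hall's theorem on Hall subgroups of solvable groups. With a little more care---using the Frattini-style flexibility of the conjugating element together with the normality and abelianity of $N$, and the fact that $P_k = Q_k$ is already fixed---one can arrange that the conjugating element lies in $N_G(P_k)$, so that after a further conjugation $P_k = Q_k$ and $H = H'$ both hold.

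Finally I apply the inductive hypothesis inside the proper subgroup $H = H'$. The families $\{P_i\}_{i \neq k}$ and $\{Q_i\}_{i \neq k}$ are Sylow bases of $H$, so there is some $h \in H$ with $P_i^h = Q_i$ for every $i \neq k$. The principal obstacle---and the step I expect to be the hardest---is arranging that this same $h$ also normalizes $P_k = Q_k$, so that the full Sylow bases $\cP^h$ and $\cQ$ coincide. The flexibility is that $h$ is determined only up to the system normalizer of $\{P_i\}_{i \neq k}$ in $H$; combining this with the permutability $Q_k Q_i = Q_i Q_k$ inherited from $\cQ$ being a Sylow basis of the full group $G$ should allow one to choose $h$ inside $N_G(P_k) \cap H$. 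Once this selection is made, $\cP^h = \cQ$ and the induction closes.
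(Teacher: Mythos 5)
The paper states this theorem without proof, citing Hall via Robinson; the reference argument there is the well-known counting proof via Sylow systems, in which one shows that the number of Sylow systems $\prod_i [G : N_G(S_i)]$ (with $S_i$ ranging over Hall $p_i'$-subgroups) equals the number $[G : \bigcap_i N_G(S_i)]$ of conjugates of any one fixed system, using the pairwise coprimality of the indices $[G : N_G(S_i)]$, which divide $p_i^{e_i}$. Your inductive approach is genuinely different from that, and its first two reductions are sound: pass to $G/N$ to arrange $P_i N = Q_i N$ for all $i$ (hence $P_k = Q_k$, since $N \leq P_k \cap Q_k$), then conjugate to make $H = H'$. For that second step your appeal to Hall's conjugacy theorem plus ``Frattini-style flexibility'' is vaguer than it needs to be: from $P_i N = Q_i N$ one gets $HN = H'N$ directly, $H$ and $H'$ are both complements of the abelian normal subgroup $N$ in $HN$, hence conjugate by some $n \in N$, and $n \in N \leq P_k$ automatically normalizes $P_k$ and fixes every $P_i N = Q_i N$ setwise, so none of the relations you have built up is disturbed.

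The real defect is your final paragraph. The ``principal obstacle'' you flag is illusory, because once $H = H'$ there is nothing left to do: for each $i \neq k$, both $P_i$ and $Q_i$ lie in $H$, and $H \cap N = 1$ by coprimality of orders, so Dedekind's modular law gives $P_i = P_i(N \cap H) = P_i N \cap H = Q_i N \cap H = Q_i(N \cap H) = Q_i$. Combined with $P_k = Q_k$ this is $\cP = \cQ$ outright, and the induction closes without any further conjugation inside $H$. As written, you leave the last step open (``should allow one to choose $h$ \ldots'' is not an argument), so the proof is incomplete; but the fix is a one-line application of Dedekind, not the delicate system-normalizer manoeuvre you anticipate.
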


\begin{corollary}
  \label{cor:syl-iso}
  Let $G$ and $H$ be isomorphic solvable groups.  If $\setb{P_i}{1 \leq i \leq \ell}$ and $\setb{Q_i}{1 \leq i \leq \ell}$ are Sylow bases for $G$ and $H$ then there is an isomorphism $\phi : G \ra H$ such that $\phi[P_i] = Q_i$ for each $i$.
\end{corollary}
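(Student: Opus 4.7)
The plan is to reduce the corollary to Hall's conjugacy theorem (\thmref{sol-conj}) by using an arbitrary isomorphism to transport one Sylow basis into the other group, and then conjugating to match $\setb{Q_i}{1 \le i \le \ell}$. First, I would fix any isomorphism $\psi : G \ra H$, which exists by hypothesis. The first claim to verify is that $\setb{\psi[P_i]}{1 \le i \le \ell}$ is a Sylow basis of $H$: each $\psi[P_i]$ has the same order as $P_i$, so it is a Sylow $p_i$-subgroup of $H$, and since $\psi$ is a group homomorphism we have $\psi[P_i] \psi[P_j] = \psi[P_i P_j] = \psi[P_j P_i] = \psi[P_j]\psi[P_i]$, so the commuting condition is preserved.

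Next, since $H$ is solvable (being isomorphic to the solvable group $G$), I would apply \thmref{sol-conj} to the two Sylow bases $\setb{\psi[P_i]}{1 \le i \le \ell}$ and $\setb{Q_i}{1 \le i \le \ell}$ of $H$: there exists $h \in H$ such that $\psi[P_i]^h = Q_i$ for every $i$. Finally I would set $\phi = \iota_h \circ \psi$, where $\iota_h : H \ra H$ is the inner automorphism $x \mapsto h x h^{-1}$. As a composition of isomorphisms $\phi$ is itself an isomorphism from $G$ to $H$, and by construction $\phi[P_i] = \iota_h[\psi[P_i]] = \psi[P_i]^h = Q_i$ for each $i$, completing the argument.

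There is no real obstacle here: the whole content of the corollary is packaged into Hall's two theorems, and the only thing to check carefully is that the image of a Sylow basis under an isomorphism is again a Sylow basis, which is immediate from the definitions.
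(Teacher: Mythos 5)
Your proposal is correct and takes essentially the same route as the paper: fix an isomorphism $\psi$, observe that $\setb{\psi[P_i]}{1 \leq i \leq \ell}$ is a Sylow basis of $H$, invoke \thmref{sol-conj} to obtain a conjugating element $h$, and set $\phi = \iota_h \circ \psi$. The only difference is that you spell out why the image of a Sylow basis under an isomorphism is again a Sylow basis, which the paper dismisses with ``clearly.''
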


\begin{proof}
  Let $\psi : G \ra H$ be an isomorphism.  Clearly, $\setb{\psi[P_i]}{1 \leq i \leq \ell}$ is a Sylow basis for $H$.  Then by \thmref{sol-conj}, there exists $h \in H$ such that inner automorphism $\iota_h$ maps each $\psi[P_i]$ to $Q_i$.  Therefore, $\phi = \iota_h \psi : G \ra H$ is an isomorphism from $G$ to $H$ such that $\phi[P_i] = Q_i$ for each $i$.
\end{proof}

\inlong{
  \section{A sketch of the algorithm for solvable groups}
  \label{sec:sol-sketch}
  In this section, we provide a sketch of the generalization to solvable groups.  The full construction is shown in Sections~\ref{sec:hcomp-red} -- \ref{sec:sol-algorithms}.  Consider a group $G$ whose order has the prime factorization $n = \prod_{i = 1}^\ell p_i^{e_i}$ and let $\cP = \setb{P_i}{1 \leq i \leq \ell}$ be a \emph{Sylow basis} of $G$.  Since a group is solvable \ifft it has a Sylow basis, the Second Sylow Theorem implies that the number of possible choices of Sylow $p_i$-subgroups for all $i$ is at most $n^\ell$.  A standard bound on the number of prime divisors of $n$ then implies that we can find a Sylow basis $\cP$ of a solvable group in $n^{(1 + o(1)) (\ln n / \ln \ln n)}$ time.

Because all Sylow bases of a group are conjugate, it follows that a group has at most $n$ Sylow bases.  Thus, if $G$ and $H$ are isomorphic solvable groups, we can assume that we have Sylow bases $\cP = \setb{P_i}{1 \leq i \leq \ell}$ and $\cQ = \setb{Q_i}{1 \leq i \leq \ell}$ for $G$ and $H$ which are mapped to each other under some isomorphism.  We then construct a composition series $S_i$ for each $P_i$.  For an isomorphism $\phi : G \ra H$, there exists some choice of socle decompositions of each composition series $S_i'$ for $Q_i$ such that $\phi$ is an isomorphism from $S_i$ to $S_i'$.  We call $(G, \cP, \cS)$ and $(H, \cQ, \cS')$ Hall composition series where $\cS = \setb{S_i}{1 \leq i \leq \ell}$ and $\cS' = \setb{S_i'}{1 \leq i \leq \ell}$.  Our reduction to Hall composition-series isomorphism follows from considering all possible choices of $\cS'$.

To reduce Hall composition-series isomorphism to low-degree graph isomorphism, we reindex the $P_i$ and $Q_i$ so that their prime divisors are in decreasing order.  We fix generators for the $P_1, \ldots, P_\kappa$ with large prime divisors and guess the corresponding generators for $Q_1, \ldots, Q_\kappa$.  These generators induce unique orders on the elements of the subgroups $P = P_1 \cdots P_\kappa$ and $Q = Q_1 \cdots Q_\kappa$ of $G$ and $H$.  This allows us to construct binary trees whose leaves represent the elements of $P$ and $Q$; we fix the leaves by coloring them according to their positions in the orderings induced by the generators.  For each $x \in P$, we construct a copy of the tree $T(S_{\kappa + 1})$ as for $p$-groups and attach its root to the leaf $x$ of the binary tree for $P$.  The leaves of the resulting tree then represent elements of $P P_{\kappa + 1}$; we continue in this manner by attaching a copy of the tree $T(S_{i + 1})$ to every leaf of the current tree for each $i > \kappa + 1$.

Applying the same ideas as before, we use $n + 1$ copies of the resulting tree together with multiplication gadgets to create a cone graph which represents a Hall composition-series of $G$.  We then construct the same graph for $H$ for all possible choices of Hall composition series and generators.  The graph for $G$ is isomorphic to one of the graphs for $H$ \ifft $G \cong H$.
}

\section{A Turing reduction from\insodaornot{\\}{ }solvable-group isomorphism to Hall\insodaornot{\\}{ }composition-series isomorphism}
\label{sec:hcomp-red}
In this section, we describe our Turing reduction from solvable-group isomorphism to Hall composition-series isomorphism and thereby prove \thmref{sol-red-hcomp}.  To start, we define Hall composition-series isomorphism.

\begin{definition}
  \label{defn:hcomp-iso-prob}
  In the Hall composition-series isomorphism problem, we are given two groups $G$ and $H$ specified by their multiplication tables whose orders have the prime factorization $n = \prod_{i = 1}^\ell p_i^{e_i}$; we are also given Sylow bases $\cP = \setb{P_i}{1 \leq i \leq \ell}$ and $\cQ = \setb{Q_i}{1 \leq i \leq \ell}$ for $G$ and $H$ and composition series $S_i$ and $S_i'$ for each $P_i$ and $Q_i$.  We denote the sets of composition series by $\cS = \setb{S_i}{1 \leq i \leq \ell}$ and $\cS' = \setb{S_i'}{1 \leq i \leq \ell}$.  The subgroups of each composition series $S_i$ and $S_i'$ are $P_{i0} = 1 \tril \cdots \tril P_{i, m_i} = P_i$ and $Q_{i0} = 1 \tril \cdots \tril Q_{i, m_i} = Q_i$.  The Sylow bases and composition series are represented by the corresponding subsets of $G$ and $H$.  We say that $\cS$ and $\cS'$ are isomorphic if there exists an isomorphism $\phi : G \ra H$ such that for all $i$, $\phi[P_i] = Q_i$ and each pair of composition series $S_i$ and $S_i'$ are isomorphic under the restriction of $\phi$ to $P_i$.  The Hall composition-series isomorphism problem is to determine if $\cS$ and $\cS'$ are isomorphic.
\end{definition}

For the sake of brevity, we call $(G, \cP, \cS)$ a Hall composition series.  Here, $G$, $\cP$ and $\cS$ are as in \defref{hcomp-iso-prob}.  To start with, we show an algorithm for finding a Sylow basis of a solvable group $G$.  For this it is convenient to use the following standard result from number theory. 

\begin{proposition}
  \label{prop:prime-div}
  The number of distinct prime divisors of $n$ is at most $(1 + o(1)) \frac{\ln n}{\ln \ln n}$.
\end{proposition}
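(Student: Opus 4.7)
The plan is to bound the number $k = \omega(n)$ of distinct prime divisors by comparing $n$ against the primorial, i.e.\ the product of the first $k$ primes, and then invoking the prime number theorem.

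First I would let $p_1 < p_2 < \cdots < p_k$ be the distinct prime divisors of $n$. Since each $p_i$ is at least the $i$-th prime in increasing order, we have
\[
  n \;\geq\; \prod_{i=1}^k p_i \;\geq\; \prod_{i=1}^k q_i,
\]
where $q_i$ denotes the $i$-th prime. Taking logarithms,
\[
  \ln n \;\geq\; \sum_{i=1}^k \ln q_i \;=\; \theta(q_k),
\]
where $\theta$ is the first Chebyshev function.

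Next I would apply the prime number theorem in the form $\theta(x) = (1 + o(1)) x$ and $q_k = (1 + o(1)) k \ln k$. Substituting yields
\[
  \ln n \;\geq\; (1 + o(1)) \, q_k \;=\; (1 + o(1)) \, k \ln k.
\]
Rearranging gives $k \ln k \leq (1 + o(1)) \ln n$. Taking logarithms of this inequality, one obtains $\ln k + \ln \ln k \leq (1 + o(1)) \ln \ln n$, so in particular $\ln k = (1 + o(1)) \ln \ln n$. Dividing $k \ln k \leq (1 + o(1)) \ln n$ by $\ln k$ and substituting this asymptotic yields the stated bound
\[
  k \;\leq\; (1 + o(1)) \, \frac{\ln n}{\ln \ln n}.
\]

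The proof is essentially a textbook application of Chebyshev's / prime number theorem bounds, so there is no genuine obstacle; the only mild care needed is in extracting the $\ln k \sim \ln \ln n$ asymptotic cleanly before dividing. Alternatively, if one wishes to avoid invoking the full PNT, Chebyshev's elementary bounds $\theta(x) = \Theta(x)$ together with $q_k = \Theta(k \ln k)$ suffice to yield $k = O(\ln n / \ln \ln n)$, and the sharper constant $1 + o(1)$ then follows from the PNT.
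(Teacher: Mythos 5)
The paper does not actually prove this statement: it is presented as a ``standard result from number theory'' and invoked without proof, so there is no in-paper argument to compare against. Your proof is a faithful rendering of the standard argument via the primorial and the prime number theorem, and the overall plan is correct.

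The one place that needs tightening is the step ``so in particular $\ln k = (1+o(1)) \ln \ln n$.'' From $\ln k + \ln\ln k \leq (1+o(1)) \ln\ln n$ you can only conclude the one-sided bound $\ln k \leq (1+o(1))\ln\ln n$; the asserted asymptotic equality is simply false in general (take $n$ a prime power, so $k=1$). Since the subsequent ``divide by $\ln k$'' step substitutes this asymptotic, the argument as literally written has a gap precisely in the regime where $k$ is small. You do flag this as ``mild care needed,'' and indeed it is mild: either split into cases (if $k \leq \ln n / (\ln\ln n)^2$, the bound is trivially satisfied; otherwise $\ln k \geq (1-o(1))\ln\ln n$ and the division is legitimate), or argue by contradiction — if $k > (1+\eps)\ln n/\ln\ln n$ infinitely often, then $\ln k \geq (1-o(1))\ln\ln n$, forcing $k\ln k > (1+\eps/2)\ln n$ and contradicting $k\ln k \leq (1+o(1))\ln n$. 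With that patch the proof is complete and matches what a number theory text would give for the bound on $\omega(n)$.
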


\begin{lemma}
  \label{lem:comp-syl-basis}
  Let $G$ be a solvable group whose order has the prime factorization $n = \prod_{i = 1}^\ell p_i^{e_i}$.  Then in $n^{(1 + o(1)) (\ln n / \ln \ln n)}$ deterministic time, we can compute a Sylow basis of $G$.
\end{lemma}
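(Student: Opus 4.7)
The plan is to reduce the problem to (a) computing a single Sylow $p_i$-subgroup for each prime divisor of $n$, and (b) enumerating over conjugates to find a pairwise commuting system.

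First, I would factor $n$ by trial division (polynomial in $n$) to recover the primes $p_1, \ldots, p_\ell$. For each $i$, I would invoke Kantor's algorithm (\propref{comp-sylow}) to compute \emph{some} Sylow $p_i$-subgroup $P_i^0$ of $G$; since $\ell = O(\log n)$, the total cost of this stage is polynomial in $n$. Because $G$ is solvable, \thmref{sol-syl} guarantees the existence of a Sylow basis $\{P_1, \ldots, P_\ell\}$, and by the Second Sylow Theorem each $P_i$ must be conjugate to our fixed $P_i^0$, so $P_i = (P_i^0)^{g_i}$ for some element $g_i \in G$.

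The search step is then to enumerate tuples $(g_1, \ldots, g_\ell) \in G^\ell$; for each tuple, form the candidate subgroups $P_i \coleq (P_i^0)^{g_i}$ and verify the Sylow basis condition by checking $P_i P_j = P_j P_i$ as subsets of $G$ for every pair $i < j$. (Each of these checks is polynomial in $n$, since $|P_i| \cdot |P_j| \leq n^2$.) As soon as a tuple passes all the pairwise checks, we return the resulting collection. Solvability of $G$ ensures, again by \thmref{sol-syl}, that at least one such tuple exists, so the procedure always succeeds. One can further fix $g_1 = \id$ without loss of generality, since conjugating a Sylow basis by $g_1^{-1}$ yields another Sylow basis, but this only saves a factor of $n$ and does not affect the leading-order runtime.

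For the runtime, the enumeration considers at most $n^\ell$ tuples. By \propref{prime-div}, $\ell \leq (1 + o(1)) \ln n / \ln \ln n$, so the total number of tuples is at most $n^{(1 + o(1)) \ln n / \ln \ln n}$. Since each tuple is processed in polynomial time, the full procedure runs in $n^{(1 + o(1)) \ln n / \ln \ln n}$ deterministic time, which absorbs the polynomial overhead from the factoring step, the Sylow computations, and the per-tuple checks.

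There is no real obstacle here beyond bookkeeping: the difficulty is entirely shifted onto Kantor's algorithm (\propref{comp-sylow}) and onto Hall's theorems (\thmref{sol-syl} and \thmref{sol-conj}). The only thing one must be a little careful about is to phrase the commuting check directly on the set products $P_i P_j$ rather than on generators, so that a tuple is accepted exactly when it yields a genuine Sylow basis of $G$.
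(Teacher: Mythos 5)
Your proof is correct and follows essentially the same route as the paper: compute a Sylow $p_i$-subgroup for each prime via \propref{comp-sylow}, invoke \thmref{sol-syl} for existence and the Second Sylow Theorem for conjugacy, enumerate the at most $n^\ell$ conjugating tuples, and verify the pairwise-commuting condition $P_i^{g_i} P_j^{g_j} = P_j^{g_j} P_i^{g_i}$ in polynomial time, with the runtime bound coming from \propref{prime-div}. The extra remarks (trial division, fixing $g_1 = \id$) are harmless refinements not present in the paper's proof.
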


\begin{proof}
  \thmref{sol-syl} implies that $G$ has a Sylow basis.  By \propref{comp-sylow}, we can find a Sylow $p_i$-subgroup $P_i$ of $G$ for each $i$ in polynomial time.  By the Second Sylow Theorem, for each $i$ there exists $g_i \in G$ such that $\cP = \setb{P_i^{g_i}}{1 \leq i \leq \ell}$ is a Sylow basis of $G$.  For any choices of $g_i \in G$, we can test if $\setb{P_i^{g_i}}{1 \leq i \leq \ell}$ is a Sylow basis of $G$ in polynomial time by checking if the relations $P_i^{g_i} P_j^{g_j} = P_j^{g_j} P_i^{g_i}$ hold for all $i$ and $j$.  We simply enumerate all possible $g_1, \ldots, g_\ell \in G$ to find a Sylow basis.  Since the order $p_i^{e_i}$ of each $P_i$ divides $n$, the number of Sylow subgroups of $G$ for distinct primes is at most the number of distinct prime divisors of $n$.  The bound on the runtime is immediate from \propref{prime-div}.
\end{proof}

\lemref{comp-syl-basis} suffices for the purposes of this paper.  However, finding a Sylow basis for a solvable group can be done in deterministic polynomial time.

\begin{restatable}{lemma}{compsylbasispoly}
  \label{lem:comp-syl-basis-poly}
    Let $G$ be a solvable group whose order has the prime factorization $n = \prod_{i = 1}^\ell p_i^{e_i}$.  Then we can compute a Sylow basis of $G$ deterministically in polynomial time.
\end{restatable}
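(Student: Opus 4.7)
The plan is to construct the Sylow basis by induction along a chief series $1 = N_0 \trile N_1 \trile \cdots \trile N_k = G$ with elementary abelian chief factors. Such a series can be computed in polynomial time by iterating \propref{soccomp}: given $N_{i-1}$, compute the socle of $G/N_{i-1}$, extract one of its minimal normal subgroups, and let $N_i/N_{i-1}$ be that subgroup. Since each chief factor has order at least $2$, the series has length $k \leq \log_2 n$. We process the series from the top down, maintaining a Sylow basis $\Sigma_i$ of $G/N_i$ and lifting it to a Sylow basis $\Sigma_{i-1}$ of $G/N_{i-1}$; starting from the empty basis of $G/N_k = 1$, this yields the desired Sylow basis $\Sigma_0$ of $G$ after $k$ steps.

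The lifting step uses a constructive version of the Schur--Zassenhaus theorem in the abelian case: given a group $H$ with abelian normal subgroup $M$ satisfying $\gcd(|M|, |H/M|) = 1$, a complement of $M$ in $H$ can be computed in polynomial time. This is standard --- take any set-theoretic section $s : H/M \to H$, form the cocycle $f(x,y) = s(x) s(y) s(xy)^{-1} \in M$, and modify $s$ by an averaging correction using the inverse of $|H/M|$ modulo $|M|$ (which exists by coprimality) to obtain a homomorphic section. Now let $\bar{G} = G/N_{i-1}$, $\bar{N} = N_i / N_{i-1}$ (an elementary abelian $q$-group), and $\psi : \bar{G} \to G/N_i$ the canonical map; write $\Sigma_i = \{\bar{P}_1, \ldots, \bar{P}_m\}$. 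If $q \nmid |G/N_i|$, then $\bar{N}$ is a normal Sylow $q$-subgroup of $\bar{G}$; apply Schur--Zassenhaus to extract a complement $T$ of $\bar{N}$ in $\bar{G}$, pull back $\Sigma_i$ via the isomorphism $\restr{\psi}{T} : T \to G/N_i$, and append $\bar{N}$ as the new Sylow $q$-subgroup. If instead $q = p_j$ for some $j \leq m$, form the Hall $p_j'$-subgroup $\bar{L} = \prod_{r \neq j} \bar{P}_r$ of $G/N_i$, apply Schur--Zassenhaus inside $\psi^{-1}[\bar{L}]$ (in which $\bar{N}$ is a normal $p_j$-subgroup with $p_j'$-quotient $\bar{L}$) to obtain a complement $T$, pull back $\{\bar{P}_r : r \neq j\}$ via $\restr{\psi}{T}$, and set the new Sylow $p_j$-subgroup to be $\psi^{-1}[\bar{P}_j]$.

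To verify that the resulting family forms a Sylow basis, observe that among the Sylows pulled back to $T$, pairwise permutability is inherited from $\Sigma_i$ via the isomorphism $\restr{\psi}{T}$, and permutability with the normal $\bar{N}$ (in the first case) is automatic. In the second case, permutability of $\tilde{P}_j = \psi^{-1}[\bar{P}_j]$ with each other $\tilde{P}_r$ follows from a direct order count: $\tilde{P}_j \tilde{P}_r$ is contained in the subgroup $\psi^{-1}[\bar{P}_j \bar{P}_r]$ and has the same cardinality $|\bar{N}| \cdot |\bar{P}_j| \cdot |\bar{P}_r|$, so the two coincide. The main obstacle is implementing the constructive abelian Schur--Zassenhaus step carefully and tracking the bookkeeping through the two-case lifting; once these are handled, the $O(\log n)$ chief series length together with polynomial work per step yields an overall polynomial runtime.
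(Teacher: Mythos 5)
Your argument is correct but follows a genuinely different route from the paper. The paper proves the lemma top-down via Hall's existence theorems: it greedily grows a Hall $\pi$-subgroup one generator at a time (which succeeds because, in a solvable group, every $\pi$-subgroup is contained in a Hall $\pi$-subgroup, so the greedy step can always make progress), assembles the Hall $p_i'$-subgroups into a Sylow system, and then applies the standard bijection $P_i = \bigcap_{j \neq i} Q_j$ to convert the Sylow system into a Sylow basis. You instead build a chief series with elementary abelian factors and lift a Sylow basis up the series, invoking a constructive abelian Schur--Zassenhaus (cocycle averaging with the inverse of $|H/M|$ modulo $|M|$) to manufacture the needed complements; the order count you give is exactly what is required to restore pairwise permutability once the Sylow $p_j$-subgroup is enlarged by the kernel $\bar N$. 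Your approach is more bottom-up and self-contained --- it does not use Hall's containment theorem as a black box but instead reconstructs the needed existence statements directly from Schur--Zassenhaus --- at the cost of heavier bookkeeping in the two-case lift. Both yield deterministic polynomial time; the paper's route is shorter given Hall's machinery, while yours is closer in spirit to the complement-lifting algorithms in the computational group theory literature (e.g., the Kantor--Taylor result that the paper cites as an alternative derivation).
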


The proof is based on other results by Hall.  (See \inshortorlong{the full version~\cite{rosenbaum2012b}}{\appref{sol-group-iso-proofs}} for the details.)  This can also be derived from a result of Kantor and Taylor~\cite{kantor1988a}.  We now bound the number of Hall composition series in analogy to \lemref{all-comp-choice}. 

\begin{lemma}
  \label{lem:all-comp-choice-syl}
  Let $G$ and $H$ be groups whose orders have the prime factorization $n = \prod_{i = 1}^\ell p_i^{e_i}$ such that $\cP = \setb{P_i}{1 \leq i \leq \ell}$ and $\cQ = \setb{Q_i}{1 \leq i \leq \ell}$ are Sylow bases for $G$ and $H$.  Fix an isomorphism $\phi : G \ra H$ which is also an isomorphism from $\cP$ to $\cQ$.  For each $i$, let $S_i$ be a composition series for $P_i$ obtained by running \algref{group-comp} on $G$.  Then for some choice of socle decompositions, running \algref{group-comp} on each $Q_i$ returns a composition series $S_i'$ such that $\phi$ is an isomorphism from $S_i$ to $S_i'$.
\end{lemma}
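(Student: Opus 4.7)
The plan is to reduce directly to \lemref{all-comp-choice} by applying it separately to each Sylow pair $(P_i, Q_i)$. The key observation is that the hypothesis $\phi[P_i] = Q_i$ implies that the restriction $\restr{\phi}{P_i} : P_i \to Q_i$ is a well-defined group isomorphism between the $i$th Sylow subgroups for each $i$. Thus each Sylow subgroup can be handled independently, since \algref{group-comp} is invoked on each $P_i$ (resp.\ $Q_i$) in isolation and its behavior depends only on the input group.

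First I would fix $i$ and consider the isomorphism $\restr{\phi}{P_i} : P_i \to Q_i$. Let $S_i$ be the composition series for $P_i$ returned by running \algref{group-comp} on $P_i$ for some (arbitrary) sequence of socle decomposition choices. Then \lemref{all-comp-choice}, applied to the pair of groups $(P_i, Q_i)$ and the isomorphism $\restr{\phi}{P_i}$, guarantees that there is a choice of socle decompositions such that running \algref{group-comp} on $Q_i$ produces a composition series $S_i'$ for which $\restr{\phi}{P_i}$ is an isomorphism from $S_i$ to $S_i'$. By \defref{iso-resp} (applied to $P_i$ and $Q_i$) this means $\restr{\phi}{P_i}$ sends the $j$th subgroup of $S_i$ to the $j$th subgroup of $S_i'$ for every $j$.

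Next I would combine these $\ell$ independent choices. Since the socle decomposition choices for different Sylow subgroups are independent (the recursive calls in \algref{group-comp} on $Q_i$ and $Q_j$ for $i \neq j$ are entirely disjoint computations), we may select, for each $i$ simultaneously, the decomposition choices provided by the previous paragraph. This yields a family $\cS' = \setb{S_i'}{1 \leq i \leq \ell}$ of composition series with the required property that $\restr{\phi}{P_i}$ is an isomorphism from $S_i$ to $S_i'$ for every $i$.

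I do not anticipate a significant obstacle: the statement is essentially \lemref{all-comp-choice} applied componentwise, and the only subtlety is verifying that the restrictions $\restr{\phi}{P_i}$ are genuine isomorphisms (immediate from $\phi[P_i] = Q_i$ and the fact that $\phi$ is an isomorphism) and that independent socle decomposition choices for different Sylow subgroups can be freely combined (immediate from the structure of \algref{group-comp}, which processes each input group in isolation).
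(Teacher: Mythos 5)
Your proof is correct and matches the paper's argument exactly: the paper likewise sets $\phi_i = \restr{\phi}{P_i}$ and applies Lemma~\ref{lem:all-comp-choice} componentwise, relying on the independence of the socle-decomposition choices across the disjoint Sylow subgroups. You merely spell out the details more explicitly than the paper's one-line proof.
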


\begin{proof}
  Let $\phi_i = \restr{\phi}{P_i} : P_i \ra Q_i$ for each $i$ and apply \lemref{all-comp-choice} to each $S_i$ and $\phi_i$.
\end{proof}

An extension\inlong{ of the bound} of \lemref{num-comp-choices} is also required.

\begin{lemma}
  \label{lem:num-hcomp-choices}
  Let $G$ be a group and let $p$ be the smallest prime that divides its order.  Then there are at most $n^{(1 / 2) \log_p n + O(1)}$ Hall composition series $(G, \cP, \cS)$ which can be obtained by some choice of Sylow basis $\cP = \setb{P_i}{1 \leq i \leq \ell}$ and some $\cS = \setb{S_i}{1 \leq i \leq \ell}$ where each $S_i$ is a composition series for $P_i \in \cP$ which results from some choice of socle decompositions in \algref{group-comp}.
\end{lemma}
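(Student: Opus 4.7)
The plan is to bound separately the number of choices for the Sylow basis $\cP$ and, given $\cP$, the number of tuples of composition series $\cS = \{S_i\}$, and then multiply. Write $n = \prod_{i=1}^\ell p_i^{e_i}$ and assume without loss of generality that $p = p_1 \le p_2 \le \cdots \le p_\ell$; if $G$ is not solvable there are no Sylow bases, so there is nothing to prove. For the first factor, I will use \thmref{sol-conj}: any two Sylow bases of a solvable group are conjugate, so the map $g \mapsto \cP^g$ from $G$ to the set of Sylow bases is surjective, yielding at most $n$ Sylow bases in total.

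For the second factor, fix a Sylow basis $\cP$. Each composition series $S_i$ produced by \algref{group-comp} on $P_i$ depends only on the choice of socle decompositions made inside that run. Since the $P_i$ are independent inputs to the algorithm, the number of possible tuples $\cS$ factors as $\prod_i (\text{number of }S_i)$, and \lemref{num-comp-choices} applied to the $p_i$-group $P_i$ (whose smallest prime divisor is $p_i$) gives at most $|P_i|^{(1/2)\log_{p_i}|P_i| + O(1)}$ composition series for $P_i$, with a uniform $O(1)$ constant that does not depend on $i$.

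To combine, I first absorb the $O(1)$ terms: since $\prod_i |P_i|^{c} = n^{c}$ for any constant $c$, the tail contributes only a factor of $n^{O(1)}$. What remains is to show $\prod_i |P_i|^{(1/2)\log_{p_i}|P_i|} \le n^{(1/2)\log_p n}$. Since $p \le p_i$, we have $\log_{p_i} |P_i| \le \log_p |P_i|$, so setting $a_i = \log_p |P_i| = e_i \log_p p_i \ge 0$ it suffices to bound $p^{(1/2)\sum_i a_i^2}$. Note that $\sum_i a_i = \log_p n$, and for nonnegative reals one has $\sum_i a_i^2 \le (\sum_i a_i)^2$ (by expanding the square). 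Hence $\sum_i a_i^2 \le (\log_p n)^2$ and the product is at most $p^{(1/2)(\log_p n)^2} = n^{(1/2)\log_p n}$.

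Putting the three bounds together gives at most $n \cdot n^{O(1)} \cdot n^{(1/2)\log_p n} = n^{(1/2)\log_p n + O(1)}$ Hall composition series, as required. The main thing to get right is the elementary inequality $\sum a_i^2 \le (\sum a_i)^2$ in the right base $p$; after switching from $\log_{p_i}$ to $\log_p$ using monotonicity in the base, this is the only substantive step, and everything else is bookkeeping on the constants coming out of \lemref{num-comp-choices} and \thmref{sol-conj}.
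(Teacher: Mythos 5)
Your proof is correct and follows essentially the same structure as the paper: bound the number of Sylow bases by $n$ via \thmref{sol-conj}, apply \lemref{num-comp-choices} to each $P_i$ independently, and then bound the resulting product. The only divergence is in the final arithmetic: you switch to base $p$ and use $\sum_i a_i^2 \le \left(\sum_i a_i\right)^2$, whereas the paper keeps $\prod_i p_i^{(1/2)e_i^2}$ and factors out $(1/2)\max_j e_j \le (1/2)\log_p n$ from the exponent; both elementary estimates land on the same $n^{(1/2)\log_p n}$ factor, so this is an immaterial bookkeeping difference.
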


\begin{proof}
  By \thmref{sol-conj}, there are at most $n$ choices of a Sylow basis $\cP$ for $G$.  Suppose the order of $G$ has the prime factorization $n = \prod_{i = 1}^\ell p_i^{e_i}$.  By \lemref{num-comp-choices}, the number of choices for each composition series $S_i$ for $P_i$ which can be obtained by some choice of socle decompositions in \algref{group-comp} is at most $p_i^{(1 / 2) e_i^2 + O(e_i)}$.  Then the total number of choices of socle decompositions for all $S_i$ is at most \inshortorlong{$\prod_i p_i^{(1 / 2) e_i^2 + O(e_i)} \leq n^{(1 / 2) \log_p n + O(1)}$.}{

  \begin{align*}
    \prod_i p_i^{(1 / 2) e_i^2 + O(e_i)} &    = \prod_i \left(p_i^{O(e_i)} p_i^{(1 / 2) e_i^2}\right) \\
                                      {} & \leq \left(\prod_i p_i^{e_i} \right)^{O(1)} \left(\prod_i p_i^{e_i}\right)^{(1 / 2) \max \setb{e_i}{1 \leq i \leq \ell}} \\
                                      {} & \leq n^{O(1)} n^{(1 / 2) \log_p n} \\
                                      {} &    = n^{(1 / 2) \log_p n + O(1)} \qedhere
  \end{align*}
  The result follows.}
\end{proof}

Because it is needed for the randomized and quantum variants of our algorithm, we now introduce\inlong{ the} Hall composition-series canonization\inlong{ problem}.

\begin{definition}
  \label{defn:hcomp-inv}
  Let $(G, \cP, \cS)$ be a Hall composition series.  A complete polynomial-size invariant $\invar(\cS)$ for $\cS$ has the following properties:

  \begin{enumerate}
  \item If $(G, \cP, \cS)$ and $(H, \cQ, \cS')$ are Hall composition series then $\cS \cong \cS'$ \ifft $\invar(\cS) = \invar(\cS')$.
  \item $\abs{\invar(\cS)} = \poly(n)$.
  \end{enumerate}
\end{definition}

\begin{definition}
  \label{defn:hcomp-can}
  Let $(G, \cP, \cS)$ be a Hall composition series and suppose that the order of $G$ has the prime factorization $n = \prod_{i = 1}^\ell p_i^{e_i}$.  We reindex so that $p_i > p_j$ for $i < j$ and define the canonical form $\can(\cS)$ to be a tuple \inshortorlong{$(M, \psi[P_1], \ldots, \psi[P_\ell], \Psi_0, \ldots, \Psi_\ell)$ where each $\Psi_i = (\psi[P_{i 0}], \ldots, \psi[P_{i, m_i}])$}{$(M, \psi[P_1], \ldots, \psi[P_\ell], \psi[P_{10}], \ldots, \psi[P_{1, m_1}], \ldots, \psi[P_{\ell 0}], \ldots, \psi[P_{\ell, m_\ell}])$} such that:

  \begin{enumerate}
  \item $M$ is an $n \times n$ matrix with entries in $[n]$.
  \item $M$ is the multiplication table for a group which is isomorphic to $G$ under $\psi : G \ra [n]$.
  \item $\can(\cS)$ is a complete polynomial-size invariant.
  \end{enumerate}
\end{definition}

In the \emph{Hall composition-series canonization problem}, we are given a Hall composition series $(G, \cP, \cS)$ and must compute $\can(\cS)$.  A stronger and more general version of \thmref{sol-red-hcomp} now follows from \lemref{comp-syl-basis-poly}, the technique of \thmref{group-red-comp} and \lemref{all-comp-choice-syl}.  As for our reductions to composition-series in \secref{comp-red}, the result also holds when Hall composition-series canonization is replaced by the problem of computing a complete polynomial-size invariant.

\begin{theorem}
  \label{thm:sol-red-hcomp-drq}
  Solvable-group isomorphism is Turing reducible to Hall composition-series canonization.  The reduction requires

  \begin{enumerate}
  \item $n^{(1 / 2) \log_p n + O(1)}$\inshort{ deterministic} time\inlong{ for deterministic algorithms}
  \item $n^{(1 / 4) \log_p n + O(1)}$\inshort{ randomized} time\inlong{ for randomized algorithms}
  \item $n^{(1 / 6) \log_p n + O(1)}$\inshort{ quantum} time\inlong{ for quantum algorithms}
  \end{enumerate}
  where $p$ is the smallest prime dividing the order of the group.  In the case of deterministic algorithms, we can Turing reduce solvable-group isomorphism to Hall composition-series isomorphism instead of\inlong{ Hall composition-series} canonization.
\end{theorem}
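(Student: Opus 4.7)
The plan is to parallel the proof of \thmref{group-red-comp} together with its randomized and quantum strengthenings (Theorems~\ref{thm:group-red-rand-comp} and \ref{thm:group-red-q-comp}), with Hall composition series replacing ordinary composition series and the Sylow-basis machinery of \secref{sylow-bases} inserted at the appropriate point. First I would use \lemref{comp-syl-basis-poly} to compute, in deterministic polynomial time, a Sylow basis $\cP = \setb{P_i}{1 \leq i \leq \ell}$ for $G$; then, for each $i$, I would call \algref{group-comp} on $P_i$ with arbitrary socle-decomposition choices to obtain a composition series $S_i$. This produces a single fixed Hall composition series $(G, \cP, \cS)$, so the task reduces to deciding whether it is Hall composition-series isomorphic to some Hall composition series $(H, \cQ, \cS')$ arising on the $H$ side.

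For part (a), I would enumerate every Sylow basis $\cQ$ of $H$ (there are at most $n$ by \thmref{sol-conj}) and every choice of socle decompositions in \algref{group-comp} applied to each $Q_i \in \cQ$, producing at most $n^{(1/2) \log_p n + O(1)}$ Hall composition series for $H$ by \lemref{num-hcomp-choices}; I would test each against $(G, \cP, \cS)$ using the Hall composition-series isomorphism oracle and accept iff some test succeeds. The only nontrivial direction of correctness is the forward one: if $\phi : G \to H$ is any isomorphism, then \corref{syl-iso} supplies an isomorphism $\psi : G \to H$ with $\psi[P_i] = Q_i$ for every $i$, and \lemref{all-comp-choice-syl} then guarantees that for some choice of socle decompositions the series $S_i'$ produced from each $Q_i$ satisfies $\restr{\psi}{P_i}[P_{ij}] = Q_{ij}'$ for all $i, j$; this particular $(H, \cQ, \cS')$ will therefore be enumerated and the oracle will answer yes.

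For parts (b) and (c) I would replace exhaustive enumeration by the collision/claw-detection strategy of Theorems~\ref{thm:group-red-rand-comp} and \ref{thm:group-red-q-comp}. Let $\cA$ and $\cB$ be the sets of Hall composition series for $G$ and $H$ that arise from all choices of Sylow basis and of socle decompositions; the bijection induced by $\phi$ (via \corref{syl-iso} and \lemref{all-comp-choice-syl}) pairs each element of $\cA$ with an isomorphic element of $\cB$ whenever $G \cong H$. In the randomized case, sampling $n^{(1/4) \log_p n + O(1)}$ Hall composition series uniformly from each of $\cA$ and $\cB$, computing the canonical form $\can(\cdot)$ of each via the oracle, then sorting and merging detects a collision with high probability iff $G \cong H$; the quantum bound follows by replacing uniform sampling with quantum claw detection~\cite{bassard1997b}. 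The main obstacle, in all three cases, is the careful bookkeeping needed to marry \corref{syl-iso} (which pins down the Sylow basis on the $H$ side only up to conjugation) with \lemref{all-comp-choice-syl} (which pins down the socle decompositions only up to an additional choice); both are already in hand from earlier sections, and together they justify the count in \lemref{num-hcomp-choices} that drives each of the three runtime bounds.
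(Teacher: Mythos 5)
Your proposal is correct and follows essentially the same route as the paper. The only real divergence is that you enumerate all Sylow bases $\cQ$ of $H$ before applying \corref{syl-iso}, whereas the paper observes that \corref{syl-iso} already removes any need for this: it says that \emph{any} fixed pair of Sylow bases $\cP$ of $G$ and $\cQ$ of $H$ is matched by some isomorphism whenever $G\cong H$, so both sides can be computed once via \lemref{comp-syl-basis-poly} and only the socle decompositions need to be enumerated (the $n$-fold Sylow-basis factor in \lemref{num-hcomp-choices} is therefore slack). This also means the ``careful bookkeeping'' you flag as the main obstacle is a non-issue --- \corref{syl-iso} pins down the Sylow basis completely, not merely up to conjugation --- and makes the bijection used in parts (b) and (c) cleaner, since $\phi$ need only act on the socle-decomposition choices rather than also permuting the Sylow bases.
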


\begin{proof}
  Suppose that we have an algorithm for Hall composition-series canonization and let $G$ and $H$ be solvable groups.  Using \lemref{comp-syl-basis-poly}, we compute Sylow bases $\cP = \setb{P_i}{1 \leq i \leq \ell}$ and $\cQ = \setb{Q_i}{1 \leq i \leq \ell}$ for $G$ and $H$ in $\poly(n)$ time.  By \corref{syl-iso}, $\cP$ and $\cQ$ are isomorphic.

  We proceed in a manner similar to \thmref{group-red-comp}.  We fix a composition series $S_i$ for each $P_i$ by calling \algref{group-comp} on each $P_i$ using arbitrary choices for the socle decompositions.  Then we compute the composition series $S_i'$ for each $Q_i$ for all possible choices of socle decompositions.  Let $\cS = \setb{S_i}{1 \leq i \leq \ell}$ and $\cS' = \setb{S_i'}{1 \leq i \leq \ell}$; by \lemref{all-comp-choice-syl}, $\cS$ is isomorphic to some choice of Hall composition series $\cS'$ \ifft $G \cong H$.  Correctness follows immediately.  The total time required to enumerate all choices of Hall composition series is at most $n^{(1 / 2) \log_p n + O(1)}$ by \lemref{num-hcomp-choices}.

  To prove part (a), we simply note that two calls to any algorithm for Hall composition-series canonization suffice to decide Hall composition-series isomorphism.  Parts (b) and (c) follow from the collision detection arguments of the proofs of Theorems~\ref{thm:group-red-rand-comp} and \ref{thm:group-red-q-comp}.
\end{proof}

\section{A Karp reduction from Hall\insodaornot{\\}{ }composition-series isomorphism to\insodaornot{\\}{ }low-degree graph isomorphism}
\label{sec:hgraph-red}
In this section, we generalize our technology for $p$-groups and show that Hall composition-series isomorphism is polynomial-time Karp reducible to graph isomorphism as claimed in the introduction.  In the case of $p$-groups, all of the nodes of our tree $T(S)$ had order bounded by $p + O(1)$; this is because every composition factor of a $p$-group has order $p$.  However, for general solvable groups, the tree constructed in the obvious way does not have this property since the primes $p_i$ which divide $n$ could have widely varying orders.  It is therefore necessary to reduce the degree of the parts of the tree which correspond to large primes $p_i$.  We accomplish this by using Wagner's idea of fixing the generators of the large composition factors.  Although his trick does not work in general, it does work in our case since by construction, the degrees of the nodes in our tree decrease as we move away from the root.

We start by constructing a fixed binary tree that will correspond to the top of our new tree.  This portion of the new tree will handle all large composition factors with the degree of the tree bounded by a constant.  First, we require a useful fact which allows us to order the elements of a group uniquely given a generating set.  The proof is simple\inshortorlong{; see the full version~\cite{rosenbaum2012b}}{and we defer it to \appref{sol-group-iso-proofs}}.

\begin{restatable}[Wagner~\cite{wagner2011b}]{proposition}{genord}
  \label{prop:gen-ord}
  Let $G$ be a group which is generated by the elements $\bmg = (g_1, \ldots, g_k)$.  We can define a total order $\preceq_{\bmg}$ on $G$ that depends only on $\bmg$ where the first $k + 1$ elements are $e \prec_{\bmg} g_1 \prec_{\bmg} \cdots \prec_{\bmg} g_k$.  Given $x, y \in G$, we can decide if $x \preceq_{\bmg} y$ in polynomial time.  Moreover, if $H$ is a group which is generated by the elements $\bmh = (h_1, \ldots, h_k)$ and $\phi : G \ra H$ is an isomorphism such that $\phi(g_i) = h_i$ for all $i$ then for every $x, y \in G$, $x \preceq_{\bmg} y$ \ifft $\phi(x) \preceq_{\bmh} \phi(y)$.
\end{restatable}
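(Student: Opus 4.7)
The plan is to define $\preceq_{\bmg}$ by assigning to each element of $G$ a canonical word over the alphabet $[k]$ representing it as a product of generators, and then sorting elements by the shortlex order on these words. More precisely, for each $x \in G$, let $w_{\bmg}(x)$ be the lexicographically smallest among the shortest strings $(i_1, \ldots, i_m) \in [k]^{*}$ such that $g_{i_1} \cdots g_{i_m} = x$, where we take the empty string for $x = e$. Then define $x \preceq_{\bmg} y$ iff $w_{\bmg}(x)$ precedes $w_{\bmg}(y)$ in shortlex order on $[k]^{*}$ (shorter strings first, ties broken lexicographically). This is clearly a total order that depends only on $\bmg$.

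To verify the explicit initial segment, note that $w_{\bmg}(e) = \varepsilon$ is the unique string of length $0$, and (assuming as is standard that the generators $g_i$ are distinct and non-identity, which we may enforce when the sequence is chosen) each length-$1$ string $(i)$ evaluates to $g_i$, so $w_{\bmg}(g_i) = (i)$. Shortlex then gives $\varepsilon \prec (1) \prec \cdots \prec (k)$, so the first $k+1$ elements of $\preceq_{\bmg}$ are $e \prec_{\bmg} g_1 \prec_{\bmg} \cdots \prec_{\bmg} g_k$ as required. Computing $w_{\bmg}(x)$ for all $x \in G$ is done by a breadth-first search in the Cayley graph of $G$ with generators $g_1, \ldots, g_k$ starting from $e$, expanding neighbors in the fixed order $g_1, g_2, \ldots, g_k$; this runs in time polynomial in $n = |G|$, so given $x,y \in G$ we can look up $w_{\bmg}(x)$ and $w_{\bmg}(y)$ and compare them in polynomial time.

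Isomorphism invariance is the conceptual heart but is essentially formal. Suppose $\phi : G \to H$ is an isomorphism with $\phi(g_i) = h_i$ for all $i$. Then for any string $\bmi = (i_1, \ldots, i_m) \in [k]^{*}$ we have $\phi(g_{i_1} \cdots g_{i_m}) = h_{i_1} \cdots h_{i_m}$, so $\bmi$ represents $x$ in $G$ (with respect to $\bmg$) iff $\bmi$ represents $\phi(x)$ in $H$ (with respect to $\bmh$). Consequently the sets of candidate strings for $x$ under $\bmg$ and for $\phi(x)$ under $\bmh$ coincide, and hence $w_{\bmg}(x) = w_{\bmh}(\phi(x))$. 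Applying this to both $x$ and $y$ gives $x \preceq_{\bmg} y$ iff $\phi(x) \preceq_{\bmh} \phi(y)$.

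The main potential obstacle is nothing structural but bookkeeping at the boundary cases: if the input generating sequence $\bmg$ contains repetitions or the identity, the statement that the first $k+1$ elements are $e, g_1, \ldots, g_k$ (all distinct) fails literally. This is handled either by assuming $\bmg$ is a reduced sequence of distinct non-identity generators (the natural choice when one speaks of a specific generating tuple for a Sylow subgroup later in the paper), or by slightly weakening the claim to say $g_i$ is the $(i+1)$-st \emph{element} among those reached, accounting for collapses; either way the isomorphism-invariance and polynomial-time clauses are unaffected since they rely only on the uniform definition of $w_{\bmg}$.
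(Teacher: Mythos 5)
Your proof is correct and follows essentially the same approach as the paper: a breadth-first search in the Cayley graph to compute shortlex-minimal representing words, with the order defined by shortlex comparison of those words, and isomorphism invariance following because an isomorphism fixing the generators maps representing words to representing words. Your choice to use index strings in $[k]^*$ rather than words over the generator set makes the invariance argument slightly tidier (the candidate sets are literally identical), but this is cosmetic, and your remark about distinct non-identity generators correctly flags an implicit assumption that the paper also makes.
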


We use the following binary tree in the degree reduction step.

\begin{definition}
  \label{defn:right-bin-tree}
  Consider a vector $\bmx = (x_1, \ldots, x_k)$ of $k$ nodes.  We let $B([k])$ be any binary tree of size $O(k)$ with $k$ leaves that are each labeled by a unique element of $[k]$; we require that the distance from the root to a leaf is the same for all leaves.  We define $B(\bmx)$ by replacing the node labelled $i$ in $B([k])$ with $x_i$.
\end{definition}

The reason for first constructing $B([k])$ is so that $B(\bmx)$ cannot depend on the values of the labels $x_i$ of the nodes $\bmx$.  This ensures that if $\bmx = (x_1, \ldots, x_k)$ and $\bmy = (y_1, \ldots, y_k)$ then the partial function $\psi : B(\bmx) \ra B(\bmy) : x_i \mapsto y_i$ for all $i$ extends to a unique isomorphism from $B(\bmx)$ to $B(\bmy)$.  This fact will be used later to ensure that isomorphisms between the graphs constructed for our groups $G$ and $H$ correspond to mapping each $x_i$ to $y_i$.

We now construct our tree which fixes the generators of the large composition factors.  We remark that since every composition factor is simple and each $P_i$ is a $p_i$-group, we only need one generator for each composition factor\inlong{\footnote{This is because the only simple $p$-group is $\bbZ_p$.}}.  We start by reindexing so that $p_i > p_j$ for $i < j$.  We assume that $p_1 \geq \alpha$ since otherwise our previous techniques suffice.  Let $\kappa$ be the largest value of $i$ such that $p_i \geq \alpha$.  We construct a fixed binary tree $T$ using \propref{gen-ord} and \defref{right-bin-tree} which represents the elements of $P_1 \cdots P_\kappa$.  To each leaf of this tree, we attach a copy $T(S_{\kappa + 1})$ as described in \defref{TS}.  For each $i > \kappa$, we attach a copy of $T(S_{i + 1})$ to each leaf of the current tree and continue this process until $i = \ell$.  The leaves of the resulting tree then represent the elements of $G$.  Moreover, the degree is at most $\alpha$.  Since we shall later take $\alpha = \log n$, we see that the factor of $n^{O(\log n / \log \log n)}$ in the runtime of our algorithm for solvable groups comes from guessing the generators of large composition factors and also \thmref{const-deg-iso}.  The resulting tree is shown in \figref{TS-sol-gen}.  This construction is made precise in the following definition.

\begin{definition}
  \label{defn:TS-sol-gen}
  Let $(G, \cP, \cS)$ be a Hall composition series where the order of $G$ has the prime factorization $n = \prod_{i = 1}^\ell p_i^{e_i}$ and $p_i > p_j$ for $i < j$ and let $\alpha$ be a parameter where $\alpha \geq p_1$.  For each $i$ such that $p_i \geq \alpha$, we are given a representative $g_{i, j + 1}$ whose coset $g_{i, j + 1} P_{ij} \in P_{i, j + 1} / P_{ij}$ generates the composition factor $P_{i,j + 1} / P_{ij}$.  Let $\kappa$ be the largest value of $i$ such that $p_i \geq \alpha$.  We denote by $\bmg(\alpha) = \left(g_{11}, \ldots, g_{1, m_1}, \ldots, g_{\kappa 1}, \ldots, g_{\kappa, m_\kappa}\right)$ the vector of representatives.  We note that the subgroup $P_1 \cdots P_\kappa$ is generated by the elements of $\bmg(\alpha)$.  To construct the tree $T(\cS, \bmg(\alpha))$, we start by ordering the elements of $P_1 \cdots P_\kappa$ according to $\preceq_{\bmg(\alpha)}$.  Let $\bmu$ be the vector which corresponds to this ordering.  We construct a copy $T$ of the tree $B(\bmu)$ and denote the leaf which corresponds to $x = x_1 \cdots x_\kappa$ where each $x_j \in P_j$ by $(x_\kappa)_{x_1, \ldots, x_{\kappa - 1}}$; we color the leaf $(x_\kappa)_{x_1, \ldots, x_{\kappa - 1}}$ by the color $k$ where $x$ is the \nth{k} element of the vector $\bmu$.  The root is denoted $P_1$.  For $x = x_1 \cdots x_\kappa$ where each $x_j \in P_j$, we create a copy $T_{x_1, \ldots, x_\kappa}$ of $T(S_{\kappa + 1})$ according to \defref{TS} and identify its root with the node $(x_\kappa)_{x_1, \ldots, x_{\kappa - 1}}$.  For $i > \kappa$ where each $x_j \in P_j$, we create a copy $T_{x_1, \ldots, x_i}$ of the tree $T(S_{i + 1})$ and identify its root with the node for $x_i$ in the tree $T_{x_1, \ldots, x_{i - 1}}$.  The result is the tree $T(\cS, \bmg(\alpha))$.  For each $i > \kappa$ and $x_k \in P_k$, we denote the node for $x_i P_{ij}$ in the tree $T_{x_1, \ldots, x_{i - 1}}$ by $(x_i P_{ij})_{x_1, \ldots, x_{i - 1}}$.  For a leaf $(x_\ell)_{x_1, \ldots, x_{\ell - 1}}$ in $T_{x_1, \ldots, x_{\ell - 1}}$, we use the shorthand $x_1 \cdots x_\ell$.  For subtle technical reasons, we color each node $(x_i P_{ij})_{x_1, \ldots x_{i - 1}}$ ``not identity'' where each $x_k \in P_k$ and $\abs{\setb{x_k \not= 1}{1 \leq k \leq i - 1}} + [x_i \not\in P_{ij}] = 1$.
\end{definition}

\newcommand{\TSsolgenfig}{
  \begin{figure}[h]
    \centering
    \inshortorlong{\includegraphics[scale=0.8]{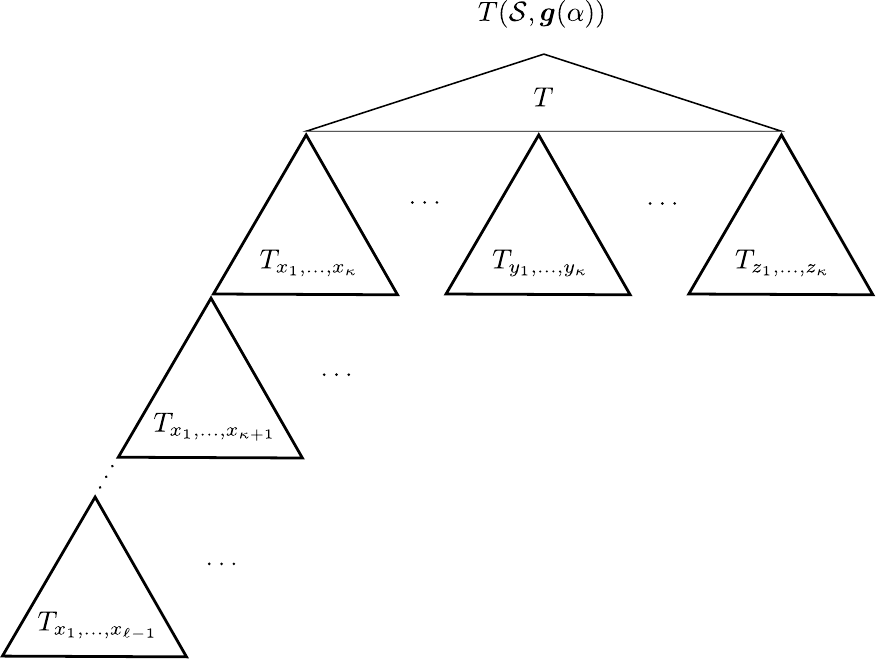}}{\includegraphics{TS-sol-gen.pdf}}
    \inshortorlong{\caption{$T(\cS, \bmg(\alpha))$ where each $x_i, y_i, z_i \in P_i$}}{\caption{The tree $T(\cS, \bmg(\alpha))$ where each $x_i, y_i, z_i \in P_i$}}
    \label{fig:TS-sol-gen}
  \end{figure}}

\TSsolgenfig

Intuitively, if we write $x = x_1 \cdots x_i$ where each $x_k \in P_k$, the nodes $(x_i P_{ij})_{x_1, \ldots x_{i - 1}}$ which are colored ``not identity'' are those which are off the path from $P_1$ to $1$ by following exactly one arbitrary path through a copy of some $T(S_{i'})$ and only passing through nodes of the form $P_{kj}$ in trees $T_{x_1, \ldots, x_{k - 1}}$ for all $k \not= i'$.  The nodes colored ``not identity'' are used to show that the isomorphism between $G$ and $H$ which corresponds to an isomorphism of between the graphs is also an isomorphism from $\cS$ to $\cS'$.

We shall call $(G, \cP, \cS, \bmg(\alpha), \kappa, \alpha)$ a Hall composition series with fixed generators where $G$, $\cP$, $\cS$, $\bmg(\alpha)$, $\kappa$ and $\alpha$ are as in \defref{TS-sol-gen}; when we speak of a Hall composition series with fixed generators, we shall always assume that some prime dividing $n$ is at least $\alpha$.

We now run into the same problem as in the case of $p$-groups; namely, if we were to simply attach multiplication gadgets to each leaf of $T(\cS, \bmg(\alpha))$, the degrees of the leaves would become large.  We remedy this in the same way as before by attaching a copy of $T(\cS, \bmg(\alpha))$ to each leaf of $T(\cS, \bmg(\alpha))$.  We show the resulting graph in \figref{TS-sol-gen-cone} and define it as follows.

\begin{definition}
  \label{defn:TS-sol-gen-cone}
  Let $(G, \cP, \cS, \bmg(\alpha), \kappa, \alpha)$ be a Hall composition series with fixed generators.  To construct $X(\cS, \bmg(\alpha))$, we start with a copy $T_\cS$ of $T(\cS, \bmg(\alpha))$.  For each $x \in G$, we create an additional copy $T_\cS^{(x)}$ of $T(\cS, \bmg(\alpha))$.  We then combine these graphs by identifying the root of each $T_\cS^{(x)}$ with the leaf $x$ of $T_\cS$.  A node in $X(\cS, \bmg(\alpha))$ is denoted by $(y_i P_{ij})^{(x)}_{y_1, \ldots, y_{i - 1}}$ where $i > \kappa$, $x \in G$ and each $y_k \in P_k$.  The superscript $(x)$ determines the tree $T_\cS^{(x)}$ which contains the node.  The subscript $y_1, \ldots, y_{i - 1}$ indicates that the node belongs to the tree $T_{y_1, \ldots, y_{i - 1}}$ within the tree $T_\cS^{(x)}$.  Finally, $y_i P_{ij}$ indicates that our node is the node for $y_i P_{ij}$ in the tree $T_{y_1, \ldots, y_{i - 1}}$ which is itself contained in the tree $T_\cS^{(x)}$.  In particular, the leaf for $y_\ell \in P_\ell$ in the tree $T_{y_1, \ldots, y_{\ell - 1}}$ which is contained in $T_\cS^{(x)}$ is denoted $(y_\ell)^{(x)}_{y_1, \ldots, y_{\ell - 1}}$.  We will sometimes use the shorthand $(y_1 \cdots y_\ell)^{(x)}$ to refer to $(y_\ell)^{(x)}_{y_1, \ldots, y_{\ell - 1}}$.  The node $(y_\kappa)_{y_1, \ldots, y_{\kappa - 1}}^{(x)}$ is the leaf of the tree $B(\bmu)$ within the tree $T_\cS^{(x)}$.  For nodes in the tree $T_\cS$, we simply omit the superscript $(x)$ and write $(y_i P_{ij})_{y_1, \ldots, y_{i - 1}}$ for the node $y_i P_{ij}$ in the tree $T_{y_1, \ldots, y_{i - 1}}$ which is contained in the tree $T_\cS$.  For each $x, y \in G$, we connect three new leaves $y_\la^{(x)}$, $y_\ra^{(x)}$ and $y_=^{(x)}$ to the leaf $y^{(x)}$ and add them to $X(\cS, \bmg(\alpha))$.  For all $x, y, z \in G$ such that $x y = z$, we draw an edge from $y_\la^{(x)}$ to $x_\ra^{(y)}$ and from $x_\ra^{(y)}$ to $y_=^{(z)}$ and color $y_\la^{(x)}$ ``left'', $x_\ra^{(y)}$ ``right'' and $y_=^{(z)}$ ``equal''.  We color the root node of $T_S$ ``root''; all other nodes that have not yet been assigned a color are colored ``internal''.
\end{definition}

\newcommand{\TSsolgenconefig}{
  \begin{figure}[h]
    \centering
    \inshortorlong{\includegraphics[scale=0.8]{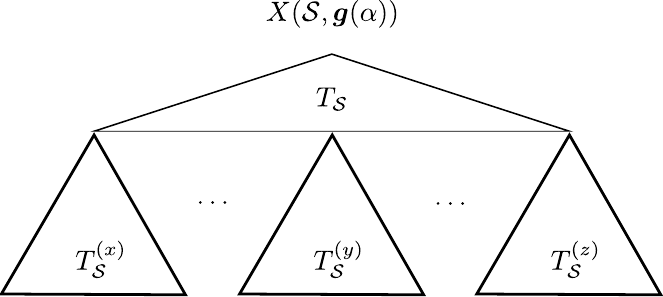}}{\includegraphics{TS-sol-gen-cone.pdf}}
    \inshortorlong{\caption{$X(\cS, \bmg(\alpha))$ where $x, y, z \in G$}}{\caption{The cone graph $X(\cS, \bmg(\alpha))$ where $x, y, z \in G$}}
    \label{fig:TS-sol-gen-cone}
  \end{figure}}

\TSsolgenconefig

We say that an isomorphism $\phi : G \ra H$ respects $\bmg(\alpha)$ and $\bmh(\alpha)$ if $\phi(g_{ij}) = h_{ij}$ for all $i$ and $j$.  Let us denote by $\iso_{\bmg(\alpha) \mapsto \bmh(\alpha)}(\cS, \cS')$ the set of all isomorphisms $\phi$ from $\cS$ to $\cS'$ which respect $\bmg(\alpha)$ and $\bmh(\alpha)$.  We say that $(\cS, \bmg(\alpha))$ and $(\cS', \bmh(\alpha))$ are isomorphic if there is some isomorphism in $\iso_{\bmg(\alpha) \mapsto \bmh(\alpha)}(\cS, \cS')$ in which case we write $(\cS, \bmg(\alpha)) \cong (\cS', \bmh(\alpha))$.  We now show a bijection from $\iso_{\bmg(\alpha) \mapsto \bmh(\alpha)}(\cS, \cS')$ to $\iso(X(\cS, \bmg(\alpha)), X(\cS', \bmh(\alpha)))$.

\begin{theorem}
  \label{thm:sol-graph-red-bij}
  Let $(G, \cP, \cS, \bmg(\alpha), \kappa, \alpha)$ and $(H, \cQ, \cS', \bmh(\alpha), \kappa, \alpha)$ be Hall composition series with fixed generators such that the partial function $\psi : P_1 \cdots P_\kappa \ra Q_1 \cdots Q_\kappa : g_{ij} \mapsto  h_{ij}$ for all $i$ and $j$ extends to an isomorphism $\psi : P_1 \cdots P_\kappa \ra Q_1 \cdots Q_\kappa$ where $\psi[P_{ij}] = Q_{ij}$ for $i \leq \kappa$ and all $j$.  Then there is a bijection between $\iso_{\bmg(\alpha) \mapsto \bmh(\alpha)}(\cS, \cS')$ and $\iso(X(\cS, \bmg(\alpha)), X(\cS', \bmh(\alpha)))$.
\end{theorem}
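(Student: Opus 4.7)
The plan is to generalize the bijection argument of Theorem~\ref{thm:p-graph-red-bij} to handle the mixed tree structure of $T(\cS,\bmg(\alpha))$, whose top portion is the fixed binary tree $B(\bmu)$ encoding the generators of $P_1\cdots P_\kappa$ and whose bottom portion is a cascade of copies of the trees $T(S_{\kappa+1}),\ldots,T(S_\ell)$. I would define the forward map $f : \iso_{\bmg(\alpha) \mapsto \bmh(\alpha)}(\cS, \cS') \to \iso(X(\cS, \bmg(\alpha)), X(\cS', \bmh(\alpha)))$ by $\phi \mapsto \hat\phi$ where $\hat\phi$ sends the root to the root, each leaf $x \in G$ of $T_\cS$ to $\phi(x)$, each node $(y_iP_{ij})_{y_1,\ldots,y_{i-1}}^{(x)}$ in a subtree copy to the corresponding node $(\phi(y_i)Q_{ij})_{\phi(y_1),\ldots,\phi(y_{i-1})}^{(\phi(x))}$ (using that $\phi[P_k] = Q_k$ and, for $i \leq \kappa$, that $\phi$ restricts to an isomorphism $\psi$ from $P_1\cdots P_\kappa$ to $Q_1\cdots Q_\kappa$ that sends $P_{ij}$ to $Q_{ij}$), and each multiplication-gadget leaf $y_\lambda^{(x)}$ to $(\phi(y))_\lambda^{(\phi(x))}$.

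The first block of the argument shows $\hat\phi$ is a color- and edge-preserving bijection. Respecting the multiplication gadgets and the bottom tree edges is essentially identical to the $p$-group case in Theorem~\ref{thm:p-graph-red-bij}. The new ingredient is the binary tree $B(\bmu)$: since $B([k])$ is fixed before labels are attached, I need the map induced on the leaves by $x \mapsto \phi(x)$ to be color-preserving. This is exactly Proposition~\ref{prop:gen-ord}: because $\phi$ sends $\bmg(\alpha)$ pointwise to $\bmh(\alpha)$, for all $x,y \in P_1\cdots P_\kappa$ we have $x \preceq_{\bmg(\alpha)} y$ iff $\phi(x) \preceq_{\bmh(\alpha)} \phi(y)$, so the $k$-th element under $\preceq_{\bmg(\alpha)}$ maps to the $k$-th element under $\preceq_{\bmh(\alpha)}$ and the color matches. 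The ``not identity'' color on a node $(y_iP_{ij})_{y_1,\ldots,y_{i-1}}$ depends only on which entries among $y_1,\ldots,y_{i-1}$ are the identity and on whether $y_i \in P_{ij}$; since $\phi$ is a group isomorphism sending each $P_{ij}$ to $Q_{ij}$, this is preserved. Injectivity of $f$ is immediate from $\phi = \restr{\hat\phi}{G}$.

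The main work is surjectivity: given any graph isomorphism $\theta : X(\cS,\bmg(\alpha)) \to X(\cS',\bmh(\alpha))$, I want to produce $\phi \in \iso_{\bmg(\alpha)\mapsto\bmh(\alpha)}(\cS,\cS')$ with $\hat\phi=\theta$. The color constraints force $\theta$ to send the root to the root and preserve levels, so $\phi := \restr{\theta}{G}$ is a bijection $G \to H$. By the same shortest-path argument (through a ``left''-then-``right'' node) used in the $p$-group case, $\theta(y^{(x)}) = (\phi(y))^{(\phi(x))}$, and then the multiplication gadgets force $\phi(xy)=\phi(x)\phi(y)$. I would then descend the tree structure of $T_\cS$ to show $\phi[P_i]=Q_i$ and $\phi[P_{ij}]=Q_{ij}$: the children of the node $P_{ij}$ inside $T_\cS$ are exactly the descendants that appear as leaves of the form $x_1\cdots x_\ell$ with $x_i \in P_{ij}$ and $x_k=1$ for $k \neq i$, and these are detected by the ``not identity'' coloring. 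This step is the main obstacle, because one must carefully argue that $\theta$ respects these colored sets and then conclude, via $\theta|_G = \phi$, that $\phi$ sends each $P_{ij}$ onto $Q_{ij}$, establishing both $\phi[P_i]=Q_i$ and that each $\restr{\phi}{P_i}$ is an isomorphism from $S_i$ to $S_i'$.

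Finally, I would recover $\phi(g_{ij}) = h_{ij}$ from the $B(\bmu)$ portion of the tree. Since the leaves of $B(\bmu)$ and $B(\bmv)$ are uniquely colored by their position in $\preceq_{\bmg(\alpha)}$ and $\preceq_{\bmh(\alpha)}$ respectively, and since $B([k])$ is chosen before labels are attached, the color-preserving isomorphism $\theta$ restricted to the top binary tree must send the $k$-th element of $\bmu$ to the $k$-th element of $\bmv$; in particular it sends $g_{ij}$ (which occurs among the first few elements of $\bmu$) to $h_{ij}$. Together with the previous paragraph, this yields $\phi \in \iso_{\bmg(\alpha)\mapsto\bmh(\alpha)}(\cS,\cS')$. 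A straightforward verification that $\hat\phi=\theta$ (checking agreement on a generating set of nodes and using well-definedness of $\hat\phi$) completes the surjectivity, and hence the bijection.
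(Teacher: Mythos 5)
Your proposal is correct and follows essentially the same route as the paper's proof (Theorem~\ref{thm:sol-graph-red-bij-exp} in the appendix): you define the forward map $\phi\mapsto\hat\phi$ explicitly, invoke Proposition~\ref{prop:gen-ord} to see that the binary-tree coloring is preserved, use the preservation of the ``not identity'' coloring to pin down $\phi[P_{ij}]=Q_{ij}$ for $i>\kappa$, and recover the generator mapping from the leaf colors of $B(\bmu)$ in the surjectivity direction.

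One presentational point worth tightening: your descent-through-``not identity''-colors argument only applies to the levels $i>\kappa$, since the nodes with $i\le\kappa$ live inside the binary tree $B(\bmu)$ and carry the position colors rather than ``not identity.'' For $i\le\kappa$ you should route through the $B(\bmu)$ coloring and the hypothesis on $\psi$: the position colors force $\theta\!\restriction_{P_1\cdots P_\kappa}=\psi$, and since by assumption $\psi[P_{ij}]=Q_{ij}$ (equivalently, since $P_{ij}=\langle g_{i1},\ldots,g_{ij}\rangle$ and $Q_{ij}=\langle h_{i1},\ldots,h_{ij}\rangle$), the conclusion $\phi[P_{ij}]=Q_{ij}$ for $i\le\kappa$ follows. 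Also note that the paper deliberately does not define $\hat\phi$ on the internal nodes of $B(\bmu)$, instead observing that the assignment on the root and leaves of $B(\bmu)$ uniquely extends to the interior; your formula for $\hat\phi$ on $(y_iP_{ij})$-type nodes should be restricted to $i>\kappa$ to match.
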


We give a sketch and defer the proof to \inshortorlong{the full version~\cite{rosenbaum2012b}}{\appref{sol-group-iso-proofs}}.  The proof is similar to that of \thmref{p-graph-red-bij} but the details are more complicated due to the presence of Sylow bases and fixed generators.  Consider an isomorphism $\phi \in \iso_{\bmg(\alpha) \mapsto \bmh(\alpha)}(\cS, \cS')$.  We define $\hat \phi$ to be the map from $X(\cS, \bmg(\alpha))$ to $X(\cS', \bmh(\alpha))$ that maps the root $G$ of $X(\cS, \bmg(\alpha))$ to the root $H$ of $X(\cS', \bmh(\alpha))$, each node $x \in G$ to $\phi(x)$ and each $y^{(x)}$ to $(\phi(y))^{(\phi(x))}$; similarly, we define $\hat \phi(y_\lambda^{(x)}) = (\phi(y))_\lambda^{(\phi(x))}$ for each $x, y \in G$ and $\lambda \in \{\la, \ra, =\}$.  As before, we extend the definition of $\hat \phi$ to the intermediate coset nodes in a manner that is consistent with the assignments already made; there is no ambiguity as there is exactly one extension which results in an isomorphism.  Although there are many details, it is relatively straightforward to show that $\hat \phi \in \iso(X(\cS, \bmg(\alpha)), X(\cS', \bmh(\alpha)))$.  This shows that $f : \iso_{\bmg(\alpha) \mapsto \bmh(\alpha)}(\cS, \cS') \ra \iso(X(\cS, \bmg(\alpha)), X(\cS', \bmh(\alpha))) : \phi \mapsto \hat \phi$ is a well-defined function.

As before, the more difficult step is showing that $f$ is a bijection.  The same argument as in the proof sketch for \thmref{p-graph-red-bij} implies that $f$ is injective.  As before, we let $\theta \in \iso(X(\cS, \bmg(\alpha)), X(\cS', \bmh(\alpha)))$ and define the bijection $\phi = \restr{\theta}{G} : G \ra H$.  Because of the colorings used for the binary trees in $X(\cS, \bmg(\alpha))$ and $X(\cS', \bmh(\alpha))$, it follows that $\phi$ respects the generators $\bmg(\alpha)$ and $\bmh(\alpha)$ and that each $\phi[P_{ij}] = Q_{ij}$ for $i \leq \kappa$.  For $i > \kappa$, we note that the unique shortest path from the root $P_1$ of $X(\cS, \bmg(\alpha))$ to $1$ passes through each node $(P_{ij})_{x_1, \ldots, x_{i - 1}}$ where each $x_k = 1$.  The elements $x \in P_{ij}$ in $X(\cS, \bmg(\alpha))$ are the nodes in $G$ that can be reached from the node $(P_{ij})_{x_1, \ldots, x_{i - 1}}$ where each $x_k = 1$ by following certain paths through nodes which are colored ``internal'' and ``not identity.''  The ``not identity'' color in these paths is used to ensure that the node $x = x_1 \cdots x_\ell$ at the end of the path where each $x_i \in P_i$ satisfies $x_i = 1$ except for $i = k$.  By considering these paths, it follows that $\phi[P_{ij}] = Q_{ij}$ for each $i > \kappa$ which implies that $\phi \in \iso_{\bmg(\alpha) \mapsto \bmh(\alpha)}(\cS, \cS')$.

\begin{corollary}
  \label{cor:sol-red-cor}
  Let $(G, \cP, \cS, \bmg(\alpha), \kappa, \alpha)$ and $(H, \cQ, \cS', \bmh(\alpha), \kappa, \alpha)$ be Hall composition series with fixed generators such that setting $\psi(g_{ij}) = h_{ij}$ for $i \leq \kappa$ and all $j$ defines an isomorphism $\psi : P_1 \cdots P_\kappa \ra Q_1 \cdots Q_\kappa$ where $\psi[P_{ij}] = Q_{ij}$ for all $i \leq \kappa$ and $j$.  Then $(\cS, \bmg(\alpha)) \cong (\cS', \bmh(\alpha))$ \ifft $X(\cS, \bmg(\alpha)) \cong X(\cS', \bmh(\alpha))$.
\end{corollary}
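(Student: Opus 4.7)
The plan is to derive this corollary directly from Theorem~\ref{thm:sol-graph-red-bij}, which already supplies a bijection $f : \iso_{\bmg(\alpha) \mapsto \bmh(\alpha)}(\cS, \cS') \ra \iso(X(\cS, \bmg(\alpha)), X(\cS', \bmh(\alpha)))$ under precisely the hypothesis on $\psi$ that we are given. The key observation is that isomorphism of either structure is, by definition, nonemptiness of the corresponding Hom-set. So the corollary will reduce to the set-theoretic triviality that a bijection preserves nonemptiness.

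More concretely, I would unpack the definitions as follows. By definition, $(\cS, \bmg(\alpha)) \cong (\cS', \bmh(\alpha))$ means that $\iso_{\bmg(\alpha) \mapsto \bmh(\alpha)}(\cS, \cS') \neq \emptyset$, i.e.\ there is an isomorphism $\phi : G \ra H$ which is an isomorphism from $\cS$ to $\cS'$ (sending $\cP$ to $\cQ$ and each $S_i$ to $S_i'$) and which additionally respects the fixed generators, $\phi(g_{ij}) = h_{ij}$ for all relevant $i,j$. Similarly, $X(\cS, \bmg(\alpha)) \cong X(\cS', \bmh(\alpha))$ means the set $\iso(X(\cS, \bmg(\alpha)), X(\cS', \bmh(\alpha)))$ is nonempty. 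The hypothesis on $\psi$ is exactly what is needed to apply Theorem~\ref{thm:sol-graph-red-bij}.

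Then the forward direction follows by taking any $\phi \in \iso_{\bmg(\alpha) \mapsto \bmh(\alpha)}(\cS, \cS')$ and producing the graph isomorphism $\hat\phi = f(\phi)$; for the reverse direction, any graph isomorphism $\theta : X(\cS, \bmg(\alpha)) \ra X(\cS', \bmh(\alpha))$ equals $f(\phi)$ for some (unique) $\phi$ in the domain, giving the required isomorphism of Hall composition series with fixed generators. I do not anticipate any real obstacle — all the work was done in proving Theorem~\ref{thm:sol-graph-red-bij}, and the remaining step is a one-line unwinding of definitions. The only thing worth flagging explicitly is that the hypothesis on $\psi$ in the corollary matches verbatim the hypothesis of the theorem, so no auxiliary verification is needed before invoking it.
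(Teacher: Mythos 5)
Your proposal is correct and matches the paper's intent exactly: the corollary is stated immediately after Theorem~\ref{thm:sol-graph-red-bij} with no separate proof, precisely because it follows by observing that the bijection between $\iso_{\bmg(\alpha) \mapsto \bmh(\alpha)}(\cS, \cS')$ and $\iso(X(\cS, \bmg(\alpha)), X(\cS', \bmh(\alpha)))$ preserves nonemptiness. You correctly note the hypothesis on $\psi$ is identical in both statements, so nothing more is needed.
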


\begin{lemma}
  \label{lem:sol-graph}
  Let $(G, \cP, \cS, \bmg(\alpha), \kappa, \alpha)$ be a Hall composition series with fixed generators.  Then the graph $X(\cS, \bmg(\alpha))$ has degree at most $\max\{\alpha, 4\}$ and size $O(n^2)$.
\end{lemma}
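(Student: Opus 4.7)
The plan is to mimic the proof of \lemref{p-graph}, checking the size bound by counting copies of the inner tree, and then verifying the degree bound by a case analysis over the (finitely many) node types in $X(\cS, \bmg(\alpha))$. Because $T(\cS, \bmg(\alpha))$ is built from several pieces glued along identifications, the main care needed is to account for the degree contributions from both sides of each identification.

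For the size bound, I first observe that the top binary tree $B(\bmu)$ has $O(\abs{P_1 \cdots P_\kappa})$ nodes, and to each of its leaves we attach a nested tower of trees $T(S_{\kappa+1}), T(S_{\kappa+2}), \ldots, T(S_\ell)$. Since $T(S_i)$ has size $O(\abs{P_i})$, an easy induction on the tower shows that the aggregate size of all trees attached below a single leaf of $B(\bmu)$ is $O(\abs{P_{\kappa+1} \cdots P_\ell})$, so $T(\cS, \bmg(\alpha))$ has size $O(n)$. The cone graph $X(\cS, \bmg(\alpha))$ then consists of $n+1$ copies of $T(\cS, \bmg(\alpha))$ plus $3n^2$ multiplication-gadget leaves, for a total of $O(n^2)$ nodes.

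For the degree bound, I classify the nodes into the following types. First, internal nodes of any copy of $B(\bmu)$ have degree at most $3$. Second, an intermediate coset node inside a copy of $T(S_i)$ for $i > \kappa$ has one parent and $[P_{i,j+1}:P_{ij}] = p_i$ children, for total degree $p_i + 1 \leq \alpha$ since $p_i < \alpha$ by the definition of $\kappa$. Third, at each identification point (a leaf of $B(\bmu)$ glued to the root of $T(S_{\kappa+1})$, or a leaf of $T(S_i)$ glued to the root of $T(S_{i+1})$, or a leaf $x$ of $T_\cS$ glued to the root of $T_\cS^{(x)}$), the resulting node has degree equal to one parent-edge plus the children-count of the root below; this is bounded by $p_i + 1 \leq \alpha$ in the coset tree cases and by $3$ in the binary-tree case. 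Fourth, a leaf $y^{(x)}$ of $T_\cS^{(x)}$ has its parent plus the three gadget leaves $y_\la^{(x)}, y_\ra^{(x)}, y_=^{(x)}$, giving degree $4$. Fifth, each gadget leaf is connected to $y^{(x)}$ by one tree edge and participates in at most two cross edges by the definition of the multiplication gadget, for degree at most $3$. Taking the maximum over all cases yields the bound $\max\{\alpha, 4\}$.

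There is no real obstacle here: the statement is a direct verification, and the only subtle point is not to double-count or miss any of the identifications in \defref{TS-sol-gen} and \defref{TS-sol-gen-cone}. I would handle this by walking through each identification explicitly, listing the edges contributed from the tree above and the tree below, so that the degree computation at each glued node is transparent.
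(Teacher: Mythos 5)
Your proof is correct and follows the same approach as the paper, which simply states that $T(\cS,\bmg(\alpha))$ has size $O(n)$, that the leaf nodes $x^{(y)}$ have degree $4$, and that internal nodes have degree at most $\alpha$ by construction. Your case analysis makes explicit the bookkeeping at the glued nodes and the gadget leaves that the paper leaves implicit, but the decomposition and the conclusions are the same (including the same harmless off-by-one: $p_i < \alpha$ does not strictly guarantee $p_i + 1 \leq \alpha$ when $\alpha$ is non-integral, a slack present in the paper's statement as well and immaterial to the $n^{O(\alpha)}$ application).
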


\begin{proof}
  The tree $T(\cS, \bmg(\alpha))$ has size $O(n)$ and $X(\cS, \bmg(\alpha))$ contains $n + 1$ copies of this tree.  Connecting three additional nodes $y_{\ell}^{(x)}$ to each leaf $y^{(x)}$ in each $T_S^{(x)}$ only increases the number of nodes by a constant factor.  Thus, $X(\cS, \bmg(\alpha))$ has $O(n^2)$ nodes.  Let $x, y \in G$.  The degree of the leaf node $x^{(y)}$ is $4$ since it is connected to its three children $x_\lambda^{(x)}$ where $\lambda \in \{\la, \ra, =\}$ and its parent.  By construction, the degree of any internal node is at most $\alpha$.
\end{proof}

\begin{theorem}
  \label{thm:sol-hcomp-iso}
  Let $(G, \cP, \cS, \bmg(\alpha), \kappa, \alpha)$ and $(H, \cQ, \cS', \bmh(\alpha), \kappa, \alpha)$ be Hall composition series with fixed generators.  Then we can test if $(\cS, \bmg(\alpha)) \cong (\cS', \bmh(\alpha))$ in $n^{O(\alpha / \log \alpha)}$ deterministic time.
\end{theorem}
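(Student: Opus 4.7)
The plan is to reduce the question directly to low-degree colored graph isomorphism via Corollary~\ref{cor:sol-red-cor} and then invoke Theorem~\ref{thm:const-deg-iso}. However, before applying Corollary~\ref{cor:sol-red-cor} we must first check that its hypothesis holds, i.e.\ that the partial map $g_{ij} \mapsto h_{ij}$ extends to an isomorphism $\psi: P_1 \cdots P_\kappa \to Q_1 \cdots Q_\kappa$ with $\psi[P_{ij}] = Q_{ij}$ for all $i \leq \kappa$ and all $j$.  If the hypothesis fails, no isomorphism in $\iso_{\bmg(\alpha) \mapsto \bmh(\alpha)}(\cS,\cS')$ can exist and we output ``not isomorphic.''

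First, I would verify the hypothesis of Corollary~\ref{cor:sol-red-cor} in polynomial time as follows.  Using \propref{gen-ord}, the ordering $\preceq_{\bmg(\alpha)}$ enumerates the elements of $P_1 \cdots P_\kappa$, and for each such $x$ we can deterministically produce a short straight-line program in $\bmg(\alpha)$ evaluating to $x$.  Applying the corresponding program in $\bmh(\alpha)$ in $Q_1 \cdots Q_\kappa$ gives a candidate image $\psi(x)$, and we simply check that $\psi$ so defined is a bijection onto $Q_1\cdots Q_\kappa$, respects multiplication, and sends each $P_{ij}$ onto $Q_{ij}$ for $i \leq \kappa$.  All of this costs $\poly(n)$ time.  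If any check fails, we are done, so henceforth assume the hypothesis of Corollary~\ref{cor:sol-red-cor} holds.

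Next, construct the colored graphs $X(\cS, \bmg(\alpha))$ and $X(\cS', \bmh(\alpha))$ directly from \defref{TS-sol-gen} and \defref{TS-sol-gen-cone}; by \lemref{sol-graph} both have $O(n^2)$ nodes and maximum degree at most $\max\{\alpha, 4\}$, and the construction is clearly polynomial in $n$.  By Corollary~\ref{cor:sol-red-cor}, $(\cS, \bmg(\alpha)) \cong (\cS', \bmh(\alpha))$ \ifft $X(\cS, \bmg(\alpha)) \cong X(\cS', \bmh(\alpha))$, so it suffices to decide isomorphism of these two colored graphs.

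Finally, apply \thmref{const-deg-iso} to the two colored graphs.  Since the number of vertices is $N = O(n^2)$ and the maximum degree is $d = O(\alpha)$, the running time is $N^{O(d/\log d)} = n^{O(\alpha/\log \alpha)}$, as claimed.  I expect no real obstacle beyond the bookkeeping of the hypothesis check: the substance of the theorem is packaged into the already-proved bijection of \thmref{sol-graph-red-bij} and the degree bound of \lemref{sol-graph}, so the proof is essentially assembly of these ingredients with the low-degree graph isomorphism algorithm.
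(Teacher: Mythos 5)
Your proposal matches the paper's proof essentially step for step: first verify that the generator map extends to an isomorphism $\psi : P_1\cdots P_\kappa \ra Q_1\cdots Q_\kappa$ respecting the $P_{ij}$ (returning ``not isomorphic'' if not), then build the graphs and invoke \corref{sol-red-cor}, \lemref{sol-graph}, and \thmref{const-deg-iso}. The only addition is your more explicit description of how to compute and verify $\psi$, which the paper leaves implicit but which is routine.
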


\begin{proof}
  We start by checking if the mapping from $\bmg(\alpha)$ to $\bmh(\alpha)$ extends to an isomorphism from $\psi : P_1 \cdots P_\kappa \ra Q_1 \cdots Q_\kappa$.  If this is the case then this isomorphism $\psi$ is unique and we check if $\psi[P_{ij}] = Q_{ij}$ for all $i \leq \kappa$ and $j$.  If both of these conditions hold, then we continue; otherwise, we return that $(\cS, \bmg(\alpha)) \not\cong (\cS', \bmh(\alpha))$.

  We can compute the graphs $X(\cS, \bmg(\alpha))$ and $X(\cS', \bmh(\alpha))$ in polynomial time.  By \corref{sol-red-cor}, $(\cS, \bmg(\alpha)) \cong (\cS', \bmh(\alpha))$ \ifft $X(\cS, \bmg(\alpha)) \cong X(\cS', \bmh(\alpha))$ so this reduction is correct.  By \lemref{sol-graph}, the number of nodes in $X(\cS, \bmg(\alpha))$ is $O(n^2)$ and the degree is at most $\max\{\alpha, 4\} = O(\alpha)$.   We can test if $X(\cS, \bmg(\alpha)) \cong X(\cS', \bmh(\alpha))$ in $n^{O(\alpha / \log \alpha)}$ time using \thmref{const-deg-iso}.
\end{proof}

As for $p$-groups, we modify \thmref{sol-hcomp-iso} for Hall composition-series canonization.

\begin{restatable}{theorem}{solhcompcan}
  \label{thm:sol-hcomp-can}
  Let $(G, \cP, \cS)$ be a Hall composition series.  Then we can compute a canonical form $\can(\cS)$\inlong{ of $(G, \cP, \cS)$} in $n^{O(\log n / \log \log n)}$ deterministic time.
\end{restatable}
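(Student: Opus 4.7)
The plan is to parallel the proof sketch of \thmref{comp-can}, replacing the $p$-group graph $X(S)$ by the solvable-group graph $X(\cS, \bmg(\alpha))$ of \defref{TS-sol-gen-cone}, using \thmref{const-deg-can} in place of \thmref{const-deg-iso}, and coping with the extra freedom of choosing fixed generators by brute-force enumeration.

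First I would set $\alpha \coleq \lceil \log n / \log \log n \rceil$, reindex so that $p_1 > \cdots > p_\ell$, and let $\kappa$ be the largest index with $p_\kappa \geq \alpha$ (with $\kappa = 0$ if no such prime exists, in which case no generators need be fixed). For every valid tuple $\bmg(\alpha)$, i.e., one element $g_{i, j+1} \in P_{i, j+1} \setminus P_{ij}$ for each $i \leq \kappa$ and $0 \leq j < m_i$ (a condition checkable in polynomial time), I form $X(\cS, \bmg(\alpha))$ and run the degree-$d$ graph-canonization algorithm of \thmref{const-deg-can}. By \lemref{sol-graph} each such graph has $O(n^2)$ nodes and degree at most $\max\{\alpha, 4\}$, so one canonization costs $n^{O(\alpha)} = n^{O(\log n / \log \log n)}$. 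Since each $P_i$ is a $p_i$-group, $m_i = e_i$; because $\prod_{i \leq \kappa} p_i^{e_i} \leq n$ with each $p_i \geq \alpha$, we get $\sum_{i \leq \kappa} e_i \leq \log_\alpha n = O(\log n / \log \log n)$, so the number of tuples is at most $n^{O(\log n / \log \log n)}$ and the total cost stays $n^{O(\log n / \log \log n)}$.

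Let $X^*$ be the lexicographically smallest graph-canonical form so produced. I would extract $\can(\cS) = (M, \psi[P_1], \ldots, \psi[P_\ell], \Psi_0, \ldots, \Psi_\ell)$ from $X^*$ in the style of the proof sketch of \thmref{comp-can}: the ``root''-colored node determines the copy $T_\cS$ of the base tree, whose leaves are identified with $[n]$ via a canonical ordering, giving $\psi : G \ra [n]$; the multiplication table $M$ is read off from the gadgets $(y_\la^{(x)}, x_\ra^{(y)}, y_=^{(z)})$; and each $\psi[P_{ij}]$ (respectively each $\psi[P_i]$) is recovered by collecting the leaves of $T_\cS$ that descend from the unique coset node for $P_{ij}$ (respectively for $P_i$), which is locatable inside $X^*$ from the tree shape together with the binary-tree coloring of \defref{right-bin-tree} and the ``not identity'' coloring of \defref{TS-sol-gen}. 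Canonicity then follows from \corref{sol-red-cor}: for isomorphic $(G, \cP, \cS)$ and $(H, \cQ, \cS')$ via $\phi$, each choice $\bmg(\alpha)$ for $G$ maps to $\bmh(\alpha) = \phi[\bmg(\alpha)]$ for $H$ with $X(\cS, \bmg(\alpha)) \cong X(\cS', \bmh(\alpha))$, so the two multisets of graph-canonical forms coincide and the tuples extracted agree.

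The main obstacle is showing that the extraction procedure is a function of $X^*$ alone and not of which tuple $\bmg(\alpha)$ attained the minimum. This reduces to verifying that the coset nodes for each $P_{ij}$ and each $P_i$ are identifiable purely from graph structure in $X^*$: the binary-tree colors pin down the leaves of the $B(\bmu)$ block and hence the large Sylow subgroups up to index $\kappa$, while the ``not identity'' coloring singles out the composition subgroups in the remaining factors. This is exactly the content of the surjectivity half of \thmref{sol-graph-red-bij}, so the same argument shows that every graph-automorphism of $X^*$ must permute the relevant coset nodes compatibly with the extraction, making the output canonical and the runtime as claimed.
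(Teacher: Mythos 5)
Your plan mirrors the paper's proof of \thmref{sol-hcomp-can} in outline: set $\alpha \approx \log n / \log\log n$, reindex primes in decreasing order, enumerate the $n^{O(\log n / \log\log n)}$ choices of fixed generators $\bmg(\alpha)$, canonize each cone graph $X(\cS, \bmg(\alpha))$ via \thmref{const-deg-can}, take the lexicographically least result as a complete invariant, and then reverse-engineer the tuple $\can(\cS)$ from that winning graph. The counting of the tuples and the runtime bound are correct. However, there are two genuine gaps.

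First, your handling of the case $p_1 < \alpha$ (your ``$\kappa = 0$, no generators need be fixed'') is not covered by the machinery you invoke. The paper's \defref{TS-sol-gen}, \lemref{sol-graph}, \thmref{sol-graph-red-bij} and \corref{sol-red-cor} are all stated for Hall composition series with fixed generators, and the paper explicitly requires that some prime dividing $n$ be at least $\alpha$ (equivalently $p_1 \geq \alpha$). You assert that the construction degenerates gracefully when no prime is large, but none of the lemmas you cite actually say this, and you have not re-checked the bijection argument in that degenerate setting. The paper avoids this entirely by a padding trick: when $p_1 < \alpha$, it picks a fresh prime $q$ with $2\alpha \leq q \leq 4\alpha$ (Bertrand's Postulate), canonizes the Hall composition series of $\bar G = \bbZ_q \times G$, and then projects the resulting multiplication table and subgroup lists back down to $G$. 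Without either that reduction or a separate verification that the $\kappa = 0$ case is sound, your argument has a hole exactly where the paper took extra care.

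Second, the extraction step is described incorrectly for the $P_{ij}$. You say $\psi[P_{ij}]$ is recovered ``by collecting the leaves of $T_\cS$ that descend from the unique coset node for $P_{ij}$.'' For $i > \kappa$, the leaf-descendants of the node $(P_{ij})_{1,\ldots,1}$ are all of $P_{ij} P_{i+1} \cdots P_\ell$, not just $P_{ij}$; recovering $P_{ij}$ specifically requires following the paths governed by the ``not identity'' colors (as the paper does), not a plain subtree collection. For $i \leq \kappa$ there is no single coset node for $P_{ij}$ at all --- those elements are scattered through the leaves of $B(\bmu)$ according to the Wagner ordering $\preceq_{\bmg(\alpha)}$ --- so the correct route is to read off the identity and the generators from the positional colors of \defref{right-bin-tree}, then reconstruct $P_{ij}$ by closing under the multiplication table already extracted from the gadgets. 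Your third paragraph shows you have the right ingredients in mind, but the statement that the extraction ``reduces to'' locating a coset node per $P_{ij}$ is not accurate and would not survive being made precise.
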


We give a sketch and defer the complete proof to \inshortorlong{the full version~\cite{rosenbaum2012b}}{\appref{p-group-iso-proofs}}.  Although the full proof is complicated, our strategy is a straightforward extension of the proof of \thmref{comp-can}.  We set $\alpha = \log n / \log \log n$ and reindex so that $p_i > p_j$ for $i < j$.  Although the graph $X(\cS, \bmg(\alpha))$ in \defref{TS-sol-gen-cone} is only defined when $p_1 \geq \alpha$, there is a trick which allows us to reduce the case where $p_1 < \alpha$ to the case where $p_1 \geq \alpha$; therefore, we shall assume that $p_1 \geq \alpha$.  As usual, we let $\kappa$ be the largest value of $i$ such that $p_i$ is at least $\alpha$; we consider all possible choices of ordered generating sets $\bmg(\alpha)$ for the subgroup $P_1 \cdots P_\kappa$ which correspond to representatives of the factor groups of each $S_i$ for $i \leq \kappa$.  For each choice $\bmg(\alpha)$, we compute $\can(X(\cS, \bmg(\alpha)))$ and define $\invar(\cS)$ to be the graph $\can(X(\cS, \bmg(\alpha)))$ which comes first lexicographically.  It is easy to show that $\invar(\cS)$ is a complete polynomial-size invariant for $\cS$.  A multiplication table for a group with the underlying set $[n]$ which is isomorphic to $G$ can then be computed using arguments similar to those from \thmref{comp-can}.  We then compute the subset of $[n]$ which corresponds to each $P_i$ and each $P_{ij}$ by following certain paths in $\can(X(\cS, \bmg(\alpha)))$ which start at nodes of the form $(P_{ij})_{x_1, \ldots, x_{i - 1}}$ with each $x_k = 1$ where the nodes are colored ``internal'' at first and then ``not identity.''  Because the computations described are deterministic and depend only on $\invar(\cS)$, the composition length of each $P_i$ and $n$, we obtain a canonical form for $(G, \cP, \cS)$.

\section{Algorithms for solvable-group\insodaornot{\\}{ }isomorphism}
\label{sec:sol-algorithms}
In this section, we derive our algorithms for solvable-group isomorphism.  A generalization of \thmref{sol-group-iso} now follows from the results of Sections~\ref{sec:hcomp-red} and \ref{sec:hgraph-red}.

\begin{theorem}
  \label{thm:sol-group-iso-drq}
  Let $G$ and $H$ be solvable groups.  Then we can test if $G \cong H$ in 
  
  \begin{enumerate}
  \item $n^{(1 / 2) \log_p n + O(\log n / \log \log n)}$ \inshort{deterministic\\}time\inlong{ for deterministic algorithms}
  \item $n^{(1 / 4) \log_p n + O(\log n / \log \log n)}$\inshort{ randomized} time\inlong{ for randomized algorithms}
  \item $n^{(1 / 6) \log_p n + O(\log n / \log \log n)}$\inshort{ quantum} time\inlong{ for quantum algorithms}
  \end{enumerate}
  where $p$ is the smallest prime dividing the order of the group.
\end{theorem}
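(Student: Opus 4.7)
The plan is to combine the reductions from \thmref{sol-red-hcomp-drq} with the algorithm for Hall composition-series canonization from \thmref{sol-hcomp-can}. For part (a), I would first invoke the deterministic reduction from solvable-group isomorphism to Hall composition-series isomorphism, which generates $n^{(1/2) \log_p n + O(1)}$ instances, and solve each instance by computing the canonical form given by \thmref{sol-hcomp-can} (running in $n^{O(\log n / \log \log n)}$ time) and comparing. For parts (b) and (c), I would invoke the randomized and quantum reductions in \thmref{sol-red-hcomp-drq}, which produce $n^{(1/4) \log_p n + O(1)}$ and $n^{(1/6) \log_p n + O(1)}$ Hall composition-series canonization instances respectively, and again solve each using \thmref{sol-hcomp-can}.

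Concretely, in each case I would first compute Sylow bases $\cP$ and $\cQ$ of $G$ and $H$ in polynomial time via \lemref{comp-syl-basis-poly}, using \corref{syl-iso} to ensure some isomorphism pairs $\cP$ with $\cQ$ whenever $G \cong H$. Next, I would fix a single Hall composition series $(G,\cP,\cS)$ by running \algref{group-comp} on each $P_i \in \cP$ with arbitrary socle-decomposition choices. Then I would enumerate (or, for the randomized and quantum cases, sample or query via claw detection) Hall composition series $(H,\cQ,\cS')$ corresponding to all choices of socle decompositions for each $Q_i \in \cQ$; by \lemref{all-comp-choice-syl} combined with \lemref{num-hcomp-choices}, the set of $\cS'$ considered for $H$ contains (w.h.p.\ in the randomized case) some Hall composition series isomorphic to $\cS$ whenever $G \cong H$. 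For part (a) we can simply test isomorphism of each pair directly, while for (b) and (c) we use the collision/claw arguments that require canonization: we compute $\can(\cS)$ and each $\can(\cS')$ and check for a collision after sorting the invariants.

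Multiplying runtimes yields the bound: the number of Hall composition series to process is $n^{(1/2)\log_p n + O(1)}$, $n^{(1/4)\log_p n + O(1)}$, or $n^{(1/6)\log_p n + O(1)}$ in the three cases, and each canonization or isomorphism check costs $n^{O(\log n/\log \log n)}$ by \thmref{sol-hcomp-can}. Since the additive $O(\log n / \log \log n)$ in the exponent dominates the $O(1)$ overhead from the reduction, the totals collapse to $n^{(1/2)\log_p n + O(\log n / \log \log n)}$, $n^{(1/4)\log_p n + O(\log n / \log \log n)}$, and $n^{(1/6)\log_p n + O(\log n / \log \log n)}$ respectively.

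There is no substantial obstacle beyond bookkeeping, since every piece has been assembled in the preceding sections. The only care needed is to confirm that the $n^{O(\log n / \log \log n)}$ factor from \thmref{sol-hcomp-can} (arising from choosing $\alpha = \Theta(\log n/\log\log n)$ and enumerating generators of the large-prime Sylow composition factors) indeed absorbs the $O(1)$ additive term from \thmref{sol-red-hcomp-drq}, and that the canonization-based variant correctly supports the collision/claw detection reformulation used for the randomized and quantum reductions, which is exactly what \defref{hcomp-can} was set up to guarantee.
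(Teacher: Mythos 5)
Your proposal is correct and follows essentially the same route as the paper: invoke \thmref{sol-red-hcomp-drq} to bound the number of Hall composition-series canonization (or, in the deterministic case, isomorphism) calls, and apply \thmref{sol-hcomp-can} to each call, noting that the $O(\log n/\log\log n)$ additive term in the exponent absorbs the $O(1)$ from the reduction. The extra detail you supply (Sylow bases via \lemref{comp-syl-basis-poly}, fixing $\cS$ and enumerating $\cS'$ via \lemref{all-comp-choice-syl} and \lemref{num-hcomp-choices}, collision/claw detection) is accurate and simply unrolls what the paper delegates to \thmref{sol-red-hcomp-drq}.
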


\begin{proof}
  From part (a) of \thmref{sol-red-hcomp-drq}, there are at most $n^{(1 / 2) \log_p n + O(1)}$ calls to the algorithm for Hall composition series canonization where $p$ is the smallest prime dividing the order of the group.  Each such call requires $n^{O(\log n / \log \log n)}$ time by part (a) of \thmref{sol-hcomp-can} for an overall runtime of $n^{(1 / 2) \log_p n + O(\log n / \log \log n)}$.  Parts (b) and (c) follow from parts (b) and (c) of \thmref{sol-red-hcomp-drq}.
\end{proof}

As we show in \inshortorlong{the full version~\cite{rosenbaum2012b}}{\appref{group-can}}, the deterministic variant of our algorithm can be adapted to perform solvable-group canonization.  It is also easy to show that the analysis of \thmref{sol-group-iso-drq} is tight.

  \begin{theorem}
    \label{thm:sol-tight}
    If $n$ is sufficiently large, there exists an Abelian group of order $n$ for which part (a) of \thmref{sol-group-iso-drq} requires at least $n^{(1 / 2) \log_p n + \Omega(\log n / \log \log n)}$ time.
  \end{theorem}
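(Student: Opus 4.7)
The plan is to exhibit an infinite family of Abelian $p$-groups $G$ on which the deterministic algorithm of part~(a) of \thmref{sol-group-iso-drq} is forced to enumerate enough Hall composition series \emph{and} enough fixed-generator sets in each canonization call that the two factors multiply to match the lower bound.  For each sufficiently large $N$, let $\alpha = \log N / \log \log N$ and use Bertrand's postulate to pick a prime $p$ with $\alpha \leq p \leq 2\alpha$; set $e = \lfloor \log N / \log p \rfloor$ and $n = p^e$.  Then $\log n = \Theta(\log N)$, and $\log_p n = e = \Theta(\log n / \log \log n)$, so in particular $p \geq \alpha$ holds for $\alpha$ computed with respect to $n$.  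I take $G = (\bbZ_p)^e$; it is Abelian, has order $n$, and its (only, hence smallest) prime divisor is $p$.

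Next I would count the Hall composition series that \algref{group-comp} can output on $G$.  Since every order-$p$ subgroup of $G = \bbF_p^e$ is a simple minimal normal subgroup and the socle-decomposition step builds up a composition series one such subgroup at a time, these Hall composition series are in bijection with complete flags $0 = V_0 \subset V_1 \subset \cdots \subset V_e = \bbF_p^e$ of subspaces with $\dim V_j = j$.  The count of complete flags is $\prod_{k=1}^{e}(p^k - 1)/(p-1) \geq p^{e(e-1)/2} = n^{(e-1)/2}$, so by the reduction of \thmref{sol-red-hcomp-drq} the algorithm issues at least $n^{(e-1)/2}$ calls to Hall composition-series canonization against any fixed Hall composition series of a second group $H$ of order $n$ (which I may take, e.g., to be another Abelian $p$-group of order $n$ with many Hall composition series).

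Then I would lower-bound the work inside each canonization call.  Because $p \geq \alpha$, \thmref{sol-hcomp-can} enters the regime where it enumerates every ordered generator tuple $\bmg(\alpha) = (g_1, \ldots, g_e)$ with $g_{j+1} \in V_{j+1} \setminus V_j$ to compute $\can(X(\cS, \bmg(\alpha)))$; the number of such tuples is $\prod_{j=0}^{e-1}(p^{j+1} - p^j) = (p-1)^e p^{e(e-1)/2} \geq p^{e(e-1)/2}$.  For each tuple the graph $X(\cS, \bmg(\alpha))$ has $\Theta(n^2)$ vertices by \lemref{sol-graph}, so merely constructing and reading it (let alone canonizing it) costs $\Omega(n^2)$ time.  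Multiplying: each canonization call takes at least $p^{e(e-1)/2} \cdot \Omega(n^2)$ time, and the full algorithm takes at least $n^{(e-1)/2} \cdot p^{e(e-1)/2} \cdot \Omega(n^2) = n^{e + O(1)}$ time.  Because $e = \log_p n$ and $p$ was chosen so that $\log p = \Theta(\log \log n)$, we have $e = \Theta(\log n / \log \log n)$, and hence $e = \tfrac{1}{2}\log_p n + \tfrac{1}{2}e = \tfrac{1}{2}\log_p n + \Omega(\log n / \log \log n)$, giving the desired bound $n^{(1/2)\log_p n + \Omega(\log n / \log \log n)}$.

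The main obstacle is ensuring rigorously that the algorithm \emph{does} treat distinct flags and distinct generator tuples as separate units of work and does not quietly collapse them by some symmetry it detects; this requires reading off from the explicit pseudocode of \algref{group-comp} that every choice on \linref{choose-L} produces a syntactically different composition series, and from the construction in \defref{TS-sol-gen} that different $\bmg(\alpha)$'s yield different labeled graphs whose canonical forms must be computed independently.  A minor cosmetic point is that the parametrization gives only an infinite sequence of admissible $n$; to cover every sufficiently large $n$ literally, one writes $n = p^e \cdot m$ where $p^e$ is a prime-power divisor with $p$ in the right window (which exists for large $n$ since $n$ has some prime factor, and either that prime or a small auxiliary factor can be chosen near $\log n / \log \log n$) and takes $G = (\bbZ_p)^e \times A$ for any Abelian $A$ of order $m$, so that the $(\bbZ_p)^e$ factor alone drives the lower bound.
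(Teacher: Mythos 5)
Your argument is correct in substance but takes a genuinely different route from the paper. The paper's proof picks $G = H = \bbZ_p^k \times \bbZ_q \times \bbZ_d$ with $p$ the smallest prime, the $p$-group factor $\bbZ_p^k$ of order $\Theta(n/\log n)$ supplying roughly the $n^{(1/2)\log_p n}$ flag count, and a separate medium prime $q = \Theta(\log n/\log\log n)$ intended to make each canonization call expensive; you instead collapse both sources into a single elementary Abelian group $\bbZ_p^e$ with $p$ itself a medium prime, so that the flag count \emph{and} the generator-tuple set $\cG_\alpha$ inside each call to \thmref{sol-hcomp-can} each contribute a factor of at least $n^{(e-1)/2}$, giving $n^{e+1}$ overall with $e = \Theta(\log n/\log\log n)$. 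Your version is arguably the cleaner lower bound: in the paper's construction every prime dividing $|G|$ lies below the threshold $\alpha = \log n/\log\log n$, so \thmref{sol-hcomp-can} enters the $\bar G = \bbZ_{q'} \times G$ padding branch in which the enumerated generator set is only polynomially large, and the extra $n^{\Omega(\log n/\log\log n)}$ per call would then have to be attributed to the degree-$\Theta(\alpha)$ graph handed to the BKL canonizer, a quantity that \thmref{const-deg-can} only \emph{upper}-bounds; your charge is instead to work the algorithm demonstrably performs, namely constructing $|\cG_\alpha|$ distinct colored graphs of size $\Theta(n^2)$. In exchange, the paper's example sits in the regime where $p$ is fixed and $(1/2)\log_p n$ dominates the exponent, whereas in yours both terms are $\Theta(\log n/\log\log n)$. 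Two small gaps to close: you need $p \geq \log n/\log\log n$ to hold for the $n = p^e$ you actually build, and choosing $p \in [\log N/\log\log N, 2\log N/\log\log N]$ is marginal since the resulting $\alpha$ with respect to $n$ can slightly exceed $\alpha$ with respect to $N$; widen the Bertrand window by a constant factor, or take $e$ a bit smaller than $p$, to guarantee the $p_1 \geq \alpha$ branch is taken. Also, the closing remark about covering every sufficiently large $n$ does not go through (for $n$ a large power of $2$ there is no medium-prime divisor available and neither your construction nor the paper's applies), so, as with the paper's own argument, the theorem should be read as asserting tightness along an infinite family of admissible orders $n$.
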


\inlong{
  \begin{proof}
    Consider the Abelian $p$-group $\bbZ_p^k$ of order $\Theta(n / \log n)$ and the cyclic group $\bbZ_q$ of prime order $(1 / 2) \log n / \log \log n \leq q < \log n / \log \log n$.  Provided $n$ is large enough, it follows from Bertrand's Postulate that we can always construct such groups.  We choose $G = H = \bbZ_p^k \times \bbZ_q \times \bbZ_d$ where $d = n / p^k q$.  Working through the analysis of \thmref{sol-group-iso-drq}, we find that our algorithm takes $n^{(1 / 2) \log_p n + \Omega(\log n / \log \log n)}$ time.
  \end{proof}}

\section*{Acknowledgements}
I thank L{\'a}szl{\'o} Babai for helping clarify the randomized and quantum algorithms versions of the algorithm and contributing a reference, Paul Beame and Aram Harrow for useful discussions and feedback, Joshua Grochow for additional references, Fabian Wagner for discussions on his algorithm for composition-series isomorphism and the anonymous reviewers for helpful comments.  Richard Lipton asked if the techniques used for $p$-groups could be applied to groups of order $2^a p^b$ where $p$ is an odd prime.  This inspired me to find a more efficient algorithm for solvable-group isomorphism.  I was funded by the DoD AFOSR through an NDSEG fellowship.  Partial support was provided by the NSF under grant CCF-0916400.

\inlong{
  \appendix
  
  \newpage
  \section{$p$-Group isomorphism proofs}
  \label{app:p-group-iso-proofs}
  We now supply the proofs for our $p$-group isomorphism algorithm which were omitted from the main body of the paper.

\subsection{The correctness proofs for $p$-group isomorphism}

We prove a version of \thmref{p-graph-red-bij} with an explicit bijection.

\begin{theorem}
  \label{thm:p-graph-red-bij-exp}
  Let $S$ and $S'$ be composition series for the groups $G$ and $H$ which consist of the subgroups $G_0 = 1 \tril \cdots \tril G_m = G$ and $H_0 = 1 \tril \cdots H_m = H$.  Define $f : \phi \mapsto \hphi$ where the map $\hphi : X(S) \ra X(S')$ acts as follows for $\phi \in \iso(S, S')$:

  \begin{enumerate}
  \item \label{itm:root-mapsto}
    For the root, $\hphi(G) = H$.
  \item \label{itm:TS-mapsto}
    For each node $x G_i \in G / G_i$ in $T_S$, $\hphi(x G_i) = \phi[x G_i] = \phi(x) H_i$.
  \item \label{itm:TSx-mapsto}
    For each node $(y G_i)^{(x)}$ in $T_S^{(x)}$, $\hphi((y G_i)^{(x)}) = (\phi[y G_i])^{(\phi(x))} = (\phi(y) H_i)^{(\phi(x))}$.
  \item \label{itm:almost-leaf-mapsto}
    For each node $y^{(x)}$ in $T_S^{(x)}$ with $x, y \in G$, $\hphi(y^{(x)}) = \phi(y)^{(\phi(x))}$.
  \item \label{itm:leaf-mapsto}
    For each leaf node $y_\lambda^{(x)}$ in $T_S^{(x)}$ where $x, y \in G$ and $\lambda \in \{\la, \ra, =\}$, $\hphi(y_\lambda^{(x)}) = \phi(y)_\lambda^{(\phi(x))}$.
  \end{enumerate}
  Then $f : \iso(S, S') \ra \iso(X(S), X(S')) : \phi \ra \hphi$ is a well-defined bijection.
\end{theorem}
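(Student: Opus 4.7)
The plan is to verify the three standard properties: (i) $\hat\phi$ is genuinely in $\iso(X(S), X(S'))$ for every $\phi \in \iso(S, S')$, making $f$ a well-defined function; (ii) $f$ is injective; and (iii) $f$ is surjective. Surjectivity will be the real work; the other two are bookkeeping.

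For well-definedness, I would first check that each clause of the definition is unambiguous. Clauses \itmref{root-mapsto}, \itmref{almost-leaf-mapsto}, and \itmref{leaf-mapsto} are direct. Clauses \itmref{TS-mapsto} and \itmref{TSx-mapsto} are defined in terms of representatives $x$, so I would note that $\phi[x G_i]$ depends only on the coset $x G_i$ because $\phi$ is a homomorphism and $\phi[G_i] = H_i$, giving $\phi[x G_i] = \phi(x) H_i$. Then I would show that $\hphi$ is a bijection on vertex sets: the inverse of $\phi$ satisfies the hypotheses of the theorem with the roles of $S$ and $S'$ reversed, and an obvious computation gives $\widehat{\phi^{-1}}$ as the inverse of $\hphi$ at the level of vertex maps. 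Next, colour preservation is immediate from the construction, since ``root'', ``internal'', ``left'', ``right'', ``equal'' labels are assigned structurally and are respected clause by clause. Tree edges in $T_S$ correspond to coset containments $y G_i \subseteq x G_{i+1}$, which are preserved because $\phi$ is a bijection with $\phi[G_i] = H_i$; the same argument handles each $T_S^{(x)}$, and the identification of the root of $T_S^{(x)}$ with the leaf $x$ of $T_S$ is preserved by clauses \itmref{TSx-mapsto} and \itmref{almost-leaf-mapsto}. For cross edges, I would observe that an edge $y_\la^{(x)} - x_\ra^{(y)} - y_=^{(z)}$ exists iff $x y = z$; applying $\phi$ and using $\phi(xy) = \phi(x)\phi(y)$ shows the image triple is also a multiplication gadget in $X(S')$.

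Injectivity is short. If $\hphi_1 = \hphi_2$ then in particular they agree on the leaves of $T_S$, which are identified with $G$ via clause \itmref{TS-mapsto} at $i=0$; so $\phi_1 = \restr{\hphi_1}{G} = \restr{\hphi_2}{G} = \phi_2$.

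Surjectivity is the main obstacle, and I would mirror the strategy sketched after the theorem but lay the pieces out carefully. Given $\theta \in \iso(X(S), X(S'))$, the ``root'' colour pins down $\theta(G) = H$, and since $X(S)$ is a cone graph whose tree levels are determined by distance from the root together with the ``internal'' colour, $\theta$ maps level $k$ of $T_S$ to level $k$ of $T_{S'}$; in particular it sends the $G$-leaves bijectively onto the $H$-leaves, so $\phi \coleq \restr{\theta}{G}$ is a bijection $G \to H$. The crucial combinatorial lemma is that for distinct $x,y \in G$ the shortest path in $X(S)$ that starts at $x$, leaves through a ``left''-coloured neighbour, enters a ``right''-coloured neighbour, and ends at $y$ is unique, and it visits $y^{(x)}$ and $x^{(y)}$ in that order; by the colour- and distance-preservation already established, $\theta$ sends this path to the analogous unique path from $\phi(x)$ to $\phi(y)$, forcing $\theta(y^{(x)}) = \phi(y)^{(\phi(x))}$, which is clause \itmref{almost-leaf-mapsto}. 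Once this is in hand, the image of each multiplication gadget $(y_\la^{(x)}, x_\ra^{(y)}, y_=^{(z)})$ with $xy = z$ must be the corresponding gadget in $X(S')$, so $\phi(x)\phi(y) = \phi(z) = \phi(xy)$, making $\phi$ a group isomorphism and pinning down clause \itmref{leaf-mapsto}. Finally, to recover $\phi[G_i] = H_i$ and clauses \itmref{TS-mapsto}--\itmref{TSx-mapsto}, I would observe that the leaves of $T_S$ that descend from the node $G_i \in G / G_i$ are exactly the elements of $G_i \subseteq G$; since $\theta$ preserves the tree structure and sends level-$i$ nodes to level-$i$ nodes, the descendants of $\theta(G_i)$ among the $H$-leaves are $\phi[G_i]$, and this set must be a coset of some $H_j$ at the right level, namely $H_i$ itself (which is the only coset at that level containing $\phi(1) = 1_H$). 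Working back up and across to the $T_S^{(x)}$ copies by the same descendant argument yields clauses \itmref{TS-mapsto} and \itmref{TSx-mapsto}, proving $\theta = \hphi = f(\phi)$ and hence surjectivity. The main technical obstacle is the uniqueness-of-path claim powering clause \itmref{almost-leaf-mapsto}, which I would prove by a careful case analysis on where a candidate path can enter a ``left''-coloured leaf $w_\la^{(u)}$ and exit a ``right''-coloured leaf $u'^{(w')}_\ra$, using the fact that these colours appear only as pendant leaves attached to nodes of the form $w^{(u)}$.
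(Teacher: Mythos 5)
Your proposal is correct and follows essentially the same route as the paper's proof: check well-definedness and edge/colour preservation to see that $f$ lands in $\iso(X(S),X(S'))$, restrict to the $G$-leaves for injectivity, and for surjectivity use the root colour and level structure to extract $\phi = \restr{\theta}{G}$, then the unique ``left-then-right''-coloured shortest path from $x$ to $y$ to pin down $\theta(y^{(x)})$, the multiplication gadgets to show $\phi$ is a homomorphism, and the descendants-of-$G_i$ structure (combined with $\theta(1)=1$) to recover $\phi[G_i]=H_i$. The only cosmetic differences are that you verify bijectivity of $\hphi$ via $\widehat{\phi^{-1}}$ rather than the paper's direct injection count, and you identify $\theta(G_i)$ as the coset containing $1_H$ rather than via the image of the unique root-to-identity path; both are equivalent.
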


\begin{proof}
  We note that \itmref{root-mapsto} and \itmref{almost-leaf-mapsto} are special cases of \itmref{TS-mapsto} and \itmref{TSx-mapsto}.  It therefore suffices to consider the other three lines.

  Let $\phi \in \iso(S, S')$ and observe that $\phi[G_i] = H_i$ for all $i$ so the equalities in the definition of $\hphi$ hold.  Also, \itmref{TS-mapsto} and \itmref{TSx-mapsto} both apply to the root node of $T_S^{(x)}$ since $x \in G$ is identified with $(G)^{(x)}$.  By \itmref{TS-mapsto} we have $\hphi(x) = \phi(x)$ while by \itmref{TSx-mapsto}, $\hphi((G)^{(x)}) = (H)^{(\phi(x))}$.  Since $\phi(x)$ was identified with $(H)^{(\phi(x))}$, there is no ambiguity.  Thus, $\hphi$ is well-defined.

  By definition $\hphi$ maps the root node of $X(S)$ to the root node of $X(S')$ and the nodes in the \nth{k} level of $X(S)$ to the \nth{k} level of $X(S')$.  Then it follows from our definition that $\hphi$ respects the colors on all nodes.  We start by proving that $\hphi$ is a injection.  It suffices to consider nodes in the same level.  Let $x G_i, y G_i \in G / G_i$ be nodes in $T_S$ and suppose that $\hphi(x G_i) = \hphi(y G_i)$.  Then $\phi(x) H_i = \phi(y) H_i$ so $\phi(x^{-1} y) \in H_i$ and $x^{-1} y \in \phi^{-1}[H_i] = G_i$.  Therefore, $x G_i = y G_i$.  Consider nodes $(y G_i)^{(x)}$ in $T_S^{(x)}$ and $(z G_i)^{(w)}$ in $T_S^{(w)}$ and suppose $\hphi((y G_i)^{(x)}) = \hphi((z G_i)^{(w)})$.  Then $(\phi(y) H_i)^{(\phi(x))} = (\phi(z) H_i)^{(\phi(w))}$.  This implies that $\phi(x) = \phi(w)$ so that $x = w$.  Hence $\phi(y) H_i = \phi(z) H_i$ and $y G_i = z G_i$ so that $(y G_i)^{(x)} = (z G_i)^{(w)}$.  Then it follows from the definition of $\hphi$ that it acts as an injection on the leaf nodes $(y)_\lambda^{(x)}$.  Thus, $\hphi$ is an injection so since $X(S)$ and $X(S')$ have the same cardinality it is a bijection.

  Our next step is to show that $\hphi$ respects the tree edges.  Consider an edge between the nodes $y G_i \in G / G_i$ and $x G_{i + 1} \in G / G_{i + 1}$ in $T_S$.  Then $y G_i \subseteq x G_{i + 1}$ so $\hphi(y G_i) = \phi[y G_i] \subseteq \phi[x G_{i + 1}] = \hphi(x G_{i + 1})$.  Since $\hphi(y G_i) = \phi(y) H_i$ and $\hphi(x G_{i + 1}) = \phi(x) H_{i + 1}$, there is a tree edge between $\hphi(y G_i)$ and $\hphi(x G_{i + 1})$ in $X({S'})$.  For a tree edge between the nodes $(z G_i)^{(x)}$ and $(y G_{i + 1})^{(x)}$ in $T_S^{(x)}$, we again have $z G_i \subseteq y G_{i + 1}$ and so the same analysis shows that there is a tree edge between $\hphi((z G_i)^{(x)})$ and $\hphi((y G_{i + 1})^{(x)})$ in $X(S')$.  Finally, by definition $\hphi$ preserves the edges from each node $y^{(x)}$ to the leaves $y_\lambda^{(x)}$.

  We now consider the cross edges.  Suppose that $x y = z$; we will show that the edges from $y_\la^{(x)}$ to $x_\ra^{(y)}$ and from $x_\ra^{(y)}$ to $y_=^{(z)}$ are preserved by $\hphi$.  By definition, we have

  \begin{align*}
    \hphi(y_\la^{(x)}) & = \phi(y)_\la^{(\phi(x))} \\
    \hphi(x_\ra^{(y)}) & = \phi(x)_\ra^{(\phi(y))} \\
    \hphi(y_=^{(z)})   & = \phi(y)_=^{(\phi(z))}
  \end{align*}

  Since $\phi : G \ra H$ is an isomorphism, $\phi(x) \phi(y) = \phi(z)$ so there are cross edges from $\phi(y)_\la^{(\phi(x))}$ to $\phi(x)_\ra^{(\phi(y))}$ and from $\phi(x)_\ra^{(\phi(y))}$ to $\phi(y)_=^{(\phi(z))}$ in $X(S')$.  Since $X(S)$ and $X(S')$ have the same number of edges, it follows that there is an edge between two nodes in $X(S)$ \ifft there is an edge between their images under $\hphi$ in $X(S')$.  Thus, $\hphi$ is an isomorphism so we have shown that $f : \phi \mapsto \hphi$ is a map $f : \iso(S, S') \ra \iso(X(S), X(S'))$.

  It remains to show that $f$ is bijective.  Suppose that $\phi_1, \phi_2 \in \iso(S, S')$ and $\hphi_1 = \hphi_2$.  Then in particular we have $\hphi_1(x) = \hphi_2(x)$ for $x \in G$ so $\phi_1(x) = \phi_2(x)$ and $\phi_1 = \phi_2$.  Therefore, $f$ is injective.  Our final task is to show that it is surjective.  Let $\theta \in \iso(X(S), X(S'))$.  Because the roots $G$ and $H$ of $X(S)$ and $X(S')$ are the only nodes colored ``root'', we have $\theta[G] = H$ and $\theta$ maps the \nth{k} level of $X(S)$ to the \nth{k} level of $X(S')$.  Then the map $\phi : G \ra H : x \mapsto \theta(x)$ where $x \in G$ is a node in $T_S$ is a well-defined bijection.

  We now claim that for any node $y^{(x)}$ in $T_S^{(x)}$, $\theta(y^{(x)}) = \phi(y)^{(\phi(x))}$.  We know that $T_S^{(x)}$ is rooted at the node $x$ of $T_S$ and $\theta(x) = \phi(x)$ by definition of $\phi$.  It follows that the node $\theta(y^{(x)})$ is in $T_{S'}^{(\phi(x))}$; similarly $\theta(x^{(y)})$ is in $T_{S'}^{(\phi(y))}$.  Hence, we can write $\theta(y^{(x)}) = \phi(b)^{(\phi(x))}$ and $\theta(x^{(y)}) = \phi(a)^{(\phi(y))}$ for some $a, b \in G$.  We know that in $X(S')$, the unique shortest path from $\phi(x)$ to $\phi(y)$ that passes through a node colored ``left'' and later a node colored ``right'' follows the tree $T_{S'}^{(\phi(x))}$ to $\phi(y)^{(\phi(x))}$, then the path $(\phi(y)^{(\phi(x))}, \phi(y)_\la^{(\phi(x))}, \phi(x)_\ra^{(\phi(y))}, \phi(x)^{(\phi(y))})$ and then the tree $T_{S'}^{(\phi(y))}$ to $\phi(y)$.  On the other hand, we know that the path $(\phi(b)^{(\phi(x))}, \phi(b)_\la^{(\phi(x))}, \phi(a)_\ra^{(\phi(y))}, \phi(a)^{(\phi(y))})$ exists in $X(S')$ by definition of $a$ and $b$.  Then a shortest path that passes through a node colored ``left'' and later a node colored ``right'' from $\phi(x)$ to $\phi(y)$ follows the tree $T_{S'}^{(\phi(x))}$ to $\phi(b)^{(\phi(x))}$, then the path $(\phi(b)^{(\phi(x))}, \phi(b)_\la^{(\phi(x))}, \phi(a)_\ra^{(\phi(y))}, \phi(a)^{(\phi(y))})$ and then the tree $T_{S'}^{(\phi(y))}$ to $\phi(y)$.  From uniqueness, it follows that $\phi(x) = \phi(a)$ and $\phi(y) = \phi(b)$ so $x = a$ and $y = b$.  Thus, for any node $y^{(x)}$ in $T_S^{(x)}$, $\theta(y^{(x)}) = \phi(y)^{(\phi(x))}$.

  Now, we show that $\phi$ is an isomorphism.  We have already shown that it is bijective.  Let $x, y \in G$ and set $z = x y$.  Then $\theta$ maps the path $(y_\la^{(x)}, x_\ra^{(y)}, y_=^{(z)})$ in $X(S)$ to the path $(\phi(y)_\la^{(\phi(x))}, \phi(x)_\ra^{(\phi(y))}, \phi(y)_=^{(\phi(z))})$ in $X(S')$.  Since $\theta$ preserves colors, the colors of the nodes in this path are $(\text{``left''}, \text{``right''}, \text{``equal''})$.  We conclude that $\phi(x) \phi(y) = \phi(z)$.  By definition of $z$, it follows that $\phi$ is an isomorphism from $G$ to $H$.  We claim that $\phi[G_i] = H_i$.  Consider the node $G_i \in G / G_i$; the nodes $x \in G_i$ are the leaves of the tree $T_S$ in $X(S)$ which are descendants of $G_i$.  Similarly, for the node $H_i \in G / H_i$; the nodes $y \in H_i$ are the leaves of the tree $T_{S'}$ in $X(S')$ which are descendants of $H_i$.  Because $\phi$ is an isomorphism, $\phi(1) = 1$ so that $\theta(1) = 1$.  It follows that $\theta$ maps the unique shortest path from $G$ to $1$ in $X(S)$ to the unique shortest path from $H$ to $1$ in $X(S')$.  This implies that $\theta[G_i] = H_i$ so that $\restr{\theta}{G_i}$ is a bijection from $G_i$ to $H_i$.  Therefore, $\phi[G_i] = H_i$ so $\phi$ is an isomorphism from $S$ to $S'$.  Then $\theta = \hphi$.  Hence, we have proven that $f$ is a bijection.
\end{proof}

\compcan*

\begin{proof}
  We start by showing how to obtain a complete polynomial-size invariant $\invar(S)$ for $S$ and then use it to get a canonical form.  The canonical form $\can(X(S))$ of $X(S)$ can be computed in $n^{O(\alpha)}$ time by \thmref{const-deg-can}.  We then let $\invar(S) = \can(X(S))$.  Let $S'$ be a composition series for a group $H$.  Then $S \cong S'$ \ifft $\invar(S) = \invar(S')$ since we have $S \cong S'$ \ifft $X(S) \cong X(S')$ by \corref{p-red-cor}.  Thus, $\invar(S)$ is a complete polynomial-size invariant for $S$.

  We now construct a canonical form $\can(S)$ for $S$.  There exists an isomorphism $\theta : X(S) \ra \can(X(S))$.  We start by computing $\invar(S) = \can(X(S))$ and then locate the root node $\theta(G)$ of $\can(X(S))$; this is easy since the root is the only node colored ``root.''  We then note that the composition length $m$ of $G$ is equal to the distance from the root of $X(S)$ to the nodes $x \in G$; this allows us to find the nodes $\theta(x)$ in $\can(X(S))$ where $x \in G$ using breadth-first search.  For each such $x \in G$, we then define $\lambda_G(\theta(x))$ to be the index of $\theta(x)$ in the sequence of nodes $\theta(x)$ where $x \in G$ obtained in the breadth-first search.  Thus, $\lambda_G : \theta[G] \ra [n]$ is a bijection.

  Let $x, y \in G$; we will now show how to compute $\theta(xy)$ from $\can(X(S))$ given $\theta(x)$ and $\theta(y)$.  Note that in $X(S)$, there is a unique shortest path from $x$ to $y$ that passes through a node colored ``left'' and later a node colored ``right''; namely, this is the path which follows the tree $T_S^{(x)}$ from $x$ to $y^{(x)}$, then the path $(y^{(x)}, y_\la^{(x)}, x_\ra^{(y)}, x^{(y)})$ and then the tree $T_S^{(y)}$ to $y$.  Thus, given nodes $\theta(x)$ and $\theta(y)$ in $\can(X(S))$, we can find the nodes $\theta(y^{(x)})$ and $\theta(x^{(y)})$ in $\can(X(S))$.  Once this has been done for all $x, y \in G$, the multiplication table defined by the rule $\theta(x)\theta(y) = \theta(xy)$ can be determined by inspecting the multiplication gadgets in $\can(X(S))$.

  Using the multiplication table just computed, we can find the node $\theta(1)$ in $\can(X(S))$ which corresponds to the identity.  The unique shortest path from the root $G$ of $X(S)$ to $1$ is $(G_m = G, \ldots, G_0 = 1)$.  Thus, we can find the node $\theta(G_i)$ in $\can(X(S))$ which corresponds to each $G_i$.  Moreover, for each $i$, the elements $x \in G_i$ in $X(S)$ are precisely the nodes $y \in G$ which can be reached from the node $G_i$ by moving away from the root.  Then for each $i$, we can find the nodes $\theta(x)$ in $\can(X(S))$ where $x \in G_i$ by breadth-first search on $\can(X(S))$.

  Define $\phi = \restr{\theta}{G}$ and $\psi_G = \lambda_G \phi$.  Let $M(G)$ be the $n \times n$ matrix with elements in $[n]$ defined by $(M(G))_{\psi_G(x), \psi_G(y)} = \psi_G(xy)$ for all $x, y \in G$.  We compute $\psi_G[G_i]$ for each $i$ and define the canonical form of the composition series as $\can(S) = (M(G), \psi_G[G_0], \ldots, \psi_G[G_m])$.

  We claim that $\can(S)$ is a canonical form for $S$.  Let $S'$ be a composition series for a group $H$.  By construction, $\psi_G : G \ra [n]$ is an isomorphism from $G$ to the group described by the multiplication table $M(G)$.  If $\can(S) = \can(S')$ then $\psi_H^{-1} \psi_G$ is an isomorphism from $S$ to $S'$ so $S \cong S'$.  Suppose that $S \cong S'$; then $\invar(S) = \invar(S')$.  Since we computed $\can(S)$ deterministically and this computation depends only on $\invar(S)$, $n$ and the composition length of $G$, it follows that $\can(S) = \can(S')$.
\end{proof}

\compcor*

\begin{proof}
  First, if $G$ is simple then $1 \tril G$ is a composition series for $G$; also, $\soc(G) = G$ in this case and so $G$ is properly labelled as ``socle.''  Otherwise, suppose $\soc(G) = G$.  \algref{sim-sub} is correct by definition so \linref{comp-T} sets $T$ to the set of all simple minimal normal subgroups of $\soc(G)$.

  We know that $G$ can be written as a direct product $\bigtimes_i S_i$ where each $S_i$ is a simple subgroup of $G$.  Because each $S_i$ is part of this direct product, it is also normal in $G$.  Moreover each $S_i$ is a minimal normal subgroup since if $N \leq S_i$ and $N \trile G$ then $N \in \{1, S_i\}$ since $S_i$ is simple.  Therefore, $\soc(G)$ is a direct product of the simple minimal normal subgroups of $\soc(G)$.

  The while loop on \linref{prod-loop} builds up a composition series of direct products of simple minimal normal subgroups in $T$ for $\soc(G)$.  Line~\ref{line:choose-L} chooses a simple minimal normal subgroup $L$ of $\soc(G)$ that can be added to the direct product.  Lines~\ref{line:set-T'} -- \ref{line:loop-reps-end} remove redundant subgroups from $T$ that form the same direct product with $K$ leaving only one subgroup that forms each product.  Since this does not change the set of direct products that can be formed, this part of the algorithm affects only efficiency but not correctness\footnote{These lines only matter when we consider all possible choices on \linref{choose-L}.}.  We claim that there is always a choice of $L$ on \linref{choose-L} such that $K \cap L = 1$.  Let us write $K$ as the direct product $\bigtimes_i L_i$ where each $L_i$ is a simple minimal normal subgroup of $\soc(G)$ chosen from $T$ during a previous iteration of the loop.  From the loop condition, $K \tril \soc(G)$.  We know that $\soc(G)$ is characteristically simple so $K$ is not a characteristic subgroup of $\soc(G)$.  Then there exists $\phi \in \aut(\soc(G))$ such that $\phi[K] \not= K$ so for some $i$, $\phi[L_i] \not\subseteq K$.  Since $L_i$ is a minimal normal subgroup of $\soc(G)$ so is $\phi[L_i]$ and it follows that $K \cap \phi[L_i] = 1$.  Thus, there is a simple minimal normal subgroup of $\soc(G)$ which intersects $K$ trivially.  Since lines~\ref{line:set-T'} -- \ref{line:set-T} do not affect which direct products can be formed using subgroups in $T$ with $K$, either $\phi[L_i] \in T$ or there is some other simple minimal normal subgroup of $\soc(G)$ in $T$ that forms the same product with $K$.  Thus, a valid choice of $L$ always exists on \linref{choose-L}.

  We have shown that the part of $S$ from $1$ to $\soc(G)$ is indeed a composition series.  This suffices to prove the base case where $\soc(G) = G$.

  We note that $K_0 \tril \cdots \tril K_m$ is a composition series for $G / \soc(G)$ by induction.  We have already argued that the part of $S$ from $1$ to $\soc(G)$ is a composition series.  The factors of the portion of $S$ from $\soc(G)$ to $G$ are simple since the subgroups in that part of the series are the preimages of the composition series $K_0 \tril \cdots \tril K_m$ for $G / \soc(G)$.  It follows that $S$ is a composition series.  Moreover, all socles and their preimages are labelled ``socle'' since each socle is labelled ``socle'' when it is initially added to the series and labels are preserved when taking preimages.
\end{proof}

\subsection{Obtaining the $n^{(1 / 2) \log n + O(1)}$ runtime for $p$-group isomorphism}
\begin{lemma}
  \label{lem:square-sum}
  If $0 \leq \eps \leq x, y$ then $x^2 + y^2 \leq (x + y - \eps)^2 + \eps^2$.
\end{lemma}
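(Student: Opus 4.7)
The plan is to reduce the inequality to a non-negativity claim by direct algebraic manipulation. First I would expand the right-hand side, writing
\[
(x+y-\eps)^2 + \eps^2 = x^2 + y^2 + 2xy - 2\eps(x+y) + 2\eps^2.
\]
Subtracting $x^2 + y^2$ from both sides, the inequality becomes
\[
0 \leq 2xy - 2\eps(x+y) + 2\eps^2,
\]
which is equivalent to $0 \leq xy - \eps x - \eps y + \eps^2$.

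The next step, which I expect is the whole content of the lemma, is the observation that this polynomial factors as
\[
xy - \eps x - \eps y + \eps^2 = (x-\eps)(y-\eps).
\]
By the hypothesis $0 \leq \eps \leq x$ and $0 \leq \eps \leq y$, both factors $x - \eps$ and $y - \eps$ are non-negative, so their product is non-negative. This establishes the desired inequality.

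There is no real obstacle; the only thing to notice is the factorization, which is essentially the ``completing the rectangle'' identity. I would present the proof in two lines: expand, then factor and invoke the sign hypotheses.
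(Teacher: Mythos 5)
Your proof is correct, and it is cleaner than the one in the paper. The paper reduces to the same inequality $0 \le xy - \eps(x+y) + \eps^2$ but then treats the right-hand side as a function $f(\eps)$, computes $f'(\eps) = 2\eps - (x+y) \le 0$ on the allowed range, assumes WLOG $x \le y$, and argues that the minimum is attained at $\eps = x$ where $f$ vanishes. You instead observe the factorization $xy - \eps(x+y) + \eps^2 = (x-\eps)(y-\eps)$ and read off non-negativity directly from the hypotheses. Both arguments are valid, but yours avoids calculus and the WLOG step entirely, and makes the role of the constraint $\eps \le x, y$ transparent: it is exactly what makes each factor non-negative. If anything, your version is the proof the paper should have given.
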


\begin{proof}
  We see that $x^2 + y^2 \leq (x + y - \eps)^2 + \eps^2$ holds \ifft $0 \leq x y - \eps (x + y) + \eps^2$.  Define $f(\eps) = x y - \eps (x + y) + \eps^2$.  Then $f'(\eps) = 2 \eps - (x + y) \leq 0$ for $0 \leq \eps \leq x, y$.  AWLOG that $x \leq y$.  Then the minimal value of $f$ occurs at $\eps = x$ in which case $f(\eps) = 0$.  It follows that $0 \leq x y - \eps (x + y) + \eps^2$ for all $0 \leq \eps \leq x, y$.
\end{proof}

\maxtwonorm*

\begin{proof}
  Choose some $x \in \bbR^d$ that satisfies the constraints.  Then by \lemref{square-sum}

  \begin{align*}
    \norm{x}_2^2 & =    x_1^2 + \cdots + x_d^2 \\
              {} & \leq 1 + x_2^2 + \cdots + x_{d - 1} + (x_d + x_1 - 1)^2 \\
              {} & \leq (d - 1) + (x_d + \sum_{i = 1}^{d - 1} (x_i - 1))^2 \\
              {} & =    (d - 1) + (t - (d - 1))^2
  \end{align*}
  Moreover, we note that $(d - 1) + (t - (d - 1))^2$ is achieved by $x^*$ where $x^*_i = 1$ for $i < d$ and $x^*_d = t - (d - 1)$.
\end{proof}

\smallpbound*

\begin{proof}
  Define

  \begin{align*}
    f(p) & = c_1 \log_p n + c_2 p \\
      {} & = c_1 \ln n / \ln p + c_2 p
  \end{align*}
  We fix $n$ and choose $2 \leq p \leq c_1 \ln n / c_2 \ln^2 p$ such that $f(p)$ is maximal.  We have
  
  \begin{align*}
    f'(p) & = - \frac{c_1 \ln n}{p \ln^2 p} + c_2
  \end{align*}
  Then $f'(p) \leq 0$ \ifft

  \begin{align*}
    c_2 & \leq \frac{c_1 \ln n}{p \ln^2 p} \\
      p & \leq \frac{c_1 \ln n}{c_2 \ln^2 p}
  \end{align*}
  It follows that $f(p) \leq f(2)$ for $2 \leq p \leq c_1 \ln n / c_2 \ln^2 p$ from which our result is immediate.
\end{proof}

\largepbound*

\begin{proof}
  First suppose that $c \ln n / \ln^2 p \leq p \leq e^{\sqrt[3]{\ln n}}$.  Then $\ln \ln p \leq (1 / 3) \ln \ln n$ so

  \begin{align*}
    \log_p n & =    \frac{\ln n}{\ln p} \\
          {} & \leq \frac{\ln n}{\ln(c \ln n / \ln^2 p)} \\
          {} & =    \frac{\ln n}{\ln c + \ln \ln n - 2 \ln \ln p} \\
          {} & =    O(\ln n / \ln \ln n) \\
          {} & =    O(\log n / \log \log n)
  \end{align*}
  For the case where $p \geq e^{\sqrt[3]{\ln n}}$ we have $\log_p n \leq (\ln n)^{2 / 3} = o(\log n / \log \log n)$. 
\end{proof}

  \newpage
  \section{Solvable-group isomorphism proofs}
  \label{app:sol-group-iso-proofs}
  \subsection{Computing a Sylow basis in deterministic polynomial time}
We now prove that a Sylow basis of a solvable group can be computed deterministically in polynomial time.  First, it is necessary to introduce a few more definitions and results.

\begin{definition}
  Let $\pi$ be a set of primes.  A group $G$ is a $\pi$-group if for every $g \in G$, the prime divisors of the order of $g$ are contained in $\pi$.
\end{definition}

\begin{definition}
  Let $G$ be a group whose order has the prime factorization $n = \prod_{i = 1}^\ell p_i^{e_i}$ and let $\pi$ be a subset of $\setb{p_i}{1 \leq i \leq \ell}$.  A Hall $\pi$-subgroup of $G$ is a $\pi$-subgroup of order $\prod_{p_i \in \pi} p_i^{e_i}$.
\end{definition}

For convenience, we denote by $p_{-i}$ the set of all primes except for $p_i$ which divide the order of $G$.

\begin{definition}
  Let $G$ be a group whose order has the prime factorization $n = \prod_{i = 1}^\ell p_i^{e_i}$.  A Sylow system of $G$ is a collection of subgroups $\cQ = \setb{Q_i}{1 \leq i \leq \ell}$ such that each $Q_i$ is a Hall $p_{-i}$-subgroup of $G$.
\end{definition}

Hall showed that a group is solvable \ifft it has a Sylow system.

\begin{theorem}[Hall~\cite{hall1938a}, cf.~\cite{rotman1995a,robinson1996a}]
  \label{thm:sol-syl-sys}
  A group $G$ is solvable \ifft it has a Sylow system.
\end{theorem}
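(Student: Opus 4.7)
The plan is to deduce Theorem~\ref{thm:sol-syl-sys} from Theorem~\ref{thm:sol-syl} by exhibiting a natural correspondence between Sylow bases and Sylow systems of $G$. Since Theorem~\ref{thm:sol-syl} already identifies solvability with the existence of a Sylow basis, it suffices to prove that $G$ has a Sylow basis if and only if $G$ has a Sylow system.

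For the easy direction, given a Sylow basis $\setb{P_i}{1 \leq i \leq \ell}$ of $G$, I would set $Q_i = \prod_{j \neq i} P_j$. The proposition preceding \propref{syl-basis-uniq} gives that $Q_i$ is a subgroup of $G$ of order $\prod_{j \neq i} p_j^{e_j} = n/p_i^{e_i}$, which makes it a Hall $p_{-i}$-subgroup, so $\setb{Q_i}{1 \leq i \leq \ell}$ is a Sylow system.

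For the reverse direction, given a Sylow system $\setb{Q_i}{1 \leq i \leq \ell}$, I would define $P_i = \bigcap_{j \neq i} Q_j$ and verify that $\setb{P_i}{1 \leq i \leq \ell}$ is a Sylow basis. The key technical step is to show $\abs{P_i} = p_i^{e_i}$ and $P_i P_j = P_j P_i$ for all $i,j$. Iterating the elementary bound $[G : U \cap V] \leq [G : U][G : V]$ gives $[G : P_i] \leq \prod_{j \neq i} [G : Q_j] = n/p_i^{e_i}$, so $\abs{P_i} \geq p_i^{e_i}$. In the other direction, $P_i \leq Q_j$ forces $\abs{P_i}$ to divide $\abs{Q_j} = n/p_j^{e_j}$ for every $j \neq i$, and the greatest common divisor of these numbers across $j \neq i$ is exactly $p_i^{e_i}$; combined with the lower bound this yields $\abs{P_i} = p_i^{e_i}$, so each $P_i$ is a Sylow $p_i$-subgroup. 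To establish the commutativity of distinct $P_i$ and $P_j$, I would consider $R_{ij} = \bigcap_{k \neq i,j} Q_k$: both $P_i$ and $P_j$ lie in $R_{ij}$, and the same counting argument shows $\abs{R_{ij}} = p_i^{e_i} p_j^{e_j} = \abs{P_i}\abs{P_j}$, forcing $P_i P_j = R_{ij} = P_j P_i$ as sets.

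The main obstacle is the order computation $\abs{P_i} = p_i^{e_i}$, which requires combining the iterated index inequality with the observation that the orders $n/p_j^{e_j}$ have gcd equal to the single prime power $p_i^{e_i}$. Once this and the commutativity relations are in hand, the correspondence between Sylow systems and Sylow bases is immediate, and the biconditional in Theorem~\ref{thm:sol-syl-sys} follows by applying Theorem~\ref{thm:sol-syl} to the Sylow basis produced (or consumed) by the above construction.
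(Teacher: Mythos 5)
The paper does not actually supply a proof of Theorem~\ref{thm:sol-syl-sys}; it is stated as a classical result of Hall with citations only. So you are supplying an argument where the authors rely on a reference. Your proof reduces the claim to Theorem~\ref{thm:sol-syl} (also cited without proof) by establishing the equivalence ``$G$ has a Sylow basis $\Leftrightarrow$ $G$ has a Sylow system,'' which is in substance a proof of the bijection the paper cites as Proposition~\ref{prop:syl-sys-basis-bij}. This is a legitimate route, and the details are correct: the easy direction uses the unnumbered proposition that $\prod_{j\neq i}P_j$ is a subgroup of order $\prod_{j\neq i}p_j^{e_j}$; for the converse, the Poincar\'e index bound $[G:U\cap V]\leq[G:U][G:V]$ iterated gives $|P_i|\geq p_i^{e_i}$, divisibility of $|P_i|$ into each $|Q_j|=n/p_j^{e_j}$ with $\gcd_{j\neq i}(n/p_j^{e_j})=p_i^{e_i}$ gives the matching upper bound, and the same counting applied to $R_{ij}=\bigcap_{k\neq i,j}Q_k$ shows $|R_{ij}|=p_i^{e_i}p_j^{e_j}$, whence $P_iP_j=R_{ij}=P_jP_i$ as sets because $P_i\cap P_j=1$. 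The only caveat is that, as written, your argument is conditional on Theorem~\ref{thm:sol-syl}, which is itself a cited Hall theorem of comparable depth; you have not made the solvability equivalence self-contained, only translated between the two formulations. That is a reasonable division of labor and matches how the paper treats the two results, but it is worth being explicit that what you have actually proved from scratch is the basis/system correspondence, not solvability $\Leftrightarrow$ Sylow system from first principles.
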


Hall subgroups need not exist in a general group.  However, the above shows that Hall $p_{-i}$-subgroups always exist in a solvable group.  In fact it holds more generally that for any set of primes $\pi$ which divide the order of a solvable group, there exists a Hall $\pi$-subgroup.

\begin{theorem}[Hall~\cite{hall1938a}, cf.~\cite{rotman1995a,robinson1996a}]
  \label{thm:sol-contain}
  Let $G$ be a solvable group whose order has the prime factorization $n = \prod_{i = 1}^\ell p_i^{e_i}$ and let $\pi$ be a subset of $\setb{p_i}{1 \leq i \leq \ell}$.  Then every $\pi$-subgroup of $G$ is contained in a Hall $\pi$-subgroup of $G$.
\end{theorem}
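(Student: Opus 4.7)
The plan is to prove \thmref{sol-contain} by induction on $|G|$, with the base case $|G|=1$ being vacuous. For the inductive step, let $H$ be a $\pi$-subgroup of $G$. Since $G$ is solvable, every minimal normal subgroup $N$ of $G$ is an elementary abelian $q$-group for some prime $q$ dividing $n$. We split on whether $q\in\pi$ or $q\notin\pi$, which is the standard dichotomy for Hall's theorem.

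First I would handle the easier case $q\in\pi$. Here $HN$ is still a $\pi$-subgroup of $G$, so $HN/N$ is a $\pi$-subgroup of the smaller solvable group $G/N$. By the inductive hypothesis applied to $G/N$, there is a Hall $\pi$-subgroup $K/N$ of $G/N$ containing $HN/N$. Lifting back, $K$ has order $|K/N|\cdot|N|$, which one checks equals $\prod_{p_i\in\pi}p_i^{e_i}$ precisely because $q\in\pi$; hence $K$ is a Hall $\pi$-subgroup of $G$ containing $H$.

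The harder case is $q\notin\pi$. Since $H$ is a $\pi$-group and $N$ is a $q$-group, $H\cap N=1$, so $HN/N\cong H$ is a $\pi$-subgroup of $G/N$. By induction there is a Hall $\pi$-subgroup $K/N$ of $G/N$ containing $HN/N$. If $K\ne G$, then $K$ is a proper solvable subgroup containing the $\pi$-subgroup $H$, and applying induction to $K$ yields a Hall $\pi$-subgroup of $K$ containing $H$; this will also be a Hall $\pi$-subgroup of $G$ since $|K|_\pi=|K/N|=|G|_\pi$. The truly delicate subcase is $K=G$, equivalently $G/N$ is a $\pi$-group, so that a Hall $\pi$-subgroup of $G$ is exactly a complement of $N$ in $G$. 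Here I would invoke Schur--Zassenhaus: since $N$ is abelian with $(|N|,|G/N|)=1$, complements of $N$ in $G$ exist and are all conjugate in $G$. Picking any complement $L_0$, the intersection $L_0\cap HN$ is a complement of $N$ in $HN$, and so is $H$; by the Schur--Zassenhaus conjugacy statement applied inside $HN$, there exists $x\in HN$ with $x(L_0\cap HN)x^{-1}=H$. Then $xL_0x^{-1}$ is a complement of $N$ in $G$ containing $H$, i.e., a Hall $\pi$-subgroup of $G$ containing $H$.

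The main obstacle is the subcase $q\notin\pi$ with $K=G$, which is exactly the content that forces an appeal to Schur--Zassenhaus; the rest of the argument is routine bookkeeping with orders and the inductive hypothesis. One should also be a little careful that the Schur--Zassenhaus conjugacy applies here: we only need the abelian-kernel version, which holds unconditionally and does not itself require the Odd Order Theorem, so the proof remains self-contained within elementary finite solvable-group theory. I would not attempt to grind through verification that $|K|=\prod_{p_i\in\pi}p_i^{e_i}$ in each case, as this is a direct computation from $|K|=|K/N|\cdot|N|$ and the definition of a Hall $\pi$-subgroup of $G/N$.
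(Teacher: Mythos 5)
Your proof is correct: the paper does not prove this statement at all, but simply cites it as a classical result of Hall (cf.\ Rotman, Robinson), and your argument is the standard textbook proof of exactly that result --- induction on $\abs{G}$ via a minimal normal (elementary abelian $q$-) subgroup, splitting on $q \in \pi$ versus $q \notin \pi$, with the abelian-kernel Schur--Zassenhaus existence and conjugacy theorems handling the critical subcase where $G/N$ is a $\pi$-group. All the order bookkeeping and the Dedekind-style reduction inside $HN$ check out, so there is nothing to fix.
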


We remark that since $1$ is a $\pi$-subgroup, in particular this implies that Hall $\pi$-subgroups exist in a solvable group.

\begin{lemma}
  Let $G$ be a solvable group whose order has the prime factorization $n = \prod_{i = 1}^\ell p_i^{e_i}$ and let $\pi$ be a subset of $\setb{p_i}{1 \leq i \leq \ell}$.  Then we can compute a Hall $\pi$-subgroup of $G$ in deterministic polynomial time.
\end{lemma}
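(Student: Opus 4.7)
The plan is to proceed by strong induction on $|G|$. Using the polynomial-time algorithm for computing minimal normal subgroups cited earlier, find a minimal normal subgroup $N$ of $G$; the preceding propositions on minimal normal subgroups and characteristically simple groups, combined with solvability of $G$, imply $N \cong \bbZ_p^k$ for some prime $p$ and some $k \geq 1$. Form the quotient $G/N$ explicitly as a multiplication table on coset representatives in polynomial time, and recursively compute a Hall $\pi$-subgroup $\bar H$ of $G/N$.

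Let $H = \varphi^{-1}[\bar H]$ denote its preimage under the canonical map $\varphi : G \to G/N$. If $p \in \pi$, then $|H| = |N| \cdot |\bar H| = \prod_{p_i \in \pi} p_i^{e_i}$, so $H$ is itself the desired Hall $\pi$-subgroup. Otherwise $p \notin \pi$, and $H$ contains $N$ normally with $\gcd(|N|, |H/N|) = 1$ and has order $|N| \cdot \prod_{p_i \in \pi} p_i^{e_i}$; in this case the Hall $\pi$-subgroup we seek is a complement $K \leq H$ to $N$ in $H$.

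The main obstacle is to compute the complement $K$ constructively in polynomial time. Schur-Zassenhaus guarantees existence but does not immediately give an algorithm. Fix coset representatives $t_1, \ldots, t_r$ of $N$ in $H$ (so $r = |H/N| \leq n$), and compute the $2$-cocycle $c_{ij} \in N$ defined by $t_i t_j = c_{ij} t_{ij}$. A complement corresponds to a choice $n_1, \ldots, n_r \in N$ such that $\{n_i t_i : 1 \leq i \leq r\}$ is closed under multiplication; expanding the closure condition $(n_x t_x)(n_y t_y) = n_{xy} t_{xy}$ yields $n_x \cdot (t_x n_y t_x^{-1}) \cdot c_{xy} = n_{xy}$. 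Writing $N$ additively as an $\bbF_p$-vector space, on which each $t_x$ acts $\bbF_p$-linearly by conjugation, this becomes a system of $O(r^2)$ linear equations in $O(rk)$ variables over $\bbF_p$. Schur-Zassenhaus guarantees consistency, so Gaussian elimination produces a solution, and hence $K$, in polynomial time. Since the recursion depth is at most the composition length of $G$, which is $O(\log n)$, and the work at each level is polynomial in $n$, the total runtime is polynomial, as required.
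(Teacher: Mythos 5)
Your proof is correct, but it takes a genuinely different route than the paper's. The paper's argument is a greedy ascent: it starts from the trivial $\pi$-subgroup $H = 1$ and, relying on Hall's containment theorem (every $\pi$-subgroup of a solvable group sits inside some Hall $\pi$-subgroup, \thmref{sol-contain}), repeatedly searches $G \setminus H$ for an element $g$ with $\langle H, g \rangle$ still a $\pi$-group; the containment theorem guarantees such a $g$ exists until $H$ reaches full Hall order, and since each step at least multiplies $|H|$ by the smallest prime in $\pi$, the loop terminates in $O(\log n)$ polynomial-time iterations. Your argument instead recurses on a minimal normal subgroup $N \cong \bbZ_p^k$, lifts a recursively computed Hall $\pi$-subgroup of $G/N$, and — in the $p \notin \pi$ case — finds a complement to $N$ in the preimage by interpreting the Schur--Zassenhaus closure condition as an $\bbF_p$-linear system and solving it by Gaussian elimination (equivalently, by the averaging formula that shows $H^2(\bar H, N) = 0$ for coprime orders). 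Both routes are sound and polynomial. The paper's proof is shorter and leans on Hall's theorems as black boxes; yours is slightly longer but is more self-contained, since it simultaneously reproves the existence of Hall $\pi$-subgroups constructively and needs only the elementary Abelian case of Schur--Zassenhaus, rather than the full containment theorem. One small quantitative remark: the linear system has $O(r^2 k)$ scalar equations in $rk$ unknowns over $\bbF_p$ (you wrote $O(r^2)$ equations), but both counts are polynomial in $n$ so the conclusion is unaffected.
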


\begin{proof}
  Suppose that $H$ is a $\pi$-subgroup of $G$ which is not a Hall $\pi$-subgroup of $G$.  By \thmref{sol-contain}, $H$ is properly contained in some Hall $\pi$-subgroup $K$ of $G$.  For every $g \in G \setminus H$, we can check if $\langle H, g \rangle$ is a $\pi$-subgroup of $G$ in polynomial time.  For $g \in K \setminus H$, $\langle H, g \rangle$ is a subgroup of the Hall $\pi$-group $K$ so $\langle H, g \rangle$ is a $\pi$-subgroup of $G$.  Thus, we can find some $g \in G \setminus H$ such that $L = \langle H, g \rangle$ is a $\pi$-subgroup of $G$ in polynomial time.  Since $H < L$, we have found a $\pi$-subgroup of $G$ which properly contains $H$.  Since $1$ is a $\pi$-subgroup of $G$, it follows that we can compute a Hall $\pi$-subgroup of $G$ in polynomial time.
\end{proof}

This lemma was previously proven by Kantor and Taylor~\cite{kantor1988a} in the setting of permutation groups.  It is immediate that we can compute a Sylow system of a solvable group in deterministic polynomial time.

\begin{corollary}
  \label{cor:comp-syl-sys-poly}
  Let $G$ be a solvable group.  Then we can compute a Sylow system of $G$ in deterministic polynomial time.
\end{corollary}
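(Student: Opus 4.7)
The plan is to observe that the corollary is essentially a packaging of the preceding lemma. Given a solvable group $G$ whose order has prime factorization $n = \prod_{i=1}^\ell p_i^{e_i}$, I would first extract the list of distinct prime divisors $p_1, \ldots, p_\ell$; these can be read off from $n$ in polynomial time by trial division (since $n$ is given implicitly by the $n \times n$ multiplication table, the input size already majorizes $n$).

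Next, for each index $i \in [\ell]$, I would invoke the preceding lemma with the set of primes $\pi = p_{-i} = \{p_1, \ldots, p_\ell\} \setminus \{p_i\}$ to obtain, in deterministic polynomial time, a Hall $p_{-i}$-subgroup $Q_i$ of $G$. This is legitimate because \thmref{sol-syl-sys} guarantees such a subgroup exists in any solvable group, and the lemma itself relies only on \thmref{sol-contain} to incrementally grow a $\pi$-subgroup starting from the trivial subgroup. The resulting family $\cQ = \setb{Q_i}{1 \leq i \leq \ell}$ satisfies the definition of a Sylow system verbatim.

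For the runtime bound, each of the $\ell$ calls takes time polynomial in $n$, and by \propref{prime-div} we have $\ell = O(\log n / \log \log n)$, so the total cost remains polynomial in $n$. There is no real obstacle here: all of the group-theoretic content has been absorbed into the lemma, and the corollary is essentially a loop over primes. The only thing to double-check is that the paper's definition of Sylow system as a mere collection of Hall complements (without an explicit permutability axiom) is exactly what the construction delivers, which it does by inspection.
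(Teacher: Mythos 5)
Your proposal is correct and follows essentially the same route as the paper, which simply remarks that the corollary is ``immediate'' from the preceding lemma on computing Hall $\pi$-subgroups; you spell out the loop over the $\ell$ primes with $\pi = p_{-i}$ and the bound on $\ell$, which is exactly the implicit argument. Your closing observation is also right: the paper's definition of a Sylow system requires only that each $Q_i$ be a Hall $p_{-i}$-subgroup, with no separate permutability condition, so the construction delivers a Sylow system verbatim.
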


The desired result now follows from a result which shows a close relation between Sylow bases and Sylow systems.

\begin{proposition}[cf.~\cite{robinson1996a}]
  \label{prop:syl-sys-basis-bij}
  Let $G$ be a group whose order has the prime factorization $n = \prod_{i = 1}^\ell p_i^{e_i}$ and consider a Sylow system $\cQ = \setb{Q_i}{1 \leq i \leq \ell}$ of $G$.  Then the map defined by $\cQ \mapsto \cP$ where $\cP = \setb{P_i}{1 \leq i \leq \ell}$ and $P_i = \bigcap_{j \not= i} Q_i$ is a bijection from the Sylow systems of $G$ to the Sylow bases of $G$.
\end{proposition}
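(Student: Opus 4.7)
The plan is to check that the map $\cQ \mapsto \cP$ lands in the set of Sylow bases, construct an explicit inverse from Sylow bases back to Sylow systems, and verify the two compositions are identities. Throughout, I will write $n = \prod_{i=1}^\ell p_i^{e_i}$ and repeatedly use the identity $\abs{AB} = \abs{A}\abs{B}/\abs{A \cap B}$ combined with the bound $\abs{AB} \leq n$.

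First I would verify that each $P_i \coleq \bigcap_{j \neq i} Q_j$ is a Sylow $p_i$-subgroup. The key lemma is that for any $J \subseteq [\ell]$, $\abs{\bigcap_{j \in J} Q_j} = n / \prod_{j \in J} p_j^{e_j}$. I would prove this by induction on $\abs{J}$. The base case $\abs{J} = 1$ is by definition. For the inductive step, take distinct indices and note that by Lagrange's theorem $\abs{Q_j \cap Q_k}$ divides $\gcd(\abs{Q_j}, \abs{Q_k}) = n / (p_j^{e_j} p_k^{e_k})$, which forces $\abs{Q_j Q_k} \geq n$, hence $Q_j Q_k = G$ and $\abs{Q_j \cap Q_k} = n/(p_j^{e_j} p_k^{e_k})$. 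The same argument iterated against the already-computed partial intersection yields the general formula. Specializing to $J = [\ell] \setminus \{i\}$ gives $\abs{P_i} = p_i^{e_i}$, and since $P_i \leq Q_j$ for every $j \neq i$ its order is coprime to every $p_j$ with $j \neq i$, so $P_i$ is a $p_i$-group; combined with the order count this makes $P_i$ a Sylow $p_i$-subgroup.

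Next I would show that $P_i P_j = P_j P_i$ for all $i,j$, establishing that $\cP$ is a Sylow basis. Since $\gcd(\abs{P_i}, \abs{P_j}) = 1$, the intersection $P_i \cap P_j$ is trivial, so $\abs{P_i P_j} = p_i^{e_i} p_j^{e_j}$ (as a set, not necessarily yet a subgroup). On the other hand $P_i, P_j \leq \bigcap_{k \neq i,j} Q_k$, which by the order formula above has order exactly $p_i^{e_i} p_j^{e_j}$. Hence $P_i P_j$ fills out this subgroup, and by the same argument so does $P_j P_i$, proving $P_i P_j = P_j P_i$ and that this common set is a subgroup. So $\cQ \mapsto \cP$ does map Sylow systems to Sylow bases.

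For the inverse, given a Sylow basis $\cP$ I would define $\tilde Q_i \coleq \prod_{j \neq i} P_j$; because the $P_j$ pairwise commute and have pairwise coprime orders, the order of this product is $\prod_{j \neq i} p_j^{e_j} = n / p_i^{e_i}$ independently of the order of multiplication, and it is a Hall $p_{-i}$-subgroup, so $\tilde\cQ$ is a Sylow system. To see the two constructions invert each other, I would check $\bigcap_{j \neq i} \tilde Q_j = P_i$: the inclusion $P_i \leq \tilde Q_j$ for each $j \neq i$ is immediate from the definition of $\tilde Q_j$, while the reverse follows by comparing orders using the intersection formula (now applied to $\tilde\cQ$). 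Conversely, starting from a Sylow system $\cQ$, producing $\cP$ as in the statement and then forming $\prod_{j \neq i} P_j$ recovers $Q_i$, because both subgroups have order $n/p_i^{e_i}$ and $\prod_{j \neq i} P_j$ is contained in $Q_i$ (each factor $P_j$ with $j \neq i$ lies in $Q_i$ since $P_j \leq \bigcap_{k \neq j} Q_k \leq Q_i$). The main obstacle is really the order computation in the first step, which drives every other step; once it is in hand, the rest is bookkeeping.
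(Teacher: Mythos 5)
The paper states this as \texttt{cf.~\cite{robinson1996a}} and supplies no proof of its own, so there is nothing to compare against; but your argument is correct and self-contained, and it is essentially the standard textbook proof. The engine of the whole thing is exactly the order lemma you isolate — $\abs{\bigcap_{j\in J} Q_j} = n/\prod_{j\in J}p_j^{e_j}$ via Lagrange plus the product-set count $\abs{AB}=\abs{A}\abs{B}/\abs{A\cap B}\leq n$ — and once it is in place, the verification that $\cP$ is a Sylow basis, the construction of the inverse $\tilde Q_i = \prod_{j\neq i}P_j$, and both composite identities are all routine order comparisons between a contained subgroup and a containing one of the same order, just as you say. One small wording note: where you say the $P_j$ ``pairwise commute,'' you of course mean they pairwise permute as sets ($P_iP_j=P_jP_i$), which is what the Sylow-basis definition and your own degree-two intersection argument give; that is the correct hypothesis for the product $\prod_{j\neq i}P_j$ to be a subgroup of the stated order (this is the paper's own proposition preceding Proposition~\ref{prop:syl-basis-uniq}, which you effectively re-derive).
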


\compsylbasispoly*

\begin{proof}
  Observe that the map from \propref{syl-sys-basis-bij} can be evaluated in deterministic polynomial time given a Sylow system.  The result is then immediate from \corref{comp-syl-sys-poly}.
\end{proof}

\subsection{The correctness proof for solvable-group isomorphism}

\genord*

\begin{proof}
  We consider the Cayley graph $X = \cay(G, K)$ for the group $G$ with generators in $K = \{g_1, \ldots, g_k\}$.  We define the order $\preceq_{\bmg}$ as follows.  Let $W_{\bmg} = K^*$.  For each $x \in G$, we say that $w = (x_1, \ldots, x_j) \in W_{\bmg}$ represents $x$ if $x = x_1 \cdots x_j$ where the empty product corresponds to the identity $e$.  For $x, y \in W_{\bmg}$, we say that $x \preceq^{\bmg} y$ if

  \begin{enumerate}
  \item $\abs{x} < \abs{y}$ or
  \item $\abs{x} = \abs{y}$ and $x$ is lexicographically at most $y$ with respect to the ordering $g_1 \preceq^{\bmg} \cdots \preceq^{\bmg} g_k$.
  \end{enumerate}

  Among the set of all words which represent $x$, there is some word of minimal length $j_{\bmg}(x)$.  We can calculate $j_{\bmg}(x)$ in polynomial time using a simple breadth-first search starting from $1$ in $X$.  Among the words of length $j_{\bmg}(x)$ which represent $x$, there is some word $w_x$ which is lexicographically less than every other word of length $j_{\bmg}(x)$ which represents $x$.  We show that $w_x$ can be found in polynomial time.  To do this, we check for each $g_i$ if there is a path from $1$ to $x$ of length $j_{\bmg}(x)$ where the first edge is labelled $g_i$ using breadth-first search; this is the case \ifft there is a word of length $j_{\bmg}(x)$ which represents $x$ and starts with $g_i$.  We choose the smallest such $i$ for which this is the case.  Then $w_x$ starts with $g_i$; we can continue this process recursively from the node $g_i$ to compute the rest of the elements of $w_x$.

  We can now define the order $\preceq_{\bmg}$; for any $x, y \in G$, we compute $w_x$ and $w_y$ in polynomial time.  We define $x \preceq_{\bmg} y$ \ifft $w_x \preceq^{\bmg} w_y$ where $\preceq^{\bmg}$ is as defined above.  By definition, this order can be computed in polynomial time.

  Finally, consider a group $H$ which is generated by the elements $\bmh = (h_1, \ldots, h_k)$ and an isomorphism $\phi : G \ra H$ such that $\phi(g_i) = h_i$ for all $i$ and $j$.  Let $x, y \in G$.  We first show that $j_{\bmg}(x) = j_{\bmh}(\phi(x))$.  If $w = (x_1, \ldots, x_j) \in W_{\bmg}$ represents $x$, then it follows that $w' = (\phi(x_1), \ldots, \phi(x_j)) \in W_{\bmh}$ represents $\phi(x)$.  Thus, $j_{\bmg}(x) \geq j_{\bmh}(\phi(x))$; similarly, if $w' = (y_1, \ldots, y_j) \in W_{\bmh}$ represents $\phi(x)$, then $w = (\phi^{-1}(y_1), \ldots, \phi^{-1}(y_j)) \in W_{\bmg}$ represents $x$ so $j_{\bmh}(\phi(x)) \geq j_{\bmg}(x)$.  Hence, $j_{\bmg}(x) = j_{\bmh}(\phi(x))$.

  Consider the lexicographically minimal word $w_x = (x_1, \ldots, x_j) \in W_{\bmg}$ of length $j = j_{\bmg}(x)$ which represents $x$.  We claim that $w_{\phi(x)} = w'$ where $w' = (\phi(x_1), \ldots, \phi(x_j))$.  First, we note that $w'$ represents $\phi(x)$ so it follows that $w_{\phi(x)} \preceq^{\bmh} w'$.  Suppose $w_{\phi(x)} \prec^{\bmh} w'$ and let $w_{\phi(x)} = (y_1, \ldots, y_j)$; we see that $(\phi^{-1}(y_1), \ldots, \phi^{-1}(y_k)) \prec^{\bmg} w_x = (x_1, \ldots, x_k)$.  Since $(\phi^{-1}(y_1), \ldots, \phi^{-1}(y_k))$ represents $x$, this contradicts the minimality of $w_x$; we conclude that $w_{\phi(x)} = w'$.  It follows that $x \preceq_{\bmg} y$ \ifft $\phi(x) \preceq_{\bmh} \phi(y)$.
\end{proof}

We prove a version of \thmref{sol-graph-red-bij} with an explicit bijection.

\begin{theorem}
  \label{thm:sol-graph-red-bij-exp}
  Let $(G, \cP, \cS, \bmg(\alpha), \kappa, \alpha)$ and $(H, \cQ, \cS', \bmh(\alpha), \kappa, \alpha)$ be Hall composition series with fixed generators such that the partial function $\psi : P_1 \cdots P_\kappa \ra Q_1 \cdots Q_\kappa : g_{ij} \mapsto  h_{ij}$ for all $i$ and $j$ extends to an isomorphism $\psi : P_1 \cdots P_\kappa \ra Q_1 \cdots Q_\kappa$ where $\psi[P_{ij}] = Q_{ij}$ for $i \leq \kappa$ and all $j$.  For $\phi \in \iso_{\bmg(\alpha) \mapsto \bmh(\alpha)}(\cS, \cS')$ let each $\phi_i : G \ra Q_i$ be defined by $\phi_i(g) = h_i$ for $g \in G$ where $\phi(g) = h_1 \cdots h_\ell$ and each $h_j \in Q_j$.  We define $f : \phi \mapsto \hphi$ where the map $\hphi : X(\cS, \bmg(\alpha)) \ra X(\cS', \bmh(\alpha))$ acts as follows for $\phi \in \iso_{\bmg(\alpha) \mapsto \bmh(\alpha)}(\cS, \cS')$.

  \begin{enumerate}
  \item \label{itm:sol-root-mapsto}
    For the root, $\hphi(P_1) = Q_1$.
  \item \label{itm:sol-TS-mapsto}
    For each node $(x_i P_{ij})_{x_1, \ldots, x_{i - 1}}$ in $T_\cS$ with $x = x_1 \cdots x_i$ and each $x_k \in P_k$ where $i > \kappa$, $\hphi((x_i P_{ij})_{x_1, \ldots, x_{i - 1}}) = (\phi[x_i P_{ij}])_{\phi_1(x), \ldots, \phi_{i - 1}(x)} = (\phi(x_i) Q_{ij})_{\phi_1(x), \ldots, \phi_{i - 1}(x)}$.
  \item \label{itm:sol-TSx-mapsto}
    For each node $(y_i P_{ij})_{y_1, \ldots, y_{i - 1}}^{(x)}$ in $T_\cS^{(x)}$ with $x \in G$, $y = y_1 \cdots y_i$ and each $y_k \in P_k$  where $i > \kappa$, $\hphi((y_i P_{ij})_{y_1, \ldots, y_{i - 1}}^{(x)}) = (\phi[y_i P_{ij}])_{\phi_1(y), \ldots, \phi_{i - 1}(y)}^{(\phi(x))} = (\phi(y_i) Q_{ij})_{\phi_1(y), \ldots, \phi_{i - 1}(y)}^{(\phi(x))}$.
  \item \label{itm:sol-almost-leaf-mapsto}
    For each node $y^{(x)}$ in $T_\cS^{(x)}$ with $x, y \in G$, $\hphi(y^{(x)}) = \phi(y)^{(\phi(x))}$.
  \item \label{itm:sol-leaf-mapsto}
    For each leaf node $y_\lambda^{(x)}$ in $T_\cS^{(x)}$ where $x, y \in G$ and $\lambda \in \{\la, \ra, =\}$, $\hphi(y_\lambda^{(x)}) = \phi(y)_\lambda^{(\phi(x))}$.
  \end{enumerate}
  Then $f : \iso_{\bmg(\alpha) \mapsto \bmh(\alpha)}(\cS, \cS') : \phi \ra \hphi$ is a well-defined bijection.
\end{theorem}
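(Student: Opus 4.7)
The plan is to follow the template of Theorem~\ref{thm:p-graph-red-bij-exp}, but with extra bookkeeping to handle the binary tree on $P_1 \cdots P_\kappa$ (with its generator-induced colors) and the ``not identity'' coloring on nodes of the tail trees $T_{x_1, \ldots, x_{i-1}}$ for $i > \kappa$.

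First I would verify that $\hat\phi$ is a well-defined bijection on vertex sets. The only places where the definition could conflict are (i) roots of subtrees identified with leaves of the enclosing tree, and (ii) the hinge between the binary-tree portion $B(\bmu)$ and the trees $T_{x_1, \ldots, x_\kappa}$. For (i) the rules for $(x_iP_{ij})_{x_1,\ldots,x_{i-1}}$ specialize correctly when $j = m_i$ (the root of $T_{x_1,\ldots,x_i}$) or when we descend to a leaf $x_1 \cdots x_\ell$, just as in the $p$-group case. For (ii), because $\psi$ is an isomorphism from $P_1 \cdots P_\kappa$ to $Q_1 \cdots Q_\kappa$ respecting generators, Proposition~\ref{prop:gen-ord} guarantees that the ordering $\preceq_{\bmg(\alpha)}$ is mapped to $\preceq_{\bmh(\alpha)}$ by $\psi$; hence $\phi$ (which extends $\psi$ on $P_1 \cdots P_\kappa$, since $\phi$ respects $\bmg(\alpha), \bmh(\alpha)$) sends the $k$-th element of $\bmu$ to the $k$-th element of the analogous vector for $H$. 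This makes the map $B(\bmu) \to B(\bmv)$ well defined, bijective, and color-preserving on the leaves of the binary tree. Bijectivity on the rest follows as before.

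Next I would check that $\hat\phi$ respects colors and edges. Color preservation on ``root'', ``left'', ``right'', ``equal'', on the leaf colors indexed by positions in $\bmu$, and on ``internal'' is immediate. For the ``not identity'' color, the condition $|\{x_k \ne 1 : 1 \le k \le i-1\}| + [x_i \notin P_{ij}] = 1$ is purely a statement about being $1$ in each $P_k$; since $\phi$ restricts to isomorphisms $\phi_k : P_k \to Q_k$ (because $\phi[P_k] = Q_k$), these indicators are preserved. Tree edges in the binary portion are preserved because both graphs use the same fixed shape $B([|P_1 \cdots P_\kappa|])$; tree edges in the tail trees are preserved by the coset-containment argument identical to Theorem~\ref{thm:p-graph-red-bij-exp}; and the cross edges of the multiplication gadgets are preserved because $\phi$ is a homomorphism. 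This establishes $f : \phi \mapsto \hat\phi$ as a well-defined map into $\iso(X(\cS, \bmg(\alpha)), X(\cS', \bmh(\alpha)))$, and injectivity is immediate from $\phi = \restr{\hat\phi}{G}$.

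The main obstacle is surjectivity. Given $\theta \in \iso(X(\cS, \bmg(\alpha)), X(\cS', \bmh(\alpha)))$, set $\phi = \restr{\theta}{G}$. The ``root''/level argument shows $\phi : G \to H$ is a bijection, and the shortest-path-through-left-then-right argument of Theorem~\ref{thm:p-graph-red-bij-exp} shows $\theta(y^{(x)}) = \phi(y)^{(\phi(x))}$ and hence $\phi$ is a group isomorphism. The new content is showing $\phi(g_{ij}) = h_{ij}$ and $\phi[P_{ij}] = Q_{ij}$ for all $i, j$. For $i \le \kappa$, the leaves of the binary tree are individually colored by their position in $\bmu$; since colors are preserved, $\theta$ takes the $k$-th leaf of $B(\bmu)$ in $T_\cS$ to the $k$-th leaf of the analogous tree in $T_{\cS'}$, which (by the first $\kappa$ entries of $\bmu$) forces $\phi(g_{ij}) = h_{ij}$, and then $\phi[P_{ij}] = Q_{ij}$ follows because $P_{ij}$ is generated by those $g_{i'j'}$'s together with $P_{i,j-1}$. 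For $i > \kappa$, I would trace the unique shortest path from $P_1$ down to $1$ (which must be preserved by $\theta$ since the nodes along it are characterized by their distance from the root and from $1$) to pin down the nodes $(P_{ij})_{1,\ldots,1}$ in both graphs. Then the elements of $P_{ij}$ are exactly those $x \in G$ reachable from $(P_{ij})_{1,\ldots,1}$ by a path that first descends through ``internal'' nodes inside one $T_{1,\ldots,1,?,1,\ldots,1}$ and then proceeds through nodes colored ``not identity''; the ``not identity'' coloring is precisely designed so that such a path forces $x_k = 1$ for all $k \ne i$, hence $x \in P_{ij}$. Color preservation then yields $\phi[P_{ij}] = Q_{ij}$. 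Combining these, $\phi \in \iso_{\bmg(\alpha) \mapsto \bmh(\alpha)}(\cS, \cS')$ and uniqueness of the extension gives $\hat\phi = \theta$, completing the bijection.
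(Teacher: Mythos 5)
Your proposal follows essentially the same approach as the paper's proof: verifying well-definedness (including the hinge between $B(\bmu)$ and the tail trees via Proposition~\ref{prop:gen-ord}), checking color and edge preservation including the ``not identity'' color, establishing injectivity via $\phi = \restr{\hat\phi}{G}$, and handling surjectivity by identifying $P_{ij}$ for $i \le \kappa$ through the binary-tree leaf colors and for $i > \kappa$ through paths from $(P_{ij})_{1,\ldots,1}$ passing through ``internal'' and then ``not identity'' nodes. The only small deviation is that for $i \le \kappa$ the paper invokes the assumed isomorphism $\psi$ and its uniqueness to conclude $\phi[P_{ij}] = Q_{ij}$ directly, whereas you argue via generation $P_{ij} = \langle g_{i1}, \ldots, g_{ij}\rangle$; both are sound and essentially equivalent.
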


\begin{proof}
  The proof is similar to that of \thmref{p-graph-red-bij} but is significantly more complicated due to the trees which were glued together and the generators which were fixed in \defref{TS-sol-gen}.  Let $\phi \in \iso_{\bmg(\alpha) \mapsto \bmh(\alpha)}(\cS, \cS')$.  Observe that for $g_i \in P_i$, $\phi(g_i) = \phi_i(g_i)$ so $\restr{\phi_i}{P_i} : P_i \ra Q_i$ is an isomorphism.  We note that the maps $\phi_i$ need not be homomorphisms; however, this is not necessary for our proof.  

  We shall assume throughout that $\kappa < \ell$ since otherwise the proof is easy as we have $\abs{\iso_{\bmg(\alpha) \mapsto \bmh(\alpha)}(\cS, \cS')} = \abs{\iso(X(\cS, \bmg(\alpha)), X(\cS', \bmh(\alpha)))} = 1$.  Then \itmref{sol-almost-leaf-mapsto} is a special case of \itmref{sol-TSx-mapsto} so it suffices to consider the other four lines.

  We remark that we did not define $\hphi$ on all the nodes in the binary trees $B(\bmu)$ from \defref{TS-sol-gen}.  This is intentional as we defined $\hphi$ on the root and leaves of $B(\bmu)$ so there is at most one definition of $\hphi$ on the remaining nodes of $B(\bmu)$ which respects the edges (and we shall argue that it exists).

  By assumption, $\phi[P_{ij}] = Q_{ij}$ for all $i$ and $j$ so the equalities in the definition of $\hphi$ hold.  We also see that \itmref{sol-TS-mapsto} applies to the root of each tree $T_{x_1, \ldots, x_i}$ in $T_\cS$ where each $x_j \in P_j$ and $i > \kappa$ in two different ways.  This is because the root $(P_{i + 1})_{x_1, \ldots, x_i}$ is identified with the node $(x_i)_{x_1, \ldots, x_{i - 1}}$ in $T_\cS$.  However, letting $x = x_1 \cdots x_i$, we have $\hphi((P_{i + 1})_{x_1, \ldots, x_i}) = (Q_{i + 1})_{\phi_1(x), \ldots, \phi_i(x)}$ and $\hphi((x_i)_{x_1, \ldots, x_{i - 1}}) = (\phi(x_i))_{\phi_1(x), \ldots, \phi_{i - 1}(x)}$.  Since $(Q_{i + 1})_{\phi_1(x), \ldots, \phi_i(x)}$ is identified with $(\phi(x_i))_{\phi_1(x), \ldots, \phi_{i - 1}(x)}$ this isn't a problem.  A similar issue occurs when \itmref{sol-TSx-mapsto} is applied to the root of the tree $T_{y_1, \ldots, y_i}^{(x)}$ in $T_\cS$ where $x \in G$, each $y_j \in P_j$ and $i > \kappa$.  However, there is no ambiguity in this case for exactly the same reason.  Thus, $\hphi$ is well-defined.

  We proceed by proving that $\hphi$ is a injection.  By definition $\hphi$ maps the root node of $X(\cS, \bmg(\alpha))$ to the root node of $X(\cS', \bmh(\alpha))$ and the nodes in the \nth{k} level of $X(\cS, \bmg(\alpha))$ to the \nth{k} level of $X(\cS', \bmh(\alpha))$.  Thus, it suffices to consider nodes in the same level.  Obviously, this condition holds for the root $P_1$ of $T_\cS$.  We consider all of the remaining cases in the definition of $\hphi$.  Consider the nodes $(x_i P_{ij})_{x_1, \ldots, x_{i - 1}}$ and $(y_i P_{ij})_{y_1, \ldots, y_{i - 1}}$ in $T_\cS$ with $x = x_1 \cdots x_i$, $y = y_1 \cdots y_i$ and each $x_k, y_k \in P_k$ where $i > \kappa$.  Suppose that $\hphi((x_i P_{ij})_{x_1, \ldots, x_{i - 1}}) = \hphi((y_i P_{ij})_{y_1, \ldots, y_{i - 1}})$.  Then we have $(\phi(x_i) Q_{ij})_{\phi_1(x), \ldots, \phi_{i - 1}(x)} = (\phi(y_i) Q_{ij})_{\phi_1(y), \ldots, \phi_{i - 1}(y)}$ which implies that $\phi_k(x) = \phi_k(y)$ for $1 \leq k \leq i - 1$.  Then $x_k = y_k$ for $1 \leq k \leq i - 1$ so $\phi(x_i) Q_{ij} = \phi(y_i) Q_{ij}$ and it follows that $x_i P_{ij} = y_i P_{ij}$.  Suppose $\hphi((y_i P_{ij})_{y_1, \ldots, y_{i - 1}}^{(x)}) = \hphi((z_i P_{ij})_{z_1, \ldots, z_{i - 1}}^{(w)})$ with $x, w \in G$, $y = y_1 \cdots y_i$, $z = z_1 \cdots z_i$ and each $y_k, z_k \in P_k$ where $i > \kappa$.  Then $(\phi(y_i) Q_{ij})_{\phi_1(y), \ldots, \phi_{i - 1}(y)}^{(\phi(x))} = (\phi(z_i) Q_{ij})_{\phi_1(z), \ldots, \phi_{i - 1}(z)}^{(\phi(w))}$.  Thus, $\phi(x) = \phi(w)$ so $x = w$.  Hence, $\phi_k(y) = \phi_k(z)$ for $1 \leq k \leq i - 1$ so $x_k = z_k$ for $1 \leq k \leq i - 1$.  Then $\phi(y_i) Q_{ij} = \phi(z_i) Q_{ij}$ so $y_i P_{ij} = z_i P_{ij}$.  It follows from the definition of $\hphi$ that it also acts as an injection on the leaf nodes $y_\lambda^{(x)}$.  Thus, $\hphi$ is an injection and hence a bijection.

  We now argue that $\hphi$ respects all colors.  It follows from our definition that $\hphi$ respects the colors ``left'', ``right'', ``equal'', and ``root'' on all nodes.  For a node $(x_i P_{ij})_{x_1, \ldots, x_{i - 1}}$ colored ``not identity'' where  $x = x_1 \cdots x_i$, each $x_k \in P_k$ and $i > \kappa$, we have $\hphi((x_i P_{ij})_{x_1, \ldots, x_{i - 1}}) = (\phi(x_i) Q_{ij})_{\phi_1(x), \ldots, \phi_{i - 1}(x)}$ which is also colored ``not identity''.  Similarly, for a node $(y_i P_{ij})_{y_1, \ldots, y_{i - 1}}^{(x)}$ colored ``not identity'' where $x \in G$, $y = y_1 \cdots y_i$, each $y_k \in P_k$ and $i > \kappa$, we have $\hphi((y_i P_{ij})_{y_1, \ldots, y_{i - 1}}^{(x)}) = (\phi(y_i) Q_{ij})_{\phi_1(y), \ldots, \phi_{i - 1}(y)}^{(\phi(x))}$ which is also colored ``not identity''.  Therefore, $\hphi$ also respects the ``not identity'' color.  Since $\hphi$ is a bijection, we can apply the same reasoning to $\left(\hphi\right)^{-1}$ to show that a node in $X(\cS, \bmg(\alpha))$ is colored not identity \ifft its image under $\hphi$ is colored not ``identity''.  It follows that $\hphi$ respects the ``internal'' color.

  We denote by $i_{\bmg(\alpha)}(x)$ the position of $x \in P_1 \cdots P_\kappa$ in the order on $P_1 \cdots P_\kappa$ determined by $\preceq_{\bmg(\alpha)}$.  Recall that the node $(x_\kappa)_{x_1, \ldots, x_{\kappa - 1}}$ has color $i_{\bmg(\alpha)}(x)$ where $x = x_1 \cdots x_\kappa$ and each $x_j \in P_j$.  It follows that $(x_\kappa)_{x_1, \ldots, x_{\kappa - 1}}$ and $\hphi((x_\kappa)_{x_1, \ldots, x_{\kappa - 1}}) = (\phi(x_\kappa))_{\phi_1(x), \ldots, \phi_{\kappa - 1}(x)}$ have the same color $i_{\bmg(\alpha)}(x) = i_{\bmh(\alpha)}(y)$ by \propref{gen-ord} where $y = \phi_1(x) \cdots \phi_\kappa(x) = \phi(x)$.  Similarly for $x \in G$, $(y_\kappa)_{y_1, \ldots, y_{\kappa - 1}}^{(x)}$ and $\hphi((y_\kappa)_{y_1, \ldots, y_{\kappa - 1}}^{(x)}) = (\phi(y_\kappa))_{\phi_1(y), \ldots, \phi_{\kappa - 1}(y)}^{(\phi(x))}$ have the same color $i_{\bmg(\alpha)}(y) = i_{\bmh(\alpha)}(z)$ where $y = y_1 \cdots y_\kappa$, $z = \phi_1(y) \cdots \phi_\kappa(y) = \phi(y)$ and each $y_j \in P_j$.  Hence, $\hphi$ respects all the colors.

  Our next step is to show that $\hphi$ respects the tree edges.  There are two types of tree edges: those which come from a binary tree $B(\bmu)$ constructed according to \defref{right-bin-tree} and those which come from the copies of the trees $T(S_i)$ in \defref{TS-sol-gen}.  We start by considering the edges which come from the binary trees.  For a node $(x_\kappa)_{x_1, \ldots, x_{\kappa - 1}}$ where $x = x_1 \cdots x_\kappa$ and each $x_j \in P_j$, we have $\hphi((x_\kappa)_{x_1, \ldots, x_{\kappa - 1}}) = (\phi(x_\kappa))_{\phi_1(x), \ldots, \phi_{\kappa - 1}(x)}$.  Let $\bmu$ be the ordering of the nodes $(x_\kappa)_{x_1, \ldots, x_{\kappa - 1}}$ from \defref{TS-sol-gen}; similarly, let $\bmv$ be the ordering of the nodes $(\phi(x_\kappa))_{\phi_1(x_1), \ldots, \phi(x_{\kappa - 1})}$.  As we already noted, \propref{gen-ord} implies that $\hphi$ maps the \nth{k} node in $\bmu$ to the \nth{k} node in $\bmv$.  By \defref{right-bin-tree}, the trees $B(\bmu)$ and $B(\bmv)$ depend only on the orderings $\bmu$ and $\bmv$ and not the labels $x_1 \cdots x_\kappa$ and $\phi_1(x_1) \cdots \phi_\kappa(x_\kappa)$.  Thus, there exists a unique definition of $\hphi$ on the intermediate nodes of $B(\bmu)$ which respects the edges of $B(\bmu)$ and $B(\bmv)$.  Similarly, Consider a node $(y_\kappa)_{y_1, \ldots, y_{\kappa - 1}}^{(x)}$ where $x \in G$, $y = y_1 \cdots y_\kappa$ and each $y_j \in P_j$; we have $\hphi((y_\kappa)_{y_1, \ldots, y_{\kappa - 1}}^{(x)}) = (\phi(y_\kappa))_{\phi_1(y), \ldots, \phi_{\kappa - 1}(y)}^{(\phi(x))}$.  We fix $x \in G$ and let $\bmu$ be the ordering of the nodes $(y_\kappa)_{y_1, \ldots, y_{\kappa - 1}}^{(x)}$ for all $y_j \in P_j$ from \defref{TS-sol-gen}; similarly, let $\bmv$ be the ordering of the nodes $(\phi(y_\kappa))_{\phi_1(y_1), \ldots, \phi(y_{\kappa - 1})}^{(\phi(x))}$ for all $y_j \in P_j$.  \propref{gen-ord} again implies that $\hphi$ maps the \nth{k} node in $\bmu$ to the \nth{k} node in $\bmv$.  It follows that there exists a unique definition of $\hphi$ on the intermediate nodes of each $B(\bmu)$ which respects the edges of each $B(\bmu)$ and $B(\bmv)$.

  Next, we consider the edges which come from the trees $T_{x_1, \ldots, x_{i - 1}}$ in \defref{TS-sol-gen} where each $x_k \in P_k$ and $i > \kappa$.  Consider an edge between the nodes $(y_i P_{ij})_{y_1, \ldots, y_{i - 1}}$ and $(x_i P_{i, j + 1})_{x_1, \ldots, x_{i - 1}}$ where $x = x_1 \cdots x_i$, $y = y_1 \cdots y_i$, each $x_k, y_k \in P_k$, $x_k = y_k$ for $k < i$ and $i > \kappa$.  Then $y_i P_{ij} \subseteq x_i P_{i, j + 1}$ so $\phi[y_i P_{ij}] \subseteq \phi[x_i P_{i, j + 1}]$.  Thus, there is an edge between $\hphi((y_i P_{ij})_{x_1, \ldots, x_{i - 1}}) = (\phi(y_i) Q_{ij})_{\phi_1(x), \ldots, \phi_{i - 1}(x)}$ and $\hphi((x_i P_{i, j + 1})_{x_1, \ldots, x_{i - 1}}) = (\phi(x_i) Q_{i, j + 1})_{\phi_1(x), \ldots, \phi_{i - 1}(x)}$.  Consider an edge between the nodes $(z_i P_{ij})_{z_1, \ldots, z_{i - 1}}^{(x)}$ and $(y_i P_{i, j + 1})_{y_1, \ldots, y_{i - 1}}^{(x)}$ where $x \in G$, $z = z_1 \cdots z_i$, $y = y_1 \cdots y_i$, each $z_k, y_k \in P_k$, $z_k = y_k$ for $k < i$ and $i > \kappa$; we again have $z_i P_{ij} \subseteq y_i P_{i, j + 1}$ and so the same analysis shows that there is a tree edge between $\hphi((z_i P_{ij})_{y_1, \ldots, y_{i - 1}}^{(x)}) = (\phi(z_i) Q_{ij})_{\phi_1(y), \ldots, \phi_{i - 1}(y)}^{(\phi(x))}$ and $\hphi((y_i P_{i, j + 1})_{y_1, \ldots, y_{i - 1}}^{(x)}) = (\phi(y_i) Q_{i, j + 1})_{\phi_1(y), \ldots, \phi_{i - 1}(y)}^{(\phi(x))}$.

  We now consider the cross edges.  Suppose that $x y = z$ where $x, y, z \in G$; we will show that the edges from $y_\la^{(x)}$ to $x_\ra^{(y)}$ and from $x_\ra^{(y)}$ to $y_=^{(z)}$ are preserved by $\hphi$.  By definition, we have

  \begin{align*}
    \hphi(y_\la^{(x)}) & = \phi(y)_\la^{(\phi(x))} \\
    \hphi(x_\ra^{(y)}) & = \phi(x)_\ra^{(\phi(y))} \\
    \hphi(y_=^{(z)})   & = \phi(y)_=^{(\phi(z))}
  \end{align*}
  Since $\phi : G \ra H$ is an isomorphism, $\phi(x) \phi(y) = \phi(z)$ so there are cross edges from $\phi(y)_\la^{(\phi(x))}$ to $\phi(x)_\ra^{(\phi(y))}$ and from $\phi(x)_\ra^{(\phi(y))}$ to $\phi(y)_=^{(\phi(z))}$.  Since $X(\cS, \bmg(\alpha))$ and $X(\cS', \bmh(\alpha))$ have the same number of edges, it follows that there is an edge between two nodes in $X(\cS, \bmg(\alpha))$ \ifft there is an edge between their images under $\hphi$ in $X(\cS', \bmh(\alpha))$.  Thus, $\hphi$ is an isomorphism so we have shown that $f : \phi \mapsto \hphi$ is a map $f : \iso_{\bmg(\alpha) \mapsto \bmh(\alpha)}(\cS, \cS') \ra \iso(X(\cS, \bmg(\alpha)), X(\cS', \bmh(\alpha)))$.

  It remains to show that $f$ is bijective.  Suppose that $\phi_1, \phi_2 \in \iso_{\bmg(\alpha) \mapsto \bmh(\alpha)}(S, S')$ and $\hphi_1 = \hphi_2$.  Then in particular we have $\hphi_1(x) = \hphi_2(x)$ for $x \in G$ so $\phi_1(x) = \phi_2(x)$ and $\phi_1 = \phi_2$.  Therefore, $f$ is injective.  Our final task is to show that it is surjective.  Let $\theta \in \iso(X(\cS, \bmg(\alpha)), X(\cS', \bmh(\alpha)))$.  Because the roots $P_1$ and $Q_1$ of $X(\cS, \bmg(\alpha))$ and $X(\cS', \bmh(\alpha))$ are the only nodes colored ``root'', we have $\theta(P_1) = Q_1$ and $\theta$ maps the \nth{k} level of $X(\cS, \bmg(\alpha))$ to the \nth{k} level of $X(\cS', \bmh(\alpha))$.  In particular, a restriction of $\theta$ is a bijection from the nodes $x \in G$ of $T_\cS$ to the nodes $y \in H$ of $T_{\cS'}$.  Then the map $\phi : G \ra H : x \mapsto \theta(x)$ where $x \in G$ is a node in $T_\cS$ is a well-defined bijection.

  We now claim that for any node $y^{(x)}$ in $T_\cS^{(x)}$, $\theta(y^{(x)}) = \phi(y)^{(\phi(x))}$.  We know that for $x \in G$, $T_\cS^{(x)}$ is rooted at the node $x$ of $T_\cS$ and $\theta(x) = \phi(x)$ by definition of $\phi$.  It follows that the node $\theta(y^{(x)})$ is in $T_{\cS'}^{(\phi(x))}$; similarly $\theta(x^{(y)})$ is in $T_{\cS'}^{(\phi(y))}$.  Hence, we can write $\theta(y^{(x)}) = \phi(b)^{(\phi(x))}$ and $\theta(x^{(y)}) = \phi(a)^{(\phi(y))}$ for some $a, b \in G$.  We know that in $X(\cS', \bmh(\alpha))$, the unique shortest path from $\phi(x)$ to $\phi(y)$ that passes through a node colored ``left'' and later a node colored ``right'' follows the tree $T_{\cS'}^{(\phi(x))}$ to $\phi(y)^{(\phi(x))}$, then the path $(\phi(y)^{(\phi(x))}, \phi(y)_\la^{(\phi(x))}, \phi(x)_\ra^{(\phi(y))}, \phi(x)^{(\phi(y))})$ and then the tree $T_{\cS'}^{(\phi(y))}$ to $\phi(y)$.  On the other hand, we know that the path $(\phi(b)^{(\phi(x))}, \phi(b)_\la^{(\phi(x))}, \phi(a)_\ra^{(\phi(y))}, \phi(a)^{(\phi(y))})$ exists in $X(\cS', \bmh(\alpha))$ by definition of $a$ and $b$.  Then a shortest path that passes through a node colored ``left'' and later a node colored ``right'' from $\phi(x)$ to $\phi(y)$ follows the tree $T_{\cS'}^{(\phi(x))}$ to $\phi(b)^{(\phi(x))}$, then the path $(\phi(b)^{(\phi(x))}, \phi(b)_\la^{(\phi(x))}, \phi(a)_\ra^{(\phi(y))}, \phi(a)^{(\phi(y))})$ and then the tree $T_{\cS'}^{(\phi(y))}$ to $\phi(y)$.  From uniqueness, it follows that $\phi(x) = \phi(a)$ and $\phi(y) = \phi(b)$ so $x = a$ and $y = b$.  Thus, for any node $y^{(x)}$ in $T_\cS^{(x)}$, $\theta(y^{(x)}) = \phi(y)^{(\phi(x))}$.

  Now, we show that $\phi$ is an isomorphism.  We have already argued that it is bijective.  Let $x, y \in G$ and set $z = x y$.  Then $\theta$ maps the path $(y_\la^{(x)}, x_\ra^{(y)}, y_=^{(z)})$ in $X(\cS, \bmg(\alpha))$ to the path $(\phi(y)_\la^{(\phi(x))}, \phi(x)_\ra^{(\phi(y))}, \phi(y)_=^{(\phi(z))})$ in $X(\cS', \bmh(\alpha))$.  Since $\theta$ preserves colors, the colors of the nodes in this path are $(\text{``left''}, \text{``right''}, \text{``equal''})$.  We conclude that $\phi(x) \phi(y) = \phi(z)$.  By definition of $z$, it follows that $\phi$ is an isomorphism from $G$ to $H$.  

  We now assert that $\phi[P_{ij}] = Q_{ij}$ for all $i$ and $j$.  We know that $\theta$ maps the node $(x_\kappa)_{x_1, \ldots, x_{\kappa - 1}}$ which is colored $i_{\bmg(\alpha)}(x)$ where $x = x_1 \cdots x_\kappa$ and each $x_j \in P_j$ to a node $(y_\kappa)_{y_1, \ldots, y_{\kappa - 1}}$ which is colored $i_{\bmh(\alpha)}(y) = i_{\bmg(\alpha)}(x)$ where $y = y_1 \cdots y_\kappa$ and each $y_j \in Q_j$.  By assumption, the mapping from the generators $\bmg(\alpha)$ to $\bmh(\alpha)$ extends to an isomorphism from $\psi : P_1 \cdots P_\kappa \ra Q_1 \cdots Q_\kappa$ (which must be unique).  By \propref{gen-ord}, we see that $i_{\bmg(\alpha)}(x) = i_{\bmh(\alpha)}(\psi(x))$.  It follows that $\theta(x) = \psi(x)$ which implies that $\phi(g_{ij}) = h_{ij}$ and $\phi[P_{ij}] = Q_{ij}$ for $i \leq \kappa$ and all $j$.

  Now  suppose $i > \kappa$ and fix $j$.  Because $\phi$ is an isomorphism, $\phi(1) = 1$ so $\theta(1) = 1$.  It follows that $\theta$ maps the unique shortest path from $P_1$ to $1$ in $T_\cS$ to the unique shortest path from $Q_1$ to $1$ in $T_{\cS'}$.  Then $\theta((P_{ij})_{x_1, \ldots, x_{i - 1}}) = (Q_{ij})_{y_1, \ldots, y_{i - 1}}$ where each $x_k = 1$ and each $y_k = 1$ for $k < i$.  Consider the paths of the form

  \begin{align}
    \label{eq:Pij-path}
    ((P_{ij})_{x_1, \ldots, x_{i - 1}}, (x_i P_{i, j - 1})_{x_1, \ldots, x_{i - 1}}, \ldots, (x_i P_{i0})_{x_1, \ldots, x_{i - 1}}, (P_{i + 1, m_{i + 1} - 1})_{x_1, \ldots, x_i}, \ldots, (P_{\ell 0})_{x_1, \ldots, x_{\ell - 1}})
  \end{align}
  where $x_i \not= 1 \in P_{ij}$ and $x_k = 1$ for all $k \not= i$.  We note that this is the unique shortest path to the node $x = x_i \in P_{ij}$.  Moreover, at least one of the nodes on this path is colored ``not identity''.  Since none of the nodes on the unique shortest path from $(P_{ij})_{x_1, \ldots, x_{i - 1}}$ to $1$ are colored ``not identity'', we see that the paths of the form of \eqref{Pij-path} are exactly the unique shortest paths from $(P_{ij})_{x_1, \ldots, x_{i - 1}}$ to the non-identity nodes $x \in P_{ij}$ of $T_\cS$.  We can also think of the paths of the form of \eqref{Pij-path} as the paths from $(P_{ij})_{x_1, \ldots, x_{i - 1}}$ to nodes $z \in G$ which pass through nodes labelled ``internal'' until reaching a node $(x_i P_{ik})_{x_1, \ldots, x_{i - 1}}$ which is colored ``not identity'' where $1 \leq k < j$ such that after this point, they only pass through nodes which are colored ``not identity''.  Applying the same argument, we find that the paths of the form

  \begin{align}
    \label{eq:Qij-path}
    ((Q_{ij})_{y_1, \ldots, y_{i - 1}}, (y_i Q_{i, j - 1})_{y_1, \ldots, y_{i - 1}}, \ldots, (y_i Q_{i0})_{y_1, \ldots, y_{i - 1}}, (Q_{i + 1, m_{i + 1} - 1})_{y_1, \ldots, y_i}, \ldots, (Q_{\ell 0})_{y_1, \ldots, y_{\ell - 1}})
  \end{align}
  where $y_i \not= 1 \in Q_{ij}$ and $y_k = 1$ for all $k \not= i$ are the paths from $(Q_{ij})_{y_1, \ldots, y_{i - 1}}$ to nodes $w \in H$ which pass through nodes labelled ``internal'' until reaching a node $(y_i Q_{ik})_{y_1, \ldots, y_{i - 1}}$ which is colored ``not identity'' where $1 \leq k < j$ such that after this point, they only pass through nodes which are colored ``not identity''.
  
  We have already established that $\theta((P_{ij})_{x_1, \ldots, x_{i - 1}}) = (Q_{ij})_{y_1, \ldots, y_{i - 1}}$ so it follows that $\theta$ maps each path of the form of \eqref{Pij-path} to a path of the form of \eqref{Qij-path}.  Since we already know that $\theta(1) = 1$, this implies that $\theta$ maps the nodes in $P_{ij}$ to the nodes in $Q_{ij}$; it follows that $\phi[P_{ij}] = Q_{ij}$.

  Therefore, $\phi$ is an isomorphism from $\cS$ to $\cS'$ which respects $\bmg(\alpha)$ and $\bmh(\alpha)$.  Then $\theta = \hphi$.  Hence, we have proven that $f$ is a bijection.
\end{proof}

\solhcompcan*

\begin{proof}
  We transform the Hall composition series $(G, \cP, \cS)$ into a Hall composition series with fixed generators.  Let us define $\alpha = \log n / \log \log n$.  We reindex so that $p_i > p_j$ for $i < j$.  Recall that our construction of the graph $X(\cS, \bmg(\alpha))$ in \defref{TS-sol-gen-cone} is only defined when $p_1 \geq \alpha$.  The reason for this is because it simplifies the correctness proof and definitions (which are already quite complex) and is not due to any fundamental difficulty.  However, a side effect is that it makes the present proof slightly more difficult.  We first assume that $p_1 \geq \alpha$ and then show a simple trick which reduces the general case to the case where $p_1 \geq \alpha$.  As before, we start by showing a complete polynomial-size invariant $\invar(\cS)$ and then use it to obtain a canonical form.

  Suppose $p_1 \geq \alpha$ and let $\kappa$ be the largest value of $i$ such that $p_i \geq \alpha$.  We compute the set $\cG_\alpha = \setb{\bmg(\alpha) = (g_{1, 1}, \ldots, g_{1, m_1}, \ldots, g_{\kappa, 1}, \ldots, g_{\kappa, m_\kappa})}{\langle g_{i, j + 1} P_{ij} \rangle = P_{i, j + 1} / P_{ij} \text{ for each } i \leq \kappa}$ of all possible choices of representatives for the factor groups $P_{i + 1, j} / P_{ij}$ where $i \leq \kappa$.  For each $\bmg(\alpha) \in \cG_\alpha$, we construct the Hall composition series with fixed generators $(G, \cP, \cS, \bmg(\alpha), \kappa, \alpha)$ and compute the graph $X(\cS, \bmg(\alpha))$.  We then compute the canonical form of each $X(\cS, \bmg(\alpha))$ and store the results in a list $\bmA$.  We sort $\bmA$ lexicographically and choose the lexicographically smallest element of $\bmA$ to be $\invar(\cS)$ of $(G, \cP, \cS)$.

  We already know from \lemref{sol-graph} that $\abs{\invar(\cS)} = \poly(n)$.  We now argue that if $(H, \cQ, \cS')$ is a Hall composition series then $\cS \cong \cS'$ \ifft $\invar(\cS) = \invar(\cS')$.  First, if $\invar(\cS) = \invar(\cS')$ then there exist $\bmg(\alpha) \in \cG_\alpha$ and $\bmh(\alpha) \in \cH_\alpha$ such that $(\cS, \bmg(\alpha)) \cong (\cS', \bmh(\alpha))$ by \corref{sol-red-cor} which implies that $\cS \cong \cS'$.  Now suppose that $\cS \cong \cS'$.  Then for each $\bmg(\alpha) \in \cG_\alpha$, there exists $\bmh(\alpha) \in \cH_\alpha$ such that $(\cS, \bmg(\alpha))$ is isomorphic to $(\cS', \bmh(\alpha))$.  Similarly, for each $\bmh(\alpha) \in \cH_\alpha$, there exists $\bmg(\alpha) \in \cG_\alpha$ such that $(\cS', \bmh(\alpha))$ is isomorphic to $(\cS, \bmg(\alpha))$.  Let $\bmA$ be the sorted list of the canonical forms of the graphs $X(\cS, \bmg(\alpha))$ where each $\bmg(\alpha) \in \cG_\alpha$; similarly, let $\bmB$ be the sorted list of the canonical forms of the graphs $X(\cS', \bmh(\alpha))$ where each $\bmh(\alpha) \in \cH_\alpha$.  Then $\bmA = \bmB$ by the above argument so $\invar(\cS) = \invar(\cS')$.  Thus, $\invar(\cS)$ is a complete polynomial-size invariant.

  We claim that the above procedure runs in $n^{O(\log n / \log \log n)}$ time.  Note that $\log_\alpha n = O(\log n / \log \log n)$.  Because $p_i \geq \alpha$ for $i \leq \kappa$, $\abs{\cG_\alpha} \leq n^{O(\log n / \log \log n)}$.  Each graph $X(\cS, \bmg(\alpha))$ has at most $O(n^2)$ nodes and degree at most $\max\{\alpha, 4\}$ by \lemref{sol-graph}.  Thus we can compute the canonical form of each $X(\cS, \bmg(\alpha))$ in $n^{O(\alpha)} = n^{O(\log n / \log \log n)}$ time.  Computing the canonical forms for all $n^{O(\log n / \log \log n)}$ graphs takes a total of $n^{O(\log n / \log \log n)}$ time.  Sorting the $n^{O(\log n / \log \log n)}$ canonical forms of the graphs also takes $n^{O(\log n / \log \log n)}$ time so the overall runtime is $n^{O(\log n / \log \log n)}$.

  To complete our argument for the case where $p_1 \geq \alpha$, we show how to compute the canonical form $\can(\cS)$ from $\invar(\cS)$ in polynomial time.  Recall that $\invar(\cS) = \can(X(\cS, \bmg(\alpha)))$ for some $\bmg(\alpha) \in \cG_\alpha$ so there exists an isomorphism $\theta :  X(\cS, \bmg(\alpha)) \ra \can(X(\cS, \bmg(\alpha)))$.  First, we find the root node $\theta(P_1)$ of $\can(X(\cS, \bmg(\alpha)))$ which is easy since it is the only node colored ``root.''  The nodes of the form $\theta(x)$ where $x \in G$ are simply those nodes which are some distance $d$ from the root.  The distance $d$ is equal to the sum of the height of the binary tree $T$ which represents the elements of the group $P_1 \cdots P_\kappa$ and the heights of the trees $T(S_i)$ for $\kappa < i \leq \ell$.  The height of $T$ is $\lceil \log \abs{P_1 \cdots P_\kappa} \rceil = \lceil \log \prod_{j = 1}^\kappa p_j^{e_j} \rceil$; the height of $T(S_i)$ is simply the length of $S_i$ which is $e_i$ since $S_i$ is a composition series for the $p_i$-group $P_i$.  Thus, $d$ depends only on the order $n$ of $G$.  Then we can find the elements $\theta(x)$ for $x \in G$ of $\can(X(\cS, \bmg(\alpha)))$ by breadth-first search.  For each such $x \in G$, we then define $\lambda_G(\theta(x))$ to be the index of $\theta(x)$ in the sequence of nodes $\theta(x)$ where $x \in G$ obtained in the breadth-first search.  Thus, $\lambda_G : \theta[G] \ra [n]$ is a bijection.

  Let $x, y \in G$; we will now show how to compute $\theta(xy)$ from $\can(X(\cS, \bmg(\alpha)))$ given $\theta(x)$ and $\theta(y)$.  Note that in $X(\cS, \bmg(\alpha))$, there is a unique shortest path from $x$ to $y$ that passes through a node colored ``left'' and later a node colored ``right''; namely, this is the path which follows the tree $T_\cS^{(x)}$ from $x$ to $y^{(x)}$, $(y^{(x)}, y_\la^{(x)}, x_\ra^{(y)}, x^{(y)})$ and then the tree $T_\cS^{(y)}$ to $y$.  Thus, given nodes $\theta(x)$ and $\theta(y)$ in $\can(X(\cS, \bmg(\alpha)))$, we can find the nodes $\theta(y^{(x)})$ and $\theta(x^{(y)})$ in $\can(X(\cS, \bmg(\alpha)))$.  Once this has been done for all $x, y \in G$, the multiplication table $M'(G)$ defined by the rule $\theta(x)\theta(y) = \theta(xy)$ can be determined by inspecting the multiplication gadgets in $\can(X(\cS, \bmg(\alpha)))$.

  Using the multiplication table just computed, we can find the node $\theta(1)$ in $\can(X(S))$ which corresponds to the identity.  Now since $\theta$ is an isomorphism, it maps unique shortest path from the root $P_1$ to $1$ in $X(\cS, \bmg(\alpha))$ to the unique shortest path from the root $\theta(P_1)$ to $\theta(1)$ in $\can(X(\cS, \bmg(\alpha)))$.  For $i > \kappa$, let $x_k = 1$ for $k < i$; the node $\theta((P_{ij})_{x_1, \ldots, x_{i - 1}})$ in $\can(X(\cS, \bmg(\alpha)))$ is then the node at a distance of $\lceil \log \prod_{j = 1}^\kappa p_j^{e_j} \rceil + \sum_{j = \kappa + 1}^{i - 1} e_j + m_i - j$ from the root $\theta(P_1)$ along the unique shortest path to $\theta(1)$ in $\can(X(\cS, \bmg(\alpha)))$.  Thus, we can find the node $\theta((P_{ij})_{x_1, \ldots, x_{i - 1}})$ in $\can(X(\cS, \bmg(\alpha)))$ which corresponds to each $P_{ij}$ where $i > \kappa$.  Moreover, for each $i > \kappa$, the nontrivial elements $x \in P_{ij}$ in $X(\cS, \bmg(\alpha))$ are precisely the nodes $y \in G$ which can be reached from the node $(P_{ij})_{x_1, \ldots, x_{i - 1}}$ where each $x_k = 1$ for $k < i$ and $x_i \in P_{ij}$ by moving away from the root through nodes colored ``internal'' until reaching a node $(x_i P_{ik})_{x_1, \ldots, x_{i - 1}}$ which is colored ``not identity'' and then through nodes colored ``not identity'' until reaching $x_i \in G$.  Thus, we can find these nodes by breadth-first search on $\can(X(\cS, \bmg(\alpha)))$.

  It remains to show how to find the nodes of the form $\theta(x)$ in $\can(X(\cS, \bmg(\alpha)))$ for $x \in P_{ij}$ where $i \leq \kappa$.  Let $\bmu$ be the vector of elements of $P_1 \cdots P_\kappa$ ordered according to $\preceq_{\bmg(\alpha)}$ as defined in \defref{TS-sol-gen}.  Recall from \defref{TS-sol-gen} that the \nth{k} element of $\bmu$ was colored $k$ in the binary tree $T$.  From \propref{gen-ord}, we know that the first elements in the order $\preceq_{\bmg(\alpha)}$ are the identity followed by the elements of $\bmg(\alpha)$.  This allows us to locate the elements of the form $\theta((x_\kappa)_{x_1, \ldots, x_{\kappa - 1}})$ in $\can(X(\cS, \bmg(\alpha)))$ where each $x_j = 1$ for $j \not= k$ and $x_k \in P_k$ is an element of the vector $\bmg(\alpha)$.

  Now we find the nodes of the form $\theta(x)$ in $\can(X(\cS, \bmg(\alpha)))$ where $x$ is an element of the vector $\bmg(\alpha)$.  We can write $x$ as the product $x_1 \cdots x_\ell$ where $x_j = 1$ for $j \not= k$ and $x_k \in P_k$ is an element of the vector $\bmg(\alpha)$.  Thus, the node for $x$ can also be written as $(x_\ell)_{x_1, \ldots, x_{\ell - 1}}$.  Note that by \defref{TS-sol-gen}, $(x_\kappa)_{x_1, \ldots, x_{\kappa - 1}}$ is colored ``not identity.''  Moreover, there is a unique path from the node $(x_\kappa)_{x_1, \ldots, x_{\kappa - 1}}$ which moves away from the root and passes through nodes which are colored ``not identity''; since this path is from $(x_\kappa)_{x_1, \ldots, x_{\kappa - 1}}$ to $(x_\ell)_{x_1, \ldots, x_{\ell - 1}}$, we can find each node of the form $\theta(x)$ in $\can(X(\cS, \bmg(\alpha)))$ where $x$ is an element of $\bmg(\alpha)$.

  Recall that $\bmg(\alpha) = (g_{1, 1}, \ldots, g_{1, m_1}, \ldots, g_{\kappa, 1}, \ldots, g_{\kappa, m_\kappa})$ where $\langle g_{i, j + 1} P_{ij} \rangle = P_{i, j + 1} / P_{ij}$.  Since $\bmu$ starts with $1$ followed by $\bmg(\alpha)$, we can find the node $\theta(g_{i, j + 1})$ which corresponds to each $g_{i, j + 1}$ for $i \leq \kappa$.  Using the multiplication table $M'(G)$ which we computed previously, we can then determine the nodes of the form $\theta(x)$ in $\can(X(\cS, \bmg(\alpha)))$ where $x \in P_{i, j + 1}$ for $i \leq \kappa$.

At this point, for each $P_{ij}$ we know the nodes of the form $\theta(x)$ where $x \in P_{ij}$.  Since $P_i = P_{i, m_i}$, we have all the information that is required to construct $\can(\cS)$.  Define $\phi = \restr{\theta}{G}$ and $\psi_G = \lambda_G \phi$.  Let $M(G)$ be the $n \times n$ matrix with elements in $[n]$ defined by $M(G)_{\psi_G(x), \psi_G(y)} = \psi_G(xy)$ for all $x, y \in G$.  Thus, we can compute $\psi[P_i]$ and $\psi_G[P_{ij}]$ for each $i$ and $j$.  We then set $\can(\cS) = (M(G), \psi_G[P_1], \ldots, \psi_G[P_\ell], \psi_G[P_{10}], \ldots, \psi_G[P_{1, m_i}], \ldots, \psi_G[P_{\ell 0}], \ldots, \psi_G[P_{\ell, m_\ell}])$ as in \defref{hcomp-can}.

  We claim that $\can(\cS)$ is a canonical form for $S$.  Let $(H, \cQ, \cS')$ be a Hall composition series.  By construction, $\psi_G : G \ra [n]$ is an isomorphism from $G$ to the group described by the multiplication table $M(G)$.  If $\can(\cS) = \can(\cS')$ then $\psi_H^{-1} \psi_G$ is an isomorphism from $\cS$ to $\cS'$ so $\cS \cong \cS'$.  Suppose that $\cS \cong \cS'$; then $\invar(\cS) = \invar(\cS')$.  Since we computed $\can(\cS)$ deterministically and this computation depends only on $\invar(\cS)$, $n$ and the composition length of each $P_i$ and , it follows that $\can(\cS) = \can(\cS')$.

  We have shown how to construct a canonical form when $p_1 \geq \alpha$.  Now suppose that $p_1 < \alpha$ after reindexing so that $p_i > p_j$ for $i < j$.  We choose a prime $2 \alpha \leq q \leq 4 \alpha$ which exists for sufficiently large $n$ by Bertrand's Postulate.  We then construct a new group $\bar G = \bbZ_q \times G$ and consider the Hall composition series $(\bar G, \bar \cP, \bar \cS)$ where $\bar \cP = \cP \cup \{\bbZ_q\}$ and $\bar \cS = \cS \cup \{1 \tril \bbZ_q\}$.  We note that $\abs{\bar G} = q n$ and $q \geq 2 \alpha$.  Moreover,

  \begin{align*}
       \beta &    = \log(q n) / \log \log(q n) 
    \inlongaln    = \frac{\log q + \log n}{\log(\log q + \log n)}
    \inlongaln \leq \frac{2 \log n}{\log \log n}
    \inlongaln    = 2 \alpha
  \end{align*}
  so $q \geq \beta$.  Thus, we can compute a canonical form $\can(\bar \cS)$ for the Hall composition series $(\bar G, \bar P, \bar \cS)$ in time

  \begin{align*}
   (q n)^{O(\log(q n) / \log \log(q n))} & = (q n)^{O(\log n / \log \log n)}
                                \inlongaln = n^{O(\log n / \log \log n)} \text{ since $q \leq n$}
  \end{align*}

  For some $\bar \bmg(\alpha) \in \bar \cG_\beta$, we have $\can(\bar \cS) = \can(X(\bar \cS, \bar \bmg(\beta)))$.  Let $\theta : X(\bar \cS, \bar \bmg(\beta)) \ra \can(X(\bar \cS, \bar \bmg(\beta)))$ be the isomorphism which was used to define $\lambda_{\bar G}$.  Let $I = \setb{\lambda_{\bar G}(\theta(x))}{x \in G} \subseteq [nq]$ and define $\rho_G : I \ra [n] : \lambda_{\bar G}(\theta(x)) \mapsto k$ where $\lambda_{\bar G}(\theta(x))$ is the \nth{k} smallest element of $I$.  Let $\psi_G = \rho_G \lambda_{\bar G} \theta$.  By \defref{hcomp-can}, $\can(\bar \cS)$ contains $\psi_G[P_i]$ and $\psi_G[P_{ij}]$ for each $i$ and $j$.  Moreover, we can extract a multiplication table $M(G)$ from $M(\bar G)$ by deleting all rows and columns except those which correspond to elements of $\psi_G[G]$ (which we can compute deterministically in the same manner as before).  Set $\can(\cS) = (M(G), \psi_G[P_1], \ldots, \psi_G[P_\ell], \psi_G[P_{10}], \ldots, \psi_G[P_{1, m_i}], \ldots, \psi_G[P_{\ell 0}], \ldots, \psi_G[P_{\ell, m_\ell}])$ as in \defref{hcomp-can}.

  Finally, we prove the correctness of our reduction from the general case.  Let $(H, \cQ, \cS')$ be a Hall composition series.  By construction, $\psi_G : G \ra [n]$ is an isomorphism from $G$ to the group described by the multiplication table $M(G)$.  If $\can(\cS) = \can(\cS')$ then $\psi_H^{-1} \psi_G$ is an isomorphism from $\cS$ to $\cS'$ so $\cS \cong \cS'$.  Suppose that $\cS \cong \cS'$.  This implies that $\bar \cS \cong \bar \cS'$ so that $\can(\bar \cS) = \can(\bar \cS')$.  Since we computed $\can(\cS)$ deterministically and this computation depends only on $\can(\bar \cS)$, $n$, $q$ and the composition length of each $P_i$, it follows that $\can(\cS) = \can(\cS')$.
\end{proof}

  \newpage
  \section{Variants of the generator-enumeration algorithm}
  \label{app:gen-rq-can}
  In this section, we derive more efficient randomized and quantum versions of the generator-enumeration algorithm using collision detection.  We also show how to modify the generator-enumeration algorithm for the group canonization problem.  These algorithms are based on sorting the rows and columns of the multiplication table~\cite{miller1978a} according to the ordering from \propref{gen-ord}.

\begin{definition}
  \label{defn:gen-mull-table}
  Let $G$ be a group and let $\bmg$ be an ordered generating set for $G$.  There is a unique order $\preceq_{\bmg}$ on the elements of $G$ by \propref{gen-ord}.  We relabel each element of $G$ by its position in the ordering $\preceq_{\bmg}$; the elements of $G$ are then labelled by $[n]$.  We reorder the rows and columns of the multiplication table of $G$ so that the elements for the rows and columns appear in the order $1, \ldots, n$ and denote the result by $M_{\bmg}$.
\end{definition}

\begin{lemma}
  \label{lem:gen-mull-table-perm}
  Let $G$ and $H$ be groups.  Let $\cG_\ell$ and $\cH_\ell$ be the collections of all ordered generating sets of $G$ and $H$ of size at most $\ell$ and define $M_\ell(G) = \setb{M_{\bmg}}{\bmg \in \cG_\ell}$.

  \begin{enumerate}
  \item For each $\bmg \in \cG$, the group defined by the multiplication table $M_{\bmg}$ is isomorphic to $G$.
  \item If $G \not\cong H$ then $M_\ell(G) \cap M_\ell(H) = \emptyset$.
  \item $M_\ell(G) = M_\ell(H)$ \ifft $G \cong H$.
  \item If $\phi : G \ra H$ is an isomorphism, then the map $\Phi : M_\ell(G) \ra M_\ell(H) : M_{\bmg} \mapsto M_{\bmh}$ (where $\bmh = \phi(\bmg)$ is the vector obtained by applying $\phi$ to each element of $\bmg$) is the identity.
  \end{enumerate}
\end{lemma}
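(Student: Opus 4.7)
The entire lemma hinges on a single observation: \propref{gen-ord} guarantees that the total order $\preceq_{\bmg}$ on $G$ depends only on $\bmg$, and that an isomorphism $\phi : G \ra H$ with $\phi(\bmg) = \bmh$ satisfies $x \preceq_{\bmg} y$ \ifft $\phi(x) \preceq_{\bmh} \phi(y)$. I would first unpack \defref{gen-mull-table} to record the explicit bijection $\sigma_{\bmg} : G \ra [n]$ which sends each $g \in G$ to its rank in $\preceq_{\bmg}$, so that by construction $(M_{\bmg})_{\sigma_{\bmg}(x), \sigma_{\bmg}(y)} = \sigma_{\bmg}(xy)$ for all $x, y \in G$. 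Part (1) is then immediate: $\sigma_{\bmg}$ is an isomorphism from $G$ to the group encoded by $M_{\bmg}$.

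Next I would prove part (4), which does the real work. Given an isomorphism $\phi : G \ra H$ and $\bmg \in \cG_\ell$, the image $\bmh = \phi(\bmg)$ is an ordered generating set for $H$ of the same size, so $\bmh \in \cH_\ell$. The order-preservation clause of \propref{gen-ord} says $\sigma_{\bmh} \circ \phi = \sigma_{\bmg}$, so for all $x, y \in G$,
\begin{equation*}
  (M_{\bmh})_{\sigma_{\bmg}(x), \sigma_{\bmg}(y)} = (M_{\bmh})_{\sigma_{\bmh}(\phi(x)), \sigma_{\bmh}(\phi(y))} = \sigma_{\bmh}(\phi(xy)) = \sigma_{\bmg}(xy) = (M_{\bmg})_{\sigma_{\bmg}(x), \sigma_{\bmg}(y)}.
\end{equation*}
Hence $M_{\bmg} = M_{\bmh} = M_{\phi(\bmg)}$, which says exactly that $\Phi$ fixes every element of $M_\ell(G)$ pointwise.

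Parts (2) and (3) fall out as corollaries. For (3), the reverse direction follows from (4): if $\phi : G \ra H$ is an isomorphism then $\phi$ induces a bijection $\cG_\ell \ra \cH_\ell$ via $\bmg \mapsto \phi(\bmg)$, and (4) shows each $M_{\bmg}$ coincides with $M_{\phi(\bmg)}$, so $M_\ell(G) \subseteq M_\ell(H)$; applying the same argument to $\phi^{-1}$ gives the other containment. For (2) (equivalently, the forward direction of (3)), suppose $M_{\bmg} = M_{\bmh}$ for some $\bmg \in \cG_\ell$ and $\bmh \in \cH_\ell$; by (1) both $G$ and $H$ are isomorphic to the common group defined by this shared multiplication table, so $G \cong H$. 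The contrapositive is (2).

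\textbf{Anticipated obstacle.} There is no substantive obstacle; the argument is bookkeeping built on top of \propref{gen-ord}. The only point requiring a little care is verifying that $\bmh = \phi(\bmg)$ genuinely lies in $\cH_\ell$ (it is ordered, generates $H$ since $\phi$ is surjective, and has the same length as $\bmg$) and that the equality $\sigma_{\bmh} \circ \phi = \sigma_{\bmg}$ is read off correctly from the order-preservation clause of \propref{gen-ord} — after that, the rest of the lemma is a formal consequence.
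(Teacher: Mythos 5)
Your proposal is correct and takes essentially the same route as the paper: both prove the core identity $M_{\bmg} = M_{\phi(\bmg)}$ by applying the order-preservation clause of \propref{gen-ord} and then derive (1)--(3) as formal consequences. Your organization (proving (4) first, then extracting the rest) is slightly cleaner than the paper's presentation, but there is no substantive difference in argument.
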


\begin{proof}
  Observe that for any $\bmg \in \cG_\ell$, the group defined by the multiplication table $M_{\bmg}$ is simply the multiplication table of $G$ up to the relabeling performed in \defref{gen-mull-table}.  Thus, this group is isomorphic to $G$ which proves (a).  For part (b), if $G \not\cong H$ but $M \in M_\ell(G) \cap M_\ell(H)$ then $G$ would be isomorphic to the group defined by the multiplication table $M$ which is also isomorphic to $H$.  For part (c), we already know from (b) that if $G \not\cong H$ then $M_\ell(G) \cap M_\ell(H) = \emptyset$ so that $M_\ell(G) \not= M_\ell(H)$.  For the converse, we argue as follows.

  Fix an isomorphism $\phi : G \ra H$ and define $\Phi$ as in the statement of the lemma.  We claim that for every ordered generating set $\bmg$ of $G$, we have $M_{\bmg} = M_{\bmh}$ where $\bmh = \phi(\bmg)$.  We know from \propref{gen-ord} that for $x, y \in G$, $x \preceq_{\bmg} y$ \ifft $\phi(x) \preceq_{\bmh} \phi(y)$.  Since $\phi(x) \phi(y) = \phi(x y)$ as $\phi$ is an isomorphism, it follows that $M_{\bmg} = M_{\bmh}$ so $\Phi(M_{\bmg}) = M_{\bmg}$.  Thus, $\Phi$ is the identity and therefore a well-defined injection.  It remains only to show that $M_\ell(G)$ is not a proper subset of $M_\ell(H)$.  Consider the map $\Phi' : M_\ell(H) \ra M_\ell(G) : M_{\bmh} \mapsto M_{\bmg}$ where $\bmg = \phi^{-1}(\bmh)$.  The same argument implies that $\Phi'$ is an injection which shows that $\abs{M_\ell(H)} \leq \abs{M_\ell(G)}$.  It follows that $M_\ell(G) = M_\ell(H)$ which proves (c) and (d).
\end{proof}

\begin{theorem}
  \label{thm:gen-drq}
  Let $G$ and $H$ be groups.  Then we can test if $G \cong H$ in 
  
  \begin{enumerate}
  \item $n^{\log_p n + O(1)}$ time for deterministic algorithms
  \item $n^{(1 / 2) \log_p n + O(1)}$ time for randomized algorithms
  \item $n^{(1 / 3) \log_p n + O(1)}$ time for quantum algorithms
  \end{enumerate}
  where $p$ is the smallest prime dividing the order of the group.
\end{theorem}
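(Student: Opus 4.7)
The plan is to apply the same collision-detection framework used in the proofs of Theorems~\ref{thm:group-red-rand-comp} and \ref{thm:group-red-q-comp}, this time to the space of ordered generating sets of size at most $\ell = \log_p n$.  Since every group of order $n$ admits a generating set of size at most $\log_p n$ (where $p$ is the smallest prime dividing $n$), the collections $\cG_\ell$ and $\cH_\ell$ from \lemref{gen-mull-table-perm} each have cardinality at most $n^{\log_p n}$, and the canonical multiplication table $M_{\bmg}$ of any $\bmg \in \cG_\ell$ can be produced in polynomial time using \propref{gen-ord}.

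For part (a), I would fix any single $\bmg_0 \in \cG_\ell$ (located greedily), compute $M_{\bmg_0}$, and then iterate over the at most $n^{\log_p n}$ ordered $\ell$-tuples of elements of $H$, discarding the non-generating ones, and testing $M_{\bmh} = M_{\bmg_0}$ for each generating $\bmh$.  By \lemref{gen-mull-table-perm} part (d), any isomorphism $\phi : G \ra H$ satisfies $M_{\phi(\bmg_0)} = M_{\bmg_0}$, so a match is certified whenever $G \cong H$; conversely part (b) of the same lemma forbids any spurious match when $G \not\cong H$.  This yields the claimed $n^{\log_p n + O(1)}$ bound.

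For part (b), the deterministic enumeration is replaced by a birthday-style sampling argument: draw random subsets $S \subseteq \cG_\ell$ and $T \subseteq \cH_\ell$ each of size $n^{(1/2)\log_p n + O(1)}$ (e.g.\ by rejection sampling from uniform random $\ell$-tuples), compute the canonical tables $\{M_{\bmg} : \bmg \in S\}$ and $\{M_{\bmh} : \bmh \in T\}$, sort them, and merge to check for a common entry.  If $G \cong H$, then any isomorphism $\phi$ induces, via $\bmg \mapsto \phi(\bmg)$, a bijection $\cG_\ell \ra \cH_\ell$ that preserves the canonical table by \lemref{gen-mull-table-perm} part (d), so the standard birthday estimate guarantees that $S$ meets the $\phi$-preimage of $T$ with constant probability; $O(1)$ independent repetitions amplify this to high probability.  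For part (c), I would instead invoke the quantum claw-detection procedure of~\cite{bassard1997b} on the pair of functions $\bmg \mapsto M_{\bmg}$ and $\bmh \mapsto M_{\bmh}$, exactly as in the proof of \thmref{group-red-q-comp}; the existence of an isomorphism yields $|\cG_\ell|$ matching pairs, so the algorithm locates one in $O(|\cG_\ell|^{1/3}) = n^{(1/3)\log_p n + O(1)}$ quantum time.  The main subtlety, and the reason the argument needs \lemref{gen-mull-table-perm} part (d) rather than just parts (a)--(c), is that the bijection $\cG_\ell \ra \cH_\ell$ induced by an isomorphism must preserve the invariant $M_{\bmg}$ exactly; this is precisely what part (d) provides, and ultimately hinges on the fact that the order $\preceq_{\bmg}$ of \propref{gen-ord} depends only on $\bmg$ and not on any auxiliary structure.
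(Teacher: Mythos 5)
Your proposal matches the paper's proof in all essential respects: it uses the canonical multiplication tables $M_{\bmg}$ from \defref{gen-mull-table}, invokes the invariance property of \lemref{gen-mull-table-perm} to get the bijection $\cG_\ell \ra \cH_\ell$ that powers the birthday bound, sorts and merges canonical tables for the randomized case, and invokes quantum claw detection~\cite{bassard1997b} for the quantum case. The only cosmetic difference is that you re-derive part (a) through the $M_{\bmg}$ machinery (and correctly pinpoint part (d) of the lemma as the load-bearing fact for the collision count, where the paper cites part (c)), whereas the paper just appeals directly to the classical generator-enumeration algorithm; both amount to the same thing.
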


\begin{proof}
  Part (a) is immediate from the generator-enumeration algorithm.  For part (b), we modify the generator-enumeration algorithm to use \defref{gen-mull-table} so that our collision argument applies.

  Let $\cG_\ell$ and $\cH_\ell$ be the collections of all ordered generating sets of $G$ and $H$ of size at most $\ell = \log_p n$ and define $M_\ell(G) = \setb{M_{\bmg}}{\bmg \in \cG_\ell}$.  We already know from part (c) of \lemref{gen-mull-table-perm} that $M_\ell(G) = M_\ell(H)$ \ifft $G \cong H$.  We guess subsets $A$ and $B$ of $\cG_\ell$ and $\cH_\ell$ of size $n^{(1 / 2) \log_p n + O(1)}$.  By a collision detection argument, if $G \cong H$ then there exist $\bmg \in A$ and $\bmh \in B$ such that $M_{\bmg} = M_{\bmh}$ with high probability.  

We compute $M_{\bmg}$ and $M_{\bmh}$ for all $\bmg \in A$ and $\bmh \in B$ and store the results in the lists $\bmA$ and $\bmB$ respectively.  Since each $M_{\bmg}$ and $M_{\bmh}$ is an $n \times n$ matrix with elements in $[n]$, we can sort these lists lexicographically.  This takes only $n^{(1 / 2) \log_p n + O(1)} ((1 / 2) \log_p n +O(1)) \log n = n^{(1 / 2) \log_p n + O(1)}$ time.  Once $\bmA$ and $\bmB$ have been sorted, we can test if there exist $\bmg \in A$ and $\bmh \in B$ such that $M_{\bmg} = M_{\bmh}$ by merging $\bmA$ and $\bmB$ and checking for duplicates.  This yields our $n^{(1 / 2) \log_p n + O(1)}$ randomized generator-enumeration algorithm which proves part (b).  For part (c), we use quantum claw detection~\cite{bassard1997b} instead of guessing the subsets $A$ and $B$ uniformly at random.
\end{proof}

The argument of \thmref{gen-drq} can also be used to obtain an $n^{\log_p n + O(1)}$ algorithm for group canonization.  First, we define the group canonization problem precisely.

\begin{definition}
  \label{defn:group-invar}
  Let $G$ be a group $G$.  We say that $\invar(S)$ is a complete polynomial-size invariant for $G$ if the following hold:
  
  \begin{enumerate}
  \item If $G$ and $H$ are groups then $G \cong H$ \ifft $\invar(G) = \invar(H)$.
  \item $\abs{\invar(G)} = \poly(n)$
  \end{enumerate}
\end{definition}

\begin{definition}
  \label{defn:group-can}
  Let $G$ be a group.  We define the canonical form $\can(G)$ to be an $n \times n$ matrix such that:

  \begin{enumerate}
  \item The entries of $\can(G)$ are elements of $[n]$.
  \item $\can(G)$ is the multiplication table for a group which is isomorphic to $G$.
  \item $\can(G)$ is a complete polynomial-size invariant.
  \end{enumerate}
\end{definition}

In the \emph{group canonization problem}, we are given a group $G$ and must compute $\can(G)$.

\begin{theorem}
  \label{thm:gen-can}
  Let $G$ be a group.  We can compute a canonical form $\can(G)$ for $G$ in $n^{\log_p n + O(1)}$ time where $p$ is the smallest prime dividing the order of the group.
\end{theorem}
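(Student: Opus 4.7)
The plan is to define $\can(G)$ to be the lexicographically smallest element of $M_\ell(G) = \setb{M_{\bmg}}{\bmg \in \cG_\ell}$, where $\ell = \log_p n$ and $\cG_\ell$ is the collection of ordered generating sets of $G$ of size at most $\ell$ as in \lemref{gen-mull-table-perm}. Since every group has an ordered generating set of size at most $\log_p n$, the set $M_\ell(G)$ is nonempty and $\can(G)$ is well-defined. Lexicographic comparison is unambiguous because, by \defref{gen-mull-table}, each $M_{\bmg}$ is an $n \times n$ matrix with entries in $[n]$ arising from a canonical relabeling of $G$ by $\preceq_{\bmg}$.

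Verifying the three properties of \defref{group-can} reduces to reading off \lemref{gen-mull-table-perm}. Property~(1) is immediate from \defref{gen-mull-table}; property~(2) follows from part~(a) of the lemma, which asserts that the group defined by any $M_{\bmg}$ is isomorphic to $G$. For property~(3), the polynomial-size bound $\abs{\can(G)} = O(n^2 \log n)$ is clear, and completeness of the invariant splits into two directions: if $G \cong H$, then part~(d) of \lemref{gen-mull-table-perm} says that the isomorphism induces the identity map $M_\ell(G) \to M_\ell(H)$, so $M_\ell(G) = M_\ell(H)$ as sets and their lex-minima coincide, giving $\can(G) = \can(H)$; conversely, if $\can(G) = \can(H)$, then $M_\ell(G) \cap M_\ell(H) \neq \emptyset$, which by part~(b) of the lemma forces $G \cong H$.

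For the runtime, I would enumerate all ordered tuples of elements of $G$ of length at most $\log_p n$; there are $\sum_{k=1}^{\log_p n} n^k = n^{\log_p n + O(1)}$ such tuples. For each tuple $\bmg$, first test in polynomial time whether $\bmg$ generates $G$ (discarding it otherwise), then construct $M_{\bmg}$ by sorting the $n$ elements of $G$ according to $\preceq_{\bmg}$ and permuting the rows and columns of the input multiplication table accordingly. By \propref{gen-ord}, a single comparison under $\preceq_{\bmg}$ is polynomial-time computable, so the full sort and relabeling take $\poly(n)$ time. Maintaining the running lexicographic minimum costs $\poly(n)$ per update. Hence the overall running time is $n^{\log_p n} \cdot \poly(n) = n^{\log_p n + O(1)}$, as required.

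The proof has essentially no obstacle once \lemref{gen-mull-table-perm} is in hand; the crucial content is part~(d) of that lemma, which guarantees that $M_\ell(G)$ is an isomorphism-invariant \emph{set} of multiplication tables rather than merely a multiset parameterized by generating sequences. Without this symmetry one could only use $M_\ell(G)$ to extract an invariant (e.g., a hash), not to pick out an actual canonical multiplication table. Given the lemma, taking the lex-minimum over $M_\ell(G)$ is the natural canonical choice, and the runtime matches the generator-enumeration bound because the enumeration itself dominates all polynomial-time per-candidate work.
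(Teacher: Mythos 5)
Your proposal matches the paper's proof essentially verbatim: both define $\can(G)$ as the lexicographically least element of $M_\ell(G)$ with $\ell = \log_p n$, both verify completeness via the same equivalence $M_\ell(G) = M_\ell(H) \Leftrightarrow G \cong H$ from \lemref{gen-mull-table-perm} (the paper cites parts (a) and (c) where you cite (b) and (d), but these are the same facts packaged differently), and both charge $n^{\log_p n + O(1)}$ for enumeration with $\poly(n)$ per-tuple work. No gaps; the reasoning is sound.
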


\begin{proof}
  Let $\cG_\ell$ and $\cH_\ell$ be the collections of all ordered generating sets of $G$ and $H$ of size at most $\ell = \log_p n$ and define $M_\ell(G) = \setb{M_{\bmg}}{\bmg \in \cG_\ell}$.  We compute $M_\ell(G)$ and store the results in a list $\bmA$; this takes $n^{\log_p n + O(1)}$ time.  Because each $M_{\bmg}$ is an $n \times n$ matrix with entries in $[n]$, we can sort $\bmA$ lexicographically in $n^{\log_p n + O(1)}$ time.  Let $M_{\bmg}$ be the first element of $\bmA$.  We define $\can(G) = M_{\bmg}$.  This is the lexicographically least element of $M_\ell(G)$.  We claim that $G \cong H$ \ifft $\can(G) = \can(H)$.

  There exist $\bmg \in \cG_\ell$ and $\bmh \in \cH_\ell$ such that $M_{\bmg} = \can(G)$ and $M_{\bmh} = \can(H)$.  If $\can(G) = \can(H)$, then by part (a) of \lemref{gen-mull-table-perm}, $G$ is isomorphic to the group defined by the multiplication table $M_{\bmg} = M_{\bmh}$ which is isomorphic to $H$.  Conversely, if $G \cong H$ then part (c) of \lemref{gen-mull-table-perm} implies that $M_\ell(G) = M_\ell(H)$ so that $\can(G) = \can(H)$.
\end{proof}

  \newpage
  \section{Algorithms for group canonization}
  \label{app:group-can}
  In this section, we adapt the deterministic variants of our algorithms for group canonization (see \defref{group-can}).  First, we prove an analogue of \thmref{group-red-comp}.

\begin{theorem}
  \label{thm:group-red-comp-can}
  Group canonization is $n^{(1 / 2) \log_p n + O(1)}$ time deterministic Turing reducible to composition-series canonization where $p$ is the smallest prime dividing the order of the group.
\end{theorem}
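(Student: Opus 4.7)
The plan is to mirror the Turing reduction of \thmref{group-red-comp}, but instead of comparing a fixed composition series of $G$ against all composition series of $H$ under the isomorphism-testing oracle, we will run the canonization oracle on every composition series of $G$ produced by \algref{group-comp} and aggregate the results into a single canonical multiplication table for $G$.

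First, I would enumerate the set $\cA(G)$ of all composition series $S$ for $G$ obtained from \algref{group-comp} by varying the choices of socle decompositions. By \lemref{num-comp-choices}, $\abs{\cA(G)} \leq n^{(1/2) \log_p n + O(1)}$, and each element of $\cA(G)$ can be written down in polynomial time. For each $S \in \cA(G)$, invoke the composition-series canonization oracle to obtain $\can(S) = (M_S, \psi_S(G_0), \ldots, \psi_S(G_{m_S}))$ as in \defref{comp-can}. Sort the resulting tuples lexicographically, let $\can(S^*)$ be the lexicographically smallest one, and define $\can(G) = M_{S^*}$.

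The correctness argument has two directions. For the invariant property, suppose $G \cong H$ and fix an isomorphism $\phi : G \ra H$. By \lemref{all-comp-choice}, for every $S \in \cA(G)$ there is some $S' \in \cA(H)$ such that $\phi$ is an isomorphism from $S$ to $S'$, and the same lemma applied to $\phi^{-1}$ shows the correspondence is a bijection $\cA(G) \ra \cA(H)$. Since $S \cong S'$ implies $\can(S) = \can(S')$ by \defref{comp-can}, the sorted sequences of canonical forms obtained from $\cA(G)$ and $\cA(H)$ are identical, so $\can(G) = \can(H)$. Conversely, if $\can(G) = \can(H)$, then by extracting the underlying multiplication tables we have $M_{S^*} = M_{T^*}$ for some $S^* \in \cA(G)$ and $T^* \in \cA(H)$, and the fact that each $M_S$ is a multiplication table isomorphic to $G$ (from \defref{comp-can}) gives $G \cong H$. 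The polynomial-size property is immediate since $\can(G)$ is just an $n \times n$ matrix with entries in $[n]$.

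For the runtime, the bottleneck is the $n^{(1/2)\log_p n + O(1)}$ calls to the canonization oracle, together with the sorting step, which takes $n^{(1/2)\log_p n + O(1)} \cdot \poly(n) = n^{(1/2)\log_p n + O(1)}$ time. The main subtlety I anticipate is ensuring that the extracted multiplication table indeed serves as a well-defined canonical form: since the oracle produces $M_S$ only up to the relabeling $\psi_S$, one has to verify that comparing the full tuples $(M_S, \psi_S(G_0), \ldots, \psi_S(G_{m_S}))$ lexicographically (rather than just $M_S$) is what guarantees the invariant property, which is exactly what \defref{comp-can} provides. Everything else is routine bookkeeping parallel to the proof of \thmref{group-red-comp}.
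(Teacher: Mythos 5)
Your proposal is correct and mirrors the paper's proof essentially step for step: enumerate the $n^{(1/2)\log_p n + O(1)}$ composition series produced by \algref{group-comp} over all socle decompositions, canonize each via the oracle, sort the resulting canonical-form tuples lexicographically, and output the multiplication table from the smallest one, with correctness following from \lemref{all-comp-choice} (applied in both directions to get the bijection $\cA(G) \lra \cA(H)$) and \defref{comp-can}. The ``subtlety'' you flag at the end is a non-issue since $\can(G)$ is already defined via the tuple comparison, but raising it does no harm; the argument is sound and matches the paper's.
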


\begin{proof}
  We compute all of the $n^{(1 / 2) \log_p n + O(1)}$ composition series for $G$ which arise from some choice of socle decompositions in \algref{group-comp} by \lemref{num-comp-choices}.  For each such composition series $S$, we compute canonical form $\can(S)$ and store the result in a list $\bmA$.  We sort $\bmA$ lexicographically in $n^{(1 / 2) \log_p n + O(1)}$ time and select the first element $\can(S)$ of $\bmA$.  By \defref{comp-can}, $\can(S)$ contains a multiplication table $M$ for a group with the underlying set $[n]$ which is isomorphic to $G$.  We define $\can(G) = M(G)$ and claim that this is a canonical form for $G$.

  Let $H$ be a group and suppose that $\can(G) = \can(H)$; then $G$ and $H$ are both isomorphic to the group described by the multiplication table $M(G) = M(H)$ so $G \cong H$.  Conversely, assume that $G \cong H$.  Then for every composition series $S$ for $G$ which is computed by \algref{group-comp} for some choice of socle decompositions, there exists a choice of socle decompositions such that \algref{group-comp} computes a composition series $S'$ for $H$ which is isomorphic to $S$ by \lemref{all-comp-choice}.  Similarly, for every composition series $S'$ for $H$ which is computed by \algref{group-comp} for some choice of socle decompositions, there exists a choice of socle decompositions such that \algref{group-comp} computes a composition series $S$ for $G$ which is isomorphic to $S'$ by \lemref{all-comp-choice}.  It follows that the list $\bmA$ of canonical forms $\can(S)$ of composition series for $G$ is equal to the list $\bmB$ of canonical forms $\can(S')$ of composition series for $H$.  Therefore, $\can(G) = \can(H)$.
\end{proof}

This yields an algorithm for $p$-group canonization.

\begin{theorem}
  \label{thm:p-group-can}
  Let $G$ be a $p$-group.  Then we can compute the canonical form $\can(G)$ for $G$ in $n^{(1 / 2) \log n + O(1)}$ deterministic time.
\end{theorem}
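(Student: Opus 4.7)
The plan is to mirror the proof of part (a) of \thmref{p-group-iso-drq}, but with canonization in place of isomorphism throughout. First I would invoke \thmref{group-red-comp-can} to reduce $p$-group canonization to composition-series canonization at a cost of $n^{(1/2)\log_p n + O(1)}$ calls to the canonizer. Since $G$ is a $p$-group, every composition factor has order exactly $p$, so I can take $\alpha = p$ in \thmref{comp-can} and each composition-series canonization call runs in $n^{O(p)}$ deterministic time. Composing these two bounds, one obtains an $n^{(1/2)\log_p n + O(p)}$ algorithm for $p$-group canonization.

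Next, I would handle the tradeoff between $p$ and $n$ exactly as in \thmref{p-group-iso-drq}. There exist positive constants $c_1 = 1/2$ and $c_2 > 0$ so that the runtime above has the form $n^{c_1 \log_p n + c_2 p}$. By \lemref{small-p-bound}, when $2 \leq p \leq c_1 \ln n / c_2 \ln^2 p$ this is bounded by $n^{c_1 \log n + O(1)} = n^{(1/2)\log n + O(1)}$. For $p > c_1 \ln n / c_2 \ln^2 p$, I would instead run the generator-enumeration canonization algorithm from \thmref{gen-can}, whose runtime is $n^{\log_p n + O(1)}$; by \lemref{large-p-bound} this is $n^{O(\log n / \log \log n)}$, which is comfortably within the desired bound. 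Choosing the faster of the two algorithms based on $p$ and $n$ yields an overall deterministic runtime of $n^{(1/2)\log n + O(1)}$.

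The main conceptual point to verify is that the output produced by this dichotomy is a single well-defined canonical form in the sense of \defref{group-can}, not merely an isomorphism tester. On the ``small $p$'' branch, \thmref{group-red-comp-can} produces $\can(G)$ by definition. On the ``large $p$'' branch, \thmref{gen-can} likewise produces a canonical form. Since the choice between the two branches is determined purely by $p$ and $n$, and since isomorphic groups $G \cong H$ have the same $p$ and $n$, both $\can(G)$ and $\can(H)$ are computed by the same branch, so the complete-invariant property is preserved.

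I do not anticipate a serious obstacle: all the technical work has been done in \thmref{group-red-comp-can}, \thmref{comp-can}, \thmref{gen-can}, and Lemmas \ref{lem:small-p-bound}--\ref{lem:large-p-bound}. The only mild subtlety is making sure the two canonical forms agree on the isomorphism class of $G$ within each regime, which follows immediately from the fact that branch selection is a function of $n$ alone (not of the specific group).
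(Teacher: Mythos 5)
Your proposal is correct and follows essentially the same route as the paper: compose \thmref{group-red-comp-can} and \thmref{comp-can} to get an $n^{(1/2)\log_p n + O(p)}$ canonizer, then dispatch on the size of $p$ via Lemmas~\ref{lem:small-p-bound} and \ref{lem:large-p-bound}, falling back on \thmref{gen-can} when $p$ is large. Your explicit check that the two-branch dichotomy still yields a single complete invariant (because the branch is a function of $n$ alone and the matrix dimensions rule out cross-branch collisions) is a small but welcome addition that the paper leaves implicit.
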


\begin{proof}
  By Theorems~\ref{thm:group-red-comp-can} and \ref{thm:comp-can}, we can perform $p$-group canonization in $n^{(1 / 2) \log_p n + c p}$ time for some constant $c > 0$.  Let us call this algorithm $A$.  From \lemref{small-p-bound}, we know that $n^{(1 / 2) \log_p n + c p} = n^{(1 / 2) \log n + O(1)}$ for $2 \leq p \leq \ln n / 2 c \ln^2 p$.  \lemref{large-p-bound} implies that $n^{\log_p n + O(1)} = n^{O(\log n / \log \log n)}$ for $p \geq  \ln n / 2 c \ln^2 p$.  We can define an algorithm which runs $A$ when $2 \leq p \leq \ln n / 2 c \ln^2 p$ and runs the algorithm of \thmref{gen-can} when $p > \ln n / 2 c \ln^2 p$.  The overall complexity is then $n^{(1 / 2) \log n + O(1)}$.
\end{proof}

Now we adapt our deterministic algorithm for solvable groups to perform canonization.

\begin{theorem}
  \label{thm:sol-red-hcomp-can}
  Solvable-group canonization is $n^{(1 / 2) \log_p n + O(1)}$ time deterministic Turing reducible to Hall composition-series canonization where $p$ is the smallest prime dividing the order of the group.
\end{theorem}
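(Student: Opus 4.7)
The plan is to mirror the proof of \thmref{group-red-comp-can}, substituting Hall composition series (as developed in \secref{hcomp-red}) for ordinary composition series and using the corresponding Hall-variant bounds. The proof of \thmref{sol-red-hcomp} already carried out this transfer for the isomorphism problem; the only new ingredient needed is the lexicographic selection that turned composition-series canonization into group canonization in \thmref{group-red-comp-can}.

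First, I would use \lemref{comp-syl-basis-poly} to compute a single Sylow basis $\cP = \setb{P_i}{1 \leq i \leq \ell}$ of $G$ in deterministic polynomial time. Then I would enumerate every Hall composition series $(G, \cP, \cS)$ obtainable by running \algref{group-comp} independently on each $P_i$ over every possible socle-decomposition choice; by \lemref{num-hcomp-choices}, the number of such series is at most $n^{(1/2) \log_p n + O(1)}$, where $p$ is the smallest prime dividing $n$.

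For each enumerated $(G, \cP, \cS)$, I would invoke the oracle for Hall composition-series canonization to obtain $\can(\cS)$, collect these invariants in a list $\bmA$, sort $\bmA$ lexicographically (still within the $n^{(1/2) \log_p n + O(1)}$ budget since each $\can(\cS)$ has polynomially bounded size by \defref{hcomp-can}), and take the lexicographically smallest entry. By \defref{hcomp-can}, that entry contains a multiplication table $M$ on $[n]$ for a group isomorphic to $G$, so I would set $\can(G) \coleq M$.

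The only real obstacle is verifying that this output is a valid canonical form. The direction $\can(G) = \can(H) \Ra G \cong H$ is immediate: both are then isomorphic to the group encoded by the common multiplication table. The converse is the delicate step, because I enumerate Hall composition series only for the single Sylow basis $\cP$ chosen for $G$ (and similarly $\cQ$ for $H$), rather than over all Sylow bases. To handle this, I would apply \corref{syl-iso} to produce an isomorphism $\phi : G \ra H$ that sends $\cP$ to $\cQ$, and then apply \lemref{all-comp-choice-syl} in both directions to exhibit a bijection between the enumerated Hall composition series for $G$ and those for $H$ under which corresponding pairs are isomorphic via $\phi$. Since isomorphic Hall composition series receive identical canonical forms, the two sorted lists produced for $G$ and $H$ coincide as multisets, so their lexicographically smallest entries agree, yielding $\can(G) = \can(H)$.
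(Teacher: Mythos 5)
Your proposal is correct and structurally parallels the paper's proof, but you make one genuine simplification worth noting: the paper enumerates Hall composition series over \emph{all} Sylow bases of $G$ (all obtainable in polynomial time by conjugation via \thmref{sol-conj}, and the extra factor of $n$ is already absorbed in the count of \lemref{num-hcomp-choices}), whereas you fix a single Sylow basis $\cP$ and only vary the socle decompositions. With the paper's all-bases enumeration, invariance is almost immediate: given any isomorphism $\phi : G \to H$, for \emph{any} Sylow basis of $G$ the image $\phi[\cP]$ is a Sylow basis of $H$, so the two lists trivially coincide. With your single-basis version, the computed $\cP$ for $G$ and $\cQ$ for $H$ bear no a priori relationship under a generic isomorphism, so you must instead appeal to \corref{syl-iso} to manufacture an isomorphism sending $\cP$ to $\cQ$; only then does \lemref{all-comp-choice-syl} give equality of the two lists. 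Both routes are valid and give the same $n^{(1/2)\log_p n + O(1)}$ bound; yours does fewer oracle calls by a factor of $n$ and keeps the preprocessing smaller, at the cost of needing \corref{syl-iso} at this point in the argument.
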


\begin{proof}
  Suppose that we have an algorithm for Hall composition-series canonization.  Let $G$ be a solvable group whose order has the prime factorization $n = \prod_{i = 1}^\ell p_i^{e_i}$.  The result follows from the same argument as \thmref{group-red-comp-can} except that we use Hall composition series instead of composition series.  Let $p$ be the smallest prime which divides the order of $G$.  By \lemref{num-hcomp-choices}, the number of Hall composition series $(G, \cP, \cS)$ for $G$ for some choice of Sylow basis $\cP = \setb{P_i}{1 \leq i \leq \ell}$ and some choice of socle decompositions in \algref{group-comp} for computing each composition series $S_i \in \cS$ for $P_i \in \cP$ is at most $n^{(1 / 2) \log_p n + O(1)}$.  We compute the canonical form of every such Hall composition series $(G, \cP, \cS)$ and store the results in the list $\bmA$.  We sort $\bmA$ lexicographically and select the first element $\can(\cS)$ of $\bmA$.  Recall from \defref{hcomp-can} that $\can(\cS)$ contains a multiplication table $M(G)$ for a group with the underlying set $[n]$ which is isomorphic to $G$.  Let us define $\can(G) = M(G)$.
  
  We can compute a Sylow basis $\cP = \setb{P_i}{1 \leq i \leq \ell}$ for $G$ in $\poly(n)$ time by \lemref{comp-syl-basis-poly}.  All Sylow bases of $G$ are conjugate by \thmref{sol-conj} so once we know a Sylow bases of $G$ the others can be found in polynomial time.  It follows that $\can(G)$ can be computed using $\poly(n)$ preprocessing time and $n^{(1 / 2) \log_p n + O(1)}$ calls to the algorithm for Hall composition-series canonization.
  
  Let $H$ be a solvable group.  We claim that $G \cong H$ \ifft $\can(G) = \can(H)$.  If $\can(G) = \can(H)$ then it follows that $G$ and $H$ are both isomorphic to the group described by the multiplication table $M(G) = M(H)$ so $G \cong H$.  Conversely, let us suppose that $G \cong H$ and fix an isomorphism $\phi : G \ra H$.  For any Sylow basis $\cP = \setb{P_i}{1 \leq i \leq \ell}$ of $G$, there is a Sylow basis $\cQ = \setb{Q_i}{1 \leq i \leq \ell}$ of $H$ such that $\phi$ is an isomorphism from $\cP$ to $\cQ$.  Consider a set $\cS = \setb{S_i}{1 \leq i \leq \ell}$ where each $S_i$ is a composition series for $P_i$ which arises from some choice of socle decompositions in \algref{group-comp}.  Then \lemref{all-comp-choice-syl} implies that there exists a set $\cS' = \setb{S_i'}{1 \leq i \leq \ell}$ where each $S_i'$ is a composition series for $Q_i$ which arises from some choice of socle decompositions in \algref{group-comp} such that $\phi$ is an isomorphism from each $S_i$ to $S_i'$.  Therefore, $\cS \cong \cS'$ so for any Hall composition series that can be constructed for $G$ for some choice of Sylow basis and some choice of socle decompositions, there exists a choice of Sylow basis and some choice of socle decompositions which yields an isomorphic Hall composition series for $H$.  The same argument shows for any Hall composition series that can be constructed for $H$ for some choice of Sylow basis and some choice of socle decompositions, there exists a choice of Sylow basis and some choice of socle decompositions which yields an isomorphic Hall composition series for $G$.  It follows that the list $\bmA$ of the canonical forms of the Hall composition series for $G$ contains the same elements as the list $\bmB$ of the canonical forms of the Hall composition series for $H$.  Therefore, after we sort $\bmA$ and $\bmB$, the two lists are equal so $\can(G) = \can(H)$.
\end{proof}

\begin{theorem}
  \label{thm:sol-group-can}
  Let $G$ be a solvable group.  Then we can compute the canonical form $\can(G)$ for $G$ in $n^{(1 / 2) \log_p n + O(\log n / \log \log n)}$ deterministic time.
\end{theorem}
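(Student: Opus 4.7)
The plan is to combine the two results we have just developed, in direct parallel with the proof of Theorem~\ref{thm:sol-group-iso-drq}. Specifically, I will apply the deterministic Turing reduction from solvable-group canonization to Hall composition-series canonization (Theorem~\ref{thm:sol-red-hcomp-can}) together with the deterministic algorithm for Hall composition-series canonization (Theorem~\ref{thm:sol-hcomp-can}).

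First, I would invoke Theorem~\ref{thm:sol-red-hcomp-can} to reduce the computation of $\can(G)$ to at most $n^{(1/2) \log_p n + O(1)}$ calls to an oracle for Hall composition-series canonization, together with $\poly(n)$ preprocessing. Then for each such oracle call, I would invoke Theorem~\ref{thm:sol-hcomp-can} which computes the canonical form $\can(\cS)$ of any Hall composition series $(G, \cP, \cS)$ in $n^{O(\log n / \log \log n)}$ deterministic time. Multiplying the number of calls by the cost of each call gives
\[
n^{(1/2) \log_p n + O(1)} \cdot n^{O(\log n / \log \log n)} = n^{(1/2) \log_p n + O(\log n / \log \log n)},
\]
as claimed.

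There is no substantial technical obstacle here since all the heavy lifting has already been done: Theorem~\ref{thm:sol-red-hcomp-can} handles correctness and the enumeration of Sylow bases and socle decompositions, while Theorem~\ref{thm:sol-hcomp-can} handles the graph-canonization step with the generator-fixing trick that keeps the degree at $O(\log n / \log \log n)$. The only small point to verify is that the $\poly(n)$ preprocessing in Theorem~\ref{thm:sol-red-hcomp-can} (computing a Sylow basis via Lemma~\ref{lem:comp-syl-basis-poly} and its conjugates) is absorbed into the $n^{O(\log n / \log \log n)}$ lower-order term, which is immediate.
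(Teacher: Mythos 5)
Your proposal is correct and matches the paper's own proof, which simply cites Theorems~\ref{thm:sol-red-hcomp-can} and \ref{thm:sol-hcomp-can} and multiplies the bounds. You have merely made explicit the bookkeeping that the paper leaves implicit.
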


\begin{proof}
  This follows immediately from Theorems~\ref{thm:sol-red-hcomp-can} and \ref{thm:sol-hcomp-can}.
\end{proof}

  \newpage
  \section{The flaws in Wagner's algorithm}
  \label{app:flaw}
  In this section, we describe the flaws in Wagner's $n^{O(\log n / \log \log n)}$ algorithm~\cite{wagner2011b} for general composition-series isomorphism.  The most serious of these is a dependence on the coset representatives chosen.  As a result, Wagner's algorithm for composition series isomorphism only applies to the $p$-groups (and a slight generalization).  Following communication with the author, Wagner attempted to fix this flaw in later revisions of his paper~\cite{wagner2012a,wagner2012b}; however, the fix described in these versions contains a more subtle variant of the same problem and is therefore incorrect (see \ssecref{wagner-fix} below).  Moreover, the flaw in Wagner's attempted fix~\cite{wagner2012a,wagner2012b} appears to be serious and we suspect that it cannot be salvaged without major changes (see \ssecref{fix-wagner-fix} below).  After further discussions with the author, Wagner retracted his attempted fix~\cite{wagner2012a,wagner2012b} in the latest revision of his paper~\cite{wagner2012c} which scales back his claims so that only \ssecref{other-flaws} of this appendix is applicable.

\subsection{Wagner's algorithm for general composition-series isomorphism}
\label{ssec:wagner-alg}
Before describing Wagner's algorithm~\cite{wagner2011b} for general composition series isomorphism, it is necessary to introduce his modification of the generator-enumeration algorithm for composition series isomorphism.  Let $S$ and $S'$ be composition series consisting of the subgroups $G_0 = 1 \tril \cdots \tril G_m = G$ and $H_0 = 1 \tril \cdots \tril H_m = H$.  Then the generator-enumeration algorithm can be modified for composition-series isomorphism by guessing generators for each factor of $S$.  Since composition factors are simple we now make use of a corollary of the classification of finite simple groups.

\begin{proposition}[\cite{malle1994a,liebeck1995a}]
  Every finite simple group has a generating set of size at most $2$.
\end{proposition}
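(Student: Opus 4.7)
The plan is to prove this by invoking the classification of finite simple groups (CFSG) and then verifying 2-generation family-by-family; there is no known classification-free proof. Let $G$ be a finite simple group. By CFSG, $G$ is isomorphic to one of: a cyclic group $\bbZ_p$ of prime order, an alternating group $A_n$ for $n \geq 5$, a simple group of Lie type, or one of the 26 sporadic simple groups. I will argue each case in turn.

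The first two cases are classical. A cyclic group of prime order is generated by any non-identity element, so one generator suffices. For $A_n$ with $n \geq 5$, a standard argument shows that $A_n$ is generated by two elements: for instance, when $n$ is odd one may take the $3$-cycle $(1\,2\,3)$ together with the $n$-cycle $(1\,2\,\cdots n)$, and when $n$ is even a minor variant (e.g.\ $(1\,2\,3)$ together with $(2\,3\,\cdots n)$) works; in either case one checks that the subgroup generated contains all $3$-cycles, which already generate $A_n$.

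The bulk of the work is the groups of Lie type. Here I would cite Steinberg's theorem that every finite simple group of Lie type is generated by two elements, together with the refinements and extensions due to Aschbacher--Guralnick and others that cover the twisted families and small-rank exceptions uniformly. For the 26 sporadic groups, 2-generation can in principle be verified on a case-by-case basis, and explicit pairs of generators are tabulated in the ATLAS of Finite Groups; alternatively one can cite the result of Aschbacher and Guralnick that every finite simple group is 2-generated, which packages all the cases together and is precisely the content of \cite{malle1994a,liebeck1995a}.

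The main obstacle is clearly the groups of Lie type: there are infinitely many families (classical groups $A_n(q)$, $B_n(q)$, $C_n(q)$, $D_n(q)$ and their twisted variants, and the exceptional types $G_2, F_4, E_6, E_7, E_8$ and their twisted forms), and a direct uniform construction of two generators is not elementary. In a self-contained writeup one would therefore either quote Steinberg's generation theorem as a black box or organise the families by Dynkin diagram and exhibit generators adapted to a Chevalley basis, deferring the small-characteristic and small-rank exceptions to separate arguments. Since this proposition is used only as a cited black box in the present paper, the proof reduces to pointing at \cite{malle1994a,liebeck1995a} after the CFSG case split.
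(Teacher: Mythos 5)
Your proposal is correct and matches what the paper does: the paper states this proposition purely as a cited black box (with no proof given), and your sketch of the underlying argument — CFSG case split into cyclic, alternating, Lie type, and sporadic, with Steinberg/Aschbacher--Guralnick handling the Lie-type families — is the standard way to justify it. The only minor quibble is attribution: the original uniform 2-generation result modulo CFSG is due to Aschbacher and Guralnick (1984), while the cited works \cite{malle1994a,liebeck1995a} are later refinements (e.g.\ probabilistic generation); this does not affect the correctness of the argument.
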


We can find a generating set of size at most $2$ for any simple group by brute force since there are only $n \choose 2$ possibilities.  Then we find a generating set for each factor of $S$ and choose representatives of the two cosets that generate the factor group.  The resulting set is a generating set $K$ for $G$.  We then consider all mappings from $K$ to $H$.  If $\alpha$ is a lower bound on the order of each composition factor in $S$ then at most $2 \log_{\alpha} n$ elements of $H$ must be guessed so the algorithm runs in $n^{2 \log_{\alpha} n + O(1)}$ time.

Wagner's reduction from $p$-group composition-series isomorphism to graph isomorphism also applies to composition series for general groups.  However, if some of the composition factors have large orders then the resulting graph can have high degree.  In this case, bounded-degree graph isomorphism algorithms will perform poorly.  Wagner's idea was to circumvent this problem by integrating his modification of the generator-enumeration algorithm into the reduction in order to eliminate the high degree nodes.  To do this, fix a parameter $\alpha$.  For each composition factor $G_{i + 1} / G_i$ of $S$ that has order greater than $\alpha$, guess a generating set $K_i$ of size at most $2$ and representatives $a_i, b_i \in G_{i + 1}$ of the two cosets (if there is only one coset then set $a_i = b_i$).  We then construct the graph $X(S)$ as before (see \secref{graph-red}) but then modify it to eliminate nodes of high degree.  For each $\abs{G_{i + 1} / G_i} > \alpha$, remove the edges between the nodes in the sets $G / G_{i + 1}$ and $G / G_i$.  For each $x G_{i + 1}$, consider the set $C$ of nodes $y G_i \subseteq x G_{i + 1}$ to which it was connected.  Wagner noted that the elements of $C$ may be written as $x g G_i$ where $g$ is an element generated by $a_i$ and $b_i$.  The generators $a_i$ and $b_i$ can also be used to define a total order on the elements $g$ that they generate.  Suppose this order is $g_1 \prec \cdots \prec g_k$.  The idea was to order the nodes in $C$ as $x g_1 G_i, \ldots, x g_k G_i$.  A binary tree that has the nodes in $C$ as its leaves can be constructed by attaching a new node to the two leftmost nodes in $C$.  Next, we attach another new node to the next two leftmost nodes and continue until we have run out of nodes.  We continue the process recursively on the binary trees of depth $1$ (and the singleton node on the far right if the cardinality of $C$ is odd).  This yields a binary tree with the nodes in $C$ as its leaves.  The root of this tree is then identified with $x G_{i + 1}$.  Finally, we color the nodes $x a_i G_i$ and $x b_i G_i$ special colors by attaching path graphs of constant length.

We then construct the graph for $S'$ in the same way.  Let our guesses for the representatives of the generators of the factors $H_{i + 1} / H_i$ of $S'$ which have order more than $\alpha$ be denoted $a_i'$ and $b_i'$.

\subsection{The main flaw in Wagner's algorithm}
\label{ssec:main-flaw}
Let us denote by $Y(S)$ and $Y(S')$ the graphs that result from the process described in \ssecref{wagner-alg}.  These graphs have degree $\alpha + O(1)$ so it can be decided if $Y(S) \cong Y(S')$ in $n^{\alpha + O(1)}$ time.  Wagner claimed~\cite{wagner2011b} that there is an isomorphism from $S$ to $S'$ that maps $a_i$ to $a_i'$ and $b_i$ to $b_i'$ for all $i$ \ifft $Y(S) \cong Y(S')$.  The problem is that the ordering $x g_1 G_i, \ldots, x g_k G_i$ that was used to construct the binary trees depends not only on $a_i$ and $b_i$ but also on the coset representative $x$.  For example, if we had chosen the representative $z = x a_i^{-1}$ instead of $x$, then we would obtain a different ordering.

This is unfortunate since if Wagner's result for general composition-series isomorphism were correct, our results would generalize to arbitrary groups.  We remark however that if the orders of the composition factors decrease as we move away from the top group $G$ in the composition series then Wagner's trick for fixing the generators works correctly.  This is because in this case we have guessed all of the generators above each high-degree node in the tree so there is a unique choice of coset representatives.

\subsection{Natural attempts to patch the flaw}
\label{ssec:natural-fix}
We now discuss why various natural methods for fixing the flaw described in \ssecref{main-flaw} do not work.  The most obvious idea is to simply guess all of the coset representatives $x$ for every node $x G_i$.  This yields a correct algorithm which has a runtime that is much slower than the generator-enumeration algorithm.  A slightly better idea is to determine the position of the composition factor $G_{i + 1} / G_i$ in $S$ with cardinality more than $\alpha$ which is the farthest from the top of the composition series; we then guess representatives for the generators of the composition factors from the top of the tree down to this last large composition factor\footnote{This was proposed to us by an anonymous reviewer as a possible fix for Wagner's algorithm.}.  This fixes the problem of the ordering depending on which representatives $x$ of each $x G_i$ are chosen; however, if the last composition factor in $S$ has cardinality greater than $\alpha$ then this will involve guessing representatives for the generators of all composition factors in $S$.  As a result, the worst-case performance of this approach is no better than the generator-enumeration algorithm.

\subsection{Wagner's attempt to fix the flaw}
\label{ssec:wagner-fix}
Following communication with the author, Wagner revised his paper with a proposed fix~\cite{wagner2012a,wagner2012b} for the flaw described in \ssecref{main-flaw}.  The revised construction is nontrivial and adds six pages to the last version without the proposed fix~\cite{wagner2011b}.  Wagner~\cite{wagner2012a,wagner2012b} gives a sketch\footnote{The core of Wagner's argument is that the proof is essentially the same as before; this ignores the differences between the original construction and the revised version.} of a correctness proof for the proposed fix but does not address the details and subtleties that would arise in a rigorous proof.  Unfortunately, the proof is incorrect and the proposed fix fails to truly repair the flaw.  The result is a new algorithm which fails due to a more subtle dependence on the coset representatives which are chosen.  Moreover, we shall argue that the revised construction~\cite{wagner2012a,wagner2012b} breaks the structure of the group so it appears that it is not much closer to yielding an algorithm for general composition-series isomorphism than Wagner's previous algorithm~\cite{wagner2011b}.  In our work, we avoid this problem by exploiting the additional structure which exists in solvable groups.

Wagner's idea for fixing the flaw is that the socle can be written as the direct product $\soc(G) = M_1 \times N_1$ where $M_1$ contains the parts of the socle which result in large composition factors and $N_1$ contains the parts of the socle which result in small composition factors\footnote{It is easy to compute $M_1$ and $N_1$ by calculating each minimal normal subgroup $L_i$ of $G$.  Every minimal normal subgroup $L_i$ can be written as the direct product $\bigtimes_j S_{ij}$ where each $S_{ij}$ is simple and $S_{ij} \cong S_{ik}$ for all $j$ and $k$.  Then we can compute $M_1 = \bigtimes_{i : \abs{S_{ij}} > \alpha} S_{ij}$ and $N_1 = \bigtimes_{i : \abs{S_{ij}} \leq \alpha} S_{ij}$}.  Similarly, $\soc(G / \soc(G)) = M_2 \times N_2$ where $M_2$ consists of the parts of $\soc(G / \soc(G))$ which correspond to large composition factors and $N_2$ corresponds to the parts which correspond to small composition factors.  In general, it is necessary to continue this process recursively.  However, in our case we shall assume that $\soc(G / \soc(G)) = G / \soc(G)$ since this suffices to illustrate the flaw.

We can construct a composition series $S$ for $G$ which starts with the composition factors for $M_2$, then $N_2$, then $M_1$ and finally $N_1$.  The problem with this is that we need to apply the generator-fixing trick to the composition factors which correspond to $M_2$ and $M_1$ without fixing the generators for the composition factors which correspond to $N_2$.  However, it isn't clear how to do this since we must fix generators for a continuous sequence of composition factors starting from the top of the composition series.  Fixing generators for only $M_1$ and $M_2$ but not $N_2$ runs into the problem described in \ssecref{main-flaw}.  Wagner's idea for fixing the flaw was to move $M_1$ to the top of the tree so that $M_1$ and $M_2$ come before $N_1$ and $N_2$.

Let $x_i$ be a complete set of coset representatives for $G / \soc(G)$.  Then any element of $g \in G$ can be written as $g = x_i y z$ for some $i$ where $y \in M_1$ and $z \in N_1$.  The revised construction is inspired by the fact that $g = y^{x_i} x_i z$ where $y^x = x y x^{-1}$ denotes conjugation.  We construct a composition series for $M_1$ and compute the corresponding tree $T$.  The leaf nodes of $T$ are labeled by the elements of $M_1$.  For each leaf $y \in M_1$ of $T$, we attach a copy $T_y$ of the tree which corresponds to a composition series for $G / \soc(G)$.  The nodes of each tree $T_y$ are labelled by the elements of $G / \soc(G)$.  To each leaf node $x_i \soc(G)$ of each $T_y$, we attach a copy $T_{y, x_i \soc(G)}$ of the tree for a composition series of $N_1$.  The leaves of each tree $T_{y, x_i \soc(G)}$ are labelled by the elements of $N_1$ so now we must relabel them so that the leaves of all the trees $T_{y, x_i \soc(G)}$ represent the elements of $G$.

Wagner does this by setting the label of the leaf $z \in N_1$ in the tree $T_{y, x_i \soc(G)}$ to $y x_i z$.  The problem is that there is a strong dependence on the representatives $x_i$.  Let us fix some $j$ and suppose that we replace $x_j$ with $\bar x_j \in x_j \soc(G)$.  Consider the element $g = y x_j z$ where $y \in M_1$ and $z \in N_1$.  Then $g$ corresponds to the node for $z$ in the tree $T_{y, x_j \soc(G)}$ when we use the original coset representatives $x_i$ for $G / \soc(G)$.  However, when $x_j$ is replaced by $\bar x_j$, then for some $\bar y \in M_1$ and $\bar z \in N_1$ we have

\begin{align*}
  g  & = y \bar x_j \bar y \bar z z \\
  {} & = y {\bar y}^{\bar x_j} \bar x_j \bar z z \\
  {} & = \dot y \bar x_j \dot z
\end{align*}
where $\dot y = y {\bar y}^{\bar x_j}$ and $\dot z = \bar z z$.  Thus, when $x_j$ is replaced by $\bar x_j$, the node which corresponds  $g$ is ripped out of the tree $T_{y, x_j \soc(G)}$ and associated with the node for $\dot z$ in the tree $T_{\dot y, \bar x_j \soc(G)} = T_{\dot y, x_j \soc(G)}$.  Moreover, $\dot y \in M_1$ and $\dot z \in N_1$ can be arbitrary depending on the choice of the representative $\bar x_j \in x_j \soc(G)$.  Thus, Wagner's revised construction for general groups does not respect the structure of $G$ unless the coset representatives $x_i$ are chosen in a consistent manner.

The above construction can be made to work if we guess a complete set of representatives for the factor group $G / \soc(G)$ rather than just for the subgroups $M_i$.  In fact, it is only necessary to guess representatives of a generating set of $G / \soc(G)$ since the representatives of a generating set determine a complete set of representatives.  However, the resulting algorithm for composition-series isomorphism no better than the generator-enumeration algorithm in the worst case.  This is in sharp contrast to the $n^{O(\log n / \log \log n)}$ runtime obtained for $p$-group composition-series isomorphism.

\subsection{Can Wagner's attempted fix be fixed?}
\label{ssec:fix-wagner-fix}
Given the conclusions of Subsections~\ref{ssec:natural-fix} and \ref{ssec:wagner-fix}, a natural question is whether the ideas in Wagner's attempt to fix the flaw~\cite{wagner2012a,wagner2012b} can be made to work with minor changes.  We will now present intuitive (but non-rigorous) arguments that the proposed fix~\cite{wagner2012a,wagner2012b} fundamentally breaks the structure of the group and therefore cannot be salvaged without major changes.  In our algorithm for solvable groups, we avoid this problem by using different techniques which exploit additional structure in solvable groups that manifests itself in the form of Sylow bases.

To illustrate our arguments, we shall restrict our attention to a semidirect product $\bbZ_p \rtimes \bbZ_q$ where $p$ is a large prime and $q$ is a small prime.  Isomorphism testing for this class of groups can be done efficiently using specialized methods~\cite{legall2008a}; however, one can imagine more complex groups for which the similar problems arise\footnote{For example, suppose that $\bbZ_p$ and $\bbZ_q$ are replaced by non-Abelian simple groups.}.  Intuitively, applying Wagner's fix to this group would imply that we could reverse the semidirect symbol so that $\bbZ_p \rtimes \bbZ_q$ is the same as some semidirect product $\bbZ_p \ltimes \bbZ_q$.  Of course, this is not possible for general semidirect products which suggests that Wagner's fix~\cite{wagner2012a,wagner2012b} cannot be salvaged without significant new ideas.

It is important to note that the arguments in this subsection are non-rigorous and it is not clear if they can be made more precise.  However, any algorithm which is based on composition-series isomorphism must avoid breaking the structure of the group in this way.  Our algorithm for the solvable groups surmounts this obstacle by utilizing special properties of solvable groups (namely Sylow bases) in the construction of the graph.

\subsection{Other flaws}
\label{ssec:other-flaws}
Wagner's algorithm~\cite{wagner2011b,wagner2012a,wagner2012b,wagner2012c} also contains a second less serious flaw that can be fixed.  In Wagner's algorithm~\cite{wagner2011b} for constructing a composition series, he implicitly assumes that the socle of $G$ is the direct product of \emph{all} of the minimal normal subgroups\footnote{The three most recent revisions of Wagner's paper~\cite{wagner2012a,wagner2012b,wagner2012c} correct this error in the algorithm but do not properly account for the cost of choosing a subset of the minimal normal subgroups in the runtime.} of $G$.  This is incorrect as one can see by considering the socle of the group $\bbZ_p^k$.  However, since Wagner represents the composition series using generators, this problem can be corrected without significantly affecting the runtime of his algorithm.  In our case, we represent composition series as the subsets of $G$ that correspond to the intermediate subgroups and the fact that the socle of $G$ is not the direct product of all of the minimal normal subgroups of $G$ has a significant impact on the runtime.  Indeed, if the socle of $G$ was the direct product of all minimal normal subgroups of $G$ we could improve our reduction from group isomorphism to composition-series isomorphism to run in $n^{O(\log \log n)}$ time.

  \newpage
  \section{Group theory proofs}
  \label{app:group-res}
  In this section, we provide more results from group theory and fill in some of the proofs omitted in \secref{group-back}.

\begin{proposition}
  \label{prop:ch-trans}
  If $K \ch H \ch G$ then $K \ch G$.
\end{proposition}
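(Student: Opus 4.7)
The plan is to show that every automorphism of $G$ restricts to an automorphism of $H$, and then invoke the characteristic property of $K$ inside $H$. Concretely, I would fix an arbitrary $\phi \in \aut(G)$ and argue in two steps: first that $\restr{\phi}{H} \in \aut(H)$, and second that $\phi[K] = K$.

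For the first step, since $H \ch G$ we have $\phi[H] = H$, so $\restr{\phi}{H} : H \to H$ is well-defined. It is a homomorphism because $\phi$ is, and it is bijective because $\phi$ is injective on $G$ (hence on $H$) and surjects onto $H = \phi[H]$. Thus $\restr{\phi}{H} \in \aut(H)$.

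For the second step, since $K \ch H$, every automorphism of $H$ preserves $K$ setwise; applied to $\restr{\phi}{H}$ this gives $\phi[K] = (\restr{\phi}{H})[K] = K$. As $\phi \in \aut(G)$ was arbitrary, $K \ch G$.

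This argument is essentially immediate and I do not anticipate any obstacle; the only content is the (elementary) observation that a characteristic subgroup of $G$ admits a restriction map $\aut(G) \to \aut(H)$, which lets the characteristic property propagate downward. No assumption beyond the definition of $\ch$ is needed, and the proof works verbatim for infinite groups as well.
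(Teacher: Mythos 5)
Your argument is correct and is essentially identical to the paper's proof: both fix an arbitrary $\phi \in \aut(G)$, use $H \ch G$ to conclude $\restr{\phi}{H} \in \aut(H)$, and then use $K \ch H$ to conclude $\phi[K] = K$. You merely spell out the (routine) verification that $\restr{\phi}{H}$ is indeed an automorphism of $H$, which the paper leaves implicit.
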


\begin{proof}
  Let $\phi \in \aut(G)$.  Then $\phi[H] = H$ so $\restr{\phi}{H} \in \aut(H)$.  Therefore, $\restr{\phi}{H}[K] = K$ so $K \ch G$.
\end{proof}

\begin{proposition}
  \label{prop:ch-norm-trans}
   If $H \ch N \trile G$ then $H \trile G$.
\end{proposition}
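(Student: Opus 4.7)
The plan is to show that $H$ is closed under conjugation by every element of $G$. Fix an arbitrary $g \in G$ and consider the inner automorphism $\iota_g : G \to G : x \mapsto x^g$ of $G$. The key observation is that, although $\iota_g$ restricted to $N$ need not be an inner automorphism of $N$, it is nevertheless an automorphism of $N$.

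First I would verify that $\restr{\iota_g}{N}$ is a well-defined automorphism of $N$. Since $N \trile G$, conjugation by $g$ sends $N$ into itself, so $\restr{\iota_g}{N} : N \to N$ is a well-defined homomorphism. It is injective because $\iota_g$ is injective on $G$, and it is surjective on $N$ because $\iota_{g^{-1}}$ provides an inverse that also preserves $N$ (again using $N \trile G$). Hence $\restr{\iota_g}{N} \in \aut(N)$.

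Next I would apply the hypothesis $H \ch N$: every automorphism of $N$ fixes $H$ setwise, so in particular $\restr{\iota_g}{N}[H] = H$, which means $H^g = H$. Since $g \in G$ was arbitrary, $H$ is closed under conjugation by every element of $G$, i.e., $H \trile G$. I do not anticipate any real obstacle; the only subtlety is resisting the natural but wrong impulse to treat $\restr{\iota_g}{N}$ as an inner automorphism of $N$ (it only needs to be an automorphism of $N$, which is exactly what ``characteristic'' is designed to handle).
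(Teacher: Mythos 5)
Your proof is correct and follows essentially the same route as the paper's: restrict the inner automorphism $\iota_g$ to $N$ (using $N \trile G$), observe that it is an automorphism of $N$, and apply $H \ch N$ to conclude $H^g = H$. You simply spell out the well-definedness and bijectivity of $\restr{\iota_g}{N}$ in more detail than the paper does.
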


\begin{proof}
  Let $x \in H$ and $g \in G$.  Clearly, $\iota_g(x) = x^g \in N$ so $\restr{\iota_g}{N} \in \aut(N)$.  Therefore, $\restr{\iota_g}{N}[H] = H$ and $H \trile G$.
\end{proof}

The following proposition follows from \propref{ch-norm-trans}.

\mnscharsim*

\begin{proposition}
  \label{prop:norm-hom}
  Let $G$ and $H$ be groups and let $\phi : G \ra H$ be a homomorphism.
  
  \begin{enumerate}
  \item If $K \trile G$ then $\phi[K] \trile \phi[G]$.
  \item If $L \trile \phi[G]$ then $\phi^{-1}[L] \trile G$.
  \end{enumerate}
\end{proposition}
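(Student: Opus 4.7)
The plan is to handle both parts by direct computation with conjugation, using only the definitions of homomorphism, image, preimage, and normality. For part (a), I will take an arbitrary element $\phi(g)\in\phi[G]$ and an arbitrary element $\phi(k)\in\phi[K]$ and compute $\phi(g)\phi(k)\phi(g)^{-1}=\phi(gkg^{-1})$; since $K\trile G$ gives $gkg^{-1}\in K$, the right-hand side lies in $\phi[K]$. Along the way I will note that $\phi[K]$ is automatically a subgroup of $\phi[G]$ because the image of a subgroup under a homomorphism is a subgroup. That finishes (a).

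For part (b), I will first verify that $\phi^{-1}[L]$ is a subgroup of $G$ using that $L$ is a subgroup of $\phi[G]\leq H$: the identity of $G$ maps to the identity of $H$, which lies in $L$; if $x,y\in\phi^{-1}[L]$, then $\phi(xy)=\phi(x)\phi(y)\in L$ and $\phi(x^{-1})=\phi(x)^{-1}\in L$. Then for normality, I will pick $g\in G$ and $k\in\phi^{-1}[L]$ and compute $\phi(gkg^{-1})=\phi(g)\phi(k)\phi(g)^{-1}$; since $\phi(g)\in\phi[G]$, $\phi(k)\in L$, and $L\trile\phi[G]$, this conjugate lies in $L$, so $gkg^{-1}\in\phi^{-1}[L]$.

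There is no real obstacle here: both statements are routine consequences of the homomorphism identity $\phi(gkg^{-1})=\phi(g)\phi(k)\phi(g)^{-1}$ combined with the definition of normality. The only minor care needed is to state (a) in terms of $\phi[G]$ rather than $H$ (which would be false if $\phi$ is not surjective) and to remember to verify the subgroup axioms for $\phi^{-1}[L]$ in (b) before invoking normality.
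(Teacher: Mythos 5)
Your proof is correct and follows essentially the same route as the paper: both use the homomorphism identity $\phi(x^g) = \phi(x)^{\phi(g)}$ to transport conjugation across $\phi$ in each direction. The only difference is that you explicitly verify the subgroup axioms for $\phi[K]$ and $\phi^{-1}[L]$, which the paper treats as implicit.
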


\begin{proof}
  To prove part (a), suppose that $K \trile G$ let $x \in K$ and $g \in G$.  Then $x^g \in K$ so $\phi(x)^{\phi(g)} = \phi(x^g) \in \phi[K]$ and $\phi[K] \trile \phi[G]$.  For part (b), suppose $L \trile \phi[G]$ and let $x \in \phi^{-1}[L]$ and $g \in G$.  Then $\phi(x^g) = \phi(x)^{\phi(g)} \in L$ so $x^g \in \phi^{-1}[L]$.
\end{proof}

\begin{corollary}
  \label{cor:norm-factor}
  Let $N$ be a normal subgroup of a group $G$ and $N \leq H \leq G$.  Then $H / N \trile G / N$ \ifft $H \trile G$.
\end{corollary}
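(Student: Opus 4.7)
The plan is to derive both directions of the biconditional by applying Proposition~\ref{prop:norm-hom} to the canonical homomorphism $\varphi : G \to G/N : g \mapsto gN$. Since $\varphi$ is surjective, $\varphi[G] = G/N$, so the two parts of Proposition~\ref{prop:norm-hom} become statements about normal subgroups of $G/N$ versus normal subgroups of $G$. The key auxiliary observation, which I would record explicitly, is that because $N \leq H \leq G$, we have $\varphi[H] = H/N$ and $\varphi^{-1}[H/N] = H$; the latter uses that any coset $gN \in H/N$ satisfies $gN \subseteq H$, so its preimage lies in $H$, while conversely every element of $H$ maps into $H/N$.

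For the direction $H \trile G \Rightarrow H/N \trile G/N$, I would apply part~(a) of Proposition~\ref{prop:norm-hom} with $K = H$ to conclude $\varphi[H] \trile \varphi[G]$, which by the identifications above is exactly $H/N \trile G/N$. For the converse direction $H/N \trile G/N \Rightarrow H \trile G$, I would apply part~(b) of Proposition~\ref{prop:norm-hom} with $L = H/N$ to conclude $\varphi^{-1}[H/N] \trile G$, which is $H \trile G$.

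There is no real obstacle here; this is a routine consequence of the proposition, and the only thing worth being careful about is the identification $\varphi^{-1}[H/N] = H$, which requires the hypothesis $N \leq H$ (otherwise the preimage would be $HN \neq H$). I would state this identification as a one-line lemma inside the proof and then invoke each part of Proposition~\ref{prop:norm-hom} in turn.
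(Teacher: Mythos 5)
Your proposal matches the paper's proof: both apply Proposition~\ref{prop:norm-hom} to the canonical surjection $\varphi : G \to G/N$, using part~(a) for the forward direction and part~(b) for the converse, with the identifications $\varphi[H] = H/N$ and $\varphi^{-1}[H/N] = H$. The only difference is that you make explicit the role of the hypothesis $N \leq H$ in the preimage identification, which the paper leaves implicit; this is a reasonable clarification, not a different argument.
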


\begin{proof}
  Let $\varphi : G \ra G / N$ be the canonical map.  By \propref{norm-hom}, if $H \trile G$ then $\varphi[H] = H / N \trile G / N$.  Similarly, if $H / N \trile G / N$ then $\varphi^{-1}[H / N] = H \trile G$ by \propref{norm-hom}.
\end{proof}

\begin{definition}
  Let $G$ and $H$ be groups and let $\phi : G \ra H$ be a homomorphism.  Then the kernel of $\phi$ is denoted by $\ker \phi = \setb{x \in G}{\phi(x) = 1}$.
\end{definition}

\begin{proposition}
  Let $G$ and $H$ be groups and let $\phi : G \ra H$ be a homomorphism.  Then $\ker \phi \trile G$.
\end{proposition}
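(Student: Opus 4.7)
The plan is to verify the two conditions: that $\ker\phi$ is a subgroup of $G$, and that it is closed under conjugation by elements of $G$. Both are routine consequences of $\phi$ being a homomorphism.

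First I would check the subgroup axioms for $K = \ker\phi$. The identity lies in $K$ because $\phi(1_G) = 1_H$. If $x, y \in K$ then $\phi(xy) = \phi(x)\phi(y) = 1_H \cdot 1_H = 1_H$, so $xy \in K$. And if $x \in K$ then $\phi(x^{-1}) = \phi(x)^{-1} = 1_H^{-1} = 1_H$, so $x^{-1} \in K$. Hence $K \leq G$.

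Next I would verify normality directly from the definition used in the paper (closure under conjugation). Fix $x \in K$ and $g \in G$. Then
\begin{align*}
  \phi(x^g) = \phi(g x g^{-1}) = \phi(g)\phi(x)\phi(g)^{-1} = \phi(g) \cdot 1_H \cdot \phi(g)^{-1} = 1_H,
\end{align*}
so $x^g \in K$. Therefore $K$ is closed under conjugation by $G$, which is exactly the condition $K \trile G$ as defined in \secref{group-back}.

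There is no real obstacle here — the argument is purely a one-line computation using that $\phi$ preserves products and inverses. The only thing worth flagging is to make sure to use the paper's convention $x^g = g x g^{-1}$ (rather than $g^{-1} x g$) when writing the conjugation step, so that the verification lines up with \defref{} of normality given in \secref{group-back}.
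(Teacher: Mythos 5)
Your proof is correct and complete. The paper actually states this proposition without any proof (it is a standard textbook fact, listed in \appref{group-res} alongside the isomorphism theorems), so there is nothing to compare against; your argument is exactly the routine verification one would expect, and you were right to flag matching the paper's convention $x^g = gxg^{-1}$ in the conjugation step.
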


\begin{theorem}[First Isomorphism Theorem, cf.~\cite{rotman1995a,robinson1996a,lang2002a,rose2009a,roman2011a}]
  Let $G$ and $H$ be groups and let $\phi : G \ra H$ be a homomorphism.  Then $G / \ker \phi \cong \phi[G]$.
\end{theorem}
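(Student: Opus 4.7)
The plan is to exhibit a concrete isomorphism $\bar\phi : G/\ker\phi \to \phi[G]$ induced by $\phi$, and to verify the four properties (well-defined, homomorphism, injective, surjective) in that order. Concretely, I would define the candidate map by the rule $\bar\phi(g \ker\phi) = \phi(g)$ for each coset $g \ker\phi \in G/\ker\phi$. Because the definition refers to a coset representative, the first step is to check well-definedness: if $g \ker\phi = g' \ker\phi$, then $g^{-1} g' \in \ker\phi$, so $\phi(g^{-1} g') = 1$, and hence $\phi(g) = \phi(g')$. This shows the value assigned to a coset does not depend on the chosen representative.

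Next I would verify that $\bar\phi$ is a homomorphism: for cosets $g_1 \ker\phi, g_2 \ker\phi \in G/\ker\phi$, we have
\begin{align*}
\bar\phi((g_1 \ker\phi)(g_2 \ker\phi)) &= \bar\phi(g_1 g_2 \ker\phi) \\
{} &= \phi(g_1 g_2) \\
{} &= \phi(g_1) \phi(g_2) \\
{} &= \bar\phi(g_1 \ker\phi) \bar\phi(g_2 \ker\phi),
\end{align*}
using that $\phi$ is itself a homomorphism. For injectivity, suppose $\bar\phi(g \ker\phi) = 1$; then $\phi(g) = 1$, so $g \in \ker\phi$, which means $g \ker\phi = \ker\phi$ is the identity of $G/\ker\phi$. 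Thus $\ker \bar\phi$ is trivial. Surjectivity onto $\phi[G]$ is immediate from the definition: every element of $\phi[G]$ has the form $\phi(g)$ for some $g \in G$, and that element is the image of $g \ker\phi$ under $\bar\phi$.

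The main (and only) obstacle is the well-definedness step, since every other property essentially transfers directly from the corresponding property of $\phi$ once we know the rule $g \ker\phi \mapsto \phi(g)$ actually specifies a function. In particular, the role of $\ker\phi$ (which is normal in $G$, so the quotient $G/\ker\phi$ is a group) is precisely to make this rule consistent across coset representatives. Once that check is in place, the remaining verifications are one-line consequences of $\phi$ being a homomorphism, and together they exhibit $\bar\phi$ as the desired isomorphism $G/\ker\phi \cong \phi[G]$.
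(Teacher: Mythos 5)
Your proof is correct and is the standard textbook argument for the First Isomorphism Theorem. Note, however, that the paper itself does not prove this statement: it is stated in the group theory appendix purely as a cited standard fact (with references to standard algebra texts), so there is no paper proof to compare against. Your argument—defining $\bar\phi(g\ker\phi) = \phi(g)$, checking well-definedness via $g^{-1}g' \in \ker\phi$, and then verifying the homomorphism, injectivity, and surjectivity properties—is exactly what those references supply, and it is complete.
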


The second and third isomorphism theorems are actually corollaries of the first isomorphism theorem.

\begin{theorem}[Second Isomorphism Theorem, cf.~\cite{rotman1995a,robinson1996a,lang2002a,rose2009a,roman2011a}]
  Let $H$ be a subgroup of a group $G$ and let $K$ be a normal subgroup of $G$.  Then $H \times K / K \cong H / H \cap K$.
\end{theorem}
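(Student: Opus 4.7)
The plan is to deduce the Second Isomorphism Theorem from the First Isomorphism Theorem (which is stated just above in the excerpt) by exhibiting a natural surjective homomorphism from $H$ to $H K / K$ whose kernel is $H \cap K$.

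First, I would check the preliminaries that make the statement meaningful. Since $K \trile G$, the set $H K = \setb{h k}{h \in H, k \in K}$ equals $K H$, and a standard one-line computation shows that $HK$ is a subgroup of $G$ containing $K$ as a normal subgroup; this is needed so that the quotient $HK/K$ makes sense. I would also note that $H \cap K$ is normal in $H$: it is the intersection of $H$ with a subgroup normal in $G$, hence in particular normal in $H$, so that $H / (H \cap K)$ makes sense.

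Next, define $\phi : H \ra H K / K$ by $\phi(h) = h K$. This is the restriction to $H$ of the canonical map $\varphi : HK \ra HK/K$, so it is automatically a homomorphism. The two key steps are:
\begin{enumerate}
\item \emph{Surjectivity of $\phi$:} an arbitrary coset in $HK/K$ has the form $(hk)K$ with $h \in H$ and $k \in K$, and $(hk)K = h(kK) = hK = \phi(h)$.
\item \emph{Computation of $\ker \phi$:} we have $\phi(h) = K$ \ifft $hK = K$ \ifft $h \in K$; combined with $h \in H$ this gives $\ker \phi = H \cap K$.
\end{enumerate}
Applying the First Isomorphism Theorem to $\phi$ yields $H / (H \cap K) \cong \phi[H] = HK/K$, which is the desired conclusion (reading the paper's notation $H \times K$ as the internal product $HK$, since in general $HK$ need not be an internal direct product).

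No step here is a real obstacle; the only place where one has to be a little careful is the verification that $HK$ is a subgroup and contains $K$ as a normal subgroup, which is exactly where the hypothesis $K \trile G$ is used. Once that is in place, the rest is a direct application of the First Isomorphism Theorem to the restriction of the canonical projection.
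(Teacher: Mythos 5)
Your proof is correct and follows exactly the route the paper indicates: the paper does not give its own proof of this theorem (it only cites references and remarks that the second and third isomorphism theorems are corollaries of the First Isomorphism Theorem), and your argument — restricting the canonical projection $HK \ra HK/K$ to $H$, verifying surjectivity and that the kernel is $H \cap K$, then invoking the First Isomorphism Theorem — is precisely the standard realization of that remark. Your reading of the paper's $H \times K$ as the internal product $HK$ is also the right interpretation.
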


\begin{theorem}[Third Isomorphism Theorem, cf.~\cite{rotman1995a,robinson1996a,lang2002a,rose2009a,roman2011a}]
  Let $H$ and $K$ be normal subgroups of a group $G$.  Then $\frac{G / H}{K / H} \cong G / K$.
\end{theorem}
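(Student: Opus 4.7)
The plan is to apply the First Isomorphism Theorem (stated just before in the appendix) to the natural surjection from $G/H$ onto $G/K$. Note that for $K/H$ to make sense as a quotient we implicitly need $H \leq K$ (a hypothesis which is standard for this theorem even though it is not stated verbatim); also, once we know $H \leq K$, \corref{norm-factor} applied to $H \trile G$ and $K \trile G$ gives $K/H \trile G/H$, so the quotient $(G/H)/(K/H)$ on the left-hand side is well-formed.

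Next, I would define the candidate map $\phi : G/H \to G/K$ by $\phi(gH) = gK$. To show $\phi$ is well-defined, suppose $gH = g'H$; then $g^{-1}g' \in H \subseteq K$, so $gK = g'K$. The map is a homomorphism because $\phi(gH \cdot g'H) = \phi(gg'H) = gg'K = (gK)(g'K) = \phi(gH)\phi(g'H)$, and it is surjective since every coset $gK$ is the image of $gH$.

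I would then compute the kernel directly: $\ker \phi = \setb{gH \in G/H}{gK = K} = \setb{gH}{g \in K} = K/H$. Applying the First Isomorphism Theorem to $\phi$ yields
\[
(G/H)/(K/H) \;=\; (G/H)/\ker \phi \;\cong\; \phi[G/H] \;=\; G/K,
\]
which is exactly the claim.

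The main obstacle, such as it is, is essentially bookkeeping: one must be careful about the unstated hypothesis $H \leq K$ (without which $K/H$ is meaningless), and one must invoke \corref{norm-factor} to know that $K/H$ is a normal subgroup of $G/H$ so that the left-hand quotient is defined. There is no real conceptual difficulty beyond these well-definedness checks; the theorem reduces cleanly to a one-line application of the First Isomorphism Theorem.
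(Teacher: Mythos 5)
Your proof is correct and matches the paper's approach: the paper gives no detailed argument but explicitly remarks that the second and third isomorphism theorems "are actually corollaries of the first isomorphism theorem," which is exactly the reduction you carry out via the natural surjection $G/H \to G/K$. Your additional care about the implicit hypothesis $H \leq K$ and the normality of $K/H$ in $G/H$ (via \corref{norm-factor}) is appropriate bookkeeping.
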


\charsimprod*

\begin{proof}
  Suppose $G$ is characteristically simple.  Choose a minimal normal subgroup $N_0$ of $G$.  If $N_0 = G$ then $G$ is simple.  Consider a direct product $H = \bigtimes_i N_i$ of isomorphic minimal proper normal subgroups of $G$.  If $H = G$, then we note that each $N_i$ is characteristically simple.  We thus argue by induction that each $N_i$ is a direct product of isomorphic simple groups and we are done.  Otherwise, $H \tril G$ so there exists $\phi \in \aut(G)$ such that $\phi[H] \not\leq H$.  Then for some $i$, $N = \phi[N_i] \not\leq H$; because $N$ is a minimal normal subgroup of $G$, $N \cap H = 1$.  Since $H \trile G$, $H \times N$ is a direct product of minimal normal subgroups of $G$.  Thus, $G$ may be written as a direct product of minimal normal subgroups and the result follows by induction on the subgroups $N_i$.
\end{proof}

\socprod*

\begin{proof}
  Let $G$ be a group.  If $N_1$ and $N_2$ are minimal normal subgroups of $G$, then $N_1 \cap N_2 = 1$.  Also, the elements of $N_1$ and $N_2$ commute so they form a direct product that is a subgroup of the socle.  By continuing this process, we can write the socle as a direct product of minimal normal subgroups of $G$.
\end{proof}

\soccomp*

\begin{proof}
  We assume $G$ is nontrivial; if $G$ is trivial then the socle and minimal normal subgroups of $G$ are also trivial.  For each nontrivial $x \in G$, we test if $x$ the normal closure $\langle x \rangle^G$ of $x$ is a minimal normal subgroup.  Now $\langle x \rangle^G \trile G$ so we check if $\langle x \rangle^G$ is minimal.  This is done by considering each nontrivial $y \in \langle x \rangle^G$ and computing $\langle y \rangle^G$ which is the smallest normal subgroup of $G$ that contains $y$.  If $\langle x \rangle^G = \langle y \rangle^G$ for all $y \in \langle x \rangle^G$ then $\langle x \rangle^G$ is minimal; otherwise, it is not minimal.  Once we have computed the set $M$ of all minimal normal subgroups of $G$, we simply take the subgroup generated by the minimal normal subgroups of $G$ which by definition is the socle of $G$.
\end{proof}

\begin{theorem}[Jordan-H{\"{o}}lder, cf.~\cite{rotman1995a,robinson1996a,rose2009a}]
  The multiset of isomorphism classes of the composition factors of a group is an isomorphism invariant.
\end{theorem}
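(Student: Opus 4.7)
The plan is to prove Jordan--H\"older by first establishing Schreier's refinement theorem, which states that any two subnormal series of a group admit refinements whose factor multisets agree up to isomorphism. Since a composition series is by definition a maximal subnormal series, any refinement of it consists of the original factors together with trivial repetitions. Hence if $S$ and $S'$ are two composition series, Schreier's theorem produces refinements of $S$ and $S'$ with equivalent factor multisets, and after removing the trivial factors from each, we recover the original composition factors. This will give the desired conclusion that the multiset of isomorphism classes of composition factors depends only on $G$.

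The crucial lemma underlying Schreier's theorem is the Zassenhaus (butterfly) lemma: if $A \trile A^{*}$ and $B \trile B^{*}$ are four subgroups of $G$, then $A(A^{*} \cap B) \trile A(A^{*} \cap B^{*})$ and $B(A \cap B^{*}) \trile B(A^{*} \cap B^{*})$, and moreover
\[
A(A^{*} \cap B^{*})/A(A^{*} \cap B) \;\cong\; (A^{*} \cap B^{*})/\bigl((A \cap B^{*})(A^{*} \cap B)\bigr) \;\cong\; B(A^{*} \cap B^{*})/B(A \cap B^{*}).
\]
I would prove this by verifying normality line by line (using that $A \trile A^{*}$ and $B \trile B^{*}$, so conjugation by any element of $A^{*} \cap B^{*}$ preserves each of the relevant subgroups), and then applying the second isomorphism theorem twice to identify both outer quotients with the central one. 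This is the main technical obstacle, since the normality verifications are delicate and the subgroup $(A \cap B^{*})(A^{*} \cap B)$ must be shown to be a normal subgroup of $A^{*} \cap B^{*}$.

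With the butterfly lemma in hand, Schreier's theorem follows by the standard interleaving argument: given two subnormal series
\[
1 = G_0 \trile G_1 \trile \cdots \trile G_m = G \quad \text{and} \quad 1 = H_0 \trile H_1 \trile \cdots \trile H_n = H = G,
\]
define $G_{i,j} = G_i(G_{i+1} \cap H_j)$ and $H_{j,i} = H_j(H_{j+1} \cap G_i)$. The butterfly lemma applied with $A = G_i$, $A^{*} = G_{i+1}$, $B = H_j$, $B^{*} = H_{j+1}$ shows that $G_{i,j} \trile G_{i,j+1}$ and $H_{j,i} \trile H_{j,i+1}$, and that the factors $G_{i,j+1}/G_{i,j}$ and $H_{j,i+1}/H_{j,i}$ are isomorphic. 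Inserting the $G_{i,j}$ into the first series and the $H_{j,i}$ into the second yields refinements of length $mn$ whose factors pair up into isomorphic pairs, proving Schreier's theorem.

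Finally, I would apply Schreier to two composition series $S$ and $S'$ of $G$. Because $S$ is maximal as a subnormal series, any proper refinement only inserts subgroups equal to neighboring ones, so the refinement of $S$ has the same multiset of nontrivial factors as $S$ itself (with some trivial $1$-factors added); likewise for $S'$. Since the two refinements have the same multiset of factors up to isomorphism by Schreier, and trivial factors match trivial factors, the nontrivial composition factors of $S$ and $S'$ agree as multisets of isomorphism classes, establishing the theorem.
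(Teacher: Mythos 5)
Your proposal is the standard Schreier--Zassenhaus argument, and it is correct as sketched; the paper itself gives no proof of this statement, citing the standard texts, whose proof is exactly the refinement argument you outline. One small point of alignment with the statement as phrased: you prove uniqueness of the factor multiset for two composition series of the \emph{same} group, so to get the full ``isomorphism invariant'' claim you should add the one-line observation that an isomorphism $\phi : G \ra H$ carries any composition series of $G$ to a composition series of $H$ with isomorphic factors, whence the invariance follows from the same-group uniqueness applied in $H$.
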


The following theorem can be proved using the class equation.

\begin{theorem}
  The center of every $p$-group is nontrivial.
\end{theorem}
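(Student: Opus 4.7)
The plan is to invoke the class equation for the conjugation action of $G$ on itself. First I would recall that for any finite group $G$, partitioning $G$ into conjugacy classes and applying the orbit-stabilizer theorem to the action $G \times G \to G : (g, x) \mapsto x^g$ gives
\[
\abs{G} = \abs{Z(G)} + \sum_{i = 1}^k [G : C_G(x_i)],
\]
where $x_1, \ldots, x_k$ are representatives of the non-central conjugacy classes and $C_G(x_i)$ denotes the centralizer of $x_i$ in $G$. The key observation underlying this decomposition is that the conjugacy class of $x$ is a singleton if and only if $x \in Z(G)$, so the central elements contribute exactly $\abs{Z(G)}$ to $\abs{G}$, and the orbit-stabilizer theorem identifies the size of each non-central class with the index $[G : C_G(x_i)]$.

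Next, I would specialize to the case where $\abs{G} = p^n$ with $n \geq 1$. For each non-central $x_i$, the centralizer $C_G(x_i)$ is a proper subgroup of $G$, so by Lagrange's theorem the index $[G : C_G(x_i)]$ is a positive power of $p$; in particular $p$ divides every summand on the right-hand side except possibly $\abs{Z(G)}$. Since $p$ also divides the left-hand side $\abs{G}$, reducing the class equation modulo $p$ forces $p \mid \abs{Z(G)}$.

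Finally, since $1 \in Z(G)$ the center is nonempty, and combined with $p \mid \abs{Z(G)}$ this yields $\abs{Z(G)} \geq p > 1$, so $Z(G)$ is nontrivial. There is no substantive obstacle here: the entire argument reduces to setting up the class equation correctly and then doing a one-line divisibility computation modulo $p$. The closest thing to a subtle point is ensuring that the indices $[G : C_G(x_i)]$ are genuinely divisible by $p$ rather than merely being at least $2$, which is where the hypothesis that $\abs{G}$ is a prime power is essential.
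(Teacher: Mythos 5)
Your proof is correct and uses exactly the approach the paper indicates: the paper simply remarks that the result ``can be proved using the class equation'' without supplying details, and your argument is the standard, complete class-equation proof of that fact.
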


\begin{corollary}
  \label{cor:p-sim}
  Every simple $p$-group is isomorphic to $\bbZ_p$.
\end{corollary}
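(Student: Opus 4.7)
The plan is to combine the preceding theorem (that a nontrivial $p$-group has nontrivial center) with the simplicity hypothesis to force $G$ to be Abelian, and then use the $p$-group structure once more to force $\abs{G} = p$.

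First, let $G$ be a simple $p$-group. The center $Z(G)$ is always a normal subgroup of $G$, and by the preceding theorem it is nontrivial (note that $G$ must itself be nontrivial, since a simple group is nontrivial by the usual convention that the trivial group has no nontrivial proper normal subgroups vacuously but is not counted as simple). Since $G$ is simple, its only normal subgroups are $1$ and $G$, so we must have $Z(G) = G$; that is, $G$ is Abelian.

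Next, in an Abelian group every subgroup is normal. So the simplicity of $G$ means $G$ has no proper nontrivial subgroups at all. Since $G$ is a nontrivial $p$-group, pick any nonidentity element $x \in G$; its order is a power of $p$, say $p^k$ with $k \geq 1$, and $y = x^{p^{k-1}}$ has order exactly $p$. Then $\langle y \rangle$ is a subgroup of order $p$, hence nontrivial, and by simplicity $\langle y \rangle = G$. Therefore $\abs{G} = p$ and $G \cong \bbZ_p$.

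The only possible obstacle is the mild edge case of whether the trivial group is being considered simple; the standard convention (and the one consistent with the definition given earlier in the paper, which requires the absence of nontrivial proper normal subgroups in a group whose center must then be nontrivial by the cited theorem) excludes it, so no extra care is needed.
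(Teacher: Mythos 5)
Your proof is correct and uses essentially the same idea as the paper: the nontrivial center of a $p$-group supplies a central (hence normal) cyclic subgroup of order $p$, which by simplicity must be all of $G$. The only difference is a minor reorganization — you first conclude $Z(G) = G$ (so $G$ is Abelian) and then locate an order-$p$ element, whereas the paper extracts an element of order $p$ from $Z(G)$ directly and uses centrality to get normality; the two are interchangeable.
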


\begin{proof}
  Let $G$ be a simple $p$-group.  Then its center $Z(G)$ is a nontrivial Abelian group.  It follows that there is an element $x \in Z(G)$ that has order $p$.  The group $H = \langle x \rangle$ is contained in the center and is therefore normal in $G$.  Since $H$ is nontrivial, $G = H \cong \bbZ_p$.
\end{proof}

\inshortorlong{\firstsylowenv}{\firstsylow*}

One can in fact prove a stronger version of the First Sylow Theorem~\cite{robinson1996a,lang2002a}.

\begin{theorem}
  \label{thm:first-syl-max}
  The Sylow $p$-subgroups of $G$ are the maximal $p$-subgroups of $G$.
\end{theorem}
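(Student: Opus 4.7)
The plan is to prove both containments of the equality. For the forward direction, I would show that every Sylow $p$-subgroup $P$ is a maximal $p$-subgroup. Suppose $P \leq H$ where $H$ is a $p$-subgroup of $G$, so $\abs{H} = p^k$ for some $k$. By Lagrange's theorem, $\abs{H}$ divides $\abs{G} = p^e m$ with $\gcd(p,m) = 1$, forcing $k \leq e$. Since $\abs{P} = p^e$ divides $\abs{H}$, we get $k = e$, hence $\abs{H} = \abs{P}$ and $H = P$. This direction is essentially a Lagrange calculation.

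The reverse direction is the main obstacle, and I would establish it by proving the stronger statement that \emph{every $p$-subgroup of $G$ is contained in some Sylow $p$-subgroup}. Given any $p$-subgroup $H$, fix (by the First Sylow Theorem) a Sylow $p$-subgroup $P$ of $G$, and let $H$ act on the left coset space $G / P$ by left multiplication. Since $\abs{G / P} = [G : P] = m$ with $\gcd(p, m) = 1$ and every $H$-orbit has size dividing $\abs{H}$, which is a power of $p$, the sizes of all orbits are powers of $p$ summing to $m$. Because $p \nmid m$, at least one orbit must have size one; that is, there exists $g \in G$ with $h g P = g P$ for all $h \in H$, equivalently $g^{-1} H g \subseteq P$, so $H \subseteq g P g^{-1}$. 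The conjugate $g P g^{-1}$ has the same order as $P$ and is therefore a Sylow $p$-subgroup.

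With this lemma in hand, the reverse direction is immediate: if $H$ is a maximal $p$-subgroup of $G$, then $H \subseteq Q$ for some Sylow $p$-subgroup $Q$, and maximality of $H$ together with the fact that $Q$ is itself a $p$-subgroup forces $H = Q$. Combining both directions yields the theorem.

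The key technical step where I would need to be most careful is the orbit-counting argument on $G / P$: one must verify that the action is well-defined, that orbit sizes are powers of $p$ (which follows from the orbit-stabilizer theorem applied to the $p$-group $H$), and that the existence of a size-one orbit correctly translates into a conjugation containment. Everything else is routine divisibility bookkeeping.
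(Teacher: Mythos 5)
Your proof is correct and complete. The paper itself states this result without proof (citing standard references), so there is no internal argument to compare against; your argument — Lagrange for the forward containment, and the orbit-counting argument on $G/P$ to show every $p$-subgroup is contained in a conjugate of a fixed Sylow $p$-subgroup — is precisely the standard textbook proof those references give.
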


\inshortorlong{\secsylowenv}{\secsylow*}

\begin{theorem}[Third Sylow Theorem, cf.~\cite{rotman1995a,robinson1996a,lang2002a,rose2009a,artin2010a,roman2011a}]
  The number of Sylow $p$-subgroups of $G$ is equal to $1$ mod $p$.
\end{theorem}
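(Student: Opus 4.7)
The plan is to count Sylow $p$-subgroups via a conjugation action of a fixed Sylow $p$-subgroup on the set of all Sylow $p$-subgroups, and show that the orbit-size decomposition yields $1 \bmod p$. Let $\mathrm{Syl}_p(G)$ denote the set of Sylow $p$-subgroups of $G$ and write $n_p = |\mathrm{Syl}_p(G)|$. Fix a Sylow $p$-subgroup $P$ and let $P$ act on $\mathrm{Syl}_p(G)$ by conjugation. Since $P$ is a $p$-group, every orbit of this action has size a power of $p$, so each orbit has size either $1$ or a positive multiple of $p$.

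The crux is to identify the fixed points. I claim the only $Q \in \mathrm{Syl}_p(G)$ fixed by conjugation by every element of $P$ is $P$ itself. Suppose $Q$ is such a fixed point, i.e., $P$ normalizes $Q$. Then $PQ$ is a subgroup of $G$ containing $Q$ as a normal subgroup, and by the second isomorphism theorem $PQ/Q \cong P/(P \cap Q)$, which is a $p$-group. Since $Q$ is also a $p$-group, $PQ$ is a $p$-group containing $P$. By Theorem~\ref{thm:first-syl-max}, $P$ is a maximal $p$-subgroup of $G$, so $PQ = P$; as $|P| = |Q|$ this forces $Q = P$.

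Combining these two observations, exactly one orbit of the action has size $1$ (namely $\{P\}$), and every other orbit has size divisible by $p$. Decomposing $\mathrm{Syl}_p(G)$ into orbits gives $n_p = 1 + \sum_i |O_i|$ where each $|O_i|$ is a positive multiple of $p$, hence $n_p \equiv 1 \pmod{p}$.

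The main obstacle is the step showing that a fixed point $Q$ of the $P$-action must equal $P$; the rest is a straightforward orbit-counting argument. This step depends on two essential facts: that $PQ$ is a subgroup when $P$ normalizes $Q$, and that $P$ is maximal among $p$-subgroups of $G$ (invoking Theorem~\ref{thm:first-syl-max}). Everything else reduces to standard properties of $p$-group actions on finite sets.
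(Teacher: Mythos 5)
The paper does not give a proof of the Third Sylow Theorem at all: it is stated as a standard background fact with only a ``cf.'' to several textbooks, consistent with the paper's explicit remark that proofs of standard group-theoretic facts are omitted. So there is no paper argument to compare against. Your proof is the standard orbit-counting argument and it is correct: you let a fixed Sylow $p$-subgroup $P$ act on $\mathrm{Syl}_p(G)$ by conjugation, observe that orbit sizes are powers of $p$, and then show the unique fixed point is $P$ itself by the $PQ$ argument together with maximality of $P$ among $p$-subgroups (correctly invoking \thmref{first-syl-max}, which the paper does state). Each step is sound, so the conclusion $n_p \equiv 1 \pmod p$ follows.
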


\begin{proposition}
  Every composition factor of a $p$-group has order $p$.
\end{proposition}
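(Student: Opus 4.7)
The plan is to combine Lagrange's theorem with the earlier \corref{p-sim}, which states that every simple $p$-group is isomorphic to $\bbZ_p$. So I would first observe that in any composition series $G_0 = 1 \tril \cdots \tril G_m = G$ for a $p$-group $G$, each subgroup $G_i$ has order dividing $\abs{G}$, hence each $\abs{G_i}$ is a power of $p$. Consequently every composition factor $G_{i+1}/G_i$ has order $\abs{G_{i+1}}/\abs{G_i}$, which is again a power of $p$.

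Next, I would note that by definition each composition factor is simple, so every composition factor is a simple $p$-group. Applying \corref{p-sim} then forces $G_{i+1}/G_i \cong \bbZ_p$, which in particular has order $p$. Since the multiset of composition factors is an isomorphism invariant by the Jordan--H\"older theorem, this conclusion is independent of the chosen composition series.

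There is essentially no obstacle here: the work has already been done in establishing \corref{p-sim} (which in turn rests on the nontriviality of the center of a $p$-group), and the remaining argument is a one-line application of Lagrange's theorem. The only care needed is to verify that composition factors of a $p$-group are themselves $p$-groups, which is immediate.
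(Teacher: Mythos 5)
Your argument is correct and follows exactly the paper's approach: the paper's proof is the one-line ``This follows from \corref{p-sim},'' and you have simply spelled out the implicit steps (Lagrange's theorem to see the factors are $p$-groups, then the corollary to identify simple $p$-groups with $\bbZ_p$).
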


\begin{proof}
  This follows from \corref{p-sim}.
\end{proof}

\begin{proposition}
  \label{prop:sim-max}
  Let $G$ be a group and let $N$ be a normal subgroup.  Then $G / N$ is simple \ifft $N$ is a maximal normal subgroup of $G$.
\end{proposition}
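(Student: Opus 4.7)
The plan is to use the correspondence between normal subgroups of $G/N$ and normal subgroups of $G$ that contain $N$, which has essentially already been set up in the excerpt via Corollary~\ref{cor:norm-factor} and Proposition~\ref{prop:norm-hom}. Once that correspondence is available, the equivalence is a direct translation of ``no strictly intermediate $H$'' into ``no nontrivial proper normal subgroup of the quotient.''

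For the forward direction, I would assume $G/N$ is simple and take any normal subgroup $H \trile G$ with $N \le H \le G$. By Corollary~\ref{cor:norm-factor}, $H/N \trile G/N$, so simplicity forces $H/N \in \{\{N\}, G/N\}$, i.e., $H \in \{N, G\}$. Hence $N$ is a maximal normal subgroup.

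For the reverse direction, I would assume $N$ is a maximal normal subgroup and take any $K \trile G/N$. Letting $\varphi : G \to G/N$ be the canonical map, set $H = \varphi^{-1}[K]$, so that $N \le H \le G$ and $H/N = K$; by Proposition~\ref{prop:norm-hom}(b), $H \trile G$. Maximality of $N$ forces $H \in \{N, G\}$, hence $K \in \{\{N\}, G/N\}$, and $G/N$ is simple.

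No step is really hard here; the only thing worth being careful about is that in the reverse direction one must actually exhibit the preimage $H = \varphi^{-1}[K]$ and invoke $\varphi[H] = K$ (since $\varphi$ is surjective), so that the correspondence is genuinely onto. Apart from that bookkeeping, the proof is a couple of lines in each direction.
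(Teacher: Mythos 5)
Your proof is correct and follows essentially the same route as the paper: both directions go through the correspondence between normal subgroups of $G/N$ and normal subgroups of $G$ containing $N$, using Corollary~\ref{cor:norm-factor}. You are slightly more explicit than the paper in the ``$N$ maximal $\Rightarrow$ $G/N$ simple'' direction, spelling out $H = \varphi^{-1}[K]$ rather than writing a generic normal subgroup as $H/N$ at the outset, but this is a cosmetic difference, not a change of method.
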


\begin{proof}
  Suppose $N$ is a maximal normal subgroup of $G$ and $H / N \trile G / N$ for some $N \leq H \leq G$.  Then by \corref{norm-factor}, $H \trile G$ so $H \in \{N, G\}$.  It follows that $G / N$ is simple.  If $G / N$ is simple then consider $N \trile H \trile G$.  By \corref{norm-factor}, $H / N \trile G / N$ so $H / N \in \{\{N\}, G / N\}$.  Thus, $H \in \{N, G\}$ so $N$ is a maximal normal subgroup of $G$.
\end{proof}

\begin{proposition}[cf.~\cite{robinson1996a,rose2009a,roman2011a}]
  Let $S$ denote the subnormal series $G_0 = 1 \tril \cdots \tril G_m = G$.  Then $S$ is a composition series \ifft every factor group $G_{i + 1} / G_i$ is simple.
\end{proposition}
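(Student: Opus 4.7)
The plan is to reduce the proposition to the already-established \propref{sim-max}, which states that for $N \trile K$, the quotient $K/N$ is simple iff $N$ is a maximal normal subgroup of $K$. Recall that a composition series is by definition a maximal subnormal series, meaning no additional subgroup can be inserted into the chain while preserving the subnormality condition at each step. So the task is really to translate ``maximal as a subnormal series'' into ``each factor is simple.''

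For the forward direction, I will argue by contrapositive. Suppose some factor $G_{i+1}/G_i$ is not simple. By \propref{sim-max}, $G_i$ is not a maximal normal subgroup of $G_{i+1}$, so there exists a proper intermediate $H$ with $G_i \tril H \tril G_{i+1}$ and $G_i \neq H \neq G_{i+1}$. Inserting $H$ into the chain between $G_i$ and $G_{i+1}$ gives a strict refinement of $S$ that is still a subnormal series, contradicting the maximality of $S$. Hence every factor must be simple.

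For the reverse direction, assume every factor $G_{i+1}/G_i$ is simple, and suppose for contradiction that $S$ admits a proper refinement, i.e., there is some index $i$ and a subgroup $H$ with $G_i \trile H \trile G_{i+1}$ where $H \notin \{G_i, G_{i+1}\}$ and $G_i \tril H \tril G_{i+1}$. Then $H/G_i$ is a proper nontrivial normal subgroup of $G_{i+1}/G_i$ (well-definedness of the quotient uses $G_i \trile H$, and normality of $H/G_i$ in $G_{i+1}/G_i$ follows from $H \tril G_{i+1}$ via \corref{norm-factor}). This contradicts simplicity of $G_{i+1}/G_i$, so no such refinement exists and $S$ is a composition series.

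The two directions are entirely symmetric modulo invoking the correspondence between normal subgroups of $G_{i+1}$ containing $G_i$ and normal subgroups of the quotient $G_{i+1}/G_i$, so there is no real obstacle—the work is just bookkeeping of normality in the chain. The only mild subtlety is noting that a ``refinement'' of a subnormal series at position $i$ is itself a subnormal series precisely when the inserted $H$ is normal in $G_{i+1}$ (so that the chain condition $G_i \tril H \tril G_{i+1}$ holds), which is exactly what \corref{norm-factor} gives us in tandem with \propref{sim-max}.
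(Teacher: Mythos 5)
Your proof is correct and takes essentially the same route as the paper's: both reduce the claim to \propref{sim-max}, which translates simplicity of the factor $G_{i+1}/G_i$ into maximality of $G_i$ as a normal subgroup of $G_{i+1}$. The paper's version is a one-liner; you simply unpack the equivalence between ``maximal subnormal series'' and ``each $G_i$ maximal normal in $G_{i+1}$'' by spelling out the refinement argument (using \corref{norm-factor} for the normal-subgroup correspondence), which is exactly the bookkeeping the paper leaves implicit.
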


\begin{proof}
  By \propref{sim-max}, $G_{i + 1} / G_i$ is simple \ifft each $G_i$ is a maximal normal subgroup of $G_{i + 1}$; this is precisely the definition of a composition series.
\end{proof}

It is easy to show that $p$-group isomorphism is equivalent to nilpotent-group isomorphism.  We prove this with the following three propositions.

\inshortorlong{\compsylowenv}{\compsylow*}

\begin{proof}
  We start by computing the largest power $p^e$ of $p$ which divides $n$.  By definition, each Sylow $p$-subgroup of $G$ is of order $p^e$.  From \thmref{first-syl-max}, an equivalent definition is that a Sylow $p$-subgroup is a maximal $p$-subgroup of $G$.  We start by choosing a nontrivial element $g_1 \in G$ of order a power of $p$ and computing the subgroup $K_1 = \langle g_1 \rangle$ generated by $g_1$.  Then $K_1$ is contained in some Sylow $p$-subgroup of $G$.  If $\abs{K_1} = p^e$ then we have a Sylow $p$-subgroup of $G$; otherwise, we choose a nontrivial element $g_2 \in G \setminus K_1$ of order a power of $p$ which generates a subgroup $K_2 = \langle g_1, g_2 \rangle$ of order a power of $p$.  By continuing this process until we obtain $\abs{K_j} = p^e$, we can compute a Sylow $p$-subgroup in polynomial time.
\end{proof}

\begin{proposition}[cf.~\cite{rotman1995a,robinson1996a,holt2005a,rose2009a,roman2011a}]
  \label{prop:nilpotent}
  A group is nilpotent \ifft every Sylow subgroup is normal \ifft it is a direct product of its Sylow subgroups.
\end{proposition}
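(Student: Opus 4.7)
The plan is to prove the equivalence as a cycle of implications: (1) nilpotent $\Rightarrow$ (2) every Sylow subgroup is normal $\Rightarrow$ (3) $G$ is the direct product of its Sylow subgroups $\Rightarrow$ (1) nilpotent.

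For (1) $\Rightarrow$ (2), I would invoke two standard facts: the normalizer condition for nilpotent groups (every proper subgroup $H$ of a nilpotent group $G$ satisfies $H < N_G(H)$, provable by induction on the nilpotency class using the central series), and the self-normalizing property of Sylow normalizers (if $P$ is a Sylow $p$-subgroup of $G$, then $N_G(N_G(P)) = N_G(P)$, which follows from the Second Sylow Theorem applied inside $N_G(N_G(P))$). Together these force $N_G(P) = G$, i.e., $P \trile G$. The hard part here is just being careful to cite these two lemmas precisely; no new work is required.

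For (2) $\Rightarrow$ (3), suppose each Sylow $p_i$-subgroup $P_i$ of $G$ is normal (hence unique by the Second Sylow Theorem). Since $|P_i|$ and $|P_j|$ are coprime for $i \ne j$, Lagrange's theorem gives $P_i \cap P_j = 1$, and more generally $P_i \cap (P_1 \cdots P_{i-1} P_{i+1} \cdots P_\ell) = 1$ by iterating this argument on orders. Normality of $P_i$ and $P_j$ together with trivial intersection implies that every element of $P_i$ commutes with every element of $P_j$ (for $x \in P_i$, $y \in P_j$, the commutator $[x,y]$ lies in $P_i \cap P_j = 1$). Therefore the internal product $P_1 \cdots P_\ell$ is a direct product, and comparing orders $\prod_i |P_i| = n = |G|$ shows $G = P_1 \times \cdots \times P_\ell$.

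For (3) $\Rightarrow$ (1), I would use that every $p$-group is nilpotent (an immediate induction using the fact, cited earlier in the section, that $p$-groups have nontrivial center: build the upper central series $1 \trile Z(G) \trile Z_2(G) \trile \cdots$ where each factor $Z_{k+1}(G)/Z_k(G) = Z(G/Z_k(G))$ is nontrivial while $Z_k(G) \ne G$), together with the fact that a direct product of nilpotent groups is nilpotent (if $G_i$ has central series of length $k_i$, one takes coordinatewise products of the central series to obtain a central series of $\prod_i G_i$ of length $\max_i k_i$). Applied to $G = P_1 \times \cdots \times P_\ell$ this gives nilpotence.

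I expect the main subtlety to be (1) $\Rightarrow$ (2), since it relies on the two nontrivial facts (normalizer condition and self-normalization of Sylow normalizers) which are not stated explicitly in the excerpt; I would either cite them from the standard references (Rotman, Robinson, etc.) already listed or include brief proofs. The remaining implications are essentially direct computations with orders, commutators, and central series.
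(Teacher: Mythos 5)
The paper does not actually prove this proposition; it is stated with a ``cf.''\ citation to standard texts (Rotman, Robinson, Holt et al.) and left as background. There is therefore no proof in the paper to compare against, and the question is only whether your argument is a correct way to fill in the citation.

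It is. Your cycle (1) $\Rightarrow$ (2) $\Rightarrow$ (3) $\Rightarrow$ (1) is the standard textbook route found in the cited references. The three key ingredients --- the normalizer condition for nilpotent groups plus the self-normalization $N_G(N_G(P)) = N_G(P)$ of a Sylow normalizer (giving $N_G(P) = G$), the coprime-orders/commutator argument turning a family of pairwise-trivially-intersecting normal Sylow subgroups into an internal direct product, and the fact that $p$-groups are nilpotent (nontrivial center, upper central series) together with closure of nilpotence under finite direct products --- are exactly right, and you have identified correctly that (1) $\Rightarrow$ (2) carries the nontrivial content while the other two legs are routine bookkeeping with orders and central series. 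One small point worth being explicit about in a written version of (2) $\Rightarrow$ (3): the claim that $P_i \cap \prod_{j \neq i} P_j = 1$ needs the product $\prod_{j \neq i} P_j$ to already be known to have order $\prod_{j \neq i} |P_j|$, so you should phrase this as an induction on $\ell$ rather than a one-shot intersection argument; you flag this with ``iterating this argument on orders,'' which is fine, but a referee would want the induction spelled out. Beyond that the proposal is correct and complete.
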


\begin{proposition}
  \label{prop:nil-red-p}
  Nilpotent-group isomorphism is Turing reducible to $p$-group isomorphism where each $p$ divides $n$.
\end{proposition}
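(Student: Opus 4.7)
The plan is to exploit \propref{nilpotent}, which gives that every nilpotent group is the internal direct product of its Sylow subgroups, and moreover that these Sylow subgroups are the unique ones for their respective primes (since normality forces uniqueness via the Second Sylow Theorem). So if $G$ and $H$ are nilpotent groups with $|G|=|H|=n$ having prime factorization $n = \prod_{i=1}^{\ell} p_i^{e_i}$, and if $P_i$ (resp.\ $Q_i$) denotes the unique Sylow $p_i$-subgroup of $G$ (resp.\ $H$), then $G \cong H$ \ifft $P_i \cong Q_i$ for every $i$.

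The first direction is immediate: an isomorphism $\phi : G \to H$ must send the unique Sylow $p_i$-subgroup of $G$ to the unique Sylow $p_i$-subgroup of $H$, so $\restr{\phi}{P_i} : P_i \to Q_i$ is an isomorphism. For the converse, given isomorphisms $\phi_i : P_i \to Q_i$, we use \propref{nilpotent} to write $G = \bigtimes_i P_i$ and $H = \bigtimes_i Q_i$ as internal direct products. The map $\phi : G \to H$ defined by $\phi(x_1 \cdots x_\ell) = \phi_1(x_1) \cdots \phi_\ell(x_\ell)$ (where each $x_i \in P_i$) is well-defined by uniqueness of the direct-product decomposition, and is a homomorphism because elements of distinct $P_i$ commute (a consequence of all Sylow subgroups being normal).

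The reduction is then straightforward. First verify $|G|=|H|$ and compute the prime factorization of $n$, producing the primes $p_1, \ldots, p_\ell$. For each $p_i$, invoke \propref{comp-sylow} to compute the Sylow $p_i$-subgroup $P_i$ of $G$ and $Q_i$ of $H$ in polynomial time; since $G$ and $H$ are nilpotent, these are unique and we do not need to enumerate Sylow subgroups. We then call the $p_i$-group isomorphism oracle on $(P_i, Q_i)$ for each $i$ and accept \ifft all calls accept. The number of oracle calls is bounded by the number of distinct prime divisors of $n$, which by \propref{prime-div} is at most $(1+o(1))\ln n / \ln \ln n$; all auxiliary work (factoring $n$, computing Sylow subgroups, extracting subtables) is polynomial in $n$.

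There is no real obstacle here since all the heavy lifting has been done by prior results: the structure theorem for nilpotent groups gives the decomposition into Sylow subgroups, and Kantor's algorithm (\propref{comp-sylow}) computes them efficiently. The only minor care is to note that for nilpotent groups each Sylow subgroup is \emph{unique} (normal implies unique conjugate), so a single oracle query per prime suffices rather than enumerating conjugates.
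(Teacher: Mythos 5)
Your proposal is correct and follows essentially the same approach as the paper's proof: decompose $G$ and $H$ into direct products of their (unique, normal) Sylow subgroups via \propref{nilpotent}, compute them with \propref{comp-sylow}, and reduce to componentwise isomorphism with one oracle call per prime divisor. You spell out the converse construction of $\phi$ and the uniqueness argument in slightly more detail, and bound the number of primes via \propref{prime-div} rather than the paper's coarser $\log n$ bound, but these are cosmetic differences.
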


\begin{proof}
  Let $G$ and $H$ be nilpotent groups.  Then the Sylow subgroups of $G$ and $H$ can be computed in time polynomial in $n$ by \propref{comp-sylow}.  If the sets of the orders of the Sylow subgroups are different in $G$ and $H$, we return $G \not\cong H$.  We note that a group is nilpotent \ifft it is the direct product of its Sylow subgroups by \propref{nilpotent}.  Then we have the direct products $G = \bigtimes_{i = 1}^{\ell} P_i$ and $H = \bigtimes_{i = 1}^{\ell} Q_i$ where $P_i$ and $Q_i$ are Sylow $p_i$-subgroups of $G$ and $H$.  Any isomorphism $\phi : G \ra H$ must satisfy $\phi[P_i] = Q_i$.  Conversely, if $\phi_i : P_i \ra Q_i$ is an isomorphism for each $i$ then we can easily define an isomorphism $\phi : G \ra H$ using the direct product structure of $G$ and $H$.  We have shown that $G \cong H$ \ifft $P_i \cong Q_i$ for all $i$.  Since the number of Sylow subgroups of $G$ is at most $\log n$, we can decide if $G \cong H$ by making at most $\log n$ calls to our algorithm for $p$-group isomorphism.
\end{proof}

  \newpage
  \section{Composition series proofs}
  \label{app:comp-series-proofs}
  We now provide the proofs omitted in \secref{comp-series}.

\indisowell*

\begin{proof}
  Let $G$ and $H$ be groups with normal subgroups $N$ and $N'$.  Let $\phi : G \ra H$ be an isomorphism such that $\phi[N] = N'$.  We see that $\indphi{G / N}$ is well-defined since if $g_1 N = g_2 N$ we have

  \begin{align*}
    \indphi{G / N}(g_1 N) & = \phi(g_1) N' \\
                       {} & = \phi(g_2 n) N' \text{ for some } n \in N \\
                       {} & = \phi(g_2) N' \\
                       {} & = \indphi{G / N}(g_2 N)
  \end{align*}
  
  It is clear that $\indphi{G / N}$ is injective since $\phi$ is injective; it is also easy to show that $\indphi{G / N}$ is surjective so $\indphi{G / N}$ is bijective.  Clearly, $\indphi{G / N}$ is a homomorphism so it is an isomorhpism.
\end{proof}

\respseries*

\begin{proof}
  Let $\phi : G \ra H$ be an isomorphism where $\phi[N] = N'$ and suppose $\indphi{G / N} : G / N \ra H / N'$ respects $T$ and $T'$.  We note that each $K_i = G_i / N$ and $L_i = H_i / N'$.  By assumption, $\indphi{G / N}[K_i] = L_i$ for all $i$ so

  \begin{align*}
    \phi[G_i] & = \bigcup_{x \in G_i} \phi[x N] \\
           {} & = \bigcup_{x \in G_i} \indphi{G / N}(x N) \\
           {} & = \bigcup_{y \in H_i} y N' \\
           {} & = H_i
  \end{align*}
  
  Thus, $\phi$ respects $S$ and $S'$.  For the converse, suppose $\phi$ respects $S$ and $S'$.  Let $x N \in K_i$; then $x \in G_i$ so $\indphi{G / N}(x N) = \phi(x) N' \in H_i / N' = L_i$ and $\indphi{G / N}[K_i] \subseteq L_i$.  Since $\indphi{G / N}$ is an isomorphism and $\abs{K_i} = \abs{L_i}$, it follows that $\indphi{G / N}[K_i] = L_i$.
\end{proof}
}

\inlong{\newpage}
\bibliographystyle{initials}
\bibliography{$HOME/LaTeX/computer-science-references,$HOME/LaTeX/math-references,$HOME/LaTeX/quantum-computing-references} 

\end{document}